\newif\ifcomments
\definecolor{linkblue}{HTML}{001487}
\newcommand{\unpleasant}[1]{\ignorespaces}
\theoremstyle{definition}
\newtheorem{theorem}{Theorem}[section]
\newtheorem*{theorem*}{Theorem}
\newtheorem{definition}[theorem]{Definition}
\newtheorem{lemma}[theorem]{Lemma}
\newtheorem{claim}[theorem]{Claim}
\Crefname{claim}{Claim}{Claims}
\newtheorem{protocol}{Protocol}
\newtheorem*{lemma*}{Lemma}
\newtheorem{corollary}[theorem]{Corollary}
\newtheorem*{corollary*}{Corollary}
\newtheorem{proposition}[theorem]{Proposition}
\newtheorem{conjecture}[theorem]{Conjecture}
\theoremstyle{definition}
\newtheorem{openproblem}{Open Problem}
\theoremstyle{remark}
\newtheorem{remark}[theorem]{Remark}
\numberwithin{equation}{section}
\renewcommand{\paragraph}{\@startsection{paragraph}{4}{\z@}{2.25ex \@plus 1ex \@minus .2ex}{-1em}{\normalfont\normalsize\bfseries}}
\newcommand{\class}[1]{\mathsf{#1}}
\newcommand{\poly}{\mathrm{poly}}
\newcommand{\negl}{\mathrm{negl}}
\newcommand{\statePSPACE}{\class{statePSPACE}}
\newcommand{\stateBQP}{\class{stateBQP}}
\newcommand{\QIP}{\class{QIP}}
\newcommand{\BQP}{\class{BQP}}
\newcommand{\PSPACE}{\class{PSPACE}}
\newcommand{\stateqip}{\class{stateQIP}}
\newcommand{\stateQIP}{\stateqip}
\newcommand{\unitaryPSPACE}{\class{unitaryPSPACE}}
\newcommand{\avgUnitaryPSPACE}{\class{avgUnitaryPSPACE}}
\newcommand{\avgUnitaryBQP}{\class{avgUnitaryBQP}}
\newcommand{\avgUnitaryQIP}{\class{avgUnitaryQIP}}
\newcommand{\usynth}[1]{\mathscr{#1}}
\newcommand{\unitaryBQP}{\class{unitaryBQP}}
\newcommand{\unitaryQIP}{\class{unitaryQIP}}
\newcommand{\unitary}[1]{\class{unitary#1}}
\newcommand{\avgUnitary}[1]{\class{avgUnitary#1}}
\newcommand{\Uhlmann}{\textsc{Uhlmann}}
\newcommand{\SuccinctUhlmann}{\textsc{SuccinctUhlmann}}
\newcommand{\avgSuccinctUhlmann}{\textsc{DistSuccinctUhlmann}}
\newcommand{\DistUhlmann}{\textsc{DistUhlmann}}
\newcommand{\DistSuccinctUhlmann}{\textsc{DistSuccinctUhlmann}}
\newcommand*{\interact}{\mathord{\leftrightarrows}}
\newcommand{\work}{{\reg{work}}}
\newcommand{\flag}{{\reg{flag}}}
\newcommand{\out}{{\reg{out}}}
\newcommand{\Sim}{\mathrm{Sim}}
\newcommand{\eps}{\epsilon}
\newcommand{\ketbra}[2]{\ket{#1}\!\!\bra{#2}}
\renewcommand{\cal}[1]{\mathcal{#1}}
\newcommand{\C}{\mathbb{C}}
\newcommand{\N}{\mathbb{N}}
\newcommand{\E}{\mathop{\mathbb{E}}}
\newcommand{\Tr}{\mathrm{Tr}}
\newcommand{\reg}[1]{\mathsf{#1}}
\newcommand{\Haar}{\mathrm{Haar}}
\newcommand{\Id}{\id}
\newcommand{\td}{\mathrm{td}}
\newcommand{\zs}{0 \cdots 0}
\newcommand{\fidelity}{\mathrm{F}}
\newcommand{\wt}[1]{\widetilde{#1}}
\newcommand{\setft}[1]{\textnormal{#1}}
\newcommand{\id}{\setft{id}}
\newcommand{\bits}{\ensuremath{\{0, 1\}}}
\newcommand{\linear}{\mathrm{L}}
\newcommand{\mparen}[1]{\mleft(#1\mright)}
\newcommand{\mbracket}[1]{\mleft[#1\mright]}
\newcommand{\ot}{\ensuremath{\otimes}}
\newcommand{\deq}{\coloneqq}
\newcommand{\ptr}[2]{\mbox{\rm Tr}_{#1}\mparen{#2}}
\newcommand{\pr}[1]{{\rm Pr}\mbracket{#1}}
\newcommand{\norm}[1]{\left\lVert#1\right\rVert}
\DeclareMathOperator{\pos}{Pos}
\DeclareMathOperator{\supp}{\setft{supp}}
\DeclareMathOperator{\sgn}{sgn}
\newcommand{\1}{\mathds{1}}
\newcommand{\cptp}{\setft{CPTP}}
\newcommand{\states}{\setft{S}}
\newcommand{\proj}[1]{\ket{#1}\!\!\bra{#1}}
\newcommand{\cD}{\ensuremath{\mathcal{D}}}
\newcommand{\cE}{\ensuremath{\mathcal{E}}}
\newcommand{\cN}{\ensuremath{\mathcal{N}}}
\newcommand{\cU}{\ensuremath{\mathcal{U}}}
\date{}
\title{Unitary Complexity and the Uhlmann Transformation Problem}
\author[1]{John Bostanci}
\author[1]{Yuval Efron}
\author[2]{Tony Metger}
\author[3]{\\Alexander Poremba}
\author[4]{Luowen Qian}
\author[1]{Henry Yuen}
\affil[1]{Columbia University}
\affil[2]{ETH Zurich}
\affil[3]{MIT/Boston University}
\affil[4]{Boston University/NTT Research/Northeastern University}
\begin{document}
\maketitle
\pagestyle{empty}
\thispagestyle{empty}

\begin{abstract}
State transformation problems such as compressing quantum information or breaking quantum commitments are fundamental quantum tasks.
However, their computational difficulty cannot easily be characterized using traditional complexity theory, which focuses on tasks with classical inputs and outputs. 

To study the complexity of such state transformation tasks, we introduce a framework for \emph{unitary synthesis problems}, including notions of reductions and unitary complexity classes. 
We use this framework to study the complexity of transforming one entangled state into another via local operations. 
We formalize this as the \emph{Uhlmann Transformation Problem}, an algorithmic version of Uhlmann's theorem. 
Then, we prove structural results relating the complexity of the Uhlmann Transformation Problem, polynomial space quantum computation, and zero knowledge protocols. 

The Uhlmann Transformation Problem allows us to characterize the complexity of a variety of tasks in quantum information processing, including decoding noisy quantum channels, breaking falsifiable quantum cryptographic assumptions, implementing optimal prover strategies in quantum interactive proofs, and decoding the Hawking radiation of black holes. 
Our framework for unitary complexity thus provides new avenues for studying the computational complexity of many natural quantum information processing tasks.
\end{abstract}

\vfill

 \newpage
 \setcounter{tocdepth}{2}
 \tableofcontents

 \newpage

\newcommand{\titleavgUnitaryHVPZK}{\texorpdfstring{$\avgUnitary{HVPZK}$ }{avgUnitaryHVSZK }}

\newcommand{\titleavgUnitaryQIP}{\texorpdfstring{$\avgUnitary{QIP}$ }{avgUnitaryQIP }}

\newcommand{\titleavgUnitaryPSPACE}{\texorpdfstring{$\avgUnitaryPSPACE$ }{avgUnitaryPSPACE }}
 
\pagestyle{plain}
\section{Introduction}
\label{sec:intro}

Complexity theory studies the resources required to solve computational problems.
Quantum complexity has traditionally studied the quantum resources required to solve \emph{classical} computational problems, i.e., problems that have classical inputs and outputs.
However, quantum mechanics also introduces a new kind of computational problem: preparing and transforming quantum states.
The goal of this paper is to initiate the formal complexity-theoretic study of such \emph{quantum state transformation problems}.
To this end, we extend the language of traditional complexity theory to encompass state transformation problems -- we call the resulting framework \emph{unitary complexity theory}.

The idea that the complexity of inherently quantum problems cannot easily be reduced to the complexity of classical problems has already been explored in prior works \cite{kashefi2004complexity,aaronson2016complexity,aharonov2022quantum}. 
In recent years, oracle separations~\cite{Kretschmer21,kqst23,lombardi2023one} have demonstrated that 
the complexity of breaking certain quantum cryptographic primitives is independent of the complexity of the decisional complexity classes $\class{NP}$ or $\class{QMA}$; in other words, 
even if $\class{P} = \class{NP}$, certain quantum cryptographic primitives could still remain secure. In fact, \cite{lombardi2023one} gives preliminary evidence that the ability to solve \emph{any} decision problem (even undecidable ones!) would not help with breaking quantum cryptography. 
Unitary complexity theory allows us to re-establish the link between complexity theory and cryptography in the quantum world.

Beyond this cryptographic motivation, unitary complexity allows us to relate the computational resources required for seemingly unrelated state transformation tasks.
In this paper, we focus on tasks involving \emph{Uhlmann transformations}.
The name stems from Uhlmann's theorem,~\cite{uhlmann1976transition} a fundamental result in quantum information theory that quantifies how well a bipartite pure state $\ket{C}$ can be mapped to another bipartite pure state $\ket{D}$ by only acting on a subsystem: letting $\rho$ and $\sigma$ denote the reduced density matrices on the first subsystem of $\ket{C}$ and $\ket{D}$, respectively, Uhlmann's theorem states that
\begin{equation}
    \label{eq:intro-1}
        \fidelity(\rho,\sigma) = \max_U \, | \bra{D} \id \otimes U \ket{C}|^2 \,,
\end{equation}
where $\fidelity(\rho,\sigma)$ denotes the fidelity function and the maximization is over all unitary transformations acting on the second subsystem. We call a unitary $U$ achieving equality in \cref{eq:intro-1} an \emph{Uhlmann transformation}.\footnote{Such Uhlmann transformations are unique only if $\ket{C},\ket{D}$ have full Schmidt rank.}

Uhlmann transformations are ubiquitous in quantum information processing. Some examples include:
\vspace{-6pt}
\begin{description}[itemsep=0pt]
    \item[Quantum Shannon theory.] Quantum Shannon theory is the study of the fundamental limits of quantum communication over noisy and noiseless channels. Protocols for a myriad of tasks such as state redistribution, entanglement distillation, and quantum communication over a noisy quantum channel all require performing Uhlmann transformations~\cite{hayden2008decoupling,abeyesinghe2009mother,berta2011quantum,anshu2017one}.

    \item[Quantum cryptography.] While it is known that quantum commitment schemes with information-theoretic security are impossible \cite{mayers1997unconditionally,lo1998quantum}, they are possible under computational assumptions.
    Recent oracle separations suggest that their security can be based on weaker assumptions than what is needed classically and that the existence of inherently quantum cryptographic primitives may be independent from assumptions in traditional complexity~\cite{Kretschmer21,ananth2022cryptography,morimae2022quantum,kqst23,lombardi2023one}. It can be seen from the impossibility results of Mayers--Lo--Chau \cite{mayers1997unconditionally,lo1998quantum} that the security of a quantum commitment scheme relies on the hardness of performing certain Uhlmann transformations.

    \item[Quantum gravity.]
    Attempts to unite quantum mechanics with general relativity have given rise to apparent paradoxes of whether black holes preserve information or not \cite{hawking1976breakdown}. Recently, physicists have provided intriguing arguments based on \emph{computational complexity} as possible resolutions to these paradoxes~\cite{Harlow_2013}. These arguments claim that distilling entanglement from the emitted Hawking radiation of a black hole is computationally infeasible --- this can be equivalently phrased as a statement about the hardness of an Uhlmann transformation~\cite{Harlow_2013,brakerski2022blackhole}.

    \item[Quantum complexity theory.]
The $\QIP = \PSPACE$ theorem~\cite{jain2011qip} gives a characterization of the power of (single-prover) quantum interactive proofs. Kitaev and Watrous~\cite{kitaev2000parallelization} showed that optimal prover strategies in these interactive proofs boil down to applying Uhlmann transformations at each round.
\end{description}

The fact that Uhlmann transformations appear in these various quantum tasks suggests that they might be related.
Can we formalize these relationships and show precise reductions, similarly to how e.g.~the theory of $\mathsf{NP}$ completeness relates disparate classical computational problems?
Can we formalize Uhlmann transformations as a computational problem that is, in some sense, \emph{complete} for these various tasks, similarly to how 3-SAT provides a simple complete problem that elegantly captures the hardness of NP-complete problems?
Can we provide complexity-theoretic evidence for the hardness of Uhlmann transformations? 
What computational restrictions does this place on our ability to e.g.~achieve optimal communication rates in quantum Shannon theory?

The goal of this paper is to study such questions formally. 
Our first main contribution is to provide a general formal framework for reasoning about unitary complexity (\cref{part:unitary_complexity_theory}). 
This involves extending many of the traditional notions of complexity theory, such as reductions, complexity classes, complete problems, etc.\ to quantum state transformations and requires us to deal with many subtleties that arise in the unitary setting.
Our second main contribution is to analyze the complexity of the Uhlmann Transformation Problem within this framework (\cref{part:uhlmann_general}). This in turn allows us to show relationships between unitary complexity classes such as showing that (average case versions of) the classes $\unitaryPSPACE$ and $\unitaryQIP$ are equal. 
Finally, we show how the Uhlmann transformation problem plays a central role in connecting the complexity of many natural tasks in quantum information processing (\cref{part:applications}). 
For example, we establish reductions and equivalences between Uhlmann transformation problem and the security of quantum commitment schemes, falsifiable quantum cryptographic assumptions, quantum state compression, and more.

\subsection{A fully quantum complexity theory} 

In~\cite{rosenthal2022interactive} Rosenthal and Yuen initiated the study of complexity classes for \emph{state synthesis} and \emph{unitary synthesis} problems. A state synthesis problem is a sequence $(\rho_x)_{x \in \bits^*}$ of quantum states. A \emph{state complexity class} is a collection of state synthesis problems that captures the computational resources needed to synthesize (i.e., generate) the states.
For example, \cite{rosenthal2022interactive} defined the class $\statePSPACE$ as the set of all state sequences $(\rho_x)_{x \in \bits^*}$ for which there is a polynomial-space (but possibly exponential-time) quantum algorithm $A$ that, on input $x$, outputs an approximation to the state $\rho_x$. 

\emph{Unitary complexity classes}, which are the focus of this work, describe the computational resources needed to perform state \emph{transformations}, formalized as \emph{unitary synthesis problems}. A unitary synthesis problem is a sequence of unitary\footnote{In our formal definition of unitary synthesis problems (see \cref{sec:defs}), the $U_x$'s are technically partial isometries, which is a promise version of unitaries, but we gloss over the distinction for now.}
operators $(U_x)_{x \in \bits^*}$ and a unitary complexity class is a collection of unitary synthesis problems. For example the class $\unitaryBQP$ is the set of all sequences of unitary operators $(U_x)_{x \in \bits^*}$ where there is a polynomial-time quantum algorithm $A$ that, given an \emph{instance} $x \in \bits^*$ and a quantum system $\reg{B}$ as input, (approximately) applies $U_x$ to system $\reg{B}$.
As a simple example, any sequence of unitaries $(U_x)$ where $x$ is simply (an explicit encoding of) a sequence of quantum gates that implement the unitary is obviously in $\unitaryBQP$, since given $x$, the algorithm $A$ can just execute the circuit specified by $x$ in time polynomial in the length of $x$.
On the other hand, $x$ could also specify a unitary in a sequence in a more implicit way (e.g.\ by circuits for two quantum states between which $U_x$ is meant to be the Uhlmann transformation), in which case the sequence $(U_x)_x$ could be harder to implement.

The reason we say that the algorithm $A$ is given a \emph{system} instead of a \emph{state} is to emphasize that the state of the system is not known to the algorithm ahead of time, and in fact the system may be part of a larger entangled state. Thus the algorithm has to coherently apply the transformation $U_x$ to the given system, maintaining any entanglement with an external system. This makes unitary synthesis problems fundamentally different, and in many cases harder to analyse, than state synthesis problems.

Traditional complexity classes like $\sf P$, $\sf NP$, and $\sf BQP$ have proven to be powerful ways of organizing and comparing the difficulty of different decision problems. In a similar way, state and unitary complexity classes are useful for studying the complexity of quantum states and of quantum state transformations. We can then ask about the existence of complete problems, reductions, inclusions, separations, closure properties, and more. Importantly, state and unitary complexity classes provide a useful language to articulate questions and conjectures about the computational hardness of inherently quantum problems. For example, we can ask whether $\unitaryPSPACE$ is contained in $ \unitaryBQP^{\PSPACE}$ --- in other words, can polynomial-space-computable unitary transformations be also computed by a polynomial-time quantum computer that is given oracle access to a $\PSPACE$ decision oracle?\footnote{
    This is an open question, and is related to the ``Unitary Synthesis Problem'' raised by Aaronson and Kuperberg~\cite{aaronson2007quantum}.
}

\paragraph{Unitary synthesis problems, classes, and reductions.} We begin by giving general definitions for unitary synthesis problems and a number of useful unitary complexity classes, e.g.~$\unitaryBQP$ and $\unitaryPSPACE$. We then define a notion of \emph{reductions} between unitary synthesis problems. Roughly speaking, we say that a unitary synthesis problem $\usynth{U} = (U_x)_x$ polynomial-time reduces to $\usynth{V} = (V_x)_x$ if an efficient algorithm for implementing $\usynth{V}$ implies an efficient algorithm for implementing $\usynth{U}$. 

Next, we define \emph{distributional} unitary complexity classes that capture the \emph{average case complexity} of solving a unitary synthesis problem. Here, the unitary only needs to be implemented on an input state \emph{randomly chosen} from some distribution $\cal{D}$ which is known ahead of time.
This is a natural generalisation of traditional average-case complexity statements to the unitary setting.
This notion turns out to be particularly natural in the context of entanglement transformation problems because it is closely related to implementing the unitary on part of an entangled state $\ket{\psi}$.

The notion of average case complexity turns out to be central to our paper: nearly all of our results are about average-case unitary complexity classes and the average-case complexity of the Uhlmann Transformation Problem. 
Thus the unitary complexity classes we mainly deal with will be $\avgUnitaryBQP$ and $\avgUnitaryPSPACE$, which informally mean sequences of unitaries that can be implemented by time-efficient and space-efficient quantum algorithms, respectively, and where the implementation error is measured with respect to inputs drawn from a fixed distribution over quantum states.

See \cref{sec:defs} for details as well as more discussion regarding the choices we made for our definitions. 

\paragraph{Interactive proofs for unitary synthesis.} We then explore models of \emph{interactive proofs} for unitary synthesis problems. Roughly speaking, in an interactive proof for a unitary synthesis problem $\usynth U = (U_x)_x$, a polynomial-time verifier receives an instance $x$ and a quantum system $\reg{B}$ as input, and interacts with an all-powerful but untrusted prover to try to apply $U_x$ to system $\reg{B}$. 
As usual in interactive proofs, the main challenge is that the verifier does not trust the prover, so the protocol has to test whether the prover actually behaves as intended. 
We formalize this with the complexity classes $\unitaryQIP$ and $\avgUnitaryQIP$, which capture unitary synthesis problem that can be verifiably implemented in this interactive model.
This generalizes the interactive state synthesis model studied by~\cite{rosenthal2022interactive,metger2023stateqip}.\footnote{The class $\unitaryQIP$ was also briefly discussed informally by Rosenthal and Yuen~\cite{rosenthal2022interactive}.} The primary difference between the state synthesis and unitary synthesis models is that in the former, the verifier  starts with a fixed input state (say, the all zeroes state), while in the latter the verifier receives a quantum system $\reg{B}$ in an unknown state that has to be transformed by $U_x$. 
See \cref{sec:protocols} for more details.

\paragraph{Zero-knowledge unitary synthesis.} In the context of interactive protocols, we also introduce a notion of \emph{zero-knowledge protocols} for unitary synthesis problems. Roughly speaking, a protocol is zero-knowledge if the interaction between the verifier and prover can be efficiently reproduced by an algorithm (called the \emph{simulator}) that does not interact with the prover at all. This way, the verifier can be thought of as having learned no additional knowledge from the interaction aside from the fact that the task was solved~\cite{goldwasser1989knowledge}. The counterintuitive concept of zero-knowledge proofs has been one of the most consequential discoveries in complexity theory and cryptography. 

Motivated by this, we introduce the unitary complexity class $\avgUnitary{HVSZK}$,\footnote{The ``$\mathsf{HV}$'' modifier signifies that the zero-knowledge property is only required to hold with respect to verifiers that honestly follow the protocol, and the ``$\mathsf{S}$'' in ``$\mathsf{SZK}$'' signifies that it is \emph{statistical} zero-knowledge.} which is a unitary synthesis analogue of the decision class $\mathsf{HVQSZK}$ in traditional complexity theory~\cite{watrous2006zero}, which captures the concept of \emph{honest-verifier quantum zero-knowledge proofs}. Interestingly, for reasons that we explain in more detail in \cref{sec:zk}, the average-case aspect of $\avgUnitary{HVSZK}$ appears to be necessary to obtain a nontrivial definition of zero-knowledge in the unitary synthesis setting. 

\medskip
\vspace{1em}

Just like there is a zoo of traditional complexity classes~\cite{Zoo}, we expect that many unitary complexity classes can also be meaningfully defined and explored. In this paper we focus on the ones that turn out to be tightly related to the Uhlmann Transformation Problem. We discuss these relationships next. 

\begin{remark}
    For simplicity, in the introduction we present informal statements of our results that gloss over some technical details that would otherwise complicate the result statement. For example, we do not distinguish between unitary synthesis problems and distributional versions of them. After each informal result statement we point the reader to where the formal result is stated and proved.
\end{remark}

\subsection{Structural results about the Uhlmann Transformation Problem}

Equipped with the proper language to talk about unitary synthesis problems, we present the Uhlmann Transformation Problem in \Cref{part:uhlmann_general} of this paper. We define the unitary synthesis problem $\Uhlmann$ to be the sequence $(U_x)_{x \in \bits^*}$ where we interpret an instance $x$ as an explicit encoding (as a list of gates) of a pair of quantum circuits $(C,D)$ such that $C$ and $D$, on the all-zeroes input, output pure bipartite states $\ket{C},\ket{D}$ on the same number of qubits, and $U_x$ is an associated Uhlmann transformation mapping $\ket{C}$ to $\ket{D}$ by acting on a local system.
Usually, we will assume that $C$ and $D$ output $2n$ qubits (for some $n$ specified as part of $x$) and the Uhlmann transformation acts on the last $n$ qubits.
If $x$ does not specify such a pair, then an algorithm implementing the unitary synthesis problem is allowed to behave arbitrarily on such $x$; this is formally captured by allowing partial isometries as part of unitary synthesis problems in \cref{def:unitary_synth_problem}. 

Furthermore, for a parameter $0 \leq \kappa \leq 1$ we define the problem $\Uhlmann_\kappa$, which is the same as $\Uhlmann$, except that it is restricted to instances corresponding to states $\ket{C},\ket{D}$ where the fidelity between the reduced density matrices $\rho,\sigma$ of $\ket{C},\ket{D}$ respectively on the first subsystem is at least $\kappa$; recall by Uhlmann's theorem that $\kappa$ lower bounds how much overlap $\ket{C}$ can achieve with $\ket{D}$ by a local transformation. By definition, $\Uhlmann_\kappa$ instances are at least as hard as $\Uhlmann_{\kappa'}$ instances when $\kappa \leq \kappa'$.
We provide formal definitions of $\Uhlmann$, $\Uhlmann_\kappa$, and their distributional versions in \cref{sec:uhlmann}.

\paragraph{Zero-knowledge and the Uhlmann Transformation Problem.} We show that the Uhlmann Transformation Problem (with fidelity parameter $\kappa = 1$) \emph{characterizes} the complexity of the unitary complexity class $\avgUnitary{HVPZK}$, which is the unitary synthesis version of the decision classes $\class{PZK}$ and $\class{HVQPZK}$ \cite{watrous2002limits}. Here, $\mathsf{PZK}$ stands for ``perfect zero knowledge'', and refers to the special case of statistical zero-knowledge where the simulator can \emph{perfectly} reproduce the view of the verifier. 

\begin{theorem}[Informal] \label{thm:pzk_complete_intro}
    $\Uhlmann_1$ is complete for $\avgUnitary{HVPZK}$ under polynomial-time reductions. 
\end{theorem}

This is formally stated and proved in \Cref{sec:pzk-completeness}. To show completeness we have to prove two directions. The first direction is to show that every (distributional) unitary synthesis problem in $\avgUnitary{HVPZK}$ polynomial-time reduces to (the distributional version of) $\Uhlmann_{1}$.
This uses a characterization of quantum interactive protocols due to Kitaev and Watrous~\cite{kitaev2000parallelization}.

The second direction is to show that $\Uhlmann_1$ is in $\avgUnitary{HVPZK}$ by exhibiting an (honest-verifier) zero-knowledge protocol to solve the Uhlmann Transformation Problem. Our protocol is rather simple: in the average case setting, we assume that the verifier receives the last $n$ qubits of the state $\ket{C} = C\ket{0^{2n}}$, and the other half is inaccessible. Its goal is to transform, with the help of a prover, the global state $\ket{C}$ to $\ket{D}$ by only acting on the last $n$ qubits that it received as input.
To this end, the verifier generates a ``test'' copy of $\ket{C}$ on its own, which it can do because $C$ is a polynomial-size circuit. The verifier then sends to the prover two registers of $n$ qubits; one of them is the first half of the test copy and one of them (call it $\reg{A}$) holds the ``true'' input state. The two registers are randomly shuffled. The prover is supposed to apply the Uhlmann transformation $U$ to both registers and send them back. The verifier checks whether the ``test'' copy of $\ket{C}$ has been transformed to $\ket{D}$ by applying the inverse circuit $D^\dagger$ to the test copy and checking if all qubits are zero.  If so, it accepts and outputs the register $\reg{A}$, otherwise the verifier rejects.

If the prover is behaving as intended, then both the test copy and the ``true'' copy of $\ket{C}$ are transformed to $\ket{D}$. Furthermore, the prover cannot tell which of its two registers corresponds to the test copy, and thus if it wants to pass the verification with high probability, it has to apply the correct Uhlmann transformation on both registers. This shows that the protocol satisfies the completeness and soundness properties of an interactive proof. The zero-knowledge property is also straightforward: if both the verifier and prover are acting according to the protocol, then before the verifier's first message to the prover, the reduced state of the verifier is $\ketbra{C}{C} \otimes \rho$ (where $\rho$ is the reduced density matrix of $\ket{C}$), and at the end of the protocol, the verifier's state is $\ketbra{D}{D} \otimes U \rho U^\dagger$. Both states can be produced in polynomial time.

One may ask: if the simulator can efficiently compute the state $U \rho U^\dagger$ without the help of the prover, does that mean the Uhlmann transformation $U$ can be implemented in polynomial time? The answer is no, since the simulator only has to prepare the appropriate reduced state (i.e.~essentially solve a state synthesis task), which is easy since the starting and ending states of the protocol are efficiently computable; in particular, $U \rho U^\dagger$ is (approximately) the reduced state of $\ket{D}$, which is easy to prepare. In contrast, the verifier has to implement the Uhlmann transformation on a \emph{specific} set of qubits that are entangled with a \emph{specific} external register, i.e.\ it has to perform a state transformation task that preserves coherence with the purifying register.
This again highlights the distinction between state and unitary synthesis tasks.

\paragraph{A complete problem for $\avgUnitary{HVSZK}$?} It is natural to wonder about the complexity of $\Uhlmann_\kappa$ for fidelity promise $\kappa < 1$. In other words, the reduced density matrices of the two states $\ket{C},\ket{D}$ are not exactly equal. A reasonable conjecture is that $\Uhlmann_\kappa$ (for non-negligible $\kappa$, say), is complete for $\avgUnitary{HVSZK}$. This would correspond to the famous classical complexity result that the problem of distinguishing between whether two probability distributions (represented via sampling circuits) are close or far in trace distance is a $\mathsf{SZK}$-complete problem~\cite{10.1145/636865.636868}.

In \Cref{sec:polarize} we argue that this conjecture is true assuming that a unitary version of the \emph{polarization lemma} holds, which was instrumental for the $\mathsf{SZK}$-completeness result of Sahai and Vadhan~\cite{10.1145/636865.636868}. The unitary polarization lemma, if true, would state that $\Uhlmann_\kappa$ polynomial-time reduces to $\Uhlmann_{1 - 2^{-\poly(n)}}$ for all inverse polynomial $\kappa$.

\paragraph{The succinct Uhlmann Transformation Problem.} We also define a \emph{succinct} version of the Uhlmann Transformation Problem (denoted by $\SuccinctUhlmann$), where the string $x$ encodes a pair $(\hat{C},\hat{D})$ of \emph{succinct descriptions} of quantum circuits $C,D$. By this we mean that $\hat{C}$ (resp.~$\hat{D}$) is a classical circuit that, given a number $i \in \N$ written in binary, outputs the $i$'th gate in the quantum circuit $C$ (resp.~$D$). Thus the circuits $C$, $D$ in general can have \emph{exponential} depth (in the length of the instance string $x$) and generate states $\ket{C},\ket{D}$ that are unlikely to be synthesizable in polynomial time. Thus the task of synthesizing the Uhlmann transformation $U$ that maps $\ket{C}$ to a state with maximum overlap with $\ket{D}$, intuitively, should be much harder than the non-succinct version. We confirm this intuition with the following result:

\begin{theorem}[Informal]
\label{thm:intro:succinct-uhlmann-unitary-pspace-completeness}
    $\SuccinctUhlmann$ is complete for $\avgUnitaryPSPACE$ under polynomial-time reductions.
\end{theorem}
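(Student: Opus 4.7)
The plan is to prove both directions of completeness: membership, $\SuccinctUhlmann \in \avgUnitaryPSPACE$, and hardness, every problem in $\avgUnitaryPSPACE$ reduces in polynomial time to $\SuccinctUhlmann$. The high-level intuition is that a succinct classical circuit describing an exponential-depth, polynomial-width quantum circuit is exactly the data that a polynomial-space machine can process coherently, so that $\SuccinctUhlmann$ should stand in the same relation to $\avgUnitaryPSPACE$ as $\Uhlmann$ does to $\avgUnitary{SZK}_\HV$ in \cref{thm:szk_complete_intro}.

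For membership, fix a succinct instance $(\hat{C}, \hat{D})$ that specifies exponential-size circuits $C, D$ preparing $\ket{C}, \ket{D}$ on registers $AB$. Because each gate of $C, D$ is retrievable in polynomial time from $\hat{C}, \hat{D}$ and the circuits have polynomial width, the amplitudes of $\ket{C}$ and $\ket{D}$ are computable in polynomial space by Feynman path summation. Consequently every entry of $M := \ptr{A}{\ketbra{D}{C}}$ is $\PSPACE$-computable to inverse-exponential precision, and so is the partial isometry $W$ obtained from the polar decomposition $M = W|M|$, with a canonical unitary extension off its support as permitted by the partial isometry formalism of \cref{def:unitary_synth_problem}. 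By Uhlmann's theorem, $W$ acting on $B$ is an optimal Uhlmann transformation. It then suffices to observe that a unitary whose entries are $\PSPACE$-computable to inverse-exponential precision admits a polynomial-space coherent implementation --- for instance by expressing it as an exponentially long sequence of Givens rotations whose angles are generated on the fly by a $\PSPACE$ sub-procedure, or via space-efficient qubitization --- which places $\SuccinctUhlmann$ in $\avgUnitaryPSPACE$.

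For hardness, fix $(\usynth U, \cal D) \in \avgUnitaryPSPACE$. A polynomial-space quantum algorithm for $U_x$ can be unrolled into an exponential-depth, polynomial-width circuit $C'_x$: its configuration register has polynomial size, and a single global polynomial-size ``uniform step'' gadget $\tau_x$ of the $\PSPACE$ machine is iterated $T = 2^{\poly(n)}$ times. The $i$-th gate of $C'_x$ is therefore the $(i \bmod |\tau_x|)$-th gate of $\tau_x$, extractable from $i$ by a classical circuit of size $\poly(|x|)$; this is the succinct description $\hat{C'}_x$, computable from $x$ in deterministic polynomial time. Now let $\ket{\psi_x}_{RB}$ be a canonical purification of $\cal D_x$, preparable in polynomial space since the input distribution of an $\avgUnitaryPSPACE$ problem lies in $\statePSPACE$, and set $\ket{C_x}_{RB} := \ket{\psi_x}_{RB}$ and $\ket{D_x}_{RB} := (I_R \otimes U_x)\ket{\psi_x}_{RB}$. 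Both states share the same reduction on $R$, so the Uhlmann transformation on $B$ is exactly $U_x$ on the support of the marginal of $\ket{\psi_x}$. Composing the succinct description of the preparation circuit for $\ket{\psi_x}$ with (for $\hat{D}_x$) the succinct description of the algorithm implementing $U_x$ yields $\hat{C}_x, \hat{D}_x$ in polynomial time, completing the reduction.

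The main obstacle I expect is on the membership side: turning $\PSPACE$-computable matrix entries of the Uhlmann partial isometry into a polynomial-space \emph{coherent} implementation on an arbitrary input register. Classical $\PSPACE$ computability of the entries is standard via Feynman sums, but coherent application requires interleaving exponentially many elementary rotations with a $\PSPACE$-computed angle-generating oracle while maintaining polynomial space and the distributional accuracy required by $\avgUnitaryPSPACE$. This should be achievable using the quantum $\PSPACE$ simulation machinery of Watrous together with the $\statePSPACE$ synthesis techniques of Rosenthal--Yuen, but reconciling the canonical unitary extension of the Uhlmann partial isometry with the space-efficient implementation is where I anticipate the bulk of the technical work; by comparison, the hardness direction is essentially a careful encoding of $\PSPACE$-machine transitions into succinct circuits.
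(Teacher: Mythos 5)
Your hardness direction is essentially the paper's own argument for \cref{thm:succ_uhlmann_complete}: since the distribution state family of an $\avgUnitaryPSPACE$ problem is in $\statePSPACE$, one takes a space-uniform circuit $S_x$ preparing $\ket{\psi_x}$ and the circuit $T_x = (\id \otimes A_x)S_x$ obtained by appending the poly-space implementation of $U_x$, notes that the two states have identical marginals on the untouched register, and issues a single oracle call to the resulting succinct Uhlmann instance; your ``iterate a fixed poly-size step gadget exponentially many times'' unrolling is a perfectly reasonable way to produce the succinct gate-description. The problem is the membership direction, where the paper does not argue from scratch at all: it invokes Metger--Yuen's theorem that the Uhlmann transformation between two $\statePSPACE$ states with equal marginals lies in $\unitaryPSPACE$ (with error measured on the given state, i.e., exactly the average-case guarantee $\DistSuccinctUhlmann_1 \in \avgUnitaryPSPACE$ needs), a construction that exploits the structure of the Uhlmann isometry via block-encodings built from the succinctly described state-preparation circuits and a polynomial-space singular-value sign transformation.

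Your membership argument instead hinges on the claim, stated as an ``observation,'' that a unitary whose entries are $\PSPACE$-computable to inverse-exponential precision admits a polynomial-space coherent implementation. That claim is not an observation; it is the entire technical content of this direction, and neither mechanism you sketch works as stated. For the Givens-rotation route, the angle of the $i$-th rotation is determined by entries of the partial product $G_{i-1}\cdots G_1 M$, where each $G_j$ itself depends on all earlier partial products; the natural recursion has exponential depth, so on-the-fly recomputation costs exponential space, and you give no argument that these intermediate angles are themselves $\PSPACE$-computable. For the qubitization route, an entry-access block-encoding of a dense $2^n$-dimensional operator carries a subnormalization of order $2^n$, so using it requires exponentially many amplification rounds applied to a channel completion of a partial isometry, none of which you address. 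Finally, $W = \sgn(\Tr_{\reg{A}}(\ketbra{D}{C}))$ is unstable where the singular values are tiny---precisely the reason the paper introduces the cutoff $\eta$ and works with the distributional problem---so computing the entries of $M$ to additive inverse-exponential error does not give you $W$ (or a canonical completion of it) to any useful precision unless you separately argue that the unstable directions carry negligible weight on the distribution state $\ket{C}$. As written, the membership half of your proof does not go through; you would either need to supply a genuine replacement for the Metger--Yuen construction or cite it, as the paper does.
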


The class $\avgUnitaryPSPACE$ corresponds to distributional unitary synthesis problems that can be solved using a polynomial-space (but potentially exponential-depth) quantum algorithm. 
The fact that $\SuccinctUhlmann \in \avgUnitaryPSPACE$ was already proved by Metger and Yuen~\cite{metger2023stateqip}, who used this to show that optimal prover strategies for quantum interactive proofs can be implemented in $\avgUnitaryPSPACE$.\footnote{This was phrased in a different way in their paper, as $\avgUnitaryPSPACE$ was not yet defined.} The fact that $\avgUnitaryPSPACE$ reduces to $\SuccinctUhlmann$ is because solving a distributional unitary synthesis problem $(U_x)_x$ in $\avgUnitaryPSPACE$ is equivalent to applying a local unitary that transforms an entangled state $\ket{\psi_x}$ representing the distribution to $(\id \otimes U_x)\ket{\psi_x}$. This is nothing but an instance of the $\SuccinctUhlmann$ transformation problem. We refer to the proof of \cref{lem:succuhl_hard_for_pspace} for details.

We show another completeness result for $\SuccinctUhlmann$:

\begin{theorem}[Informal]
\label{thm:intro:succinct-uhlmann-unitary-qip-completeness}
    $\SuccinctUhlmann$ is complete for $\avgUnitaryQIP$ under polynomial-time reductions.
\end{theorem}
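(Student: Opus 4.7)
The plan is to establish the equality $\avgUnitaryPSPACE = \avgUnitaryQIP$, from which the theorem follows immediately via \cref{thm:intro:succinct-uhlmann-unitary-pspace-completeness}. The two inclusions are unitary analogs of the two directions of the classical $\QIP = \PSPACE$ theorem, and together they transfer the completeness of $\SuccinctUhlmann$ from one class to the other.

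First, I would prove $\avgUnitaryPSPACE \subseteq \avgUnitaryQIP$, which immediately yields $\SuccinctUhlmann \in \avgUnitaryQIP$. Viewing any $\avgUnitaryPSPACE$ algorithm as an exponentially deep quantum circuit with a polynomial-size succinct description, the verifier enlists the prover to apply this circuit one layer at a time via Uhlmann transformations between successive snapshots. Each step is checked using the zero-knowledge-style testing idea from \cref{thm:szk_complete_intro}: the verifier generates a test copy of the current snapshot, randomly shuffles it with the live register, asks the prover to act on both, and verifies consistency on the test copy. This mirrors the interactive state-synthesis constructions of Rosenthal--Yuen and Metger--Yuen, now lifted to the unitary setting.

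Second, I would prove $\avgUnitaryQIP \subseteq \avgUnitaryPSPACE$, which gives the $\avgUnitaryQIP$-hardness of $\SuccinctUhlmann$. The key tool here is the Kitaev--Watrous characterization of optimal prover strategies: given $\usynth{V} = (V_x)_x \in \avgUnitaryQIP$, the optimal prover at round $i$ applies an Uhlmann transformation $U_i$ between states $\ket{C_i^{(x)}}, \ket{D_i^{(x)}}$ determined by the verifier's circuit and the prior prover operations $U_1, \dots, U_{i-1}$. A polynomial-space simulator executes the protocol locally, replacing each prover move with the corresponding Uhlmann transformation; by \cref{thm:intro:succinct-uhlmann-unitary-pspace-completeness}, each $U_i$ lives in $\avgUnitaryPSPACE$, so composing across the polynomially many rounds yields an $\avgUnitaryPSPACE$ algorithm for $\usynth{V}$.

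The main obstacle is the recursive nature of the optimal prover's strategy: the states $\ket{C_i^{(x)}}, \ket{D_i^{(x)}}$ depend on $U_1, \dots, U_{i-1}$, which are themselves Uhlmann transformations between states defined recursively in the same way. I would handle this by inductively building polynomial-size succinct descriptions $\hat{C}_i, \hat{D}_i$ of (exponentially deep) quantum circuits preparing these states, where the descriptions invoke the PSPACE implementations of $U_1, \dots, U_{i-1}$ as subroutines. A hybrid argument across the polynomially many rounds controls the accumulated approximation error, and a preliminary amplification of the $\avgUnitaryQIP$ protocol's soundness and completeness gaps ensures that the fidelity parameter of each $\SuccinctUhlmann$ instance is negligibly close to $1$, as required by the definitions.
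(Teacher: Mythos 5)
Your high-level architecture is legitimate and close in spirit to the paper, which in fact derives $\avgUnitaryQIP = \avgUnitaryPSPACE$ (\cref{thm:avg-qip-pspace-equality}) as a corollary of proving $\DistSuccinctUhlmann_1$ complete for both classes; your second inclusion ($\avgUnitaryQIP \subseteq \avgUnitaryPSPACE$, giving hardness) matches the paper's \cref{lem:uhlmann_hard_for_qip}, except that the paper sidesteps your recursive construction of succinct descriptions by directly invoking~\cite[Lemma 7.5]{metger2023stateqip}, which already supplies $\statePSPACE$ purifications (with efficiently computable succinct circuits) of the intermediate states for \emph{some} successful prover, so each prover move becomes a single $\SuccinctUhlmann_1$ instance with fidelity exactly $1$ and no amplification is needed.

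The genuine gap is in your first inclusion, i.e.\ the containment $\SuccinctUhlmann \in \avgUnitaryQIP$ (\cref{lem:succuhl_in_qip}). Your sketch has the polynomial-time verifier ``generate a test copy of the current snapshot'' and then ``verify consistency on the test copy,'' but for succinct instances both steps are impossible for the verifier on its own: the states $\ket{C}$, $\ket{D}$ (and any intermediate snapshots) are outputs of exponential-depth circuits, so the verifier can neither prepare test copies nor apply $D^\dagger$ to check that the prover transformed them correctly. The paper resolves the first obstacle by having the verifier obtain certified copies of $\ket{C}^{\ot m}$ interactively via the $\stateQIP = \statePSPACE$ protocol of Rosenthal--Yuen, and the second --- which your proposal does not address at all --- by a separate interactive \emph{measurement synthesis} sub-protocol (\cref{lem:approx_measurement}) that lets the verifier approximately perform the projective measurement $\{\proj{D}^{\ot m}, \id - \proj{D}^{\ot m}\}$ using prover-supplied verified copies of $\ket{D}$ together with density matrix exponentiation; its soundness analysis (closeness of the conditioned post-measurement state) is a nontrivial ingredient of the completeness proof. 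Two further issues: a literal ``one layer at a time'' simulation would require exponentially many rounds, so the protocol must treat the whole transformation as a single Uhlmann step on the shuffled registers; and your appeal to a ``preliminary amplification'' of the $\avgUnitaryQIP$ completeness/soundness gap is not available --- the paper explicitly leaves completeness/soundness amplification for unitary-synthesis interactive proofs as an open problem, since the verifier holds only one copy of its quantum input and cannot rerun the protocol after a rejection.
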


Here, the class $\avgUnitaryQIP$ is like $\avgUnitary{HVPZK}$ except there is no requirement that the protocol between  the honest verifier and prover can be efficiently simulated. The proof of \Cref{thm:intro:succinct-uhlmann-unitary-qip-completeness} starts similarly to the proof of the $\avgUnitary{HVPZK}$-completeness of $\Uhlmann$, but requires additional ingredients, such as the state synthesis protocol of~\cite{rosenthal2022interactive,rosenthal2024efficient} and the ability to simulate reflections about a state, given copies of the state~\cite{JLS18}. We prove this by showing that $\SuccinctUhlmann$ is contained in $\avgUnitaryQIP$ (\Cref{lem:succuhl_in_qip}), $\avgUnitaryQIP \subseteq \avgUnitaryPSPACE$ (\Cref{lem:qip_in_pspace}), and then argue that $\avgUnitaryPSPACE$ is polynomial-time reducible to $\SuccinctUhlmann$ (\Cref{lem:succuhl_hard_for_pspace}). 

\Cref{thm:intro:succinct-uhlmann-unitary-pspace-completeness,thm:intro:succinct-uhlmann-unitary-qip-completeness} imply the following unitary complexity analogue of the $\QIP = \PSPACE$ theorem~\cite{jain2011qip} and the $\stateQIP = \statePSPACE$ theorem~\cite{rosenthal2022interactive,metger2023stateqip}:

\begin{theorem}
    $\avgUnitaryQIP = \avgUnitaryPSPACE$.
\end{theorem}

This partially answers an open question of~\cite{rosenthal2022interactive,metger2023stateqip}, who asked whether $\unitaryQIP = \unitaryPSPACE$ (although they did not formalize this question to the same level as we do here).

\subsection{Centrality of the Uhlmann Transformation Problem}

In \Cref{part:applications} of this paper we relate the Uhlmann Transformation Problem to quantum information processing tasks in a variety of areas: quantum cryptography, quantum Shannon theory, and high energy physics. We show that the computational complexity of a number of these tasks is in fact essentially \emph{equivalent} to the hardness of $\Uhlmann$. For some other problems we show that they are efficiently reducible to $\Uhlmann$ or $\SuccinctUhlmann$. 
Although some of these connections have been already observed in prior work, we believe that the framework of unitary complexity theory formalizes and clarifies the relationships between these different problems. 

We proceed to give a high level overview of our applications of the Uhlmann Transformation Problem and unitary complexity theory. 

\subsubsection{Quantum cryptography}

We show that the Uhlmann Transformation Problem is deeply intertwined with the security of quantum cryptography. First, we show the security of quantum commitment schemes is \emph{equivalent} to the average-case hardness of the Uhlmann Transformation Problem.

\paragraph{Quantum commitments.} A bit commitment scheme is a fundamental cryptographic primitive that allows two parties (called a \emph{sender} and \emph{receiver}) to engage in a two-phase communication protocol:
in the first phase (the ``commit phase''), the sender sends a commitment (i.e.\ some string) to a bit $b$ to the receiver;  the \emph{hiding} property of a bit commitment scheme ensures that the receiver cannot decide the value of $b$ from this commitment string alone.
In the second phase (the ``reveal phase''), the sender sends another string to the receiver that allows the receiver to compute the value of $b$; the \emph{binding} property of commitments ensures that the sender can only reveal the correct value of $b$, i.e.\ if the sender sent a reveal string that was meant to convince the receiver it had committed to a different value of $b$, the receiver would detect this. 

Commitment schemes --- even quantum ones --- require efficiency constraints on the adversary \cite{mayers1997unconditionally,lo1998quantum}; at least one of the hiding or binding properties must be computational.
In classical cryptography, commitment schemes can be constructed from one-way functions~\cite{Naor2003}, but recent works suggest the possibility of basing quantum commitment schemes on weaker, inherently quantum assumptions such as the existence of pseudorandom states~\cite{Kretschmer21,ananth2022cryptography,morimae2022quantum,kqst23} or EFI pairs~\cite{brakerski2022computational}. 

The following theorem shows that the existence of secure quantum commitment schemes is essentially equivalent to $\Uhlmann$ being hard on average. Roughly speaking, hardness on average means that there is an efficiently sampleable distribution over pairs of quantum circuits $(C,D)$ such that all polynomial-time algorithms fail to implement the Uhlmann transformation corresponding to $(\ket{C},\ket{D})$ with non-negligible probability over the sampling of $(C,D)$. 
\begin{theorem}[Informal]
\label{thm:intro-commitments}
$\Uhlmann_{1-\eps}$ for some negligible $\eps$ is hard on average if and only if secure quantum commitments exist. 
\end{theorem}
This theorem is formally stated and proved as \Cref{thm:uhlmann-hardness-implies-commmitments}. This formalizes a connection between Uhlmann transformations and quantum commitments that was suggested by Yan in his in-depth study of properties of quantum bit commitments~\cite{yan2023general}. The necessity for the hardness of $\Uhlmann$ is implicit in the original impossibility proofs of information-theoretic security for commitments~\cite{mayers1997unconditionally,lo1998quantum}; the sufficiency is due to the fact that \emph{non-interactive quantum commitments} can be constructed from hard $\Uhlmann$ instances. 

Given the close connection between zero knowledge protocols for unitary synthesis and the Uhlmann Transformation Problem, we also prove the following:
\begin{theorem}[Informal]
\label{thm:intro-commitment-szk}
    If there is a hard distribution of instances for $\avgUnitary{HVSZK}$, then secure quantum commitments exist. 
\end{theorem}
We note that this would follow as an immediate corollary if we were able to prove that $\Uhlmann_{1 - \eps}$ is a complete problem for $\avgUnitary{HVSZK}$; however as mentioned previously this remains a conjecture. We instead prove \Cref{thm:intro-commitment-szk} directly by showing that hard-on-average problems in $\avgUnitary{HVSZK}$ implies $\Uhlmann_{1 - \eps}$ is hard on average.

This is analogous to the classical result of Ostrovsky~\cite{ostrovsky1991one} who showed that if the classical complexity class $\mathsf{SZK}$ is hard on average, then one-way functions (and thus secure bit commitments~\cite{naor1991bit}) exist. This is formally stated and proved as \Cref{thm:hvszk-hardness-implies-commitments}.

\paragraph{Minimal assumptions in quantum cryptography.} In classical cryptography, the existence of one-way functions is considered a \emph{minimal assumption} in the sense that the security of virtually all (classical) cryptography implies it~\cite{IL89,impagliazzo1995personal}. It is a fascinating open question of what is the minimal assumption (if there exists one) in quantum cryptography; as of writing the leading contender for the minimal quantum cryptographic assumption is the existence of quantum commitments, meaning that many quantum cryptographic primitives can be shown to imply the existence of quantum commitments~\cite{brakerski2022computational,khurana2023commitments}. If quantum commitments are indeed minimal (mirroring the setting of classical cryptography), then this would show that the hardness of the Uhlmann Transformation Problem is necessary for computationally secure quantum cryptography.

\paragraph{Breaking falsifiable quantum cryptographic assumptions.}
While we don't know yet if the hardness of the Uhlmann Transformation Problem is necessary for computational quantum cryptography, we show that the hardness of the \emph{succinct} Uhlmann Transformation Problem is necessary for the security of a wide class of quantum cryptographic primitives. We consider the general notion of a \emph{falsifiable quantum cryptographic assumption}, which can be seen as a quantum analogue of the notion of a falsifiable assumption considered by Naor~\cite{Naor2003} as well as Gentry and Wichs~\cite{cryptoeprint:2010/610}. 
Our notion of a falsifiable quantum cryptographic assumption captures almost any reasonable definition of security in quantum cryptography which can be phrased in terms of an interactive \emph{security game} between an adversary and a challenger.
We show the following generic upper bound on the complexity of breaking falsifiable quantum cryptographic assumptions (see \cref{thm:falsifiable} for the formal statement):

\begin{theorem}[Informal]
A falsifiable quantum cryptographic assumption is either information-theoretically secure, or the task of breaking security reduces to $\SuccinctUhlmann$.
\end{theorem}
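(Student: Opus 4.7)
The plan is to show that whenever a falsifiable quantum cryptographic assumption fails to be information-theoretically secure, an efficient adversary with oracle access to $\SuccinctUhlmann$ can break it. Formally, such an assumption is specified by a QPT challenger $\cV_\lambda$ that plays an interactive game with an adversary and outputs a win/lose bit, with security measured relative to some trivial winning threshold $\tau(\lambda)$. Failure of information-theoretic security means there is an unbounded adversary $\cA^*$ that wins with probability at least $\tau(\lambda) + \nu(\lambda)$ for some non-negligible $\nu$. The goal is to construct an efficient $\cA$ with oracle access to $\SuccinctUhlmann$ whose winning probability matches that of $\cA^*$ up to a negligible loss.

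First I would invoke the Kitaev--Watrous characterization of optimal prover strategies in quantum interactive protocols: for each round $i$, the optimal adversary's action is a unitary $U_i$ on its private registers that is an Uhlmann transformation mapping the current joint verifier--adversary state $\ket{\psi_i}$ to a target state $\ket{\phi_i}$ encoding the optimal continuation strategy. The current state $\ket{\psi_i}$ is obtained by running $\cV_\lambda$'s circuit interleaved with the unitaries $U_1,\ldots,U_{i-1}$ on the initial state. The target state $\ket{\phi_i}$ is defined recursively from the optimal strategy for rounds $i+1,\ldots,k$ and admits a polynomial-space preparation procedure, which is exactly the analogue of the $\QIP\subseteq\PSPACE$ / $\statePSPACE$ algorithm for synthesizing optimal interactive-proof states.

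Next, I would build the oracle adversary $\cA$. At round $i$, $\cA$ computes in polynomial time the succinct classical descriptions $\hat{C}_i,\hat{D}_i$ of the (potentially exponential-depth) quantum circuits preparing $\ket{\psi_i}$ and $\ket{\phi_i}$; it then queries $\SuccinctUhlmann$ on $(\hat{C}_i,\hat{D}_i)$ and applies the returned unitary to its registers. Because $\SuccinctUhlmann$ implements the correct Uhlmann transformation, $\cA$'s unitary at each round agrees with the optimal $U_i$ up to an arbitrarily small error controlled by the oracle's precision parameter. Crucially, the fidelity between the reduced states of $\ket{\psi_i}$ and $\ket{\phi_i}$ on the verifier's side may be small, but this is not a barrier since $\SuccinctUhlmann$ handles Uhlmann instances of arbitrary fidelity. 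Summing the per-round errors over the polynomially many rounds still yields an inverse-polynomial total error, so $\cA$ wins with probability at least $\tau(\lambda)+\nu(\lambda)-\negl(\lambda)$, thereby breaking the assumption.

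The main obstacle is verifying that the succinct encodings $\hat{C}_i,\hat{D}_i$ can genuinely be produced by the efficient $\cA$ itself. The encoding of $\ket{\psi_i}$ follows from the fact that $\cV_\lambda$ is QPT and the prior rounds' unitaries are, by induction, given by succinct descriptions that $\cA$ already knows. The encoding of $\ket{\phi_i}$ is more delicate: $\cA$ must produce, given a gate index in binary, the corresponding gate of a polynomial-space algorithm that outputs the optimal continuation state. This reuses precisely the succinct-circuit construction underlying the $\SuccinctUhlmann \in \avgUnitaryPSPACE$ direction of our earlier completeness result, which shows that any polynomial-space quantum computation admits such a polynomial-time-computable classical encoding of its (exponential-depth) gate sequence. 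Packaging these encodings together and applying them round by round yields the desired reduction.
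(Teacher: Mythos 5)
Your overall skeleton (simulate a successful but inefficient adversary round by round, by feeding efficiently computable succinct descriptions of purifications of the intermediate protocol states to a $\SuccinctUhlmann$ oracle) is the same as the paper's, but there is a genuine gap in how you handle the fidelity of the Uhlmann instances. You set up the target state $\ket{\phi_i}$ as an ``optimal continuation'' state in the Kitaev--Watrous style, concede that the fidelity between the reduced states of $\ket{\psi_i}$ and $\ket{\phi_i}$ on the challenger's side may be small, and then claim this is not a barrier because $\SuccinctUhlmann$ ``handles instances of arbitrary fidelity.'' That step fails: if the fidelity of the instance is $\kappa<1$, applying (even a perfect implementation of) the Uhlmann partial isometry to $\ket{\psi_i}$ produces a state whose overlap with $\ket{\phi_i}$ is only about $\kappa$, so the per-round error is \emph{not} controlled by the oracle's precision parameter, and your claim that the simulated action agrees with the optimal $U_i$ up to arbitrarily small error is false; the round-by-round accumulation argument then collapses. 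Moreover, the distributional guarantee for $\DistSuccinctUhlmann_\kappa$ with small $\kappa$ gives no useful operational statement (cf.\ the paper's \cref{prop:operational-avg-case-uhlmann}), and the complexity consequence you want (breaking the assumption lies in $\avgUnitaryPSPACE$, hence $\avgUnitaryBQP\neq\avgUnitaryPSPACE$ is necessary) relies on completeness of the \emph{fidelity-one} problem $\DistSuccinctUhlmann_1$.

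The paper avoids this entirely by a different choice of target states: fix a single (possibly inefficient) adversary $\mathcal P$ that wins with non-negligible advantage, and for each round take $\ket{\psi_{x,j}},\ket{\varphi_{x,j}}$ to be $\statePSPACE$ purifications (via \cite[Lemma 7.5]{metger2023stateqip}) of the joint state on the message and challenger registers immediately before and after $\mathcal P$'s action \emph{in that same interaction}. Since the adversary never touches the challenger's private register, the reduced states on that register coincide exactly, so every instance is a valid $\SuccinctUhlmann_1$ instance with fidelity exactly $1$, and the oracle call really does implement a successful prover's move up to the oracle error. A secondary weak point in your writeup is the inductive claim that $\cA$ ``already knows'' succinct descriptions of the earlier rounds' unitaries: the oracle applies a transformation but does not hand you a circuit for it. The paper instead obtains the succinct descriptions of the intermediate-state purifications directly from the challenger's circuit via the explicit block-encoding constructions of \cite{metger2023stateqip}, which is the ingredient you would need to make your encoding step rigorous.
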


Since $\SuccinctUhlmann$ is complete for $\avgUnitaryPSPACE$ (\cref{thm:intro:succinct-uhlmann-unitary-pspace-completeness}), this means that $\avgUnitaryBQP \neq \avgUnitaryPSPACE$ is a necessary complexity-theoretic assumption for computational quantum cryptography. 
This suggests that unitary complexity provides the appropriate framework to establish a close link between complexity theory and quantum cryptography, as recent work~\cite{Kretschmer21,ananth2022cryptography,morimae2022quantum,kqst23,lombardi2023one} has shown that traditional complexity theoretic assumptions are not always linked to quantum cryptography in the way one would expect.

\subsubsection{Quantum Shannon theory applications} 

Quantum Shannon theory studies the achievability and limits of quantum communication tasks (see~\cite{wilde2013quantum,khatri2020principles,renes2015quantum} for a comprehensive overview).
While the information-theoretic aspects of quantum communication tasks are well-understood, the complexity of implementing these protocols has received remarkably little attention.
Here, we study the computational complexity of some fundamental tasks in quantum Shannon theory, namely noisy channel decoding and compression of quantum states using our framework for unitary complexity and our results on the Uhlmann transformation problem.\footnote{We also note that in independent work after the publication of our results, Arnon-Friedman, Brakerski, and Vidick have investigated the computational aspects of entanglement distillation~\cite{arnon2023computational}, showing that in general entanglement distillation is computationally infeasible assuming quantum commitments exist.
It would be interesting to connect their results to our framework for unitary complexity to build up a more rigorous theory of the complexity of quantum Shannon tasks.}

\paragraph{Decodable channel problem.} Consider a quantum channel $\cal{N}$ 
that maps a register $\reg{A}$ to a register $\reg{B}$. Suppose that the channel $\cal{N}$ is \emph{decodable}, meaning that it is possible to information-theoretically (approximately) recover the information sent through the channel; i.e., there exists a decoding channel $\cD$ mapping register $\reg{B}$ back to register $\reg{A}$ such that $\cD_{\reg{B} \to \reg{A}'} \Big ( \cal{N}_{\reg{A} \to \reg{B}} (\Phi_{\reg{AR}}) \Big) \approx \Phi_{\reg{A}' \reg{R}}$, where $\ket{\Phi}_{\reg{AR}}$ is the maximally entangled state. Note that the register $\reg{R}$ is not touched.

Important examples of decodable channels come from coding schemes for noisy quantum channels: suppose $\cal{K}$ is a noisy quantum channel that has capacity $C$ (meaning it is possible to (asymptotically) transmit $C$ qubits through $\cal{K}$). Let $\cE$ denote a channel that takes $C$ qubits and maps it to an input to $\cal{K}$. For example, we can think of $\cE$ as an encoder for a quantum error-correcting code. If $\cE$ is a good encoding map, the composite channel $\cal{N} : \rho \mapsto \cal{K} ( \cE (\rho))$ is decodable.

We define the \emph{Decodable Channel Problem}: given as input a circuit description of a channel $\cal{N}$ that maps register $\reg{A}$ to register $\reg{B}$ and furthermore is promised to be decodable, and given the register $\reg{B}$ of the state $(\cal{N} \otimes \id)(\Phi_{\reg{AR}})$, decode and output a register $\reg{A}'\equiv A$ such that the final joint state of $\reg{A}' \reg{R}$ is close to $\ket{\Phi}$. Although it is information-theoretically possible to decode the output of $\cal{N}$, it may be computationally intractable to do so. In fact, we can characterize the complexity of the Decodable Channel Problem:

\begin{theorem}[Informal]
    The Decodable Channel Problem can be solved in polynomial-time up to inverse polynomial error if and only if $\Uhlmann$ can be solved in polynomial-time up to inverse polynomial error.\end{theorem}

This theorem is formally stated and proved as \Cref{thm:complexity-decodable-channels}; since we do not expect that $\Uhlmann$ is solvable in polynomial-time, this suggests that the Decodable Channel Problem is computationally hard in general. The main idea behind the upper bound (Decodable Channel Problem is easy if $\Uhlmann$ is easy) is that a channel $\cal{N}$ is decodable if and only if the output of the \emph{complementary channel}\footnote{The output of the complementary channel can be thought of as the qubits that a purification (formally, a Stinepring dilation) of the channel $\cal{N}$ discards to the environment.} $\cal{N}^c$, when given register $\reg{A}$ of the maximally entangled state $\ket{\Phi}_{\reg{AR}}$, is approximately unentangled with register $\reg{R}$. Thus by Uhlmann's theorem there exists an Uhlmann transformation acting on the output of the channel $\cal{N}$ that recovers the maximally entangled state. If $\Uhlmann \in \avgUnitaryBQP$, then this transformation can be performed efficiently. 

The proof of the lower bound (Decodable Channel Problem is hard if $\Uhlmann$ is hard) draws inspiration from quantum commitments. As discussed earlier, the hardness of $\Uhlmann$ essentially implies the existence of secure quantum commitments, and in particular one where the hiding property is computational. From this, we can construct a hard instance of the Decodable Channel Problem:
consider a channel $\cal{N}$ that takes as input a single bit $\ket{b}$, and then outputs the commitment register of the commitment to bit $b$ (and discards the reveal register). The ability to decode this ``commitment channel'' implies the ability to break the hiding property of the underlying commitment scheme, and therefore decoding must be computationally hard.

\paragraph{Compression of quantum information.} Another fundamental task in information theory --- both classical and quantum --- is compression of data. Shannon's source coding theorem shows that the Shannon entropy of a random variable $X$ characterizes the rate at which many independent copies of $X$ can be compressed~\cite{shannon1948mathematical}. Similarly, Schumacher proved that the von Neumann entropy of a density matrix $\rho$ characterizes the rate at which many independent copies of $\rho$ can be (coherently) compressed~\cite{schumacher1995quantum}. 

We consider the \emph{one-shot} version of the information compression task, where one is given just one copy of a density matrix $\rho$ (rather than many copies) and the goal is to compress it to as few qubits as possible while being able to recover the original state within some error. In the one-shot setting the von Neumann entropy no longer characterizes the optimal compression of $\rho$; instead this is given by a one-shot entropic quantity known as the \emph{smoothed max-entropy}~\cite{tomamichel2012framework}. What is the computational effort required to perform near-optimal one-shot compression of quantum states? Our next result gives upper and lower bounds for the computational complexity of this task:

\begin{theorem}[Informal]
    Quantum states can be optimally compressed to their smoothed max entropy in polynomial-time if $\Uhlmann_{1 - \eps} \in \avgUnitaryBQP$ for some negligible $\eps$. Furthermore, if stretch pseudorandom state generators exist, then optimal compression of quantum states cannot be done in polynomial time.
\end{theorem}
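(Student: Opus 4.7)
The plan is to prove the two implications separately. For the upper bound, I would reduce quantum state compression to $\Uhlmann_{1-\eps}$. Given a circuit $C$ preparing a bipartite purification $\ket{\psi}_{SE}$ of the state $\rho_S$ to be compressed, the goal is to apply a local unitary $V$ on $S = S' S''$ with $|S'| \approx H^\eps_{\max}(\rho_S)$ such that $V \ket{\psi}_{SE} \approx \ket{\chi}_{S'E} \ot \ket{0}_{S''}$ for some state $\ket{\chi}_{S'E}$ whose reduced state on $E$ approximates $\rho_E$. Such a $V$ is by definition an Uhlmann transformation between $\ket{\psi}_{SE}$ and the target, since both sides have reduced state on $E$ close to $\rho_E$. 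The reduction would classically construct a circuit $D$ preparing the target, and then invoke the assumed $\avgUnitaryBQP$ algorithm for $\Uhlmann_{1-\eps}$ on the pair $(C,D)$ to apply $V$ to the input register.

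For the lower bound, suppose a stretch PRS generator $G : \bits^{\lambda} \to \bits^{n(\lambda)}$ with $n(\lambda) > \lambda$ exists. Let $\ket{\phi_k} = G(k)$ and consider the purification $\ket{\psi}_{SE} = 2^{-\lambda/2} \sum_k \ket{k}_E \ket{\phi_k}_S$ of $\rho = \E_k[\ketbra{\phi_k}{\phi_k}]$; this purification is efficiently preparable because $G$ is QPT. Since $\rho$ has rank at most $2^\lambda$, its smoothed max-entropy satisfies $H^\eps_{\max}(\rho) \le \lambda$. A polynomial-time optimal compression algorithm, applied to this circuit, would therefore produce a unitary $V$ on $S$ with $V\ket{\psi}_{SE} \approx \ket{\chi}_{S'E} \ot \ket{0}_{S''}$, where $|S'| \approx \lambda$ and $|S''| = n(\lambda) - \lambda \ge 1$.

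To derive a contradiction, I would use $V$ to distinguish PRS samples from Haar random states. The distinguisher takes a challenge $\ket{\varphi}$ on $n(\lambda)$ qubits, applies $V$, and accepts iff the $S''$ register measures to $\ket{0^{|S''|}}$. Projecting the compression guarantee onto $\bra{0^{|S''|}}_{S''}$ gives $\E_k \bigl[ \norm{\bra{0^{|S''|}}_{S''} V \ket{\phi_k}}^2 \bigr] \ge 1 - \negl(\lambda)$, so the distinguisher accepts PRS samples with overwhelming probability. For Haar input, using $V V^\dagger = I_S$,
\begin{equation*}
    \E_{\Haar}\mbracket{\norm{\bra{0^{|S''|}}_{S''} V \ket{\psi_{\Haar}}}^2} = \tr{(\ketbra{0}{0}_{S''} \ot I_{S'}) \, V \, (I/2^{n(\lambda)}) \, V^\dagger} = 2^{-|S''|} \le 1/2,
\end{equation*}
yielding a non-negligible distinguishing advantage and thus contradicting PRS security.

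The principal obstacle is the upper-bound step of explicitly constructing the target circuit $D$ in polynomial time. Optimal smoothing implicitly selects a rank-$2^{|S'|}$ approximation $\sigma_S$ of $\rho_S$ whose spectrum is not classically accessible from $C$ alone in polynomial time. One plausible route is to bypass any exact knowledge of $\sigma_S$ by preparing the target through a randomised decoupling-style construction, potentially combined with an auxiliary use of the $\Uhlmann$ oracle to first synthesise an approximate purification $\ket{\chi}_{S'E}$ before invoking $\Uhlmann$ a second time for the final compression step. Making this preprocessing explicit, polynomial-time, and tight to the smoothed max-entropy bound is the crux of the proof.
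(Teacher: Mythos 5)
Your proposal has the right skeleton on both sides, but each direction has a genuine gap at precisely the point where the paper's technical work lives.

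For the upper bound, you correctly identify the crux yourself: to invoke an $\Uhlmann$ oracle you need an explicitly, efficiently describable circuit for the \emph{target} state, and the optimal smoothing state is not classically accessible from $C$. The paper resolves this with the decoupling theorem (Dupuis): the target is the state obtained by applying a \emph{random Clifford} $U$ to register $\reg{A}$ of the purification and then \emph{coherently} measuring the first $n-s$ qubits in the computational basis, i.e.\ $\ket{G} = \sum_y \ket{y}\otimes(\Pi_y U\otimes\id)\ket{\rho}\otimes\ket{0}$. This circuit is polynomial-size (Cliffords have $O(n^2)$ circuits) and requires no knowledge of the spectrum or of the smoothing state; the decoupling theorem guarantees that for an average Clifford the reduced states of $\ket{F}=\ket{\Phi}_{\reg{EE}'}\otimes\ket{\rho}_{\reg{AR}}$ and $\ket{G}$ on the untouched registers have fidelity $1-\eps$, so $(F,G)$ is a valid $\Uhlmann_{1-\eps}$ instance, and the encoder/decoder are obtained from one call each to the Uhlmann algorithm (forward and reverse). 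Fixing a good Clifford $U$ and a good measurement outcome $y^*$ by averaging is exactly why the formal statement uses non-uniform algorithms and $\avgUnitaryBQPpoly$ rather than $\avgUnitaryBQP$; your sketch of ``use the oracle twice to first synthesise $\ket{\chi}_{S'E}$'' does not close this gap, because synthesising the compressed purification is again the very problem you are trying to solve.

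For the lower bound, your hard instance and the bound $H^\eps_{\max}(\rho)\le\lambda$ match the paper, but your distinguisher assumes the compressor is a clean unitary $V$ on $\reg{S}$ mapping the state to $\ket{\chi}_{S'E}\otimes\ket{0^{|S''|}}$. The definition of compression only provides channels $E$ (to $s$ qubits) and $D$ (back to $n$ qubits) such that $D\circ E$ approximately preserves purifications; a general encoder may measure, discard, or leave arbitrary unknown garbage in its dilation, so the ``check $\reg{S}''$ is all zeros'' test is not available. The paper instead runs $D\circ E$ on the challenge and tests projection back onto the challenge state, which requires the reflection oracle from the \emph{strong} PRS security definition (equivalent to standard PRS); the Haar-side acceptance is then bounded by the symmetric-subspace computation giving roughly $2^{s-m}$. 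Your Haar computation is essentially that same calculation, so your approach could be repaired, e.g.\ by replacing your structural test with a SWAP test against an extra copy of the challenge (using multi-copy PRS security), but as written the step ``projecting the compression guarantee onto $\bra{0^{|S''|}}$'' has no justification for a general $(E,D)$. Two further parameter issues: the compression benchmark is $H^\eps_{\max}(\rho)+O(\log\frac{1}{\eps})$ qubits, so stretch $m(\lambda)>\lambda$ alone is not enough (the paper needs $m(\lambda)>\lambda+O(\log\frac{1}{\eps})+2$), and the paper rules out compression even with constant error $\frac12$, whereas your analysis implicitly assumes negligible compression error.
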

This theorem is formally stated and proved as~\Cref{thm:comp-compression,thm:comp-compression-lb}.
The upper bound (i.e., compression is easy if $\Uhlmann$ is easy) is proved using a powerful technique in quantum information theory known as \emph{decoupling}~\cite{dupuis2010decoupling}.
The hardness result for compression is proved using a variant of \emph{pseudorandom states}, a cryptographic primitive that is a quantum analogue of pseudorandom generators~\cite{JLS18}.

\subsubsection{Black-hole radiation decoding}
In recent years, quantum information and quantum complexity have provided a new lens on long-standing questions surrounding the quantum-mechanical description of black holes. \cite{preskill1992black,almheiri2013black,Harlow_2013,brown2016complexity,susskind2016computational,bouland2019computational,yang2023complexity}.
We consider applications of the Uhlmann Transformation Problem to computational tasks arising from this research. 

In particular, we consider the Harlow-Hayden \emph{black hole radiation decoding task}~\cite{Harlow_2013}, which is defined as follows.  We are given as input a circuit description of a tripartite state $\ket{\psi}_{\reg{BHR}}$ that represents the global pure state of a single qubit (register $\reg{B}$), the interior of a black hole (register $\reg{H}$), and the Hawking radiation that has been emitted by the black hole (register $\reg{R}$). Moreover, we are promised that it is possible to \emph{decode} from the emitted radiation $\reg{R}$ a single qubit $\reg{A}$ that forms a maximally entangled state $\ket{\mathrm{EPR}} = \frac{1}{\sqrt{2}} (\ket{00} + \ket{11})$ with register $\reg{B}$. The task is to perform this decoding when given register $\reg{R}$ of a system in the state $\ket{\psi}$. 

Harlow and Hayden~\cite{Harlow_2013} showed that the decoding task is computationally intractable assuming that $\mathsf{SZK} \not\subseteq \mathsf{BQP}$. 
However, precisely characterizing the task's complexity (i.e., providing an equivalence rather than a one-way implication) appears to require the notions of a fully quantum complexity theory. Brakerski recently showed that this task is equivalent to breaking the security of a quantum cryptographic primitive known as EFI pairs~\cite{brakerski2022blackhole}. We reformulate this equivalence in our unitary complexity framework to show that black hole radiation decoding (as formalised above) can be solved in polynomial-time if and only if $\Uhlmann \in \avgUnitaryBQP$.

\subsection{Summary and future directions}

Computational tasks with quantum inputs and/or outputs are ubiquitous throughout quantum information processing. The traditional framework of complexity theory, which is focused on computational tasks with classical inputs and outputs, cannot naturally capture the complexity of these ``fully quantum'' tasks. 

In this paper we introduce a framework to reason about the computational complexity of unitary synthesis problems. We then use this framework to study Uhlmann's theorem through an algorithmic lens, i.e.\ to study the complexity of Uhlmann transformations. We prove that variants of the Uhlmann Transformation Problem are complete for some unitary complexity classes, and then explore relationships between the Uhlmann Transformation Problem and computational tasks in quantum cryptography, quantum Shannon theory, and high energy physics. 

The study of the complexity of state transformation tasks is a very new field and we hope that our formal framework of unitary complexity theory and our findings about the Uhlmann Transformation Problem provide a useful starting point for a rich theory of the complexity of ``fully quantum'' problems. 
Many questions in this direction have yet to be explored.
Throughout this paper, we have included many concrete open problems, which we hope will spark future research in this new direction in complexity theory.
Additionally, our work suggests some high-level, open-ended future directions to explore:

\paragraph{Populating the zoo.} An important source of the richness of computational complexity theory is the variety of computational problems that are studied. For example, the class $\class{NP}$ is so interesting because it contains many complete problems that are naturally studied across the sciences~\cite{papadimitriou1997np}, and the theory of $\mathsf{NP}$-completeness gives a unified way to relate them to each other. 

Similarly, a fully quantum complexity theory should have its own zoo of problems drawn from a diverse range of areas. We have shown that core computational problems in quantum cryptography, quantum Shannon theory, and high energy physics can be related to each other through the language of unitary complexity theory. What are other natural problems in e.g.~quantum error-correction, quantum metrology, quantum chemistry, or condensed matter physics, and what can we say about their computational complexity?

\paragraph{The crypto angle.} Complexity and cryptography are intimately intertwined. Operational tasks in cryptography have motivated models and concepts that have proved indispensible in complexity theory (such as pseudorandomness and zero-knowledge proofs), and conversely complexity theory has provided a rigorous theoretical foundation to study cryptographic hardness assumptions. 

We believe that there can be a similarly symbiotic relationship between quantum cryptography and a fully quantum complexity theory. 
Recent quantum cryptographic primitives such as quantum pseudorandom states~\cite{JLS18} or one-way state generators~\cite{morimae2022quantum}
are unique to the quantum setting, and the relationships between them are barely understood. For example, an outstanding question is whether there is a meaningful \emph{minimal hardness assumption} in quantum cryptography, just like one-way functions are in classical cryptography. 
Can a fully quantum complexity theory help answer this question about minimal quantum cryptographic assumptions, or at least provide some guidance? For example, there are many beautiful connections between one-way functions, average-case complexity, and Kolomogorov complexity~\cite{IL89,impagliazzo1995personal,liu2020one}. Do analogous results hold in the fully quantum setting? 

\paragraph{The learning theory angle.} Quantum learning theory has also seen rapid development, particularly on the topic of quantum state learning~\cite{aaronson2007learnability,huang2020predicting,buadescu2021improved,anshu2023survey}. Learning quantum states or quantum processes can most naturally be formulated as tasks with quantum inputs. Traditionally these tasks have been studied in the information-theoretic setting, where sample complexity is usually the main measure of interest. However we can also study the computational difficulty of learning quantum objects. What does a complexity theory of quantum learning look like?

\paragraph{Traditional versus fully quantum complexity theory.} While traditional complexity theory appears to have difficulty reasoning about fully quantum tasks, can we obtain \emph{formal} evidence that the two theories are, in a sense, independent of each other? For example, can we show that $\mathsf{P} = \mathsf{PSPACE}$ does not imply $\unitaryBQP = \unitaryPSPACE$? One would likely have to show this in a \emph{relativized} setting, i.e., exhibit an oracle $O$ relative to which $\mathsf{P}^O = \mathsf{PSPACE}^O$ but $\unitaryBQP^O \neq \unitaryPSPACE^O$. Another way would be to settle Aaronson and Kuperberg's ``Unitary Synthesis Problem''~\cite{aaronson2007quantum} in the negative; see~\cite{lombardi2023one} for progress on this. Such results would give compelling evidence that the reasons for the hardness of unitary transformations are intrinsically different than the reasons for the hardness of a Boolean function. More generally, what are other ways of separating traditional from fully quantum complexity theory?

\subsection*{Guide for readers}
Although the paper is rather long, the material is organized in a way that supports random-access reading -- depending on your interests, it is not necessary to read Section $X$ before reading Section $X+1$. All sections depend on the basic definitions of unitary complexity theory (\cref{sec:defs}) and the basic definitions of the Uhlmann Transformation Problem (\cref{sec:uhlmann}). From then on, it's choose-your-own-adventure. If you are interested in:
\begin{itemize}
    \item \textbf{Structural results about the complexity of $\Uhlmann$}. Read \cref{sec:protocols,sec:structural-uhlmann,sec:structural-succinct-uhlmann}.

    \item \textbf{Quantum cryptography}. Read \cref{sec:qcrypto}. It may be helpful to review the definitions of quantum interactive protocols (\cref{sec:prelims,sec:protocols}).

    \item \textbf{Quantum Shannon theory}. Read \cref{sec:shannon}. It may be helpful to read the section on quantum commitments (\cref{sec:commitments}).

    \item \textbf{Quantum gravity}. Read \cref{sec:gravity}. It may be helpful to read the section on the Decodable Channel Problem (\cref{ssec:Noisy_Channel_Decoding}).
\end{itemize}

\subsection*{Acknowledgments} 

We thank Anurag Anshu, Lijie Chen, Andrea Coladangelo, Sam Gunn, Yunchao Liu, Joe Renes, Renato Renner, and Mehrdad Tahmasbi for helpful discussions.
We thank Fred Dupuis for his help with understanding the decoupling results in his thesis.
We are especially grateful to William Kretschmer, Fermi Ma, and John Wright for their thorough discussions on the first draft of the paper.
We also thank Tomoyuki Morimae for their helpful comments on a preliminary version of this work. We thank anonymous conference reviewers for their helpful feedback. 
JB and HY are supported by AFOSR award FA9550-21-1-0040, NSF CAREER award CCF-2144219, and the Sloan Foundation. 
TM acknowledges support from SNSF Project Grant No. 200021\_188541 and AFOSR-Grant No. FA9550-19-1-0202.
AP is partially supported by AFOSR YIP (award number FA9550-16-1-0495), the Institute for Quantum Information and Matter (an NSF Physics Frontiers Center; NSF Grant PHY-1733907) and by a grant from the Simons Foundation (828076, TV).
LQ is supported by DARPA under Agreement No.\ HR00112020023.
We thank the Simons Institute for the Theory of Computing, where some of this work was conducted.  \section{Preliminaries}
\label{sec:prelims}

\subsection{Notation} \label{sec:notation}
For a bit string $x \in \bits^*$, we denote by $|x|$ its length (not its Hamming weight).
When $x$ describes an instance of a computational problem, we will often use $n = |x|$ to denote its size.

A function $\delta:\N \to [0,1]$ is an \emph{inverse polynomial} if there exists a polynomial $p$ such that $\delta(n) \leq 1/p(n)$ for all sufficiently large $n$. A function $\epsilon:\N \to [0,1]$ is \emph{negligible} if for every polynomial $p$, for all sufficiently large $n$ we have $\epsilon(n) \leq 1/p(n)$. Furthermore for convenience we also assume (unless otherwise stated) all polynomials and error functions are \emph{monotonic}, i.e., for polynomials $p$ we assume that $p(n+1) \geq p(n)$ for all $n$ and for error functions $\eps:\N \to [0,1]$ we have $\eps(n+1) \leq \eps(n)$. When we talk about polynomial or negligible functions with multiple arguments (e.g., $\poly(n,r)$ or $\negl(n,r)$), we mean that it is a polynomial or negligible function in the \emph{sum} of the two arguments (i.e., $\poly(n,r) = \poly(n+r)$ and $\negl(n,r) = \negl(n+r)$). 

A \emph{register} $\reg{R}$ is a named finite-dimensional complex Hilbert space. If $\reg{A}, \reg{B}, \reg{C}$ are registers, for example, then the concatenation $\reg{A} \reg{B} \reg{C}$ denotes the tensor product of the associated Hilbert spaces. We abbreviate the tensor product state $\ket{0}^{\ot n}$ as $\ket{0^n}$. For a linear transformation $L$ and register $\reg R$, we write $L_{\reg R}$ to indicate that $L$ acts on $\reg R$, and similarly we write $\rho_{\reg R}$ to indicate that a state $\rho$ is in the register $\reg R$. We write $\Tr(\cdot)$ to denote trace, and $\Tr_{\reg R}(\cdot)$ to denote the partial trace over a register $\reg R$.

We denote the set of linear transformations on $\reg R$ by $\linear(\reg R)$, and linear transformations from $\reg R$ to another register $\reg S$ by $\linear(\reg R, \reg S)$. We denote the set of positive semidefinite operators on a register $\reg{R}$ by $\pos(\reg{R})$. 
The set of density matrices on $\reg R$ is denoted $\states(\reg R)$.
For a pure state $\ket\varphi$, we write $\varphi$ to denote the density matrix $\ketbra{\varphi}{\varphi}$. We denote the identity transformation by $\id$.
For an operator $X \in \linear(R)$, we define $\| X \|_\infty$ to be its operator norm, and $\| X\|_1 = \Tr(|X|)$ to denote its trace norm, where $|X| = \sqrt{X^\dagger X}$.
We write $\td(\rho,\sigma) = \frac{1}{2} \| \rho - \sigma \|_1$ to denote the trace distance between two density matrices $\rho, \sigma$, and $\fidelity(\rho,\sigma) = \| \sqrt{\rho} \sqrt{\sigma} \|_1^2$ for the fidelity between $\rho, \sigma$.\footnote{We note that in the literature there are two versions of fidelity that are commonly used; here we use the \emph{squared} version of it.} Throughout the paper we frequently invoke the following relationship between fidelity and trace distance:

\begin{proposition}[Fuchs-van de Graaf inequalities]
For all density matrices $\rho,\sigma$ acting on the same space, we have that
\[
    1 - \sqrt{\fidelity(\rho,\sigma)} \leq \td(\rho,\sigma) \leq \sqrt{1 - \fidelity(\rho,\sigma)}\,.
\]
\end{proposition}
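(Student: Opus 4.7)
The plan is to prove the two inequalities separately, each by reducing to a simple base case (pure states for the upper bound, two-outcome classical distributions for the lower bound).

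For the upper bound $\td(\rho,\sigma) \leq \sqrt{1 - \fidelity(\rho,\sigma)}$, I would invoke Uhlmann's theorem itself (stated in \cref{eq:intro-1} of the paper) to produce purifications $\ket\psi, \ket\varphi$ of $\rho, \sigma$ satisfying $|\braket{\psi|\varphi}|^2 = \fidelity(\rho,\sigma)$. For two pure states, a direct computation inside the two-dimensional subspace spanned by $\ket\psi$ and $\ket\varphi$ gives the identity $\td(\proj{\psi}, \proj{\varphi}) = \sqrt{1 - |\braket{\psi|\varphi}|^2}$ (the nonzero eigenvalues of $\proj\psi - \proj\varphi$ are $\pm\sqrt{1 - |\braket{\psi|\varphi}|^2}$). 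Because the partial trace is a CPTP map and trace distance is contractive under CPTP maps, $\td(\rho,\sigma) \leq \td(\proj\psi,\proj\varphi) = \sqrt{1 - \fidelity(\rho,\sigma)}$, as desired.

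For the lower bound $1 - \sqrt{\fidelity(\rho,\sigma)} \leq \td(\rho,\sigma)$, I would use the variational characterization $\td(\rho,\sigma) = \max_{0 \leq M \leq \id} \Tr(M(\rho - \sigma))$ and choose an optimal two-outcome POVM $\{M, \id - M\}$, producing classical distributions $p = (p_0, p_1), q = (q_0, q_1)$ on $\{0,1\}$ with $\td(p,q) = \td(\rho,\sigma)$. By monotonicity of fidelity under measurement (which follows from Uhlmann's theorem applied to a Naimark dilation), $\fidelity(\rho,\sigma) \leq \fidelity(p,q) = (\sqrt{p_0 q_0} + \sqrt{p_1 q_1})^2$. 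It then remains to prove the classical Fuchs--van de Graaf inequality $1 - \sqrt{\fidelity(p,q)} \leq \td(p,q)$, which follows from the elementary chain
\[
    2 - 2\sqrt{\fidelity(p,q)} = \sum_i (\sqrt{p_i} - \sqrt{q_i})^2 \leq \sum_i |\sqrt{p_i} - \sqrt{q_i}|(\sqrt{p_i} + \sqrt{q_i}) = \sum_i |p_i - q_i| = 2\td(p,q).
\]
Combining gives $1 - \sqrt{\fidelity(\rho,\sigma)} \leq 1 - \sqrt{\fidelity(p,q)} \leq \td(p,q) = \td(\rho,\sigma)$.

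The substantive ingredients are thus Uhlmann's theorem and the monotonicity of fidelity under measurement (which itself follows from Uhlmann's theorem); given these, both inequalities reduce to short algebraic manipulations, so there is no serious obstacle. The only mild subtlety is ensuring consistency with the \emph{squared} fidelity convention used in the paper, which must be tracked carefully in the pure-state calculation and in the classical inequality above.
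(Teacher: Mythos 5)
The paper states this proposition without proof, as a standard imported fact, so the only question is whether your argument stands on its own. Your upper-bound half does: Uhlmann purifications with $|\braket{\psi|\varphi}|^2 = \fidelity(\rho,\sigma)$, the pure-state identity $\td(\proj{\psi},\proj{\varphi}) = \sqrt{1-|\braket{\psi|\varphi}|^2}$, and contractivity of trace distance under the partial trace give $\td(\rho,\sigma)\le\sqrt{1-\fidelity(\rho,\sigma)}$, and you track the squared-fidelity convention correctly.

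The lower-bound half has a genuine gap: your final chain runs in the wrong direction. You pick the Helstrom measurement so that $\td(p,q)=\td(\rho,\sigma)$, and you correctly state that fidelity is monotone under measurement, $\fidelity(\rho,\sigma)\le\fidelity(p,q)$. But that inequality yields $1-\sqrt{\fidelity(\rho,\sigma)} \ge 1-\sqrt{\fidelity(p,q)}$, whereas your concluding display asserts $1-\sqrt{\fidelity(\rho,\sigma)} \le 1-\sqrt{\fidelity(p,q)}$, contradicting the monotonicity you just invoked. What your argument actually proves is $1-\sqrt{\fidelity(p,q)}\le\td(\rho,\sigma)$, which is weaker than the claim, and the trace-distance-optimal measurement does not in general attain the fidelity for mixed states, so the gap is not cosmetic. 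The standard repair swaps which quantity the measurement is chosen to preserve: use the Fuchs--Caves characterization $\sqrt{\fidelity(\rho,\sigma)} = \min_{\{E_m\}}\sum_m\sqrt{\Tr(E_m\rho)\,\Tr(E_m\sigma)}$, i.e.\ choose a POVM that \emph{achieves} the fidelity, so $\fidelity(p,q)=\fidelity(\rho,\sigma)$ exactly, and then invoke monotonicity of the \emph{trace distance} under measurements, $\td(p,q)\le\td(\rho,\sigma)$. Your classical inequality $1-\sqrt{\fidelity(p,q)}\le\td(p,q)$ is correct and closes the argument via $1-\sqrt{\fidelity(\rho,\sigma)} = 1-\sqrt{\fidelity(p,q)} \le \td(p,q)\le\td(\rho,\sigma)$. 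Be aware that measurement achievability of the fidelity is a strictly stronger ingredient than the monotonicity you cited (it does not follow from Uhlmann plus Naimark alone), so it must be proved or referenced separately.
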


A \emph{quantum channel} from register $\reg A$ to $\reg B$ is a completely positive and trace-preserving (CPTP) map from $\linear(\reg A)$ to $\linear(\reg B)$.
For simplicity, we often write $\cN: \reg A \to \reg B$ instead of $\cN: \linear(\reg A) \to \linear(\reg B)$ when it is clear that $\cN$ is a channel. 
We denote the set of quantum channels as $\cptp(\reg A, \reg B)$.
We also call a channel a \emph{superoperator}.
For a channel $\Phi$, we write $\supp(\Phi)$ to denote the number of qubits it takes as input. 
We call a channel unitary (resp.~isometric) if it conjugates its input state with a unitary (resp.~isometry).
The diamond norm of a channel $\Phi \in \cptp(\reg A,\reg B)$ is defined as $\| \Phi \|_\diamond = \max_\rho \| (\Phi \otimes \id_{\reg C})(\rho) \|_1$ where the maximization is over all density matrices $\rho \in \states(\reg A \otimes \reg C)$ where $\reg C$ is an arbitrary register.  

Another important type of quantum operation is a $\emph{measurement}$. In general a quantum measurement is described by a finite set of positive semidefinite matrices $\mathcal{M} = \{M_i\}_{i}$ satisfying $\sum_{i} M_i = \id$. Performing a measurement on a state $\rho$ results in outcome $i$ with probability $\Tr[M_i \rho]$. Conditioned on outcome $i$, the post-measurement state is 
\begin{equation}
    \rho|_{M_i} = \frac{\sqrt{M_i}\rho \sqrt{M_i}}{\Tr(M_i \rho)}\,.
\end{equation}
The gentle measurement lemma is an important property about quantum measurements that connects the trace distance between a state and its post-measurement state to the probability that the measurement accepts.
\begin{proposition}[Gentle Measurement lemma]\label{prop:gentle_measurement}
    Let $\rho$ be a density matrix and $\Lambda$ be a positive semidefinite Hermitian matrix. If $\Tr[\Lambda \rho] \geq 1 - \epsilon$, then $\fidelity(\rho,\rho|_{\Lambda}) \geq 1 - \epsilon$ and $\norm{\rho - \rho|_{\Lambda}}_{1} \leq 2 \sqrt{\epsilon}$.  
\end{proposition}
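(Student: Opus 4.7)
The plan is to derive the trace distance bound as an immediate consequence of the fidelity bound, so that the real work is concentrated on proving the latter. Applying the Fuchs--van de Graaf inequality stated just above, if $\fidelity(\rho,\rho|_\Lambda) \geq 1-\eps$ then $\td(\rho,\rho|_\Lambda) \leq \sqrt{1-\fidelity(\rho,\rho|_\Lambda)} \leq \sqrt{\eps}$, so $\|\rho - \rho|_\Lambda\|_1 = 2\td(\rho,\rho|_\Lambda) \leq 2\sqrt{\eps}$ as required. Hence everything reduces to proving $\fidelity(\rho, \rho|_\Lambda) \geq 1-\eps$.

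For the fidelity bound, I would use Uhlmann's variational characterization of fidelity. Let $\ket{\psi}_{\reg{AR}}$ be any purification of $\rho$ on a register $\reg A$. A direct computation shows that the subnormalized vector $(\sqrt{\Lambda}_{\reg A} \otimes \id_{\reg R})\ket{\psi}_{\reg{AR}}$ has squared norm $\Tr(\Lambda\rho)$ and reduced state $\sqrt{\Lambda}\rho\sqrt{\Lambda}$ on $\reg A$; after normalization, it is therefore a purification of $\rho|_\Lambda$. Writing $p := \Tr(\Lambda\rho)$ and combining Uhlmann's theorem with the inequality $\|X\|_1 \geq |\Tr X|$, we obtain
\[
\fidelity(\rho,\rho|_\Lambda) \;\geq\; \frac{|\bra{\psi}(\sqrt{\Lambda}_{\reg A} \otimes \id_{\reg R})\ket{\psi}|^2}{p} \;=\; \frac{\Tr(\sqrt{\Lambda}\rho)^2}{p}.
\]

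The core of the argument is then the operator inequality $\sqrt{\Lambda} \geq \Lambda$, which holds whenever $0 \leq \Lambda \leq \id$. This is the implicit assumption under which the statement is meaningful, since $\Lambda$ corresponds to the ``accept'' element of a two-outcome POVM (without $\Lambda \leq \id$ the post-measurement state $\rho|_\Lambda$ need not even admit the stated normalization). Taking traces against $\rho$ yields $\Tr(\sqrt{\Lambda}\rho) \geq \Tr(\Lambda\rho) = p$, and substituting into the previous estimate gives $\fidelity(\rho,\rho|_\Lambda) \geq p^2/p = p \geq 1-\eps$, which is exactly the fidelity bound we wanted. I do not foresee any real obstacles: the only delicate step is keeping track of the normalization of the purification, and the entire argument pivots on the single operator inequality $\sqrt{\Lambda} \geq \Lambda$.
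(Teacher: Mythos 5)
Your proof is correct and follows essentially the same route as the standard argument in Wilde's Lemma 9.4.1, which is all the paper itself offers (it cites that lemma rather than giving a proof): purify $\rho$, observe that $(\sqrt{\Lambda}\otimes\id)\ket{\psi}$ purifies $\rho|_\Lambda$ up to normalization, apply Uhlmann together with $\sqrt{\Lambda}\geq\Lambda$, and finish with Fuchs--van de Graaf. Your remark that the operator inequality—and hence the statement—requires the implicit assumption $0\leq\Lambda\leq\id$ is also correct; the paper's phrasing (``positive semidefinite'') omits this, but all uses in the paper take $\Lambda$ to be a measurement operator or projector, so the assumption is harmless and your flagging of it is apt.
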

\noindent A proof of this can be found in, e.g.,~\cite[Lemma 9.4.1]{wilde2013quantum}.

\subsection{Partial isometries and channel completions}

Usually, operations on a quantum state can be described by a unitary matrix, an isometry (if new qubits are introduced), or more generally a quantum channel (if one allows incoherent operations such as measuring or discarding qubits).
However, we will find it useful to consider operations whose action is only defined on a certain subspace. Outside of this ``allowed subspace'' of input states, we do not want to make a statement about how the operation changes a quantum state.
Such operations can be described by partial isometries.

\begin{definition}[Partial isometry] \label{def:partial_iso}
A linear map $U \in \linear(\reg A, \reg B)$ is called a partial isometry if there exists a projector $\Pi \in \linear(\reg A)$ and an isometry $\tilde U \in \linear(\reg A, \reg B)$ such that $U = \tilde U \Pi$.
We call the image of the projector $\Pi$ the \emph{support} of the partial isometry $U$.
\end{definition}

Of course in practice we cannot implement a partial isometry because it is not a trace-preserving operation, as states in the orthogonal complement of the support are mapped to the 0-vector.
We therefore define a \emph{channel completion} of a partial isometry as any quantum channel that behaves like the partial isometry on its support, and can behave arbitrarily on the orthogonal complement of the support.

\begin{definition}[Channel completion] \label{def:channel_completion}
Let $U \in \linear(\reg A, \reg B)$ be a partial isometry. 
A \emph{channel completion of $U$} is a quantum channel $\Phi \in \cptp(\reg A, \reg B)$ such that for any input state $\rho \in \states(\reg A)$, 
\begin{align*}
    \Phi(\Pi \rho \Pi) = U \rho  U^\dagger\,,
\end{align*}
where $\Pi \in \linear(\reg A)$ is the projector onto the support of $U$.
If $\Phi$ is a unitary or isometric channel, we also call $\Phi$ a \emph{unitary} or \emph{isometric completion} of the partial isometry.

An \emph{$\epsilon$-error channel completion of $U$} is a quantum channel $\tilde{\Phi}$ that is $\epsilon$-close in diamond norm to a channel completion $\Phi$ of $U$. 
\end{definition}

\subsection{Quantum circuits} \label{sec:qcircuits}
For convenience we assume that all quantum circuits use gates from the universal gate set $\{ H, \mathit{CNOT}, T \}$~\cite[Chapter 4]{nielsen2000quantum} (although our results hold for any universal gate set consisting of gates with algebraic entries). A \emph{unitary quantum circuit} is one that consists only of gates from this gate set. A \emph{general quantum circuit} is a quantum circuit that can additionally have non-unitary gates that (a) introduce new qubits initialized in the zero state, (b) trace them out, or (c) measure them in the standard basis. We say that a general quantum circuit has size $s$ if the total number of gates is at most $s$. We say that a general quantum circuit uses space $s$ if the number of qubits involved at every time step of the computation is at most $s$. The description of a general quantum circuit is a sequence of gates (unitary or non-unitary) along with a specification of which qubits they act on.
A general quantum circuit $C$ implements a quantum channel; we will abuse notation slightly and also use $C$ to denote the channel.
For a unitary quantum circuit $C$ we will write $\ket{C}$ to denote the state $C \ket{0 \dots 0}$.

\begin{definition}[Polynomial-size and polynomial-space circuit families] \label{def:poly_circuits}
A family of general quantum circuits $(C_{x, r})_{x \in \{0,1\}^*, r \in \N}$ has \emph{polynomial-size} (resp.~\emph{polynomial-space}) if there exists a polynomial $p$ such that $C_{x, r}$ has size (resp.~uses space) at most $p(|x|, r)$. 
\end{definition}

The reason circuit families are indexed by a pair $(x,r)$ is because in our unitary complexity theory framework, algorithms get both an \emph{instance} $x \in \{0,1\}^n$, which is a string that specifies \emph{which} state transformation to implement, and a \emph{precision parameter} $r \in \N$, which specifies \emph{how accurately} to implement the state transformation. The complexity of the circuit family is measured as a function of the length $|x|$ and the precision parameter $r$. 
The formal sense in which $r$ is a precision parameter will become clear when we define the notion of unitary synthesis problems in \Cref{sec:defs}.

\begin{remark}
    Sometimes we work with circuit families $\{C_x\}_{x \in \bits^*}$ that are \emph{not} indexed by a precision parameter $r$ (e.g., the precision may be fixed). In that case, we say that the family has polynomial size if $C_x$ has size $\poly(|x|)$. 
\end{remark}

\begin{definition}[Uniform circuit families] \label{def:uniform_ckt}
A family of general polynomial-size (resp.~polynomial-space) quantum circuits $(C_{x, r})_{x \in \bits^*, r \in \N}$ is called \emph{time-uniform} (resp.~\emph{space-uniform}) if there exists a classical polynomial-time (resp.~polynomial-space) Turing machine that on input $(x, 1^r)$ outputs the description of $C_{x,r}$,\footnote{We adopt the convention that the output tape of a Turing machine is write-only, and does not contribute to the space usage of the Turing machine. Thus a polynomial-space Turing machine can output an exponentially-long string.} where $1^r$ means $r$ written in unary. 
For brevity, we use \emph{uniform} to mean \emph{time-uniform}.
We call a time-uniform (resp.~space-uniform) family of quantum circuits a \emph{polynomial-time quantum algorithm} (resp.~\emph{polynomial-space quantum algorithm}).
\end{definition}

Finally, it is occasionally useful to defer measurements in a circuit and consider its unitary purification:
\begin{definition}[Unitary purification of a general quantum circuit]
\label{def:unitary-purification}
A \emph{unitary purification} (or \emph{dilation}) of a general quantum circuit $C$ is a unitary circuit $\tilde{C}$ formed by performing all measurements in $C$ coherently (with the help of additional ancillas) and not tracing out any qubits. 
\end{definition}

The following proposition relates a general quantum circuit to its unitary purification; it follows directly from the definition of the unitary purification. This proposition also demonstrates that the unitary purification $\tilde{C}$ of a general quantum circuit $C$ is a specific \emph{Stinespring dilation} of the quantum channel corresponding to $C$.
\begin{proposition}
    Let $C$ be a size-$m$ general quantum circuit acting on $n$ qubits, and let $\tilde{C}$ be a unitary purification. Let register $\reg{R}$ denote all the qubits that are traced out in the original circuit $C$ as well as the ancilla qubits introduced for the purification. Then for all states $\rho$, 
    \[
        C(\rho) = \Tr_{\reg{R}}(\tilde{C} \, \rho \tilde{C}^\dagger)\,.
    \]
    Furthermore, $\tilde{C}$ acts on at most $n + m$ qubits and has size at most $m$. 
\end{proposition}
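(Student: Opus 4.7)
The plan is to construct the unitary purification $\tilde{C}$ gate-by-gate from $C$, verify correctness by induction on the number of non-unitary gates, and finally account for the size and qubit count. Since $C$ uses gates from $\{H, \mathit{CNOT}, T\}$ together with the three non-unitary operations listed in \cref{sec:qcircuits} — introducing a qubit in $\ket{0}$, tracing out a qubit, and measuring a qubit in the standard basis — it suffices to specify how to purify each of the three non-unitary gate types.

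First I would handle each non-unitary gate in turn. Introducing a new qubit in $\ket{0}$ is already an isometric operation, so it requires no modification; I simply treat the introduced qubit as part of the working register. Tracing out a qubit is replaced by the identity, but the qubit is re-labeled as belonging to the ``discard'' register $\reg{R}$ rather than being physically removed. A standard-basis measurement of a qubit $q$ is replaced, via the principle of deferred measurement, by a $\mathit{CNOT}$ from $q$ into a fresh ancilla qubit initialised in $\ket{0}$, with the ancilla also placed in $\reg{R}$; any subsequent classically-controlled operations in $C$ become quantum-controlled on $q$ (which in the generic formulation we are using do not arise, since general quantum circuits in \cref{sec:qcircuits} are just sequences of gates without classical control, but if they did this replacement would still preserve correctness). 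The resulting circuit $\tilde{C}$ uses only unitary gates acting on the original $n$ qubits together with whatever ancillas and unmeasured/undiscarded qubits are now in $\reg{R}$.

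Correctness then follows by induction on the number of gates: if $\tilde{C}_k$ is the purification of the first $k$ gates of $C$, the inductive hypothesis states that $C_k(\rho) = \Tr_{\reg{R}}(\tilde{C}_k \rho \tilde{C}_k^\dagger)$ for all inputs $\rho$. Adding a unitary gate preserves the identity trivially. Adding a qubit-introduction gate appends a $\ket{0}$ on both sides. Adding a trace-out gate corresponds, on the purified side, to moving a qubit into $\reg{R}$ without acting on it, which after the final partial trace gives the same reduced state. The key computation is for the measurement case: a $\mathit{CNOT}$ from $q$ into a fresh ancilla followed by tracing out the ancilla implements exactly the completely dephasing channel on $q$, which matches the effect of a standard-basis measurement with the outcome discarded. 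Composing these steps and invoking the induction yields the claimed equality for the full circuit.

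Finally, the resource bounds are immediate from the construction. Each of the $m$ gates in $C$ contributes at most one gate to $\tilde C$ (either the same unitary gate, or, in the measurement case, a single $\mathit{CNOT}$; qubit-introduction and trace-out contribute zero gates), giving $|\tilde{C}| \leq m$. The ancillas added are at most one per measurement plus the $n$ original qubits plus the qubits introduced inside $C$; since each of these is introduced by a distinct gate of $C$, the total qubit count is at most $n + m$. I do not anticipate any real obstacle here: the statement is essentially a bookkeeping version of the standard Stinespring dilation for circuits built from the specified gate set, and the only care needed is in making sure the $\mathit{CNOT}$ replacement for measurements is counted as a single gate so that the size bound is $m$ rather than $2m$.
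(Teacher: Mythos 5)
Your proof is correct, and it is essentially the paper's argument: the paper offers no proof beyond noting that the claim "follows directly from the definition of the unitary purification," and your gate-by-gate construction (deferred measurement via a single $\mathit{CNOT}$ into a fresh ancilla, deferring trace-outs by relabeling qubits into $\reg{R}$) together with the routine induction is exactly that definition spelled out. The resource accounting (at most one gate and at most one new qubit per gate of $C$) matches the stated bounds, so nothing is missing.
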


\begin{remark}
    Throughout this paper, whenever we refer to a family of quantum circuits $(C_{x,r})_{x,r}$, we mean a \emph{general} family (i.e., non-unitary operations are allowed). 
\end{remark}

\subsection{Quantum interactive protocols}
\label{sec:interactive-protocols}
We present the model of quantum interactive protocols. (For a more in-depth account we refer the reader to the survey of Vidick and Watrous~\cite{vidick2016quantum}.) Since in quantum computing the standard model of computation is the quantum circuit model (rather than quantum Turing machines), we model the verifier in a quantum interactive protocol as a sequence of \emph{verifier circuits}, one for each input length. A verifier circuit is itself a tuple of quantum circuits that correspond to the operations performed by the verifier in each round of the protocol.

More formally, a \emph{$k$-round quantum verifier circuit} $C = (C_j)_{j \in [k]}$ is a tuple of general quantum circuits that each act on a pair of registers $(\reg{V},\reg{M})$. The circuit $C_j$ should be thought of as the verifier's actions in the $j$-th round of the protocol. The register $\reg{V}$ is further divided into disjoint sub-registers $(\reg{V}_\work, \reg{V}_\flag, \reg{V}_\out)$. The register $\reg V_\work$ is the verifier circuit's ``workspace'', the register $\reg V_\flag$ is a single qubit indicating whether the verifier accepts or rejects, and the register $\reg V_\out$ holds the verifier's output (if applicable). The register $\reg{M}$ is the message register. The size of a verifier circuit $C$ is the sum of the circuit sizes of the $C_j$'s.

A \emph{quantum prover} $P$ for a verifier circuit $C$ is a unitary that acts on $\reg{M}$ as well as a disjoint register $\reg{P}$.
Note that we could also define the prover to be a collection of unitaries, one for each round, in analogy to the verifier; the two definitions are equivalent since we can always combine the single-round unitaries into a larger unitary that keeps track of which round is being executed and applies the corresponding single-round unitary.
Since we will rarely deal with prover unitaries for individual rounds, we will find it more convenient to just treat the prover as one large unitary.
Furthermore, since the prover register is of unbounded size, we can assume without loss of generality that the prover applies a unitary (rather than a quantum channel).

Let $\ket{\psi}$ denote a quantum state whose size is at most the number of qubits in $\reg{V}_\work$. We write $C(\ket{\psi}) \interact P$ to denote the interaction between the verifier circuit $C$ and the prover $P$ on input $\ket{\psi}$, which is defined according to the following process. The initial state of the system is $\ket{\phi_0} = \ket{\psi,\zs}_{\reg V_\work} \ket\zs_{\reg{V}_\flag \reg{V}_\out \reg{M} \reg{P}}$. Inductively define $\ket{\phi_i} = P \ket{\phi_{i-1}}$ for odd $i \leq 2k$, and $\ket{\phi_i} = C_{i/2} \ket{\phi_{i-1}}$ for even $i \leq 2k$. We say that $C(\ket{\psi}) \interact P$ accepts (resp.~rejects) if measuring the register $\reg{V}_\flag$ in the state $\ket{\phi_{2k}}$ in the standard basis yields the outcome $1$ (resp.~$0$). We say that the \emph{output of $C(\ket{\psi}) \interact P$ conditioned on accepting} is the density matrix
\[
\frac{\Tr_{\reg{V} \reg{M} \reg{P} \setminus \reg{V}_\out} \left( \ketbra{1}{1}_{\reg{V}_\flag} \cdot \phi_{2k} \right)}{\Tr \left( \ketbra{1}{1}_{\reg{V}_\flag} \cdot \phi_{2k} \right)}\,;
\]
in other words, it is the reduced density matrix of $\ket{\phi_{2k}}$ on register $\reg{V}_\out$, conditioned on $C(\ket{\psi}) \interact P$ accepting. (If the probability of accepting is $0$, then we leave the output undefined.) 

A \emph{quantum verifier} $V = (V_{x,r})_{x \in \bits^*,r \in \N}$ is a time-uniform sequence of polynomial-size and polynomial-round quantum verifier circuits.
As in \cref{def:poly_circuits}, the index $x$ should be thought of as the instance, i.e., which unitary the verifier is trying to implement using the interactive protocol, and $1/r$ is the precision parameter, i.e., to within what error the verifier implements the desired unitary; see e.g.~\cref{def:unitaryQIP} for an example of how these parameters are used formally.

\subsection{Quantum state complexity classes}
\label{sec:prelims-state-complexity}

Here we present the definitions of some state complexity classes that were introduced in~\cite{rosenthal2022interactive}. Intuitively, they are classes of sequences of quantum states that require certain resources to be synthesized (e.g., polynomial time or space).

\begin{definition}[$\stateBQP$, $\statePSPACE$] \label{def:stateclasses}
	The class $\stateBQP$ (resp.~$\statePSPACE$) is the set of all sequences of density matrices $(\rho_x)_{x \in \bits^*}$ such that there exists a time-uniform polynomial-size (resp.~space-uniform polynomial-space) family of general quantum circuits $(C_{x,r})_{x \in \bits^*,r \in \N}$, where $C_{x,r}$ takes no inputs, and for every $x \in \bits^*$ and $r \in \N$, the output $\sigma_{x,r}$ of $C_{x,r}$ satisfies 
	\[
	\td(\sigma_{x,r}, \rho_x) \leq \frac{1}{r}\,.
	\]
\end{definition}

We also present the class of states that can be synthesized by an efficient verifier interacting with an all-powerful prover:

\begin{definition}[{$\stateQIP$}]
	\label{def:stateQIP}
        Let $c,s:\N \times \N \to [0,1]$ be functions. The class $\stateQIP_{c,s}$ is the set of state sequences $(\rho_x)_{x \in \bits^*}$ where there exists a time-uniform polynomial-time quantum verifier $V = (V_{x,r})_{x \in \bits^*,r\in \N}$ such that for all $x \in \bits^*$ and $r \in \N$:
	\begin{itemize}
		\item \emph{Completeness:} There exists a quantum prover $P^*$ (called an \emph{honest prover}) such that
		\begin{equation*}
			\pr {V_{x,r} \interact P^* \text{ accepts}} \geq c(|x|,r)~. \end{equation*}
\item \emph{Soundness:} For all quantum provers $P$, it holds that
		\[
		   \text{if } \quad \pr {V_{x,r} \interact P \text{ accepts}} \geq s(|x|,r) \qquad \text{then} \qquad \td(\sigma, \rho_x) \leq \frac{1}{r}\,,
		\]
		where $\sigma$ denotes the output of $V_{x,r} \interact P$ conditioned on accepting.
	\end{itemize}
	Here the probabilities are over the randomness of the interaction, assuming the verifier starts with the all-zero input state. 

 Finally, define 
    \[
        \stateQIP = \bigcup_{\epsilon(n) \, \negl} \stateQIP_{1-\epsilon,\frac{1}{2}}
    \]
    where the union is over all negligible functions $\epsilon(n,r)$. The choice of $s = \frac{1}{2}$ is without loss of generality;~\cite[Lemma 4.4]{rosenthal2022interactive} shows that the soundness parameter can made exponentially small. 
\end{definition}

The main results of~\cite{rosenthal2022interactive,metger2023stateqip} combined imply the following state synthesis analogue of the famous $\mathsf{QIP} = \mathsf{IP} = \mathsf{PSPACE}$ theorem in traditional complexity theory~\cite{shamir1992ip,lund1992algebraic,jain2011qip}:

\begin{theorem}[\cite{rosenthal2022interactive,metger2023stateqip}]
\label{thm:stateqip-statepspace}
    $\stateQIP = \statePSPACE$.
\end{theorem}

\noindent This theorem will be used as a ``subroutine'' in our characterization of $\avgUnitaryQIP$ and $\avgUnitaryPSPACE$ in \Cref{sec:structural-succinct-uhlmann}.

\begin{remark} \label{rem:state_complexity_defs}
We discuss two refinements in our definitions of state complexity classes compared to~\cite{rosenthal2022interactive,metger2023stateqip}.
These modifications do not impact \cref{thm:stateqip-statepspace} and readers unfamiliar with~\cite{rosenthal2022interactive,metger2023stateqip} may safely skip this discussion.

Firstly, the state sequences in~\cite{rosenthal2022interactive,metger2023stateqip} are indexed by natural numbers $n \in \N$, rather than strings $x \in \bits^*$. The results in~\cite{rosenthal2022interactive,metger2023stateqip} can be easily adapted to hold for state sequences indexed by strings; complexity measures are then functions of the length of $x$. Consider the containment $\statePSPACE \subseteq \stateQIP$; instead of the $\stateQIP$ verifier receiving $n$ as input to specify the state $\rho_n$ to synthesize as in~\cite{rosenthal2022interactive}, the verifier now receives $x$ to specify the state $\rho_x$. For the reverse containment $\stateQIP \subseteq \statePSPACE$, instead of the $\statePSPACE$ algorithm receiving $n$ as input to specify $\rho_n$, it instead receives $x$ as input to specify the state $\rho_x$.

Secondly, the state complexity classes in~\cite{rosenthal2022interactive,metger2023stateqip} are all parameterized by an error function $\delta$; for example $\stateBQP_\delta$ was defined, where $\delta(n)$ is a function describing the closeness of the output to the target state. Without the $\delta$ subscript, the class $\stateBQP$ is defined to be the intersection over $\stateBQP_\delta$ for all inverse polynomials $\delta(n) = 1/p(n)$. (The classes $\stateQIP,\statePSPACE$ in~\cite{rosenthal2022interactive,metger2023stateqip} are also defined similarly with respect to some error parameter $\delta$). 

In this paper we define state (and unitary) complexity classes slightly differently when it comes to the error parameterization. We now insist that an algorithm or verifier that synthesizes a state family $(\rho_x)_x$ must, in addition to a string $x \in \bits^*$, also take in a parameter $r \in \N$ as input, which controls how close the output is to $\rho_x$. That is, the algorithm/verifier must be able to approximate the state family $(\rho_x)_x$ arbitrarily well, potentially at the cost of more running time (or more space). For $\stateBQP$ (resp. $\statePSPACE$) specifically, the running time (resp. space usage) of the algorithm must scale as $\poly(|x|,\frac{1}{\epsilon})$ where $\epsilon$ is the desired approximation error. 

The definitions given in~\cite{rosenthal2022interactive,metger2023stateqip} technically allow a different algorithm for every error function $\delta$, and only require algorithms for inverse polynomial error (but not necessarily for inverse exponential error). In these revised definitions we insist on a \emph{single uniform algorithm} that can handle all error ranges, which is arguably more natural, as we discuss below \cref{def:worst_case_error}.
\end{remark} \newpage
\part{Unitary Complexity Theory}
\label{part:unitary_complexity_theory}

\section{Unitary Synthesis Problems and Unitary Complexity Classes}
\label{sec:defs}

To be able to make formal statements about the complexity of quantum tasks, we present a framework for unitary complexity theory: we define unitary synthesis problems, algorithms for implementing them, unitary complexity classes, and reductions between unitary synthesis problems.

Defining these concepts for unitaries is more subtle than in the classical world. First, a unitary synthesis problem requires both a classical \emph{instance} specifying which unitary we are interested in and a quantum input to which that unitary is supposed to be applied. Secondly, we need to allow approximation errors when implementing unitaries. Finally, we will want to consider unitary versions of promise problems, which are most naturally formalized by partial isometries; these are not physical operations, so we will frequently talk about all physical instantiations (formally, channel completions) of these partial isometries.

The definitions presented in this section should be viewed as our current best understanding of conceptually clean yet practically useful definitions, but new results, e.g., on error amplification, may provide reasons to adapt these definitions in the future. In \cref{sec:definition_essay} we provide additional discussion of the choices we made in our definitions.

\subsection{Unitary synthesis problems} \label{sec:unitary_synth_problems_def}

In traditional complexity theory, decision problems are formalized as \emph{languages}, which are sets of binary strings. The analogue in our framework is the following formalization of unitary synthesis problems.

\begin{definition}[Unitary synthesis problem] \label{def:unitary_synth_problem}
A \emph{unitary synthesis problem} is a sequence $\mathscr{U} = (U_{x})_{x \in \{0,1\}^*}$ of partial isometries.\footnote{We note that while unitary synthesis problems are not necessarily sequences of unitaries, we believe that it is a better name than ``partial isometry synthesis problem''.} 
\end{definition}

We note that \cref{def:unitary_synth_problem} considers partial isometries, not only unitaries (which are of course the special case of partial isometries for which the projector in \cref{def:partial_iso} is $\Pi = \id$).
A partial isometry is essentially a unitary defined on some subspace. This is analogous to the idea of a ``promise'' on the inputs in traditional complexity theory: the partial isometry only specifies the state transformation task on input states coming from a ``promised subspace''. For inputs with support on the orthogonal complement to this promised subspace, the behavior of the state transformation is not specified. 

One should think of a unitary synthesis problem $\mathscr{U}$ as specifying, for each string $x \in \bits^*$ (called the \emph{instance}), a state transformation task $U_x$ to be performed on some quantum system (called the \emph{quantum input}). How the instance $x$ specifies such a task varies from problem to problem; we give some examples of unitary synthesis problems below. 

\begin{enumerate}
    \item (\emph{Hamiltonian time evolution}) Consider some natural string encoding of pairs $(H,t)$ where $H$ is a local Hamiltonian and $t$ is a real number: the encoding will specify the number of qubits that $H$ acts on as well as each local term of $H$. If $x$ is a valid encoding of such a pair $(H,t)$, then define $U_x = e^{-iHt}$. Otherwise, define $U_x = 0$ (i.e., to be the zero map, which is a partial isometry). Then we define $\textsc{TimeEvolution} = (U_x)_{x \in \{0,1\}^*}$.

    \item (\emph{Decision languages}) Let $L \subseteq \{0,1\}^*$ be a decision language. Define $\textsc{UnitaryDecider}_L = (U_x)_{x \in \{0,1\}^*}$ as follows: if $x = 1^n$ (i.e.~$x$ is the unary representation of an integer $n \in \N$), then the unitary $U_{1^n} = U_x$ acts on $n+1$ qubits and for all $y \in \{0,1\}^n,b \in \{0,1\}$, we define $U_{1^n} \ket{y}\ket{b} = \ket{y} \ket{b \oplus L(y)}$ where $L(y) = 1$ iff $y \in L$. If $x$ is not a unary encoding of an integer, then we define $U_x = 0$.

    \item (\emph{Ground state preparation}) Consider a natural string encoding local Hamiltonians. If $x$ is an encoding of a local Hamiltonian with a unique ground state $\ket{\psi_x}$, then let $U_x = \ketbra{\psi_x}{0 \cdots 0}$; otherwise let $U_x = 0$. The unitary synthesis problem $\mathscr{U} = (U_x)_{x \in \bits^*}$ corresponds to the task of preparing (starting with the all zeroes state) the unique ground state of the Hamiltonian encoded by $x$; the ``promised subspace'' here consists only of the all zeroes state.
    
\end{enumerate}

We now define what it means to \emph{implement} a unitary synthesis problem.

\begin{definition}[Worst-case implementation of unitary synthesis problems] \label{def:worst_case_error}
Let $\usynth{U} = (U_{x})_{x \in \bits^*}$ denote a unitary synthesis problem. Let $C = (C_{x,r})_{x \in \bits^*,r \in \N}$ denote a family of quantum circuits. We say that \emph{$C$ is a worst-case implementation of $\usynth{U}$} if for all $x \in \{0,1\}^*$, for all $r \in \N$, there exists a channel completion $\Phi_{x,r}$ of $U_x$ such that
\[
    \Big \| C_{x,r} - \Phi_{x,r} \Big \|_\diamond \leq \frac{1}{r}\,,
\]
where $\| \cdot \|_\diamond$ denotes the diamond norm\footnote{
Recall that a small diamond distance between two channels means that the channels are difficult to distinguish even if the channels are applied to an entangled state.}.
\end{definition}
This definition also clarifies the sense in which we can think of the subscript $r$ in $C_{x,r}$ as a desired approximation error.

Intuitively, an implementation of a unitary synthesis problem is just a sequence of general quantum circuits that approximately implement the corresponding partial isometries. 
We call \cref{def:worst_case_error} a \emph{worst-case implementation} because $C_{x,r}$ has to approximate the partial isometry (more precisely, any channel completion) in diamond distance, which means that $C_{x,r}$ and $\Phi_{x,r}$ have to behave the same \emph{on all quantum inputs}.
We note two subtleties:
\begin{enumerate}
    \item We require that the circuit $C_{x,r}$ is close to \emph{some} channel completion $\Phi_{x,r}$ of $U_x$ because $C_{x,r}$ is only required to behave like $U_x$ on the support of $U_x$, and can behave arbitrarily on states orthogonal to the support (in a way that may depend arbitrarily on $x$ and $r$). This is analogous to classical promise problems, where a Turing machine deciding the promise problem is allowed to behave arbitrarily on inputs violating the promise, i.e., the Turing machine is only required to implement \emph{some} function that agrees with the promise problem on inputs fulfilling the promise.

    \item This definition provides, for every instance $x \in \bits^*$ and $\eps > 0$, a circuit $C_{x,r}$ that implements $U_x$ with error $\eps$, for $r = O(1/\eps)$. Thus an algorithm for implementing a unitary synthesis problem has to be able to achieve arbitrarily small error. This captures the notion that the unitary synthesis problem $(U_x)_x$ must be \emph{algorithmically implementable}.

\end{enumerate}

\subsection{Distributional unitary synthesis problems}

We also define a notion of \emph{distributional (or average-case) unitary synthesis problems}.
Here, in addition to a partial isometry, we also specify a state and a register of this state on which the partial isometry is going to act; note, however, that this is very different from a state synthesis problem, as we discuss in \cref{rem:avg_vs_state}.
We first give the formal definition and then explain why this is a reasonable notion of a distributional unitary synthesis problem.

\begin{definition}[Distributional unitary synthesis problem]
\label{def:dist-usynth}
We say that a pair $(\usynth{U}, \Psi)$ is a \emph{distributional unitary synthesis problem} if $\usynth{U} = (U_x)_x$ is a unitary synthesis problem with $U_x \in \linear(\reg A_x, \reg B_x)$ for some registers $\reg A_x, \reg B_x$, and $\Psi = (\ket{\psi_x})_x$ is a family of bipartite pure states on registers $\reg A_x \reg R_x$. We call $\ket{\psi_x}$ the \emph{distribution state} with \emph{target register} $\reg{A}_x$ and \emph{ancilla register} $\reg{R}_x$. 
\end{definition}

\begin{definition}[Average-case implementation of distributional unitary synthesis problems] \label{def:avg_case_error}
Let $(\usynth{U},\Psi)$ denote a distributional unitary synthesis problem, where $\usynth{U} = (U_x)_x$ and $\Psi = (\ket{\psi_x})_x$. Let $C = (C_{x, r})_{x \in \bits^{*}, r \in \N}$ denote a family of quantum circuits, where $C_{x, r}$ implements a channel whose input and output registers are the same as those of $U_x$ for all $r \in \N$. We say that \emph{$C$ is an average-case implementation of $(\usynth{U},\Psi)$} if for all $x \in \bits^*$ and $r \in \N$, there exists a channel completion $\Phi_{x,r}$ of $U_x$ such that
\[
\td \Big ( (C_{x, r} \ot \id)(\psi_x), \, (\Phi_{x,r} \ot \id)(\psi_x) \Big ) \leq \frac{1}{r} \,,
\]
where the identity channel acts on the ancilla register of $\ket{\psi_x}$.
\end{definition}

\begin{remark} \label{rem:distributional_explainer}
The term ``distributional'' may seem a bit odd at first; for example, where is the distribution in \Cref{def:dist-usynth}? In classical average-case complexity theory, a distributional problem is one where the inputs are sampled from some probability distribution $\mathcal{D}$. The state family $\Psi = (\ket{\psi_x})_x$ in a distributional unitary synthesis problem $(\usynth{U},\Psi)$ can be viewed as a \emph{purification} of a distribution over pure states: by the Schmidt decomposition, we can always write 
\begin{align}
\ket{\psi_x} = \sum_{j} \sqrt{p_{x,j}} \ket{\phi_{x,j}} \ot \ket{j} \label{eqn:schmidt_decomp}
\end{align}
for orthonormal states $\{\ket{\phi_{x,j}}\}_j$ on $\reg{A}_x$ and $\{\ket{j}\}_j$ on $\reg{R}_x$.
The Schmidt coefficients $\{p_{x,j}\}_j$ form a probability distribution $\cD_x$, so $\ket{\psi_x}$ can be viewed as the purification of the distribution $\cD_x$ over pure states $\{\ket{\phi_{x,j}}\}_j$.
The condition of $C$ implementing $(\usynth{U},\Psi)$ with average-case error implies the following: for all $x \in \bits^*$ and $r \in \N$ there exists a channel completion $\Phi_{x,r}$ of $U_x$ such that
\begin{align}
\E_{j \sim \mathcal{D}_x} \left[\td(C_{x,r}(\phi_{x,j}), \, \Phi_{x,r}(\phi_{x,j}))\right] \leq \frac{1}{r}\,. \label{eqn:dist_cond}
\end{align}
This can be seen by applying a measurement in the Schmidt basis on the purifying register, which does not increase the trace distance in \Cref{def:avg_case_error}. 
Looking ahead, we will find it more convenient to simply specify (for each $x$) one pure state $\ket{\psi_x}_{\reg{A}_x \reg{R}_x}$ instead of a set of pure states on $\reg{A}_x$ and a distribution over them.
\end{remark}

\begin{remark}
Comparing \cref{def:worst_case_error} and \cref{def:avg_case_error}, we see that we can also define worst-case implementations in terms of average-case implementations:
a circuit family $C = (C_{x,r})_{x \in \bits^*,r \in \N}$ is a worst-case implementation for a unitary synthesis problem $\usynth{U} = (U_{x})_{x \in \bits^*}$ if and only if it is an average-case implementation of the distributional unitary synthesis problem $(\usynth{U},\Psi)$ for all state sequences $\Psi = (\ket{\psi_x})_x$. This is because the diamond norm distance between two channels is equivalent to the maximum trace distance over all inputs (which can be, without loss of generality, a pure state if we include the purifying system). 
\end{remark}

\begin{remark} \label{rem:avg_vs_state}
Since an average-case unitary synthesis problem specifies both an input state $\ket{\psi_x}$ and a unitary $U_x$ to be applied on that state, it may seem like this is equivalent to the state synthesis problem of preparing $U_x \ket{\psi_x}$.
This, however, is not the case: the state $\ket{\psi_x}$ can be entangled across registers $\reg{A}_x \reg{R}_x$, but the unitary $U_x$ is only allowed to act on $\reg{A}_x$. Thus even if one could prepare copies of the pure state $(U_x \ot \id) \ket{\psi_x}$ for free, one cannot necessarily efficiently transform $\ket{\psi_x}$ to $(U_x \ot \id) \ket{\psi_x}$ by acting locally on a subsystem -- there is the additional requirement that the output of the transformation task is correctly entangled with a reference register that the transformation cannot access. 
\end{remark}

\subsection{Unitary complexity classes}
\label{sec:unitary-complexity-classes}

A \emph{unitary complexity class} is a collection of unitary synthesis problems. We introduce some natural unitary complexity classes by defining the unitary synthesis analogues of $\BQP$ and $\PSPACE$, respectively.

\begin{definition}[$\unitaryBQP$, $\unitaryPSPACE$]\label{def:unitaryBQP_unitaryPSAPCE}
Define the unitary complexity class $\unitaryBQP$ (resp.~$\unitaryPSPACE$) to be the set of unitary synthesis problems $\usynth{U} = (U_x)_x$ for which there exists a time-uniform polynomial-time (resp.~polynomial-space) quantum algorithm $C = (C_{x,r})_{x,r}$ that is a worst-case implementation of $\usynth{U}$.
\end{definition}

Next we define classes of distributional unitary synthesis problems, the unitary complexity analogues of classical average case complexity classes.

\begin{definition}[$\avgUnitaryBQP$, $\avgUnitaryPSPACE$]\label{def:avgunitaryBQP_avgunitaryPSPACE}
Define the unitary complexity class $\avgUnitaryBQP$ (resp.~$\avgUnitaryPSPACE$) to be the set of distributional unitary synthesis problems $(\usynth{U},\Psi)$ where $\Psi \in \stateBQP$ (resp.~$\Psi \in \statePSPACE$) and there exists a polynomial-time (resp.~polynomial-space) quantum algorithm $C$ that implements $(\usynth{U},\Psi)$ with average-case error.
\end{definition}

In our definition of $\avgUnitaryBQP$ and $\avgUnitaryPSPACE$, we require that the state sequence with respect to which the average case unitary synthesis problem is defined be in the corresponding state complexity class (i.e.~$\stateBQP$ and $\statePSPACE$, respectively).
We will follow this general pattern throughout the paper: whenever we define an average case unitary complexity class, we will require that the state sequence is in the corresponding state class (see  e.g.~\cref{def:avgUnitaryQIP}).

\subsection{Reductions}

Notions of reductions are crucial in complexity theory and theoretical computer science. We introduce a basic notion of reduction that allows us to relate one unitary synthesis problem to another. First, we formalize the notion of circuits that can make queries to a unitary synthesis oracle. 
Intuitively, a quantum circuit with access to a unitary synthesis oracle is just like a normal quantum circuit, except that it can apply some set of partial isometries (or more precisely arbitrary channel completions of partial isometries) in a single computational step by using the unitary synthesis oracle.

\begin{definition}[Quantum query circuits]
A \emph{quantum query circuit} $C^*$ specifies a sequence of gates like those in a general quantum circuit (defined in \cref{sec:qcircuits}), except it may also include special ``oracle gates''.
An oracle gate is specified by a label $(y, s) \in \bits^* \times \N$; its action on its input qubits will be specified separately, i.e.~a quantum query circuit is not actually a quantum circuit, but rather a \emph{template} for a quantum circuit.
\end{definition}

The label $(y,s)$ of an oracle gate specifies both which instance $y$ of a unitary synthesis problem will be inserted into the oracle gate and what implementation error $1/s$ we allow for the oracle gate.

\begin{definition}[Instantiations of quantum query circuits] \label{def:instantiation}
An \emph{instantiation} of a quantum query circuit $C^* = (C^*_{x,r})_{x,r}$ with a unitary synthesis problem $\usynth U$, denoted by $C^\usynth{U}$, is a sequence of quantum channels obtained by first fixing a worst-case implementation $(D_{y,s})_{y,s}$ of $\usynth{U}$, and then replacing all the oracle gates in $C^*_{x,r}$ with labels $\ell = (y, s)$ by channels $D_{y,s}$. Whenever we write $C^\usynth{U}$, we implicitly require that $\usynth U$ is such that the input and output registers of $U_y$ match the input and output registers of any oracle gate with label $(y,s)$ in $C^*$.
\end{definition}

\begin{remark}
In \cref{def:instantiation} we leave implicit the dependence on which worst-case implementation of $U_y$ the instantiation uses.
This is we because we will always require that an oracle circuit works for all choices of channel completion: whenever we say that a statement holds for $C^{\usynth U}$, we mean that it holds for all possible choices of worst-case implementation $(D_{y,s})_{y,s}$ of $\cU$.
This is analogous to classical oracle machines that have access to promise problems: in the classical case, such an oracle machine must work no matter how the oracle behaves on inputs outside the promise.
\end{remark}

\begin{remark}
We note that our definition of quantum query circuit has the classical instances $y$ ``hardcoded'' into the description of the circuit. In particular, the query circuit cannot choose which oracles it queries depending on its quantum input.\footnote{Of course, for a family of query circuits $(C^*_{x,r})$, the labels $(y, s)$ used by $C_{x,r}^*$ can depend on the index $x$; the point here is that a given $C_{x,r}^*$ cannot compute $y$ as a function of the quantum input it is given.} To accommodate situations when the oracle circuit may want to query different oracles $\usynth{U} = (U_x)_x$ (perhaps even in superposition), one can define a ``controlled oracle'' $\tilde{U}_n = \sum_{x: |x| = n} \ketbra{x}{x} \otimes U_x$. In other words, $\tilde{U}_n$ applies the oracle $U_x$ conditioned on some $n$-qubit register being in the state $\ket{x}$. 
A quantum query circuit with access to this controlled oracle can then apply different $U_x$ coherently depending on its quantum input, i.e.~the controlled oracle gives a query circuit more power than the uncontrolled one.
\end{remark}

It will also be useful to define instantiations of quantum query circuits with distributional unitary synthesis problems. The only difference to \cref{def:instantiation} is that the channels $D_{y,s}$ are now only required to be average-case implementations of some distributional unitary synthesis problem. We include a formal definition for completeness:

\begin{definition}[Instantiations of query circuits with distributional unitary synthesis problems] \label{def:avg_instantiation}
An instantiation of a quantum query circuit $C^* = (C^*_{x,r})_{x,r}$ with an average-case unitary synthesis problem $(\usynth U, \Psi)$, denoted by $C^{(\usynth U,\Psi)}$, is a sequence of quantum channels obtained by first fixing an average-case implementation $(D_{y,s})_{y,s}$ of $(\usynth{U}, \Psi)$, and then replacing all the oracle gates in $C^*_{x,r}$ with labels $\ell = (y, s)$ by channels $D_{y,s}$. 
Whenever we write $C^{(\usynth U,\Psi)}$, we implicitly require that $(\usynth U, \Psi)$ is such that the input and output registers of $U_y$ match the input and output registers of any oracle gate with label $(y,s)$ in $C^*$.
\end{definition}

Basic notions from ``normal'' circuits, like polynomial-size, naturally extend to query circuits.
The one additional requirement is that a query circuit $C^*_{x, r}$ only uses oracle gates with labels $y, s$ that are polynomially related to $x, r$:

\begin{definition}[Polynomial-size query circuits] \label{def:poly_query_circuits}
A family $(C^*_{x, r})_{x \in \{0,1\}^*, r \in \N}$ of quantum query circuits has \emph{polynomial-size} if there exists a polynomial $p$ such that $C^*_{x, r}$ contains at most $p(|x|, r)$ gates and each oracle gate in $C^*_{x, r}$ is labelled by a tuple $(y,s)$ satisfying $|y|,s \leq p(|x|, r)$. 
\end{definition}

\begin{definition}[Time-uniform quantum query circuits]
\label{def:uniform_query_ckt}
A family $(C^*_{x,r})_{x \in \bits^*,r \in \N}$ of polynomial-size quantum query circuits is \emph{time-uniform} (or just \emph{uniform}) if there exists a classical polynomial-time Turing machine that on input $(x,1^r)$ outputs the description of $C^*_{x,r}$.
We call a time-uniform family of quantum query circuits a \emph{polynomial-time quantum query algorithm}. 
\end{definition}

Using quantum query circuits, we can define polynomial-time reductions between unitary synthesis problems.
Intuitively, ``$\mathscr{U}$ polynomial-time reduces to $\mathscr{V}$'' means that $\mathscr{U}$ can be implemented efficiently if we have access to oracle gates that (approximately) implement channel completions of instances of  $\mathscr{V}$.
\begin{definition}[Polynomial-time reductions between unitary synthesis problems] \label{def:reduction_worst_case}
Let $\mathscr{U} = (U_x)_x$ and $\mathscr{V} = (V_x)_x$ denote unitary synthesis problems. Then $\mathscr{U}$ \emph{polynomial-time reduces} to $\mathscr{V}$ if there exists a polynomial-time quantum query algorithm $C^* = (C_{x,r}^*)_{x,r}$ such that $C^{\mathscr{V}}$ is a worst-case implementation of $\usynth{U}$.
\end{definition}

This notion of reduction readily extends to distributional problems, too. The only thing that changes is whether the oracle gates are instantiated with worst-case or average-case implementations, and whether the instantiated query circuit is required to be a worst-case or an average-case implementation. We spell out the definition for completeness.

\begin{definition}[Polynomial-time reductions between distributional unitary synthesis problems] \label{def:reduction_distributional}
Let $(\mathscr{U}, \Psi)$ and $(\mathscr{V}, \Phi)$ denote distributional unitary synthesis problems. Then $(\mathscr{U}, \Psi)$ \emph{polynomial-time reduces} to $(\mathscr{V}, \Phi)$ if there exists a polynomial-time quantum query algorithm $C^* = (C_{x,r}^*)_{x,r}$ such that $C^{(\mathscr{V}, \Phi)}$ is an average-case implementation of $(\usynth{U}, \Psi)$.
\end{definition}

Note that one can also define reductions between ``normal'' and distributional unitary synthesis problems in the same way.

Just like one can define oracle complexity classes like $\mathsf{P}^{\textsc{3SAT}}$ (i.e., polynomial-time computation with oracle access to a 3SAT oracle), we can now also define oracle complexity classes for unitary synthesis problems:

\begin{definition}[Oracle unitary complexity classes]
    We define the oracle class $\unitaryBQP^{\usynth{V}}$ to be the set of all unitary synthesis problems that are polynomial-time reducible to a unitary synthesis problem $\usynth V$.
\end{definition}

We can similarly define the oracle class $\avgUnitaryBQP^{(\usynth V,\Omega)}$.
However, there is a subtlety: we will have to specify a state complexity class which the distributional states are  required to be from.
For $\avgUnitaryBQP$, we required that the distributional states be from $\stateBQP$.
However, if we give the $\avgUnitaryBQP$ oracle access to $(\usynth V,\Omega)$, it is natural to allow the same oracle access for the preparation of the distributional states, too (corresponding to the principle that the complexity of the distributional state should match the complexity of the unitary class). 
The resulting ``oracle state complexity class'' $\stateBQP^{(\usynth V,\Omega)}$ is defined completely analogously to the unitary setting by allowing oracle gates in the state preparation circuits from \cref{def:stateclasses}.
With this, we can define:

\begin{definition}[Average-case oracle unitary complexity classes]
We define the oracle class $\avgUnitaryBQP^{(\usynth V,\Omega)}$ to be the set of all distributional problems $(\usynth U, \Psi)$ that are polynomial-time reducible to the distributional unitary synthesis problem $(\usynth V,\Omega)$ and for which $\Psi \in \stateBQP^{(\usynth V,\Omega)}$.
\end{definition}

Just like for classical complexity classes, we can use this notion of reduction to define hard and complete problems for (average-case) unitary complexity classes.
\begin{definition}[Hard and complete problems]
We call a unitary synthesis problem $\usynth{U}$ \emph{hard} (under polynomial-time reductions) for a unitary complexity class $\class{unitaryC}$ if $\class{unitaryC} \subseteq \unitaryBQP^\usynth{U}$.
If additionally $\usynth{U} \in \class{unitaryC}$, we call $\usynth{U}$ \emph{complete} for the class $\class{unitaryC}$.

Analogously, we call a distributional unitary synthesis problem $(\usynth{U}, \Psi)$ \emph{hard} (under polynomial-time reductions) for an average-case unitary complexity class $\class{avgUnitaryC}$ if $\class{avgUnitaryC} \subseteq \avgUnitaryBQP^{(\usynth{U}, \Psi)}$.
If additionally $(\usynth{U}, \Psi) \in \class{avgUnitaryC}$, we call $(\usynth{U}, \Psi)$ \emph{complete} for the class $\class{avgUnitaryC}$.
\end{definition}

As would be expected, $\unitaryBQP$  and $\avgUnitaryBQP$ are closed under polynomial-time reductions.
\begin{lemma} \label{lem:unitary_bqp_closed}
$\unitaryBQP$ is closed under polynomial-time reductions, i.e.,~for all $\usynth V \in \unitaryBQP$, we have that $\unitaryBQP^\usynth{V} \subseteq \unitaryBQP$. Similarly, $\avgUnitaryBQP$ is closed under polynomial-time reductions, i.e.~for all $(\usynth V,\Omega) \in \avgUnitaryBQP$, we have that $\avgUnitaryBQP^{(\usynth{V},\Omega)} \subseteq \avgUnitaryBQP$.
\end{lemma}

This follows almost trivially from  the definitions, but we spell out the proof to illustrate how to use the definitions.

\begin{proof}
Consider a unitary synthesis problem $\usynth U = (U_x)_x \in \unitaryBQP^\usynth{V}$.
By definition, there exists a polynomial-time quantum query algorithm $C^* = (C^*_{x,r})_{x,r}$ such that $C^{\usynth{V}}$ is a worst-case implementation of $\usynth{U}$.
Since $\usynth V \in \unitaryBQP$, there exists a polynomial-time quantum algorithm $(D_{y,s})_{y,s}$ that is a worst-case implementation of $\usynth{V}$.
For every oracle gate in $C^{\usynth{V}}_{x,r}$ with some label $(y,s)$, we can insert the explicit circuit $D_{y,s}$.
This yields a polynomial-time quantum algorithm for $\usynth{U}$ because $|y|,s \leq \poly(|x|,r)$ by \cref{def:poly_query_circuits}.
\end{proof}

The same ``proof-by-plugging-in-explicit-circuits'' also shows that $\unitaryPSPACE$ and $\avgUnitaryPSPACE$ are closed under polynomial-time reductions.

\begin{lemma} \label{lem:unitary_pspace_closed}
$\unitaryPSPACE$ is closed under polynomial-time reductions, i.e.~for all $\usynth V \in \unitaryPSPACE$, we have that $\unitaryBQP^{\usynth V} \subseteq \unitaryPSPACE$. Similarly, $\avgUnitaryPSPACE$ is closed under polynomial-time reductions, i.e.~for all $(\usynth V,\Omega) \in \avgUnitaryPSPACE$, we have that $\avgUnitaryBQP^{(\usynth{V},\Omega)}\subseteq \avgUnitaryPSPACE$.
\end{lemma}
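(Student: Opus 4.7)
My plan is to mirror the proof of \cref{lem:unitary_bqp_closed} almost verbatim, replacing the polynomial-time implementations of $\usynth{V}$ by polynomial-space implementations, and then argue that the composite circuit remains space-uniform and space-bounded. Concretely, I would start from $\usynth{U} = (U_x)_x \in \unitaryBQP^{\usynth{V}}$: by \cref{def:reduction_worst_case} and the definition of the oracle class, for each polynomial $q$ there is a polynomial-time quantum query algorithm $C^* = (C^*_x)_x$ such that every instantiation $C^{\usynth{V}}_x$ implements $U_x$ with worst-case error $1/q(|x|)$, and every oracle label $y$ appearing in $C^*_x$ satisfies $|y| \le s(|x|)$ for some fixed polynomial $s$. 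Since $\usynth{V} \in \unitaryPSPACE$, for every polynomial $p$ there is a space-uniform family $(\tilde C_y)_y$ where $\tilde C_y$ uses space at most $t(|y|)$ for some polynomial $t$, and $\tilde C_y$ is $1/p(|y|)$-close in diamond norm to a channel completion of $V_y$.

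The core construction is to define $C_x$ by replacing each oracle gate of $C^*_x$ with label $y$ by the general quantum circuit $\tilde C_y$. I would then verify three things in order: \textbf{(i) Correctness/error.} By triangle inequality plus monotonicity of $\|\cdot\|_\diamond$ under composition, the total error of $C_x$ relative to a channel completion of $U_x$ is at most $r(|x|)/p(s(|x|)) + 1/q(|x|)$, where $r(|x|)$ bounds the number of oracle gates in $C^*_x$. Choosing $p$ and $q$ sufficiently large beats any prescribed inverse polynomial, giving $\usynth{U} \in \unitaryPSPACE$. \textbf{(ii) Space-uniformity.} A polynomial-space Turing machine producing the $i$-th gate of $C_x$ first simulates the polynomial-\emph{time} (hence polynomial-space) classical machine for $C^*$, determining which gates are plain gates and which are oracle gates with which label $y$; whenever an oracle gate of $C^*_x$ is expanded, it defers to the polynomial-space machine generating $\tilde C_y$. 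Composing a polynomial-space machine with a polynomial-space machine yields a polynomial-space machine, as required. \textbf{(iii) Space usage of $C_x$.} Here lies the only real subtlety.

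The step I expect to require a little care is bounding the number of qubits simultaneously active in $C_x$. The query algorithm $C^*$ runs in polynomial time and so uses at most polynomially many qubits over its execution. When an oracle gate with label $y$ is encountered, we substitute the general quantum circuit $\tilde C_y$, which internally allocates up to $t(|y|) = \poly(|x|)$ ancilla qubits and is permitted to trace them out on completion (since general quantum circuits support qubit introduction and tracing). Because oracle calls occur sequentially in the circuit, the ancilla workspace of each $\tilde C_y$ can be freshly allocated at the start of that call and discarded at its end, so the simultaneous qubit count at any time step is bounded by the active workspace of $C^*_x$ plus $t(s(|x|))$, which is polynomial in $|x|$. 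Hence $C_x$ is a polynomial-space general quantum circuit, completing the argument that $\unitaryBQP^{\usynth{V}} \subseteq \unitaryPSPACE$.

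For the $\avgUnitaryPSPACE$ statement the proof is identical, once I observe the analogue of the inclusion used in \cref{lem:unitary_bqp_closed}: for any $(\usynth{V},\Omega) \in \avgUnitaryPSPACE$ we have $\statePSPACE^{(\usynth{V},\Omega)} \subseteq \statePSPACE$, by the same space-composition argument applied to the state-preparation circuits. Then, given $(\usynth{U},\Psi) \in \avgUnitaryBQP^{(\usynth{V},\Omega)}$, I take the polynomial-time query reduction from \cref{def:reduction_distributional}, replace each oracle gate by a polynomial-space average-case implementation of $V_y$ on $\omega_y$ with error $1/p(|y|)$ (which exists because $(\usynth{V},\Omega) \in \avgUnitaryPSPACE$), and use the contractivity of trace distance under CPTP maps to bound the resulting average-case error on $\psi_x$. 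The distribution state $\psi_x$ itself lies in $\statePSPACE^{(\usynth{V},\Omega)} \subseteq \statePSPACE$, so $(\usynth{U},\Psi) \in \avgUnitaryPSPACE$ as needed.
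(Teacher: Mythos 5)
Your proposal is correct and follows essentially the same route as the paper: replace each oracle gate in the polynomial-time query circuit by the corresponding space-uniform implementation, bound the error via the triangle inequality and monotonicity of the diamond norm (resp.\ trace distance), and observe that the composite circuit remains space-uniform and polynomial-space. For the average case the paper phrases the needed state-class containment as $\stateBQP^{(\usynth V,\Omega)} \subseteq \statePSPACE$ (since that is what membership in $\avgUnitaryBQP^{(\usynth V,\Omega)}$ guarantees about $\Psi$), but your slightly stronger oracle-substitution claim subsumes this, so the argument goes through unchanged.
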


\subsection{Purification of space-bounded unitary synthesis}
\label{sec:unitary-space-properties}

By definition, unitary synthesis problems in $\unitaryPSPACE$ and $\avgUnitaryPSPACE$ are implementable by \emph{general} circuits using polynomial space. Thus, non-unitary operations such as measurements, tracing out, qubit reset, etc., are allowed. Although one can simulate such non-unitary operations using unitary ones (for example, performing measurements coherently), such direct simulations result in an expensive blow-up in space. For example, applying the principle of deferred measurement to a polynomial-space circuit with exponentially many intermediate measurements would require exponentially many ancilla qubits. 

Whether intermediate measurements and other non-unitary operations can be removed in space-bounded quantum computation has been studied by Fefferman and Remscrim~\cite{fefferman2021eliminating} and Girish and Raz~\cite{girish2021eliminating}. Both works show that, up to a small blow-up in space (but perhaps a large blow-up in time), general circuits and unitary circuits decide the same languages. Girish and Raz prove a stronger result and show that the \emph{channel} computed by a circuit with intermediate measurements can be approximated by a unitary circuit with small blow-up in space:

\begin{theorem}[Eliminating intermediate measurements~\cite{girish2021eliminating}]
\label{thm:eliminating-intermediate-measurements}
Let $C$ be a time $T$, space $S \geq \log T$ general quantum circuit whose output is $n$ qubits. Then for all $\eps > 0$ there exists a unitary circuit $D$ using $O((S + \log (1/\eps)) \log T)$ space and $T \cdot \poly(S,\log 1/\eps)$ time such that
\[
	\Big \| C - B \circ D \circ A \Big \|_{\diamond} \leq \eps
\]
Here, $C$ and $D$ are also used to denote the channels computed by the circuits, $A$ denotes the channel that appends a number of zeroes (so that the input size of $A$ matches the input size of $C$, and the output size of $A$ matches the input size of $D$), and $B$ denotes tracing out all but the first $n$ qubits. Furthermore, if the circuit description of $C$ is computable by a Turing machine $M$, then the circuit description of $D$ is computable by a Turing machine whose space and time usage is polynomial in that of $M$. 
\end{theorem}

We apply this result to show that, without loss of generality, all unitary synthesis problems in $\unitaryPSPACE$ and $\avgUnitaryPSPACE$ can be implemented by circuits that are entirely unitary except for some initial ancilla initialization, and a final trace out operation.
\begin{lemma}
    \label{lem:unitary-pspace-purification} 
   	All unitary synthesis problems $\usynth{U} \in \unitaryPSPACE$ and $(\usynth{U},\Psi) \in \avgUnitaryPSPACE$ have polynomial-space implementations that are entirely unitary except for some initial ancilla initialization, and a final trace out operation.
\end{lemma}
\begin{proof}
	We prove this for $\unitaryPSPACE$; the proof for $\avgUnitaryPSPACE$ is essentially the same. Let $\usynth{U} = (U_x)_x \in \unitaryPSPACE$. Then for all instances $x$ and precision parameter $r$, there exists a general quantum circuit $C$ on $S = \poly(n,r)$ qubits that implements $U_x$ with error $1/2r$. Furthermore, since the description of $C$ is computable by a $\poly(n,r)$-space Turing machine, it must be that the time complexity of the circuit is at most $T = 2^{\poly(n,r)}$. By \Cref{thm:eliminating-intermediate-measurements} and setting $\eps = 1/2r$, there exists a unitary circuit $D$ on $(S + \log (1/\eps)) \log T = \poly(n,r)$ qubits and $T \cdot \poly(S,\log 1/\eps) = 2^{\poly(n,r)}$ time that simulates $C$ with error $1/2r$, and therefore
	\[
		\Big \| U_x - B \circ D \circ A \Big  \|_{\diamond} \leq \frac{1}{r}~,
	\]
	as desired.
	Since the circuit $C$ is part of a space-uniform family of circuits, so is the circuit $D$ (this follows from the ``furthermore'' part of \Cref{thm:eliminating-intermediate-measurements}).
\end{proof}

\subsection{Discussion} \label{sec:definition_essay}

Compared to the classical setting, the definitions of complexity classes and reductions in the unitary world have additional subtleties.
Above, we have presented the definitions that we find most natural and useful for showing interesting reductions.
In this subsection, we explain the choices we made in our definitions.

\subsubsection{Classical vs quantum inputs} \label{sec:classical_vs_quantum_inputs}
Our definition of unitary synthesis problems distinguishes between the instance specified by a string $x \in \bits^*$ and the quantum input.
One might wonder about an alternative definition of unitary synthesis problems where the instance $x$ is ``folded'' into the quantum input; instead of having $U_x$ for every string $x$, we define a unitary $U_n$ that maps $\ket{x} \ket{\psi}$ to $\ket{x} U_x \ket{\psi}$ for all $x \in \bits^n$. That is, the unitaries/partial isometries are indexed by natural numbers indicating the length of the instance $x$. 

Our definition of unitary synthesis problem contains this definition as a special case (e.g.~we could only define $U_{n}$ for $n \in \N$ to be nontrivial, like in the definition of $\textsc{UnitaryDecider}_L$ in \cref{sec:unitary_synth_problems_def}). 
We find our more general definition helpful because it reinforces the conceptual separation between the instance $x$ (which specifies \emph{which} state transformation task we are meant to perform), and the input state to that state transformation task. In the aforementioned alternative definition, the instance $x$ is syntactically treated on equal footing with the quantum input, even though the instance string $\ket{x}$ is not meant to be transformed. 

Furthermore, the alternative definition suggests an unnatural requirement where implementing $U_n$ would mean behaving coherently on a superposition of instances, i.e.,
    \[
        \sum_x \alpha_x \ket{x} \ket{\psi_x} \mapsto \sum_x \alpha_x \ket{x} U_x \ket{\psi_x}~.
    \]
However, this is usually difficult to achieve as it requires the algorithm implementing $U_n$ to uncompute any junk that depends on $x$, which may not be efficiently performable (even though each individual map $\ket{\psi} \mapsto U_x \ket{\psi}$ may be efficiently performable). 

\subsubsection{Classical average-case complexity vs distributional unitary synthesis problems}

Distinguishing between the classical instance and the quantum input allows us to consider two different notions of average-case unitary synthesis problems.
The first, which we focus on in this paper, are distributional unitary synthesis problems. Here, the input to the unitary is fixed as part of a larger entangled state, and for every instance $x$, we require the existence of a circuit that approximates the unitary $U_x$ on this fixed entangled input state.

Another notion, which is closer to classical average-case complexity~\cite{bogdanov2006average}, are ``instance-average-case'' unitary synthesis problems: we could consider a distribution over classical instances $x$ of a unitary synthesis problem $\usynth{U} = (U_x)_x$, and e.g.~require that a polynomial-time quantum algorithm $\mathcal{C}_{x,r}$ approximates $U_x$ with high probability over the choice of $x$.
This parallels the way in which classical average-case complexity considers distributions over computational problems and demands that an algorithm can solve a randomly sampled instance with high probability. We do not explore this in depth here, but we do introduce some version of this when discussing the complexity of quantum commitment schemes (see \Cref{sec:qcrypto}).

\subsubsection{On the error dependence}

In our definition of (average-case) $\unitaryBQP$, the running time of the algorithm scales polynomially with the instance size, as well as the precision parameter $r$. Put another way, the algorithm achieves error $\eps$ by running in time $\poly(n,1/\eps)$. As mentioned earlier, the fact that the algorithm must work for all $\eps$ ensures that the given unitary synthesis problem is captured by the algorithm. 

Some may wonder why we choose the error dependence to be polynomial in $1/\eps$ rather than, say, $\log 1/\eps$. In traditional complexity theory we are accustomed to the fact that errors can be often reduced to $\eps$ by simply repeating the algorithm $O(\log 1/\eps)$ times and taking a majority vote. Being able to achieve negligible error while maintaining polynomial time complexity is a very convenient property in traditional complexity theory.

This feature is not universal, however. 
For example, the majority vote approach does not work in settings where the output is randomized and consists of many bits. Furthermore, in the unitary complexity setting it may not be possible to repeat an algorithm, because ithere is only one copy of the input. 

The $\poly(1/\eps)$-dependence in our definitions is analogous to several other areas in complexity theory:
\begin{enumerate}
    \item In the theory of approximation algorithms for $\mathsf{NP}$-optimization problems, a \emph{fully polynomial time approximation scheme (FPTAS)} is an algorithm that achieves approximation error $\eps$ in time $\poly(n,1/\eps)$. A wide variety of optimization problems are known to be $\mathsf{NP}$-hard to solve exactly but admit FPTASes, such as knapsack problems, some subset sum problems, and some restricted shortest path problems~\cite{vazirani2001approximation}. 

    \item In the theory of average-case complexity~\cite{bogdanov2006average}, a (decision) problem is efficiently solvable on average if there is an algorithm that, for all $\eps$, decides the problem with error $\eps$ and runs in time $\poly(n,1/\eps)$. 

    \item The definitions of relational complexity classes such as $\mathsf{FBPP}, \mathsf{FBQP}$ as well as the sampling classes $\mathsf{SampBPP}, \mathsf{SampBQP}$ all have error dependencies that scale as $\poly(1/\eps)$~\cite{aaronson2014equivalence,aaronson2023qubit}. 
\end{enumerate}
In these settings, the $\poly(1/\eps)$ error dependence in the definition of efficient computation is justified in two ways. First, it captures natural classes of algorithms in the setting of interest (e.g., approximation, average-case, heuristic, or sampling algorithms). This definitional choice enables one to make useful statements about the various computational phenomena being studied (see, for example, the brief discussion about this choice in~\cite{aaronson2023qubit}). 

Second, in certain cases, it is highly implausible to have better error dependence: for example, problems like knapsack have FPTASes (with running time $\poly(n,1/\eps)$), but if there was $\poly(n,\log 1/\eps)$-time approximation algorithm, then $\mathsf{P} = \mathsf{NP}$. 

As we will see, our definitions of efficient computation (with $\poly(1/\eps)$ error dependency) capture the error-dependency of most algorithms and protocols for ``fully quantum'' tasks. Just to name a few assorted examples: interactive protocols for state synthesis~\cite{rosenthal2022interactive}, tomography algorithms~\cite{o2016efficient,huang2020predicting}, Uhlmann transformations~\cite{metger2023stateqip}, and density matrix exponentiation algorithms~\cite{Lloyd_2014,Kimmel_2017} all have $\poly(1/\eps)$ dependency on accuracy. Furthermore, there is some evidence that this error dependency is necessary in the unitary complexity setting: there is a (black-box) unitary synthesis task that can be solved in time $\poly(n,1/\eps)$ but not $\poly(n,\log1/\eps)$~\cite{chia2021on}.\footnote{The task studied is in the context of post-quantum zero-knowledge proofs: given a malicious verifier's code and its auxiliary input as input, output an $\eps$-approximation of the verifier's view after its interaction with the honest prover.}

We find that it is a fascinating question of whether there are exponential error reduction techniques in state and unitary synthesis, or whether it is impossible. Finding additional evidence for it one way or another would help us understand ``fully quantum'' complexity theory better.

\begin{openproblem}
    Is exponential error reduction generically possible for state and unitary synthesis problems, or can we find additional evidence that this is impossible?
\end{openproblem}

Even if generic error reduction is impossible, one can still hope to construct exponentially precise implementations of \emph{specific} unitary synthesis problems by other means.
We leave it as an interesting open problem to come up with such exponentially precise algorithms for e.g.\ the Uhlmann Transformation Problem.

\begin{openproblem}
    Are there interesting examples of unitary synthesis problems, e.g.~the Uhlmann Transformation Problem, which can be efficiently implemented with inverse exponential error?
\end{openproblem}

\subsubsection{On errors in reductions} \label{sec:poly_space_reductions_discussion}
In traditional complexity theory, when defining an oracle class like $\mathsf{PSPACE}^{\mathsf{BQP}}$, we mean that the $\mathsf{PSPACE}$ query algorithm gets to query an oracle that decides a $\mathsf{BQP}$ language \emph{without error}. This is justified by the fact that we can assume without loss of generality that a $\mathsf{BQP}$ algorithm has exponentially small error. The lack of exponential error reduction for unitary synthesis problems, however, compels us to take additional care when defining reductions. 

In our notion of reduction, the query circuit is required to specify an error bound for each query. This captures the operational idea of a reduction: when running a reduction algorithm in ``real life'', each query is made to some actual algorithm which will makes errors. Of course, when composing polynomial-time algorithms for unitary synthesis problems, we can assume the queries are made to the \emph{ideal} unitary: the error of each query can be made an arbitrarily small inverse polynomial at the cost of a polynomial blow-up in running time, and the overall error is still only an inverse polynomial. 

On the other hand, if the query algorithm can make a superpolynomial number of queries (for example, when considering $\mathsf{unitaryPSPACE}$ query algorithms), it can matter whether we assume the query algorithm gets access to the ideal unitary or an approximate version of it. For example, one can construct examples\footnote{An example of this would be unitaries of the form $U = V^{1/\exp(n)}$ where $V$ requires exponential time and space to implement. $U$ can be implemented in $\unitaryBQP$ because it is exponentially close to the identity and because of the $\poly(n,1/\eps)$ definition of $\unitaryBQP$, but powering it exponentially many times recovers $V$.} of unitary synthesis problems $\usynth{U} \in \unitaryBQP$ such that, given the ability to make exponentially many queries to the ideal unitary, one can in polynomial-space implement a unitary that would require exponential space -- leading to the puzzling conclusion that $\unitaryPSPACE^{\usynth{U}}$ requires exponential space to compute. On the other hand, if we insist that an error bound has to be specified for each query (and the space complexity of the query algorithm scales with the desired precision), then we obtain the seemingly more reasonable conclusion that $\unitaryPSPACE = \unitaryPSPACE^\unitaryBQP$. 

Granted, these peculiarities only manifest themselves when considering reductions that make a superpolynomial number of queries. The rest of this paper only considers polynomial-time reductions, but we believe it is an interesting future direction to explore the nature of inefficient reductions in unitary complexity theory.

\subsection{Summary and open problems}

In this section, we introduced a formal framework for studying the complexity of unitary synthesis problems.
We have already seen the unitary complexity classes $\unitaryBQP$ and $\unitaryPSPACE$, as well as their average-case versions.
In the next section, we consider interactive proofs for unitary synthesis problems, which will naturally lead us to define the classes $\unitaryQIP$ and $\unitary{SZK}$.
This, however, is by no means a full list of all unitary complexity classes that might be of interest --- our aim here is to introduce the classes relevant to the Uhlmann transformation problem, not to provide a complete account.
As such, it is natural to consider the following question. 
\begin{openproblem}\label{prob:unitaryQMA}
    What are other unitary complexity classes that naturally relate to physically interesting problems? For example, is there a useful notion of $\class{unitaryQMA}$? 
\end{openproblem}

The core goal of complexity theory is to organize computational problems into computational classes and understand the relationships between these classes.
Later in this paper, we will prove some results relating unitary complexity classes to one another.
However, one would naturally conjecture that certain unitary complexity classes are in fact different, e.g.~one would expect $\unitaryBQP \neq \unitaryPSPACE$.
For decision languages, proving such separations unconditionally is significantly out of reach of current techniques.
Intriguingly, it is not clear whether this necessarily constitutes a barrier for proving similar results in the unitary setting, as it might for example be possible that $\unitaryBQP \neq \unitaryPSPACE$, but $\BQP = \PSPACE$.
Therefore, another interesting question is the following:
\begin{openproblem}
    Are there barriers from traditional complexity theory to proving unitary complexity class separations? Might it be feasible to prove $\unitaryBQP \neq \unitaryPSPACE$ unconditionally?
\end{openproblem}

Finally, as mentioned at the beginning of this section, the unitary complexity framework presented here (and in the rest of the paper) should not be treated as being set in stone; it is our best attempt at a starting point for a ``fully quantum'' complexity theory. The definitions and notions were chosen to balance both conceptual clarity as well as practical usefulness for capturing phenomena in ``fully quantum'' computational tasks. However, we anticipate the framework will evolve in tandem with our understanding of the complexity of quantum input, quantum output tasks.

 \section{Interactive Proofs for Unitary Synthesis}
\label{sec:protocols}

In this section we introduce the notion of interactive proofs for unitary synthesis problems. Intuitively, a unitary synthesis problem $\usynth{U} = (U_x)_x$ admits an interactive proof if a polynomial-time verifier, who receives an instance $x$ and a quantum register $\reg{B}$, can interact with an all-powerful but untrusted prover, and at the end -- conditioned on accepting -- implements $U_x$ on the register $\reg{B}$. This is inspired by the notion of interactive proofs for decision languages, except in addition to accepting/rejecting at the end, the verifier has to implement a unitary transformation on a given register. 

We introduce the (worst-case) unitary synthesis class $\unitaryQIP$, and give an example of a unitary synthesis problem in it that is plausibly outside $\unitaryBQP$. We then introduce the average-case interactive proof classes $\avgUnitaryQIP$ and $\avgUnitary{HVSZK}$; this latter class captures a notion of \emph{zero-knowledge} interactive unitary synthesis. As we will see in \Cref{sec:structural-uhlmann,sec:structural-succinct-uhlmann}, the complexity of such average-case interactive proof classes are deeply related to the Uhlmann Transformation Problem.

\subsection{Interactive proofs for unitary synthesis}

We present our notion of interactive protocols for unitary synthesis. (For a refresher on how quantum interactive protocols are formalized, we refer the reader to \Cref{sec:interactive-protocols}. It is also useful to compare this definition to the anologous one for state complexity in \cref{def:stateQIP}). 

\begin{definition}[{$\unitaryQIP$}]
	\label{def:unitaryQIP}
        Let $c,s:\N \times \N \to [0,1]$ be functions. The class $\unitaryQIP_{c,s}$ is the set of unitary synthesis problems $\usynth{U} = (U_x)_x$ where there exists a time-uniform polynomial-time quantum verifier $V = (V_{x, r})_{x \in \{0,1\}^*, r \in \N}$ satisfying, for all $x \in \{0,1\}^*$ and $r \in \N$:
	\begin{itemize}
		\item \emph{Completeness:} There exists a quantum prover $P^*$ (called an \emph{honest prover}) such that for all input states $\ket{\psi}$ in the support of $U_x$,
		\begin{equation*}
			\pr {V_{x, r}(\ket{\psi}) \interact P^* \text{ accepts}} \geq c(|x|, r) \,.
		\end{equation*}
		\item \emph{Soundness:} For all input states $\ket{\psi}$ (which consists of an input register $\reg{A}$ given to the verifier, and a purifying register $\reg{R}$ not accessed by the verifier or prover), and for all quantum provers $P$, there exists a channel completion $\Phi_{x,r}$ of $U_x$ such that
		\[
		   \text{if } \quad \pr { V_{x, r}(\ket{\psi}) \interact P \text{ accepts}} \geq s(|x|, r) \qquad \text{then} \qquad \td(\sigma_{x,r}, (\Phi_{x,r} \otimes \id)(\psi)) \leq \frac{1}{r}\,,
		\]
		where $\sigma_{x,r}$ denotes the output of $V_{x, r}(\ket{\psi}) \interact P$ conditioned on accepting.
	\end{itemize}
	Here the probabilities are over the randomness of the interaction. 
    \end{definition}

\begin{remark}
    The concept of interactive unitary synthesis and the class $\unitaryQIP$ was first introduced by Rosenthal and Yuen~\cite{rosenthal2022interactive}, albeit with a slightly different formulation. Here we have adapted it to be compatible with the framework of unitary synthesis problems established in \Cref{sec:defs}. 
\end{remark}

We note that in the soundness condition, the malicious prover $P$ can depend on the input state $\ket{\psi}$ of the verifier, on which the verifier wants to implement the unitary $U_x$. This poses a major challenge in designing any $\unitaryQIP$ protocol. The prover is untrusted, so it seems dangerous for the verifier to send $\ket{\psi}$ to the prover ``in the clear'', as the prover could potentially recognize the state and alter it without being detected. But then, how can the verifier enlist the prover's help to implement the desired unitary on the input state?

Rosenthal and Yuen demonstrated that there are nontrivial protocols for interactive unitary synthesis~\cite{rosenthal2022interactive}; in particular, they showed that a class of unitaries with \emph{polynomial action} -- unitary operators that act nontrivially on a subspace of polynomial dimension -- can be interactively synthesized. More formally:

\begin{definition}[Polynomial-action unitary synthesis problems~\cite{rosenthal2022interactive}]
    A unitary synthesis problem $\usynth{U} = (U_x)_x$ has \emph{polynomial action} if all $U_x$ are unitary operators, and there exists a polynomial $p(n)$ such that for all $x$, the unitary $U_x$ acts nontrivially on a subspace of dimension at most $p(|x|)$. 
\end{definition}

An example of a unitary synthesis problem with polynomial action is a family of reflections $\id - 2\proj{\psi}$ for some state $\ket{\psi}$. The reflection only acts nontrivially on a subspace of dimension $1$. 

\begin{theorem}[Interactive proofs for unitaries with polynomial action~\cite{rosenthal2022interactive}]
\label{thm:interactive-synthesis-polynomial-action}
Let $\usynth{U} \in \unitaryPSPACE$ have polynomial action. Then $\usynth{U} \in \unitaryQIP_{1,\exp(-\poly(n,r))}$.  
\end{theorem}

The high-level ideal behind \Cref{thm:interactive-synthesis-polynomial-action} is as follows: if $U$ is an $n$-qubit unitary that acts nontrivially only on a $\poly(n)$-dimensional subspace, then there exists an $n$-qubit density matrix $\rho$ and a parameter $t = \poly(n)$ such that $U = e^{-i \rho t}$. Then, the unitary $U$ can be approximately implemented via the so-called \emph{density matrix exponentiation} algorithm of~\cite{Lloyd_2014,Kimmel_2017} using only $\poly(n)$-copies of the state $\rho$. Furthermore, if $U$ can be implemented in $\unitaryPSPACE$, then copies of the corresponding density matrices $\rho$ can be efficiently synthesized via an interactive proof; this utilizes the $\statePSPACE \subseteq \stateQIP$ result of~\cite{rosenthal2022interactive}. Once these copies have been synthesized, the density matrix exponentiation algorithm is used to apply the Hamiltonian evolution $e^{-i \rho t}$ to the desired register. Note that in this protocol, the prover is just used to synthesize a (potentially) complex state, which in turn helps the verifier apply the desired unitary -- the prover never touches the input register given to the verifier.

While unitaries with polynomial action are a rather restrictive class of transformations, they can still be quite complex. For example, suppose $\ket{\psi}$ is a state that requires exponential time (but polynomial space) to synthesize. The reflection operator $\id - 2\proj{\psi}$ is unlikely to be efficiently implementable in polynomial time; finding formal evidence for this is an interesting direction that we leave for future work. 

Polynomial action unitaries provide evidence of problems in $\unitaryQIP$ that lie beyond $\unitaryBQP$, but it would be even better to find a broader class of unitary synthesis problems in $\unitaryQIP$, perhaps even a complete problem. Next, we define an \emph{average-case} version of $\unitaryQIP$ that ends up capturing a much wider class of (distributional) unitary synthesis problems (as we will see in \Cref{sec:structural-uhlmann}).

\subsection{Average-case interactive unitary synthesis}

Analogously to $\avgUnitaryBQP$, in the average-case complexity version of $\unitaryQIP$, the verifier only has to synthesize the desired unitary accurately on a specified distribution state. We define $\avgUnitaryQIP$ as follows. 

\begin{definition}[{$\avgUnitaryQIP$}]
	\label{def:avgUnitaryQIP}
        Let $c,s:\N \times \N \to [0,1]$ be functions. The class $\avgUnitaryQIP_{c,s}$ is the set of distributional unitary synthesis problems $(\usynth{U} = (U_x)_x,\Psi = (\ket{\psi_x})_x)$ such that $\Psi \in \stateQIP$ and there exists a polynomial-time quantum verifier $V = (V_{x, r})_{x \in \{0,1\}^*, r \in \N}$ satisfying, for all $x \in \{0,1\}^*$ and $r \in \N$,
	\begin{itemize}
		\item \emph{Completeness:} There exists a quantum prover $P$ (called an \emph{honest prover}) such that 
		\begin{equation*}
			\pr {V_{x, r}(\ket{\psi_x}) \interact P \text{ accepts}} \geq c(|x|, r)
		\end{equation*}
  where $V_{x, r}$ does not have access to the ancilla register of $\ket{\psi_x}$. 
		\item \emph{Soundness:} For all quantum provers $P$, there exists a channel completion $\Phi_{x,r}$ of $U_x$ such that
		\[
		   \text{if } \quad \pr {V_{x, r}(\ket{\psi_x}) \interact P \text{ accepts}} \geq s(|x|, r) \qquad \text{then} \qquad \td(\sigma_{x,r}, (\Phi_{x,r} \ot \id)(\psi_x)) \leq \frac{1}{r} \,,
		\]
		where $\sigma_{x,r}$ denotes the output of $V_{x, r}(\ket{\psi_x}) \interact P$ conditioned on accepting and $V_{x, r}$ does not have access to the ancilla register of $\ket{\psi_x}$. 
	\end{itemize}
	Here the probabilities are over the randomness of the interaction. \end{definition}

\begin{remark}
    The distributional state family $\Psi$ of a problem in $\avgUnitaryQIP$ is defined to come from $\stateQIP$ (corresponding to the principle that the complexity of the distributional state family should match the complexity of the unitary class). However, the result that $\stateQIP = \statePSPACE$~\cite{rosenthal2022interactive,metger2023stateqip} implies that the class is equivalently defined with the distributional state family coming from $\statePSPACE$. In fact, this will be the definition we work with in the future sections.  
\end{remark}

Note the difference between the soundness condition of $\avgUnitaryQIP$ and $\unitaryQIP$: in the average-case setting, the verifier $V$ receives half of an an entangled distributional state, and at the beginning of the protocol the prover $P$ is guaranteed to be unentangled with this distributional state. The prover's uncertainty about the verifier's quantum input turns out to be a very useful in designing $\avgUnitaryQIP$ protocols, as will be illustrated in \Cref{sec:structural-succinct-uhlmann} when we show that the Succinct Uhlmann Transformation Problem is complete for $\avgUnitaryQIP$.

\subsection{Zero-knowledge protocols for unitary synthesis}
\label{sec:zk}

In this section we present a notion of \emph{zero knowledge} for unitary synthesis problems. In traditional complexity theory and cryptography, a zero knowledge proof allows a polynomial-time verifier, interacting with an all-powerful prover, to decide the truth of a statement (e.g., whether a graph is $3$-colorable) without learning anything else. This is formalized via the notion of a polynomial-time simulator that can reproduce the view of the verifier without any interaction with the prover~\cite{goldwasser1989knowledge,watrous1999space}. We now explore an analogous notion for interactive unitary synthesis.

\emph{A priori}, it is unclear how to reasonably define zero knowledge in the unitary synthesis setting. First, defining zero-knowledge quantum protocols for decision languages is already challenging, as the notion of ``view'' in the quantum setting is less straightforward than with classical protocols~\cite{watrous2002limits,watrous2006zero}.
Second, in the unitary synthesis setting the verifier additionally gets one copy of an unknown state $\ket{\psi}$ for the quantum part of its input.

We first explore several attempts to define zero knowledge for unitary synthesis, and highlight their shortcomings.
For simplicity, for this discussion we will ignore the precision parameter $r$.
A first attempt is to require that the view of the verifier, when given instance $x$ and a quantum input $\ket{\psi}$ and interacts with the honest prover, can be efficiently output by the simulator $\Sim$ that only receives instance $x$ and state $\ket{\psi}$ as input and does not interact with the prover. However, since the verifier is supposed to end up with $U_x \ket{\psi}$ at the end of the protocol, this means that the simulator can output $U_x \ket{\psi}$ from $x$ and $\ket{\psi}$ in polynomial time, meaning that $\usynth{U} \in \unitaryBQP$. This would lead to an uninteresting definition of zero knowledge.

A second attempt to define zero knowledge is inspired by simulation-based security, where we allow the simulator to query the ideal Uhlmann transformation $U_x$ once.
In particular, the simulator gets as input the honest verifier's input $\ket\psi$, and gets a single query to $U_x$, before being asked to output the verifier's view.
This still seems problematic in the honest verifier setting, since the simulator might decide to query $U_x$ on a state other than $\ket\psi$. If it does that, it seems tricky to argue that the verifier does not learn anything from the interaction since it could potentially learn the target unitary transformation applied to a state that is completely unrelated to the input. 

These difficulties point to the core issue with devising a notion of zero knowledge in the unitary synthesis setting. With the standard definition of zero knowledge for decision problems, the input and outputs of the verifier are fully specified for the simulator: in particular, the simulator only has to reproduce the interaction in the accepting case. In the unitary synthesis setting, the verifier does not have a full classical description of what state it is supposed to output: the classical string $x$ provides the simulator with a complete classical description of the partial isometry $U_x$, but it only gets the input state $\ket{\psi}$ in quantum form.

This motivates us to define a notion of \emph{honest-verifier, average-case} zero knowledge for unitary synthesis, where we consider verifiers that get a classical input $x$ and a subsystem of a fixed state $\ket{\psi_x}$ (like the fixed distribution state in \cref{def:avgunitaryBQP_avgunitaryPSPACE}). We assume the distribution state $\ket{\psi_x}$ has an efficient classical description (i.e.~it comes from a $\stateBQP$ state family). Thus, the input/output behavior of the unitary synthesis protocol when both the verifier and prover are honest is completely specified, which then allows for the possibility of a simulator. Although the honest-verifier and average-case conditions may appear restrictive, we believe that this definition captures a reasonable notion of zero-knowledge in the unitary synthesis setting. Furthermore, as we will see in \Cref{sec:structural-uhlmann}, it is deeply related to the complexity of the Uhlmann Transformation Problem.

\begin{definition}[Honest-verifier, zero-knowledge unitary synthesis] \label{def:avgUnitarySZK}
    Let $c,s,\eps:\N \times \N \to [0,1]$ be functions. The class $\avgUnitary{HVSZK}_{c,s,\eps}$ is the set of distributional unitary synthesis problems $(\usynth{U} = (U_x)_x,\Psi = (\psi_x)_x) \in \avgUnitaryQIP_{c,s}$ such that
    \begin{enumerate}
        \item The distributional state family $\Psi \in \stateBQP$.
        \item There is an \emph{honest verifier} $V^* = (V_{x,r})_{x,r}$ and an \emph{honest prover} $P^*$ satisfying the $\avgUnitaryQIP_{c,s}$ completeness and soundness conditions for $(\usynth{U},\Psi)$, and
        \item There exists a polynomial-time quantum algorithm $\Sim$ (called the \emph{simulator}) ssuch that on input $(x,r,j)$ (for
        $j \in \N$), outputs a state $\rho$ satisfying
        \[
            \td(\rho,\sigma_{x, r,j}) \leq \eps(|x|, r)
        \]
        where $\sigma_{x, r,j}$
        is the joint density matrix of both the honest verifier $V^*_{x, r}$'s private register \emph{and} the ancilla register of the input $\ket{\psi_x}$, immediately after the $j$'th round of interaction with the honest prover $P^*$.
    \end{enumerate}

\end{definition}

\begin{remark}
Note that the distribution state sequence $\Psi$ associated with a distributional unitary synthesis problem in $\avgUnitary{HVSZK}$ is required to be in $\stateBQP$, instead of some notion of ``zero-knowledge state synthesis''. This may appear to violate our principle that the complexity of the distributional state corresponds to the complexity of the unitary synthesis class, but note that the simulator in a $\avgUnitary{HVSZK}$ protocol can synthesize the distributional state (because the simulator can produce the ancilla register of $\ket{\psi_x}$ and the verifier's view at the beginning of the protocol). This implies that the distributional state is in $\stateBQP$.
\end{remark}

\paragraph{Perfect zero knowledge unitary synthesis.} We also define a special subclass of $\avgUnitary{HVSZK}$ where the simulator can \emph{perfectly} reproduce the view of the verifier. We call this class $\avgUnitary{HVPZK}$, which is analogous to the decision complexity class \emph{perfect zero knowledge}, or $\mathsf{PZK}$, which is a subclass of $\mathsf{SZK}$. 

\begin{definition}[$\avgUnitary{HVPZK}$]
\label{def:avgUnitaryPZK}
    The class $\avgUnitary{HVPZK}_{s}$ is defined to be $\avgUnitary{HVSZK}_{1, s,0}$, i.e., with completeness $1$, soundness $s$, and zero simulator error. When we don't specify the soundness parameter $s$, we set it to be $1/2$ by default. \end{definition}
In \Cref{sec:structural-uhlmann} we will show that $\DistUhlmann_1$ is a complete problem for $\avgUnitary{HVPZK}$.

\subsection{Discussion}

\subsubsection{Completeness and soundness parameters}

For classical interactive protocols, the completeness and soundness parameters can be amplified in a black box fashion. As a result, there one typically chooses canonical parameters (e.g., completeness $=2/3$ and soundness $=1/3$).
In the state synthesis setting, the soundness parameter can also be generically amplified via sequential repetition (see~\cite{rosenthal2022interactive} for a proof). 

However, it is not clear whether soundness amplification is possible in the unitary synthesis setting. Conceptually, this is because the verifier only gets one copy of the input state, and if a verifier does not accept the interaction it is unclear how to recover the input state for another repetition of the protocol. 
This is why we keep the subscripts for completeness and soundness in our definitions of $\avgUnitaryQIP$ and $\avgUnitary{HVSZK}$ explicit.

\begin{openproblem}
    Can completeness/soundness amplification be performed for $\avgUnitaryQIP$, or is there evidence that this is impossible?
\end{openproblem}

\subsubsection{Definition of average-case interactive unitary synthesis} \label{sec:interactive_average_discussion}

In \cref{sec:classical_vs_quantum_inputs}, we have explained how distinguishing between classical and quantum inputs allows us to consider two different average-case notions for unitary synthesis: distributional unitary synthesis problems, where the input state to the unitary is part of a larger entangled state, and ``instance-average-case'' unitary synthesis problems, where we consider a distribution over classical instances $x$.

This distinction between distributional and instance-average-case unitary synthesis problems is particularly important in the context of interactive unitary synthesis. Recall from \cref{rem:distributional_explainer} that we can view the quantum input as being a pure state $\ket{\theta}$ sampled from some distribution (because the verifier sees half of some entangled state $\ket{\psi}$, of which $\ket{\theta}$ is a Schmidt vector). In our definition, while the prover can know the classical instance $x$ and the distribution from which $\ket{\theta}$ is sampled, the prover is \emph{not} allowed to know which state $\ket{\theta}$ was sampled from the distribution. The reason for this is because distributional unitary synthesis problems are meant to capture the task of transforming an entangled state $\ket{\psi}$ (for example, breaking the security of a commitment scheme or decoding the radiation of a black hole). Allowing the prover to ``know'' what marginal state (more precisely, Schmidt vector) $\ket{\theta}$ was received by the verifier would lead to unphysical operations on the state $\ket{\psi}$.

\subsubsection{Closure under polynomial-time reductions}
\label{sec:qip-closure-under-ptime}

It is not immediately clear from the definitions of $\avgUnitaryQIP$ and $\avgUnitary{HVSZK}$ whether they are closed under polynomial-time reductions. Let us see why this is not so straightforward. Let $(\usynth{U} = (U_x),\Psi = (\psi_x)) \in \avgUnitaryQIP$, and let $(\usynth{V} = (V_x),\Phi = (\phi_x)) \in \avgUnitary{BQP}^{(\usynth{U},\Psi)}$, meaning that there exists a polynomial time query algorithm $A$ with oracle access to $(\usynth{U},\Psi)$ that implements $(\usynth{V},\Phi)$.  

We would like to argue that $(\usynth{V},\Phi) \in \avgUnitaryQIP$ by giving an interactive synthesis protocol for it. A natural verifier $V$ would be as follows: it would simulate the query algorithm $A$; every time $A$ makes a query to $\usynth{U}$, the verifier $V$ would run the interactive protocol for $\usynth{U}$. Then $V$ would accept only if all of the invocations of the $\usynth{U}$ protocols accepted. The completeness of this protocol is straightforward; the overall protocol would accept with high probability, provided that each of the sub-protocols accepted with high enough probability. 

The soundness of the protocol is less clear. Supposing the overall protocol succeeded with probability at least $1/2$ (say), each of the sub-protocols must have accepted with probability at least $1/2$. One would like to invoke the soundness of the sub-protocol, except it only holds if the input state was the appropriate distribution state from $\Phi$. Unfortunately, the query algorithm $A$ can query the oracle $\usynth{V}$ on any input state it likes. For example, one can imagine algorithms $A$ that, for some reason, query $\usynth{U}$ on a completely random input state, unrelated to the distribution state $\Phi$. This may be fine in the query setting, but because there is an adversarial prover involved, the soundness of the protocol for $\usynth{U}$ does not directly imply that (conditioned on acceptance) the protocol has implemented an average-case implementation of $\usynth{U}$. 

As it turns out, $\avgUnitaryQIP$ \emph{is} closed under polynomial-time reductions, but this follows from our proof of $\avgUnitaryQIP = \avgUnitaryPSPACE$ in \Cref{sec:structural-succinct-uhlmann}. This still leaves a couple open questions:

\begin{openproblem}
    Is the protocol based on simulating the query algorithm sound? In other words, is there a more direct way to show that $\avgUnitaryQIP$ is closed under polynomial-time reductions? 
\end{openproblem}

\begin{openproblem}
    Is $\avgUnitary{HVSZK}$ closed under polynomial-time reductions?
\end{openproblem}

We note that even if one shows that the protocol which simulates the query algorithm is sound, there is still a question of whether the protocol is zero-knowledge (i.e., there is a simulator).

\subsubsection{Other interactive unitary complexity classes}

In this section, we have introduced the unitary complexity classes $\mathsf{(avg)UnitaryQIP}$, $\avgUnitary{HVSZK}$, and $\avgUnitary{HVPZK}$.
We will explore these in detail in \cref{part:uhlmann_general} and show that they are closely related to the Uhlmann Transformation Problem. 
One can of course introduce and study additional classes for interactive unitary synthesis; the most obvious is $\mathsf{unitaryQMA}$, which we have already mentioned in \cref{prob:unitaryQMA}.

Beyond this, a natural question is whether our notion of zero knowledge, which we have introduced for the honest verifier setting, can be meaningfully generalized to the \emph{malicious verifier} setting.
In that setting, the interaction between the honest prover and verifier can be efficiently simulated even if the verifier deviates from the protocol. This is typically the notion of zero knowledge that is useful in the cryptographic setting. It is known that in both the classical and quantum settings, the malicious verifier and honest verifier definitions of statistical zero knowledge proofs yield the same complexity classes (i.e., $\class{HVSZK} = \class{HVSZK}$ and $\class{QSZK} = \class{QSZK}$)~\cite{okamoto1996relationships,goldreich1998honest,watrous2006zero}. We leave studying stronger notions of zero knowledge protocols for unitary synthesis to future work:
\begin{openproblem}
    Is there a meaningful notion of malicious verifier zero knowledge for unitary synthesis problems, and how is that related to the honest verifier setting that we considered here?
\end{openproblem}

Finally, in this section we have only considered single-prover interactive protocols.
However, in traditional (classical and quantum) complexity theory, multi-prover protocols have been shown to be surprisingly powerful~\cite{babai1991non,ji2021mip}.
It is natural to ask whether multi-prover models might also provide additional power (and insights) in the unitary synthesis setting:
\begin{openproblem}
    Is there a meaningful notion of multi-prover unitary synthesis protocols, and what is their power?
\end{openproblem}

 \newpage
\part{Uhlmann Transformation Problem: Definitions and Complexity}
 \label{part:uhlmann_general}
\section{The Uhlmann Transformation Problem}
\label{sec:uhlmann}

In this section we formally define the Uhlmann Transformation Problem as a unitary synthesis problem. We also define a ``succinct'' version of it, in which the two states $\ket{C},\ket{D}$ specifying an instance of the Uhlmann Transformation Problem may have exponential circuit complexity, but have a succinct polynomial-size classical description.

\subsection{Uhlmann's theorem and canonical Uhlmann transformations} 
\label{sec:uhlmanns_theorem}
We begin by recalling Uhlmann's theorem.
\begin{theorem}[{Uhlmann's theorem \cite{uhlmann1976transition}}] \label{thm:uhlmann_std}
Let $\ket{C}_{\reg{AB}}$ and $\ket{D}_{\reg{AB}}$ be pure states on registers $\reg{AB}$ and denote their reduced states on register $\reg{A}$ by $\rho$ and $\sigma$, respectively.
Then, there exists a unitary $U$ acting only on register $\reg{B}$ such that
\[
\fidelity(\rho, \sigma) = |\bra{D} (\Id_\reg{A} \ot U_\reg{B}) \ket{C}|^2 \,.
\]
\end{theorem}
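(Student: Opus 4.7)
The plan is to reduce Uhlmann's theorem to the variational characterization of the trace norm, using the observation that any two purifications of a given density matrix are related by an isometry on the purifying register.

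I would begin by introducing the canonical purification: for $\rho \in \states(\reg A)$ with spectral decomposition $\rho = \sum_i p_i \ketbra{\phi_i}{\phi_i}$, let $\ket{\sqrt{\rho}}_{\reg{AA}'} = \sum_i \sqrt{p_i}\ket{\phi_i}_{\reg A}\ket{\phi_i}_{\reg{A}'}$, where $\reg{A}'$ is a copy of $\reg A$, and analogously define $\ket{\sqrt{\sigma}}$. The standard fact that any two purifications of a density matrix are related by an isometry on the purifying system implies that there exist isometries $V, W \in \linear(\reg{A}', \reg B)$ with $\ket{\psi} = (\id_{\reg A} \ot V)\ket{\sqrt{\rho}}$ and $\ket{\varphi} = (\id_{\reg A} \ot W)\ket{\sqrt{\sigma}}$.

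The next step is to compute the overlap, for an arbitrary unitary $U$ on $\reg B$:
\[
\bra{\varphi}(\id \ot U)\ket{\psi} = \bra{\sqrt{\sigma}}(\id \ot W^\dagger U V)\ket{\sqrt{\rho}}.
\]
Using the vectorization identity $\bra{\sqrt{\sigma}}(\id \ot X)\ket{\sqrt{\rho}} = \Tr(\sqrt{\sigma}\sqrt{\rho}\, X^T)$ (computed in the spectral bases on $\reg A \cong \reg{A}'$), and noting that as $U$ ranges over unitaries on $\reg B$ the composition $W^\dagger U V$ ranges over all contractions on the support subspaces of $\sqrt{\rho}$ and $\sqrt{\sigma}$ --- and in particular can realize any unitary there, since $V,W$ are isometries into $\reg B$ --- the maximisation over $U$ reduces to maximising $|\Tr(\sqrt{\sigma}\sqrt{\rho}\, X)|$ over unitaries $X$ on the relevant subspace.

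Finally, I would invoke the trace--operator norm duality $\|M\|_1 = \max_X |\Tr(MX)|$ with $X$ ranging over unitaries; applying this with $M = \sqrt{\sigma}\sqrt{\rho}$ yields $\max_U|\bra{\varphi}(\id \ot U)\ket{\psi}| = \|\sqrt{\sigma}\sqrt{\rho}\|_1$, and squaring gives $\fidelity(\rho,\sigma)$ by definition. Achievability of the maximum is obtained via the polar decomposition $\sqrt{\sigma}\sqrt{\rho} = |\sqrt{\sigma}\sqrt{\rho}|\,Q$ with partial isometry $Q$: set $W^\dagger U V$ to implement $Q^\dagger$ on the relevant subspace, and extend to a full unitary on $\reg B$. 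The main subtlety --- and the only nontrivial step --- is handling the case where $\reg B$ is strictly larger than the supports of $\rho$ and $\sigma$ (i.e.\ when $\ket{\psi},\ket{\varphi}$ have non-full Schmidt rank), so that $V, W$ are genuine isometries rather than unitaries and the extension of $Q^\dagger$ must be done carefully; this is always possible by dimension counting, and this very freedom is precisely why the Uhlmann unitary is non-unique in general, its action on the orthogonal complement of $V(\reg{A}')$ inside $\reg B$ being unconstrained.
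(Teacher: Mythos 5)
Your proof is correct and is the standard argument; the paper does not prove this theorem itself (it cites \cite[Theorem 9.2.1]{wilde2013quantum}), and both that cited proof and the paper's closely related computation for the canonical Uhlmann isometry use exactly the machinery you do — writing the states as $(\sqrt{\rho}\otimes X)\ket{\Omega}$ and $(\sqrt{\sigma}\otimes Y)\ket{\Omega}$, reducing the overlap to a trace of the form $\Tr\big(\sqrt{\sigma}\sqrt{\rho}\,(\cdot)^\top\big)$, and maximizing via trace-norm duality and the polar decomposition. The only loose spot is the claim that $W^\dagger U V$ ``ranges over all contractions,'' which is stronger than needed (and not literally true in general); what your argument actually uses — that every such composition is a contraction, giving the upper bound $\|\sqrt{\sigma}\sqrt{\rho}\|_1$, and that the polar partial isometry can be realized on the relevant supports and extended to a unitary on $\reg{B}$ by dimension counting — is exactly right.
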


We would like to define a unitary synthesis problem $(U_x)_x$ corresponding to Uhlmann's theorem. Intuitively, whenever the string $x$ represents a pair of bipartite states $\ket{C}, \ket{D}$ (by specifying circuits for them, for example), the unitary $U_x$ should satisfy the conclusion of Uhlmann's theorem. However a subtlety that arises: the unitary $U$ in \Cref{thm:uhlmann_std} is not unique; outside of the support of $\rho = \ptr{\reg{A}}{\proj{C}}$, $U$ can act arbitrarily. This motivates defining a \emph{canonical} Uhlmann transformation $W$ corresponding to a pair of bipartite states $\ket{C},\ket{D}$.

\begin{definition}[Canonical Uhlmann transformation]
\label{def:canonical-uhlmann}
The \emph{canonical Uhlmann transformation} corresponding to a pair of pure states $(\ket{C}_{\reg{AB}},\ket{D}_{\reg{AB}})$ is defined as
\begin{equation}
    \label{eq:def-canonical-uhlmann-isometry}
    W = \sgn(\Tr_{\reg{A}}(\ketbra{D}{C}))~.
\end{equation}
\end{definition}
For any linear operator $K$ with singular value decomposition $U \Sigma V^\dagger$, we define $\sgn(K) = U \sgn(\Sigma) V^\dagger$ with $\sgn(\Sigma)$ denoting replacing all the nonzero entries of $\Sigma$ with $1$ (which is the same as the usual sign function since all singular values are non-negative).

The next proposition justifies why \Cref{def:canonical-uhlmann} is canonical:
\begin{proposition}
\label{prop:canonical-uhlmann-properties}
    The map $W$ defined in~\cref{eq:def-canonical-uhlmann-isometry} is a partial isometry, and satisfies the following. Let $\rho, \sigma$ denote the reduced density matrices of $\ket{C},\ket{D}$, respectively, on register $\reg{A}$. Then a channel $\Phi$ acting on register $\reg{B}$ satisfies 
\[
    \fidelity \Big( (\id_{\reg{A}} \otimes \Phi_{\reg{B}})(\proj{C}{}) \, , \, \proj{D} \Big) = 
\fidelity(\rho,\sigma)
\]
if and only if $\Phi$ is a channel completion of $W$.

\end{proposition}

\begin{proof}
Let $X, Y$ be unitary operators acting on register $\reg{B}$ such that
\begin{gather*}
    \ket{C} = \sqrt{\rho} \otimes X \ket{\Omega} \\
    \ket{D} = \sqrt{\sigma} \otimes Y \ket{\Omega}
\end{gather*}
where $\ket{\Omega} = \sum_i \ket{i}_{\reg{A}} \ket{i}_{\reg{B}}$ is the unnormalized maximally entangled state in the standard basis. 
Let $U \Sigma V^\dagger$ denote the singular value decomposition of $(\sqrt{\rho} \sqrt{\sigma})^\top$, the transpose of $\sqrt{\rho} \sqrt{\sigma}$ with respect to the standard basis. Then the proof of~\cite[Lemma 7.6]{metger2023stateqip} shows that
\begin{equation}
    W = Y U \sgn(\Sigma) V^\dagger X^\dagger\,.
\end{equation}

    The fact that $W$ is a partial isometry is clear: since the matrices $X, U, V, Y$ are unitary and $\sgn(\Sigma)$ is a projection, it can be written in the form $W = \Pi F$ where $\Pi = X U \sgn(\Sigma) U^\dagger X^\dagger$ is a projection and $F = X U V^\dagger Y^\dagger$ is a unitary. This means that $W$ is a parital isometry.

    We now prove the ``if'' statement (if $\Phi$ is a channel completion of $W$, then it achieves the optimal Uhlmann fidelity); first we assume that $\Phi$ is in fact a unitary channel $X \mapsto RXR^\dagger$ for a unitary completion $R = cW + W^\perp$ for some constant $c \in \C$.  Assume without loss of generality that $c = 1$. 

    Suppose for sake of contradiction that $\fidelity((\id \otimes \Phi) \proj{C}, \proj{D}) = |\bra{D} \id \otimes R \ket{C}
    |^2 \neq \fidelity(\rho,\sigma)$. The proof of~\cite[Lemma 7.6]{metger2023stateqip} shows that $\bra{D} (\id \otimes W) \ket{C} = \sqrt{\fidelity(\rho,\sigma)}$; this then implies that $\bra{D} \id \otimes W^\perp \ket{C} \neq 0$. Let $e^{i \theta}$ be a complex phase such that $e^{i \theta} \bra{D} \id \otimes W^\perp \ket{C}$ is a strictly positive number. Then consider the unitary $R' = W + e^{i \theta} W^\perp$. Then
\[
    |\bra{D} \id \otimes R' \ket{C}|^2 = |\sqrt{\fidelity(\rho,\sigma)} + e^{i \theta} \bra{D} \id \otimes W^\perp \ket{C}|^2 > \fidelity(\rho,\sigma)
\]
which contradicts Uhlmann's theorem. 

    Now suppose that $\Phi$ is a general channel completion of $W$. Let $V$ denote a unitary Stinespring dilation of $\Phi$ that maps registers $\reg{BE}$ to $\reg{BE}$. Note that
    \begin{equation}
        \label{eq:canonical-uhlmann-properties-0}
        \fidelity \Big( (\id \otimes \Phi) \proj{C}, \proj{D} \Big) = \max_{\ket{\theta}} \Big | (\bra{D} \otimes \bra{\theta})(\id_{\reg{A}} \otimes V)( \ket{C} \otimes \ket{0}) \Big|^2
    \end{equation}
    where the maximization is over pure states $\ket{\theta}$ on register $\reg{E}$.  
    
    Let $P = W^\dagger W$ denote the projection onto the domain of $W$, respectively. Since $\Phi(\ketbra{a}{b}) = W\ketbra{a}{b}W^\dagger$ for all $\ket{a},\ket{b}$ in the support of $P$, we have that $V \ket{a} \ket{0} = (W \ket{a})_{\reg{B}} \otimes \ket{\varphi}$ for some state $\ket{\varphi}$ on register $\reg{E}$. This implies that $V = W \otimes \ketbra{\varphi}{0} + V^\perp$ for some partial isometry $V^\perp$. On the other hand, this means that $V$ is a unitary completion of the canonical Uhlmann transformation between the pair of states $(\ket{C} \otimes \ket{0},\ket{D} \otimes \ket{\varphi})$, which is $W \otimes \ketbra{\varphi}{0}$. Using reasoning analogous to that about unitary completions of $W$, we obtain that ~\eqref{eq:canonical-uhlmann-properties-0} is at least $\fidelity(\rho,\sigma)$, but on the other hand by Uhlmann's theorem is at most $\fidelity(\rho,\sigma)$ which concludes the proof of the ``if'' direction.

    We now prove the ``only if'' direction (if the channel $\Phi$ achieves the optimal Uhlmann fidelity, it must be a channel completion of $W$). Similarly to the proof of the ``if'' direction we prove this first for unitary channels. Let $R$ be a unitary such that $|\bra{D} (\id \otimes R) \ket{C}|^2 = \fidelity(\rho,\sigma)$. We note that the proof of Uhlmann's theorem~\cite[Theorem 9.2.1]{wilde2013quantum} shows that
    \[
         |\bra{D} (\id_\reg{A} \ot R_\reg{B}) \ket{C}|^2 =  |\Tr(Y^\dagger R X (\sqrt{\sigma} \sqrt{\rho})^\top) |^2~.
    \]
    Note that the singular value decomposition of $(\sqrt{\sigma} \sqrt{\rho})^\top$ is $V \Sigma U^\dagger$. By assumption, 
    \[
        |\Tr(Y^\dagger R X (\sqrt{\sigma} \sqrt{\rho})^\top) |^2 = \fidelity(\rho,\sigma) = \Tr(\Sigma)^2~.
    \]
    The last equality follows from the fact that $\fidelity(\rho,\sigma) = \Tr(|\sqrt{\rho} \sqrt{\sigma}|)^2 = \Tr(\Sigma)^2$. 
Let $c$ be a phase such that $c^\dagger \cdot\Tr(Y^\dagger R X (\sqrt{\sigma} \sqrt{\rho})^\top)$ is a nonnegative real number. Substituting in the singular value decomposition of $(\sqrt{\sigma} \sqrt{\rho})^\top$, we get
    \[
        \Tr(\Sigma) = \Tr(\Sigma M)
\]
    where $M := c^\dagger U^\dagger Y^\dagger R XV$. The only way that equality is achieved is if $M$ acts as identity on the support of $\Sigma$. Therefore
$M = P + M^\perp$ where $P = \sgn(\Sigma)$ is the projector onto the nonzero entries of $\Sigma$ and $M^\perp$ is some unitary acting on $I-P$, the orthogonal complement of $P$.

    Note that 
    \[
        R = cYU (P + M^\perp) V^\dagger X^\dagger = cW + cYU M^\perp V^\dagger X^\dagger~.
    \]
    Letting $W^\perp := cYU M^\perp V^\dagger X^\dagger$, we observe that it is a partial isometry with support and range disjoint from the support and range respectively of $W$. This concludes the proof of the ``only if'' direction for unitary channels $\Phi(X) = RXR^\dagger$. 

    Now suppose that $\Phi$ were a general channel achieving the optimal Uhlmann fidelity. Just like with the proof of the ``if'' direction, consider the unitary Stinespring dilation $V$ of $\Phi$. This satisfies 
    \begin{equation}
        \label{eq:canonical-uhlmann-properties-1}
        \fidelity(\rho,\sigma) = \fidelity \Big( (\id \otimes \Phi) \proj{C}, \proj{D} \Big) = \Big | (\bra{D} \otimes \bra{\varphi})(\id_{\reg{A}} \otimes V_{\reg{BE}})( \ket{C} \otimes \ket{0}) \Big|^2
    \end{equation}
    for some state $\ket{\varphi}$. This implies that $V$ is a unitary completion of the canonical Uhlmann transformation $W \otimes \ketbra{\varphi}{0}$ for the pair of states $(\ket{C} \otimes \ket{0},\ket{D} \otimes \ket{\varphi})$. By the above reasoning, this implies that $V = c W \otimes \ketbra{\varphi}{0} + V^\perp$ for some partial isometry $V^\perp$ with disjoint support and range and some constant $c \in C$. It can be verified that the corresponding channel $\Phi(X) = \Tr_{\reg{E}}( V(X \otimes \ketbra{0}{0} )V^\dagger)$ is a channel completion of $W$. This completes the proof of the ``only if'' direction.
    
\end{proof}

\subsection{Uhlmann Transformation Problem}
\label{subsec:worst_case_uhlmann}

We now formulate unitary synthesis problems corresponding to Uhlmann transformations. First, we define explicit and succinct descriptions of quantum circuits. 
\begin{definition}[Explicit and succinct descriptions of quantum circuits]
An \emph{explicit description} of a unitary quantum circuit $C$ is a sequence $(1^n, g_1,g_2,\ldots)$ where $1^n$ represents in unary the number of qubits that $C$ acts on, and $g_1,g_2,g_3,\ldots$ is a sequence of unitary gates. 

A \emph{succinct description} of a quantum circuit $C$ is a pair $(1^n,\hat{C})$ where $\hat{C}$ is a description of a classical circuit\footnote{Here, we think of $\hat{C}$ as being a list of AND, OR, and NOT gates.}  that takes as input an integer $t$ in binary and outputs the description a unitary gate $g_t$ coming from some universal gate set, as well as the (constant-sized) set of qubits that $g_t$ acts on. Together, the gates $g_1,\ldots,g_T$ describe a circuit $C$ acting on $n$ qubits; we will always denote the classical circuit with a hat (e.g.~$\hat C$) and use the same letter without a hat (e.g.~$C$) for the associated quantum circuit.
\end{definition}

We make a few remarks about the definitions of explicit and succinct descriptions of quantum circuits:
\begin{enumerate}[label=(\roman*)]
    \item The length of an explicit description of a quantum circuit is polynomial in the number of gates in the circuit as well as the number of qubits it acts on.
    \item In a succinct description of a quantum circuit $C$, the size of the circuit may be exponentially larger than the length of the description $(1^n,\hat{C})$.
    However, the number of qubits that $C$ acts on is polynomial (in fact, at most linear) in the description length.
    \item For a succinct description, we provide the number of qubits $n$ in the quantum circuit explicitly in unary because given only the classical circuit $\hat C$ it may be difficult to compute the the number of qubits that the quantum circuit $C$ acts on.
\end{enumerate}

We now define two variants of the Uhlmann Transformation Problem. In the first, the two bipartite states are described by explicit circuit descriptions, and in the second they are described by succinct circuit descriptions. 

\begin{definition}[Valid Uhlmann instances] \label{def:valid_uhlmann_string}
We say that a string $x \in \bits^*$ is a \emph{valid Uhlmann instance} if it encodes a tuple $(1^n,C,D)$ where $C,D$ are explicit descriptions of \emph{unitary} circuits that each act on $2n$ qubits. We say that $x$ is a \emph{valid succinct Uhlmann instance} if $x = (1^n, \hat C, \hat D)$ is a succinct description of a pair $(C,D)$ of unitary circuits that each act on $2n$ qubits for some $n$. 

We further say that a valid (possibly succinct) Uhlmann instance $x$ is a \emph{fidelity-$\kappa$ instance} if the reduced states $\rho,\sigma$ of the states $\ket{C} = C \ket{0^{2n}}$, $\ket{D} = D \ket{0^{2n}}$ on the first $n$ qubits satisfy $\fidelity(\rho,\sigma) \geq \kappa$.
\end{definition}

\begin{definition}[Uhlmann Transformation Problem] \label{def:utp}
Let $\kappa:\N \to [0,1]$ be a function. The \emph{$\kappa$-fidelity Uhlmann Transformation Problem} is the unitary synthesis problem $\Uhlmann_{\kappa} = (U_x)_{x \in \bits^*}$ where whenever $x = (1^n,C,D)$ is a fidelity-$\kappa(n)$ Uhlmann instance specifying a pair $(C,D)$ of unitary circuits that each act on $2n$ qubits for some $n$, then $U_x$ is the canonical Uhlmann transformation for the pair of states $(\ket{C},\ket{D}$). Otherwise if $x$ is not a valid Uhlmann instance, then we define $U_x = 0$ (i.e., a partial isometry with zero-dimensional support). 

The \emph{$\kappa$-fidelity Succinct Uhlmann Transformation Problem}, denoted by $\SuccinctUhlmann_{\kappa}$, is the sequence $(U_x)_x$ where whenever $x = (1^n,\hat{C},\hat{D})$ is a valid fidelity-$\kappa(n)$ succinct Uhlmann instance specifying a pair $(C,D)$ of unitary circuits that each act on $2n$ qubits for some $n$, then $U_x$ is the canonical Uhlmann transformation for the pair of states $(\ket{C},\ket{D})$; if $x$ is not a valid succinct Uhlmann instance, then we define $U_x = 0$. \end{definition}

One should think of the fidelity parameter $\kappa$ as a promise: an algorithm for $\Uhlmann_\kappa$ only has to work for state pairs $\ket{C}, \ket{D}$ whose reduced states have fidelity at least $\kappa$.
The smaller $\kappa$, the more difficult $\Uhlmann_\kappa$ can become, in the sense that an algorithm for $\Uhlmann_\kappa$ also works for all $\Uhlmann_{\kappa'}$ for $\kappa' \geq \kappa$.

\subsection{Distributional Uhlmann Transformation Problem}
\label{subsec:dist_uhlmann}

To define average case versions of the Uhlmann Transformation Problems we specify a distribution state $\ket{\psi_x}$ for every valid (succinct or non-succinct) Uhlmann instance $x$.

\begin{definition}[Distributional Uhlmann Transformation Problems] \label{def:canonical_dist_UTP}
We define a state sequence $\Psi_{\Uhlmann} = (\ket{\psi_x})_{x \in \bits^*}$ as follows: for all $x \in \{0,1\}^*,$
\[
    \ket{\psi_x} = \begin{cases}
        \ket{C} & \text{if $x = (1^n,C,D)$ is valid Uhlmann instance,} \\
        0 & \text{otherwise.}
    \end{cases}
\]
Then, the \emph{distributional $\kappa$-fidelity Uhlmann Transformation Problem} is the distributional unitary synthesis problem $\DistUhlmann_{\kappa} = (\Uhlmann_{\kappa}, \Psi_{\Uhlmann})$.

The state sequence $\Psi_{\SuccinctUhlmann}$ and the distributional unitary synthesis problem $\DistSuccinctUhlmann_\kappa$ are defined analogously.

\end{definition}

We now argue that this choice of distribution state is natural for the Uhlmann Transformation Problems: being able to solve the distributional Uhlmann Transformation Problems in the average-case essentially coincides with being able to perform the Uhlmann transformation corresponding to a pair of (succinctly or non-succinctly described) states. The next proposition captures this equivalence in the \emph{high $\kappa$ regime}, where the promised fidelity $\kappa$ is close to $1$. It can also be viewed as a robust version of \Cref{prop:canonical-uhlmann-properties}. 
\begin{proposition}
\label{prop:operational-avg-case-uhlmann}
Let $\ket{C}_{\reg{AB}},\ket{D}_{\reg{AB}}$ denote two bipartite states with reduced density matrices $\rho,\sigma$ respectively such that $\fidelity(\rho,\sigma) = \kappa$. Let $M$ is a quantum algorithm acting on register $\reg{B}$
such that 
\begin{equation}
    \label{eq:canonical-distribution-0-0}
     \fidelity \Big( (\id_{\reg{A}} \otimes M_{\reg{B}})(\ketbra{C}{C}), \ketbra{D}{D} \Big) \geq \kappa - \delta
\end{equation}
for some $\delta$. 
Then there exists a channel completion $\Phi$ of the canonical Uhlmann transformation $W$ for $(\ket{C},\ket{D})$ such that 
\[
    \td \Big( (\id \otimes M)(\proj{C}) \, , \, (\id \otimes \Phi)(\proj{C}) \Big) \leq 2\sqrt{1 - \kappa} + \sqrt{\delta}~.
\]

Conversely, suppose that there exists a channel completion $\Phi$ of $W$ such that
\[
    \td \Big( (\id \otimes M)(\proj{C}) \, , \, (\id \otimes \Phi)(\proj{C}) \Big) \leq \delta~.
\]
Then 
\[
    \td \Big( (\id \otimes M)(\proj{C}) \, , \, \proj{D} \Big) \leq \delta + \sqrt{1 - \kappa}~. \]

\end{proposition}
\begin{proof}
We will prove this proposition for the case of Uhlmann instances; the case of succinct Uhlmann instances is entirely analogous.

We begin with the first part of the proposition.
Let $W$ denote the canonical Uhlmann transformation corresponding to $(\ket{C},\ket{D})$. Let $\Phi$ denote a channel completion of $W$; by \Cref{prop:canonical-uhlmann-properties} we have that $\fidelity((\id \otimes \Phi)\proj{C},\proj{D}) = \kappa$. 
By the triangle inequality, we have
\begin{align*}
    &\td \Big( (\id \otimes M)(\ketbra{C}{C}), (\id \otimes \Phi)(\ketbra{C}{C}) \Big) \\
    & \qquad \leq \td \Big( (\id \otimes M)(\ketbra{C}{C}), \ketbra{D}{D} \Big) + \td \Big(\ketbra{D}{D}, (\id \otimes \Phi)(\ketbra{C}{C}) \Big) \\
    &\qquad \leq \sqrt{1 - \kappa + \delta} \, + \sqrt{1 - \kappa} \\
    &\qquad \leq 2 \sqrt{1 - \kappa} + \sqrt{\delta}
\end{align*}
where in the third line we applied the Fuchs-van de Graaf inequality to \cref{eq:canonical-distribution-0-0}. 
This shows that one the state $\ket{C}$, $M_x$ behaves (approximately) like a channel completion of the canonical Uhlmann transformation. 
By \cref{def:avg_case_error}, this means that $M_x$ (approximately) implements the $\DistUhlmann$ problem as claimed in the first part of the proposition.

    We now prove the ``Conversely'' part of the proposition. By the triangle inequality
    \begin{align}
        &\td \Big( (\id \otimes M)(\ketbra{C}{C}), \ketbra{D}{D} \Big) \notag \\
        &\qquad \leq \td \Big( (\id \otimes M)(\ketbra{C}{C}), (\id \otimes \Phi) \, \ketbra{C}{C} \Big) + \td \Big ((\id \ot \Phi) \ketbra{C}{C}, \proj{D} \Big ) \notag \\
        &\qquad \leq \delta + \sqrt{1 - \kappa}
 \label{eq:canonical-distribution-6}
    \end{align}
    where the last line follows from our assumption on $M$, \Cref{prop:canonical-uhlmann-properties}, and Fuchs-van de Graaf. 

\end{proof}

 \section{Complexity of the Uhlmann Transformation Problem}
\label{sec:structural-uhlmann}

We show that $\DistUhlmann_1$ (the distributional Uhlmann Transformation Problem with fidelity promise $\kappa = 1$) is complete for the unitary complexity class $\avgUnitary{HVPZK}$ (honest-verifier perfect zero knowledge unitary synthesis) defined in \Cref{sec:protocols}. We also discuss an approach to showing that $\DistUhlmann_\kappa$ for fidelity promise $\kappa < 1$ is complete for the class $\avgUnitary{HVSZK}$ (i.e., where the simulator can make some error). This would be analogous to the famous result of Sahai and Vadhan~\cite{10.1145/636865.636868} showing that deciding whether the statistical distance between two efficiently sampleable distributions is large or small is complete for the decision class $\mathsf{SZK}$.

\subsection{A complete problem for \titleavgUnitaryHVPZK}
\label{sec:pzk-completeness}

In this section, we show that $\DistUhlmann_{1}$ is complete for the unitary complexity class $\avgUnitary{HVPZK}$; recall that this is the class of unitary synthesis problems implementable via a zero-knowledge interactive proof with perfect completeness, soundness $\frac{1}{2}$, and zero simulation error (see \Cref{def:avgUnitaryPZK}). First we show containment.

\begin{theorem}
    \label{thm:distuhlmann-in-pzk}
    $\DistUhlmann_1 \in \avgUnitary{HVPZK}$.
\end{theorem}
\begin{proof}
    In order to prove the claim, we need to exhibit an interactive verifier and simulator that satisfy the conditions of \Cref{def:avgUnitaryPZK}. Consider the following protocol (\Cref{prot:pzk_for_uhlmann}).

    \begin{longfbox}[breakable=false, padding=1em, margin-top=1em, margin-bottom=1em]
    \begin{protocol} {\bf Perfect zero-knowledge protocol for $\DistUhlmann_1$} \label{prot:pzk_for_uhlmann}
    \end{protocol}
    \noindent \textbf{Instance: } A valid $\Uhlmann_{1}$ instance $x = (1^n,C,D)$ and precision $r \in \mathbb{N}$. \\
    \noindent \textbf{Input: } An $n$ qubit quantum register $\reg{B}_0$.  
    
    \begin{enumerate}
        \item Let $m = 8r^2$.  Sample $i^* \in [m]$ uniformly at random.

        \item For $i = 1$ though $m$:
        \begin{enumerate}
            \item If $i \neq i^*$:
            \begin{enumerate}
                \item Starting with all zeroes in registers $\reg{A}' \reg{B}'$, prepare the state $\ket{C}_{\reg{A}' \reg{B}'}$, send $\reg{B}'$ to the prover.
                \item After receiving $\reg{B}'$ back from the prover, apply $D^{\dagger}$ to $\reg{A}'\reg{B}'$, and measure all qubits. If it is not equal to all zeroes, reject.
            \end{enumerate}
            \item If $i = i^*$:
            \begin{enumerate}
                \item Send $\reg{B}_0$ to the prover and receive $\reg{B}_0$ back.
            \end{enumerate}
        \end{enumerate}
        \item If the verifier has not rejected yet, accept and output $\reg{B}_0$.
    \end{enumerate}
\end{longfbox}

    We show that the honest verifier and prover specified in \Cref{prot:pzk_for_uhlmann} satisfy the properties of \Cref{def:avgUnitaryPZK}. We first prove that the honest prover $P^*$ is accepted with probability $1$.

\begin{lemma}[Completeness]
    For all valid $\Uhlmann_{1}$ instances $x = (1^n, C, D)$ and error parameters $r \in \mathbb{N}$, for sufficiently large $n$ the honest prover $P^*$ satisfies
    \begin{equation*}
        \Pr[V_{x, r}(\ket{C}_{\reg{A}_0\reg{B}_0}) \interact P^*] = 1\,.
    \end{equation*}
\end{lemma}
\begin{proof}
    Since $x$ is an $\Uhlmann_1$ instance, the canonical Uhlmann transformation exactly maps $\ket{C}$ to $\ket{D}$.  We define the honest prover as follows: in every round the honest prover receives the $\reg{B}$ register of $\ket{C}$ and applies the canonical Uhlmann transformation, mapping the pure state to $\ket{D}$. The verifier always measures the all $0$ string.  Thus, the honest prover is accepted with probability $1$.
\end{proof}

We now prove the soundness property, i.e., if the verifier accepts with probability at least $1/2$ when interacting with a prover $P$, then conditioned in accepting, the verifier outputs a state $\frac{1}{r}$ close to the ideal output.  
\begin{lemma}[Soundness]
\label{lem:Uhlmann_pzk_qip_soundness}
    For all $\Uhlmann_1$ instances $x = (1^n, C, D)$ and error parameters $r \in \mathbb{N}$, for sufficiently large $n$, for all quantum interactive provers $P$, there exists a channel completion $\Phi_x$ of $U_x$ such that 
    \begin{equation*}
        \text{if } \quad \pr {V_{x, r}(\ket{C}) \interact P \text{ accepts}} \geq \frac{1}{2} \qquad \text{then} \qquad \td(\sigma, (\Phi_x \ot \id) \ketbra{C}{C}) \leq \frac{1}{r}\,,
    \end{equation*}
    where $\sigma$ denotes the output of $V_{x, r}(\ket{C}) \interact P$, conditioned on $V_{x, r}$ accepting. 
\end{lemma}
\begin{proof}
     The verifier in \Cref{prot:pzk_for_uhlmann} is described as sequentially checking, for the ``decoy rounds'' $i \neq i^*$, whether the $i$'th copy of the state $\ket{C}$ was transformed to $\ket{D}$. However its messages to the prover are nonadaptive; thus we can equivalently analyze its soundness by considering the following process: 
     \begin{enumerate}
         \item Generate $m$ copies of $\ket{C}$ in registers $\reg{A}_1 \reg{B}_1,\ldots,\reg{A}_m \reg{B}_m$;
         \item Send and receive the registers $\reg{B}_i$ to the prover one at a time;
         \item Choose a random index $i^* \in [m]$;
         \item Measure all registers $\reg{A}_i \reg{B}_i$ for $i \neq i^*$ using the measurement $\{ \Pi, \id - \Pi\}$ where $\Pi = \ketbra{D}{D}$. Accept if all measurements return the $\Pi$ outcome, and output the $\reg{A}_{i^*} \reg{B}_{i^*}$ registers. Otherwise, reject.
     \end{enumerate}
    This can be seen to be equivalent to the verifier in \Cref{prot:pzk_for_uhlmann}; the ``special'' index $i^*$ corresponds to where we embed the ``true'' copy of $\ket{C}_{\reg{A}_0 \reg{B}_0}$. 

     In this process, the state of the system before the measurement can be written as
     \[
        \rho = (\id \otimes \Lambda)(\ketbra{C}{C}^{\otimes m})
     \]
     where $\Lambda(\cdot)$ denotes the prover's quantum channel acting on the registers $\reg{B}_1 \cdots \reg{B}_m$. Note that the prover's action is completely independent of the embedding index $i^*$. 

     Consider sampling a random $m$-bit string $X$ in the following way: measure each register $\reg{A}_i \reg{B}_i$ of $\rho$ with the $\{ \Pi, \id - \Pi \}$ measurement, and set $X_i = 0$ if the $\Pi$ outcome occurs, otherwise set $X_i = 1$. For each $i$ let $E_i$ denote the event that $X_j = 1$ for all $j \neq i$. The acceptance probability of the verifier is equal to $\frac{1}{m} \sum_{i^* \in [m]} \Pr[E_{i^*}]$. 
     
     On the other hand, the probability of measuring registers $\reg{A}_{i^*} \reg{B}_{i^*}$ and getting the $\id - \Pi$ outcome, conditioned on the verifier accepting, is given by
     \[
        \frac{\frac{1}{m} \sum_{i^*} \Pr[X_{i^*} = 0 \wedge E_{i^*}]}{\frac{1}{m} \sum_{i^*} \Pr[E_{i^*}]} \leq \frac{2}{m} \sum_{i^*} \Pr[X_{i^*} = 0 \wedge E_{i^*}] \leq \frac{2}{m}
     \]
     where we used the assumption that the acceptance probability of the verifier is at least $1/2$, and that all of the events $X_{i^*} = 0\wedge E_{i^*}$ are mutually exclusive. By the Gentle Measurement Lemma~\cite{wilde2013quantum}, we have
     \[
        \td(\sigma, \proj{D}) \leq 2\sqrt{\frac{2}{m}} \leq \sqrt{\frac{8}{m}}
     \]
     where we let $\sigma$ denote the state of the registers $\reg{A}_{i^*} \reg{B}_{i^*}$ (or in \Cref{prot:pzk_for_uhlmann}, registers $\reg{A}_0 \reg{B}_0$) conditioned on the verifier accepting. By our choice of $m$, this is at most $1/r$. 

     Finally, let $\Phi$ denote a channel completion of the canonical Uhlmann transformation $U$ for $x$. By \Cref{prop:canonical-uhlmann-properties}, $\ketbra{D}{D} = (\id \otimes \Phi)(\ketbra{C}{C})$, and thus 
     \[
        \td(\sigma, (\id \otimes \Phi)(\proj{C})) \leq 1/r
     \]
     as desired.

\end{proof}

We now show the protocol staisfies the perfect zero-knowledge condition.
\begin{lemma}[Zero-knowledge]
\label{lem:pzk_zero_knowledge}
There exists a polynomial-time simulator that, on input $(x, r, t)$, outputs a state equal to $\sigma_{x, r, t}$
which is the reduced density matrix of $V^*_{x, r}$'s private register and the purifying register $\reg{A}_0$ of the quantum input, immediately after the $t$'th round of interaction with the honest prover $P^*$.  
\end{lemma}
\begin{proof}
We assume that the simulator samples the random index $i^* \in [m]$ by preparing a uniform superposition in some register $\reg{C}$. Then the state of the verifier at time $t$ is equal to 
\[
    \frac{1}{\sqrt{m}} \Big [ \sum_{i^* < t} \ket{i^*}_{\reg{C}} \otimes \ket{D}_{\reg{A}_0 \reg{B}_0} \otimes \ket{D}_{\reg{A}' \reg{B}'} \Big ] + \frac{1}{\sqrt{m}} \ket{t}_{\reg{C}} \otimes \ket{D}_{\reg{A}_0 \reg{B}_0} \otimes \ket{0}_{\reg{A}' \reg{B}'} + \frac{1}{\sqrt{m}} \Big [ \sum_{i^* > t} \ket{i^*}_{\reg{C}} \otimes \ket{C}_{\reg{A}_0 \reg{B}_0} \otimes \ket{D}_{\reg{A}' \reg{B}'}  \Big ]
\]
because when $i^* < t$, the prover will have already transformed both the true copy as well as the decoy copy of $\ket{C}$; when $i^* = t$, the prover will have already transformed the true copy, and the decoy copy is still the all zeroes state; and when $i^* > t$ the true copy will not have been transformed yet, but the decoy copy has. It is easy to see that this state can be prepared in polynomial time.

\end{proof}
\noindent This completes the proof of \Cref{thm:distuhlmann-in-pzk}.
\end{proof}

\noindent We now show that all problems in $\avgUnitary{HVPZK}$ reduce to $\DistUhlmann_1$.  

\begin{theorem}
\label{thm:dist_uhlmann_pzk_hard}
$\DistUhlmann_1$ is $\avgUnitary{HVPZK}$-hard. 
\end{theorem}
\begin{proof}
    The main idea is as follows: the honest prover for an $\avgUnitary{HVPZK}$ protocol can be efficiently implemented using a $\DistUhlmann_1$ oracle. The oracle is used to perform Uhlmann transformations between the consecutive ``snapshots'' of the verifier's state, which can be efficiently produced from the simulator. 

    Let $(\usynth{U}, \Psi) \in \avgUnitary{HVPZK}$. Let $V = (V_{x,r})$ denote the corresponding $m(|x|,r)$-round verifier. For notational simplicity we fix an instance $x \in \bits^*$, a precision parameter $r \in \N$, and write $m = m(|x|,r)$, $V = V_{x,r}$, $\ket{\psi} = \ket{\psi_x}$, $U = U_x$, and $\Sim(t) = \Sim(x,r,t)$.  Let $\reg{A}$ and $\reg{B}$ denote the target and ancilla registers of $\ket{\psi}$, respectively (i.e., register $\reg{B}$ is never touched during the protocol). 

    Fix a round $1 \leq t \leq m$. Consider the purified circuit of the simulator $\Sim(t)$. On the all zeroes input it outputs a pure state $\ket{\varphi_t}$ on registers $\reg{BFQP}$ where $\reg{B}$ represents the ancilla register of the input state $\ket{\psi}$,  $\reg{F}$ represents the verifier's workspace register, $\reg{Q}$ represents the verifier's message register, and $\reg{P}$ represents the rest of the purification. By definition of perfect zero-knowledge, the reduced density matrix of $\ket{\varphi_t}$ on registers $\reg{BFQ}$ is identical to the same registers of the protocol after the verifier has received the $t$'th message from the honest prover $P^*$. Thus, $\ket{\varphi_t}$ denotes the state of the protocol up to a unitary on register $\reg{P}$. Let $L$ denote the total number of qubits.

    At the beginning of the $(t+1)$'st round of the protocol, the verifier applies the unitary $V_{t+1}$ to registers $\reg{FQ}$ of $\ket{\varphi_t}$, before sending the register $\reg{Q}$ to the prover (who applies a unitary to $\reg{Q}$ as well as its private memory register $\reg{P}$). 
    Note that 
\[
        \Tr_{\reg{QP}}(V_{t+1} \ketbra{\varphi_t}{\varphi_t} V_{t+1}^\dagger) = \Tr_{\reg{QP}}(\ketbra{\varphi_{t+1}}{\varphi_{t+1}})
    \]
    where $\ket{\varphi_{t+1}}$ denotes the (purified) output of the simulator $\Sim(t+1)$. This is because the registers $\reg{BF}$ do not change between $V_{t+1} \ket{\varphi_t}$ and $\ket{\varphi_{t+1}}$ in the actual protocol; the only difference is the prover's action on registers $\reg{QP}$. 
    
    Define $\ket{C_{t+1}} = V_{t+1} \ket{\varphi_t}$ and $\ket{D_{t+1}} = \ket{\varphi_{t+1}}$. Thus the prover's action is an Uhlmann transformation on the registers $\reg{QP}$ between the pair of states $(\ket{C_{t+1}},\ket{D_{t+1}})$.  Since $\Sim$ and $V$ have polynomial-size circuits, so do the pair of states $(\ket{C_{t+1}},\ket{D_{t+1}})$. Call this pair of circuits $(C_{t+1},D_{t+1})$.

    Define $\ket{\varphi_0} = \ket{\psi}_{\reg{BA}} \ket{0 \cdots 0}$. Since the state family $\Psi$ is in $\stateBQP$ this implies that $\ket{\varphi_0}$ also has a polynomial-size circuit. 
    
    Consider the following query algorithm:
    \begin{enumerate}
\item Given input register $\reg{A}$, prepare the rest of the verifier's workspace $\reg{F}$,  message register $\reg{Q}$, and the prover's workspace $\reg{P}$ in the state $\ket{0 \cdots 0}$. 
        \item For $1 \leq t \leq m$:
        \begin{enumerate}
            \item Apply the verifier unitary $V_t$ to register $\reg{FQ}$.
            \item Query the $\DistUhlmann_1$ oracle with instance $(1^L,C_t,D_t)$, precision parameter $s = rm$, and register $\reg{QP}$ of $\reg{C}_t$ as the quantum input. 
        \end{enumerate}
        \item Apply the final verifier unitary $V_{m+1}$ to register $\reg{FQ}$. 
        \item Output register $\reg{A}$ of the verifier's workspace register, which denotes the output of the verifier. 
    \end{enumerate}
    This is a polynomial-time query algorithm since it makes only polynomially-many queries and the circuits $V = (V_t)$ are all polynomial-size. Suppose each of the $\DistUhlmann_1$ oracle calls were implemented without error. By induction, the state at the end of step 2(b) in the query algorithm is $\ket{D_t} = \ket{\varphi_t}$ (this is because by \Cref{prop:canonical-uhlmann-properties}, any channel completion of the canonical Uhlmann transformation maps $\ket{C_t}$ to $\ket{D_t}$). By the completeness property of the zero-knowledge protocol for $(\usynth{U},\Psi)$ implies that the state $\ket{\psi}$ has been exactly transformed to $(U \otimes \id) \ket{\psi}$. However, since each prover action is simulated up to error $1/s$, the total error is at most $m/s = 1/r$. This shows that $(\usynth{U},\Psi)$ can be implemented in polynomial-time, given oracle access to $\DistUhlmann_1$.

\end{proof}

\subsection{The imperfect fidelity case}\label{subsec:padding_trick}

We now turn to the complexity of $\DistUhlmann_{\kappa}$ when $\kappa < 1$. Note that for all $0 \leq \kappa_1 \leq \kappa_2 \leq 1$, we have that all valid instances of $\Uhlmann_{\kappa_2}$ are valid instances of $\Uhlmann_{\kappa_1}$ but not vice versa (a similar statement holds for $\SuccinctUhlmann$). Thus, implementing general $\Uhlmann_{\kappa_1}$ transformations may potentially be more difficult than implementing $\Uhlmann_{\kappa_2}$ transformations. Furthermore, it is no longer apparent that there is a zero-knowledge protocol for, say, $\DistUhlmann_{1/2}$. Thus it is not clear how the complexities of $\Uhlmann_{\kappa_1}$ and $\Uhlmann_{\kappa_2}$ relate to each other for different $\kappa_1,\kappa_2$. 

We present a simple padding trick which shows that as long as $\kappa_1,\kappa_2$ are bounded by at least some inverse polynomial from either $0$ or $1$, the complexities of $\DistUhlmann_{\kappa_1}$ and $\DistUhlmann_{\kappa_2}$ are equivalent under polynomial-time reductions.  

\begin{lemma}[Padding trick]
\label{lem:padding-trick}
    Let $0 \leq \kappa_1 \leq \kappa_2 \leq 1$ and let $C,D$ be circuits on $2n$ qubits such that $\fidelity(\rho,\sigma) 
 \geq \kappa_1$ where $\rho,\sigma$ are the reduced density matrices of $\ket{C} = C\ket{0^{2n}},\ket{D}=D\ket{0^{2n}}$, respectively, on the first $n$ qubits. Let $0 < \alpha \leq (1 - \kappa_2)/(1 - \kappa_1)$. Define the following states $\ket{E},\ket{F}$ on $2(n+1)$ qubits where
    \begin{align*}
        \ket{E} &= \sqrt{\alpha} \ket{0} \ket{C} \ket{0} + \sqrt{1 - \alpha} \ket{1^{2(n+1)}} \\
        \ket{F} &= \sqrt{\alpha} \ket{0} \ket{D} \ket{0} + \sqrt{1 - \alpha} \ket{1^{2(n+1)}}\,.
    \end{align*}
    Suppose that the state $\sqrt{\alpha} \ket{0} + \sqrt{1 - \alpha} \ket{1}$ can be prepared using a circuit of size $s$. Then the following hold:
    \begin{enumerate}
        \item $\ket{E}$, $\ket{F}$ can be computed by circuits $E,F$ of size $O(|C| + |D| + s)$; 
        \item $\fidelity(\tau,\mu) \geq \kappa_2$ where $\tau,\mu$ are the reduced density matrices of $\ket{E},\ket{F}$ on the first $n+1$ qubits;
        \item The canonical $(n+1)$-qubit Uhlmann isometry $V$ for $(\ket{E},\ket{F})$ can be written as
        \[
            V = U \otimes \ketbra{0}{0} + \id \otimes \ketbra{1}{1}
        \]
        where $U$ is the $n$-qubit canonical Uhlmann isometry for $(\ket{C},\ket{D})$.
    \end{enumerate}
\end{lemma}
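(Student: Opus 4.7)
The plan is to verify each of the three claims by systematically exploiting the bipartite structure of $\ket{E}$ and $\ket{F}$, where the two appended qubits $a$ (leading) and $b$ (trailing) act as ``flags'' separating the ``Uhlmann branch'' (which reduces to the $(C,D)$ problem) from an orthogonal ``padding branch.'' The orthogonality $\langle 0 \mid 1\rangle = 0$ on these flag qubits will kill all cross-terms that arise when partial-tracing or computing partial isometries, making each object block-diagonal in the flag qubit.

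For Part 1, I would construct the circuit $E$ in three steps: (i) use the size-$s$ circuit to prepare $\sqrt{\alpha}\ket{0} + \sqrt{1-\alpha}\ket{1}$ on the qubit $a$; (ii) apply $C$ to the middle $2n$ qubits controlled on $a=0$; (iii) apply an $X$ gate to each of the remaining $2n+1$ qubits controlled on $a=1$. The size of the controlled-$C$ step is $O(|C|)$ (control each gate individually), and step (iii) contributes $O(n)$. Summing gives $O(|C|+s)$ for $E$, and symmetrically $O(|D|+s)$ for $F$, establishing the claim.

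For Part 2, I would trace out the last $n+1$ qubits of $\ket{E}$ and use a Schmidt-like expansion. Writing the two branches in the $A'B'$ bipartition makes it clear that the cross terms contain factors $\langle 1^{n+1} \mid v_i, 0\rangle = \langle 1^n \mid v_i\rangle \cdot \langle 1 \mid 0\rangle = 0$ on the $b$-qubit, so they vanish. This yields the block-diagonal form
\begin{align*}
\tau \;=\; \alpha\, \ketbra{0}{0}_a \otimes \rho \;+\; (1-\alpha)\, \ketbra{1}{1}_a \otimes \ketbra{1^n}{1^n}\,,
\end{align*}
and symmetrically for $\mu$ with $\sigma$ in place of $\rho$. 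Using the standard formula for fidelity between direct sums, $\|\sqrt{\tau}\sqrt{\mu}\|_1 = \alpha\sqrt{\fidelity(\rho,\sigma)} + (1-\alpha)$. Combining with $\fidelity(\rho,\sigma) \ge \kappa_1$ and the assumption on $\alpha$, I would conclude $\fidelity(\tau,\mu) \ge \kappa_2$.

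For Part 3, I would directly compute $K \deq \Tr_{A'}(\ketbra{F}{E})$. The same orthogonality of the $b$-qubit in the two branches again kills the cross-terms under the partial trace, leaving a block-diagonal operator in $b$:
\begin{align*}
K \;=\; \alpha \, K_{CD}\otimes \ketbra{0}{0}_b \;+\; (1-\alpha)\, \ketbra{1^n}{1^n}_B\otimes \ketbra{1}{1}_b\,,
\end{align*}
where $K_{CD} \deq \Tr_A(\ketbra{D}{C})$ is exactly the operator whose $\sgn_0$ defines the canonical $0$-Uhlmann isometry $U$ for $(\ket{C},\ket{D})$. Since $\sgn_0$ respects the block decomposition (the two blocks have orthogonal supports on $b$, so $\sgn_0$ applies independently in each) and is invariant under positive scaling of each block, I obtain that the canonical partial isometry acts as $U$ on $\ket{0}_b$ and as $\ketbra{1^n}{1^n}$ on $\ket{1}_b$; passing to the natural isometric completion (extending the rank-$1$ block to $\id_B$ on $\ket{1}_b$) gives the stated form $V = U\otimes\ketbra{0}{0} + \id\otimes\ketbra{1}{1}$.

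The main obstacle is the third part: carefully verifying that the sign-function $\sgn_0$ applied to $K$ factorises through the $b$-block decomposition, and tracking which isometric completion one is implicitly taking. A secondary delicate point is Part 2, where the bound $\fidelity(\tau,\mu) \ge \kappa_2$ must be extracted from $(\alpha\sqrt{F}+(1-\alpha))^2$ under the stated condition on $\alpha$; I expect to use monotonicity in $F$ and a standard rationalisation to convert the $(1-\kappa_i)$-bound on $\alpha$ to a $(1-\sqrt{\kappa_i})$-type bound on the root fidelity, after which the rest is direct.
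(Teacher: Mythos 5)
Your proposal follows the paper's proof essentially step for step: the same flag-qubit circuit for Part 1, the same block-diagonal reduced states $\tau = \alpha\,\ketbra{0}{0}\otimes\rho + (1-\alpha)\ketbra{1}{1}\otimes\ketbra{1^n}{1^n}$ (and likewise $\mu$ with $\sigma$) for Part 2, and the same blockwise evaluation of $\sgn\big(\Tr_{\reg{A'}}(\ketbra{F}{E})\big)$ for Part 3. Parts 1 and 3 are fine; in Part 3 you are in fact more careful than the paper, since the canonical partial isometry one actually computes is $U\otimes\ketbra{0}{0} + \ketbra{1^n}{1^n}\otimes\ketbra{1}{1}$, and the form with $\id$ in the second block is only obtained after completing on the orthogonal complement, exactly as you note (the paper's proof makes the same identification silently).

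The genuine problem is the step you defer in Part 2. With the paper's squared-fidelity convention, your (correct) direct-sum formula gives $\fidelity(\tau,\mu) = \big(\alpha\sqrt{\fidelity(\rho,\sigma)} + 1-\alpha\big)^2 \geq \big(\alpha\sqrt{\kappa_1}+1-\alpha\big)^2$, and to conclude $\geq\kappa_2$ you need $\alpha(1-\sqrt{\kappa_1}) \leq 1-\sqrt{\kappa_2}$, i.e.\ $\alpha\leq(1-\sqrt{\kappa_2})/(1-\sqrt{\kappa_1})$. But the hypothesis $\alpha\leq(1-\kappa_2)/(1-\kappa_1) = \frac{(1-\sqrt{\kappa_2})(1+\sqrt{\kappa_2})}{(1-\sqrt{\kappa_1})(1+\sqrt{\kappa_1})}$ is the \emph{weaker} condition when $\kappa_2\geq\kappa_1$, since the extra factor $(1+\sqrt{\kappa_2})/(1+\sqrt{\kappa_1})$ is at least $1$; so the ``rationalisation'' you sketch runs in the wrong direction and cannot close the argument. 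Concretely, $\kappa_1=0$, $\kappa_2=3/4$, $\alpha=1/4$, $\fidelity(\rho,\sigma)=0$ satisfies the stated condition on $\alpha$ yet gives $\fidelity(\tau,\mu)=9/16<\kappa_2$. The paper's own proof sidesteps this by asserting $\fidelity(\tau,\mu)=\alpha\,\fidelity(\rho,\sigma)+1-\alpha$, which is the direct-sum rule for the \emph{root} fidelity, not the squared fidelity the paper defines; so the tension you noticed is real, and the honest resolution is either to read Part 2 with root fidelity or to strengthen the hypothesis to $\alpha\leq(1-\sqrt{\kappa_2})/(1-\sqrt{\kappa_1})$ (harmless for the downstream reduction, where $\alpha$ is an inverse polynomial), rather than to claim the stated bound on $\alpha$ suffices as written.
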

\begin{proof}
    We prove the first item. To compute the state $\ket{E}$, consider the circuit $E$ on $2(n+1)$ qubits that does the following:
    \begin{enumerate}
        \item  Initialize the first qubit in the state $\sqrt{\alpha} \ket{0} + \sqrt{1 - \alpha} \ket{1}$.
        \item Apply a CNOT from the first qubit to the last qubit. 
        \item Controlled on the first qubit being $\ket{0}$, run the $n$-qubit circuit $C$ on qubits $2$ through $n+1$.
        \item Controlled on the first qubit being $\ket{1}$, apply a bitflip operator to qubits $2$ through $n+1$.
    \end{enumerate}
    Clearly the size of $E$ is $O(|C| + s)$ where $|C|$ denotes the size of circuit $C$ where by assumption there is a circuit of size $s$ to initialize the first qubit. An analogous construction holds for $\ket{F}$. 
    
    For the second item, we have
    \begin{gather*}
        \tau = \alpha \ketbra{0}{0} \otimes \rho + (1 - \alpha) \ketbra{1}{1} \ot \ketbra{1^n}{1^n} \\
        \mu = \alpha \ketbra{0}{0} \otimes \sigma + (1 - \alpha) \ketbra{1}{1} \ot \ketbra{1^n}{1^n}\,.
    \end{gather*}
    The fidelity between $\tau$ and $\mu$ can be bounded as $\fidelity(\tau,\mu) = \alpha \fidelity(\rho,\sigma) + 1 - \alpha \geq \alpha \kappa_1 + 1 - \alpha \geq \kappa_2$.

    For the third item, recall that the canonical Uhlmann isometry (where we have set the cutoff $\eta$ to $0$) for $(\ket{E},\ket{F})$ is defined as
    \[
        V = \sgn( \Tr_{\reg{A'}} ( \ketbra{E}{F}))
    \]
    where $\reg{A'}$ denotes the first $n+1$ qubits of $\ket{E},\ket{F}$. This is equal to
    \begin{align*}
        \sgn \Big ( \alpha \Tr_{\reg{A}} ( \ketbra{C}{D}) \ot \ketbra{0}{0} + (1 - \alpha) \ketbra{1^n}{1^n} \ot \ketbra{1}{1} \Big) = \sgn(\Tr_{\reg{A}} ( \ketbra{C}{D})) \ot \ketbra{0}{0} + \ketbra{1^n}{1^n} \ot \ketbra{1}{1}
    \end{align*}
    where $\reg{A}$ denotes the first $n$ qubits of $\ket{C},\ket{D}$. To conclude, note that $\sgn(\Tr_{\reg{A}} ( \ketbra{C}{D}))$ is the canonical Uhlmann isometry for $(\ket{C},\ket{D})$.
\end{proof}

\begin{lemma}[Reductions for $\DistUhlmann_\kappa$ for different fidelities $\kappa$]
    Let $\kappa: \N \to [0,1]$ be such that $1/p(n) \leq \kappa(n) \leq 1 - 1/p(n)$ for all $n$ for some polynomial $p(n)$. Then $\DistUhlmann_{\kappa}$ polynomial-time reduces to $\DistUhlmann_{1 - 1/p}$. 
\end{lemma}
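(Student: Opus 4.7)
The plan is to use the padding trick of the preceding lemma to embed each $\DistUhlmann_\kappa$ instance into a $\DistUhlmann_{1-1/p}$ instance that can be solved with a single oracle query. Given an instance $x = (1^n, C, D)$ together with target register $\reg{B}$ of the distributional input $\ket{C}_{\reg{AB}}$, I fix $\alpha = 1/p(n+1)$ and construct (a classical description of) the padded instance $y = (1^{n+1}, E, F)$ via the padding trick. The reduction's quantum query algorithm then (i) adjoins an ancilla qubit $b$ in state $\ket{0}$ to $\reg{B}$ to form $\reg{B}' \deq \reg{B} b$, (ii) queries the $\DistUhlmann_{1-1/p}$ oracle on instance $y$ with quantum input $\reg{B}'$, and (iii) discards $b$. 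Since $\alpha(1-\kappa(n)) \leq \alpha = 1/p(n+1)$, item 2 of the padding trick yields $\fidelity(\tau, \mu) \geq 1-1/p(n+1)$, so $y$ is a valid $\Uhlmann_{1-1/p}$ instance; moreover, item 3 of the same lemma gives the canonical Uhlmann isometry $V = U \otimes \ketbra{0}{0} + \id \otimes \ketbra{1}{1}$ for $(E,F)$, so that $V(\ket{\phi}_{\reg{B}}\ket{0}_b) = (U\ket{\phi})_{\reg{B}}\ket{0}_b$ exactly realizes the original Uhlmann transformation on our input.

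The key technical step is to transfer the oracle's average-case correctness guarantee, which is only promised with respect to the distributional state $\ket{E}$, to our specific input. The crucial observation is that if we mentally extend the external register $\reg{A}$ by an ancilla $a$ in state $\ket{0}$ to form $\reg{A}'$, then our state is, up to normalization, a projection of $\ket{E}$: letting $\Pi_0 = \ketbra{0}{0}_a \otimes \id$,
\[
    \ket{0}_a \ket{C}_{\reg{AB}} \ket{0}_b \;=\; \frac{1}{\sqrt{\alpha}}\,\Pi_0 \ket{E}_{\reg{A'B'}}\,.
\]
Because $\Pi_0$ acts on $\reg{A}'$ while the oracle's channel $\Phi$ acts on $\reg{B}'$, the two tensor-commute, and the standard inequality $\|\Pi_0 A \Pi_0\|_1 \leq \|A\|_1$ bounds the trace distance between our input's image under $\Phi$ and under a channel completion of $V$ by a factor $1/\alpha$ times the oracle's average-case error $\delta$:
\[
    \td\bigl((\id \otimes \Phi)\ketbra{0C0}{0C0},\; (\id \otimes V)\ketbra{0C0}{0C0}(\id \otimes V)^\dagger\bigr) \;\leq\; \delta/\alpha\,.
\]
Trace-distance invariance under the isometric embedding $\ket{\cdot}_{\reg{A}} \mapsto \ket{0}_a \ket{\cdot}_{\reg{A}}$ then gives the same bound for the original input $\ket{C}_{\reg{AB}}\ket{0}_b$, and discarding $b$ preserves the bound.

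Given any target reduction error $1/q(n)$, I therefore set the oracle error parameter to $1/r(n) \deq 1/(q(n)\,p(n+1))$, which is inverse polynomial in $n$, so that $\delta/\alpha \leq 1/q(n)$ as required. The padded circuits $E, F$ have size polynomial in $|C| + |D|$ (with the rotation preparing $\sqrt{\alpha}\ket{0}+\sqrt{1-\alpha}\ket{1}$ compiled to sufficient precision via Solovay--Kitaev), so the overall reduction is uniform and polynomial-time as required by \cref{def:reduction_distributional}. The main point requiring careful bookkeeping is the projection argument and its uniformity across all channel completions of $V$; this is also where the assumption $\kappa(n) \leq 1-1/p(n)$ enters, ensuring that $\alpha$ need not be exponentially small and that the $1/\alpha$ error blow-up remains polynomial. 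I do not anticipate any fundamental obstacle beyond this bookkeeping.
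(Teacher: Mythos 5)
Your proposal is correct and follows essentially the same route as the paper: pad $(C,D)$ into a $\Uhlmann_{1-1/p}$ instance via the padding trick, make a single oracle query on $\reg{B}$ together with a fresh ancilla qubit, and transfer the oracle's average-case guarantee from $\ket{E}$ to $\ket{0}\ket{C}\ket{0}$ by the projection/trace-norm argument with a $1/\alpha$ blow-up, choosing $r = p\cdot q$. The one bookkeeping point to tighten (which the paper spells out) is that your displayed bound should compare the oracle's output to the channel completion $\Phi_y$ of $V$ rather than to $(\id\otimes V)(\cdot)(\id\otimes V)^\dagger$ — for $\kappa<1$ the $\reg{B}$-marginal of $\ket{C}$ can leave the support of $U$, so conjugation by the partial isometry is subnormalized — and you should note that the induced map $\rho \mapsto \Tr_{b}\bigl(\Phi_y(\rho\otimes\ketbra{0}{0}_b)\bigr)$ is itself a channel completion of $U_x$, which is exactly what \cref{def:avg_case_error} asks you to exhibit.
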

\begin{proof}
    For every valid $\Uhlmann_\kappa$ instance $x = (1^n,C,D)$, let $y = (1^{2(n+1)},E,F)$ denote the valid $\Uhlmann_{1 - 1/p}$ instance given by the padding trick (\Cref{lem:padding-trick}), where $\alpha(n) = 1/p(n)$. The state $\sqrt{\alpha(n)} \ket{0} + \sqrt{1 - \alpha(n)} \ket{1}$ can be prepared with circuits of size $O(\log n)$ by the Solovay-Kitaev theorem, so by \cref{lem:padding-trick} $E$ and $F$ are also polynomial-sized (in $n$) circuits. Furthermore, given explicit descriptions of $C, D$ one can efficiently compute explicit descriptions of $E,F$.

    In order to prove the theorem, by \cref{def:reduction_distributional}, for every precision $r$ we need to find another precision $r'$ (which can be polynomial in $r$ and $n$) and a polynomial-time quantum query algorithm $A^*$ such that any $1/r'$-error average case instantiation (see \cref{def:avg_instantiation}) of $A^\DistUhlmann_{1-1/p}$ implements $\DistUhlmann_{1/p}$ with average-case error $1/r$.
    
    We define $A^* = (A^*_x)_x$ as follows. The circuit $A^*_x$ takes as input an $n$-qubit register $\reg{B}$ and initializes a single-qubit register $\reg{F}$ in the state $\ket{0}$.
    It then applies the $\DistUhlmann_{1-1/p}$ oracle for instance $y$ (whose description can be efficiently computed from $x$) on registers $\reg{F} \reg{B}$ and outputs the result.

    To show that this implements $\DistUhlmann_{1/p}$, let $r' = p(n)r$, and let $A^{\DistUhlmann_{1 - 1/p}}$ denote a $1/r'$-error average-case instantiation. 
    Concretely, let $V_y$ denote the (exact) Uhlmann partial isometry for instance $y$ and let $H = (H_y)_y$ denote a quantum algorithm that implements $\DistUhlmann_{1 - 1/p}$ with average-case error $1/r'$ and is used to instantiate the $\DistUhlmann_{1-1/p}$-oracle. This means there is a channel completion $\Phi_y$ of $V_y$ such that 
    \[
        \td \Big ((\id \otimes H_y)(\ketbra{E}{E}), (\id \otimes \Phi_y)(\ketbra{E}{E}) \Big) \leq \frac{1}{r'}\,.
    \]

    By the third item of \Cref{lem:padding-trick}, any channel completion $\Phi_y$ of $V_y$ can be turned into a channel completion of $\Xi_x$ of $U_x$, the $\Uhlmann_\kappa$ transformation corresponding to $(\ket{C},\ket{D})$. Define $\Xi_x(\rho) \deq \Tr_{\reg{G}}(\Phi_x(\rho \ot \ketbra{0}{0}_{\reg{G}}))$ where $\reg{G}$ denotes the last qubit. Let $\Pi$ denote the support onto $U_x$. Then $\Xi_x(\Pi \rho \Pi) = \Tr_{\reg{G}}(\Phi_x(\Pi \rho \Pi \ot \ketbra{0}{0}_{\reg{G}}))$. But notice that the state $\Pi \rho \Pi \ot \ketbra{0}{0}$ is contained in the support of $V_y$; therefore 
    \begin{align*}
        \Tr_{\reg{G}}(\Phi_x(\Pi \rho \Pi \ot \ketbra{0}{0})) = \Tr_{\reg{G}} \Big( V_y (\Pi \rho \Pi \ot \ketbra{0}{0}) V_y^\dagger \Big) = U_x \rho U_x^\dagger
    \end{align*}
    where we used the expression for $V_y$ given by \Cref{lem:padding-trick}. 
    Thus we can evaluate the performance of the instantiation $A^{\DistUhlmann_{1 - 1/p}}$ on the input $\ket{C}$:
    \begin{align*}
        &\td \Big ((\id \otimes A^{\DistUhlmann_{1 - 1/p}}_x)(\ketbra{C}{C}), \, (\id \otimes \Xi_x)(\ketbra{C}{C}) \Big) \\
        &= \td \Big ((\id \otimes H_y)(\ketbra{0}{0} \ot \ketbra{C}{C} \ot \ketbra{0}{0}), \, (\id \otimes \Phi_y)(\ketbra{0}{0} \ot \ketbra{C}{C}) \ot \ketbra{0}{0}\Big) \\
        &= \frac{1}{\alpha(n)}  \td \Big( (\id \otimes H_y)(P \ketbra{E}{E} P^\dagger) , \, (\id \otimes \Phi_y)(P \ketbra{E}{E} P^\dagger) \Big) \\
        &\leq \frac{1}{\alpha(n)}  \td \Big( (\id \otimes H_y)(\ketbra{E}{E}) , \, (\id \otimes \Phi_y)(\ketbra{E}{E}) \Big)\\
        &\leq \frac{1}{\alpha(n) r'} = \frac{1}{r}\,.
    \end{align*}
    In the second line, we expanded the definitions of the query circuit $A_x$ and the channel completion $\Xi_x$. In the third line, we define the projector $P = \ketbra{0}{0}$ which acts on the first qubit so that $\ket{0} \ket{C} \ket{0} = \frac{1}{\sqrt{\alpha(n)}} P \ket{E}$. In the fifth line we used the guarantees about the algorithm $H_y$ and our definitions of $\alpha(n), r'$.
\end{proof}

The padding trick shows that $\Uhlmann_\kappa$ and $\DistUhlmann_\kappa$ are equivalent in complexity whenever $\kappa$ is bounded away from $0$ or $1$ by an inverse polynomial. It could be that there are (at least) three different complexities between $\DistUhlmann_{\negl}$, $\DistUhlmann_{1/2}$, and $\DistUhlmann_{1-\negl}$. This leads to the following natural questions:
\begin{enumerate}
    \item What is the complexity of $\DistUhlmann_\kappa$ for negligibly small $\kappa$?
    \item What is the complexity of $\DistUhlmann_\kappa$ for $\kappa = 1 - \negl(n)$?
    \item Is the complexity of $\DistUhlmann_{1/2}$ the same as the complexity of $\DistUhlmann_{1 - \negl}$?
\end{enumerate}
In the next section we explore an approach to answering the last two questions.

\subsection{A polarization lemma for Uhlmann transformations?}
\label{sec:polarize}

The padding trick is reminiscent of a result in complexity theory and cryptography called the \emph{polarization lemma}. This is used to show complete problems for $\class{SZK}$ and $\class{QSZK}$~\cite{10.1145/636865.636868,watrous2002limits}, the analogous decision classes to $\avgUnitary{HVSZK}$. The complete problem for $\mathsf{QSZK}$ is $\textsc{QuantumStateDistinguishing}$~\cite{watrous2002limits}, where one is given a pair of quantum circuits $(C,D)$ that generate mixed states $(\rho,\sigma)$, and one has to decide whether $\td(\rho,\sigma) \leq 1/3$ (the ``yes'' case) or $\td(\rho,\sigma) \geq 2/3$ (the ``no'' case), promised that one is the case. The polarization lemma yields an efficient transformation from $(C,D)$ that produces two circuits $(C',D')$ generating mixed states $(\rho',\sigma')$ such that in the ``yes'' case, $\td(\rho',\sigma') \leq 2^{-n}$, whereas in the ``no'' case, $\td(\rho',\sigma') \geq 1 - 2^{-n}$. Thus, if one can distinguish between circuit descriptions in the highly polarized case (i.e., the two density matrices are either exponentially close or exponentially far), then one can efficiently distinguish between the mildly polarized case. 

The analogous statement in the unitary complexity setting would be to have an efficient polynomial-time reduction from (say) $\DistUhlmann_{1/2}$ to (say) $\DistUhlmann_{1 - 2^{-n}}$. That is, implementing the Uhlmann transformation for an instance $(1^n,C,D)$ where the reduced density matrices of $\ket{C},\ket{D}$ have fidelity $1/2$ can be efficiently reduced to implementing Uhlmann transformations for instances with high fidelity $1 - 2^{-n}$. We conjecture that such a transformation is possible.

\begin{conjecture}[Polarization for the Uhlmann Transformation Problem]
\label{conj:polarization}
    For all polynomials $p(n)$, there exists a polynomial-time reduction from $\DistUhlmann_{1/2}$ to $\DistUhlmann_{1 - 2^{-p(n)}}$.
\end{conjecture}

One might hope that with such a polarization lemma, one can show completeness of $\DistUhlmann_{1/2}$ for $\avgUnitary{HVSZK}$ (for some choice of completeness, soundness, and simulation error), which is like $\avgUnitary{HVPZK}$ except the completeness and simulation errors may not be zero. Showing \emph{hardness} of $\DistUhlmann_{1/2}$ for $\avgUnitary{HVSZK}$ is straightforward, using similar ideas to \Cref{thm:dist_uhlmann_pzk_hard}. However it is unclear how to use the conjectured polarization lemma to argue that $\DistUhlmann_{1/2}$ has a statistical zero-knowledge protocol; the trouble is that we don't know if $\avgUnitary{HVSZK}$ is closed under polynomial-time reductions (as discussed in \Cref{sec:qip-closure-under-ptime}). 

However, the conjectured polarization lemma \emph{does} imply a slightly weaker notion of completeness:

\begin{lemma}
    Let $c = 1 - 2^{-\poly(n,r)}$ and $s = \frac{1}{2}$. Assuming \Cref{conj:polarization}, for all inverse polynomials $\kappa(n)$, 
    \begin{enumerate}
        \item All problems in $\avgUnitary{HVSZK}_{c,s}$ are polynomial-time reducible to $\DistUhlmann_{\kappa}$, and
        \item $\DistUhlmann_{\kappa}$ is polynomial-time reducible to a problem in $\avgUnitary{HVSZK}_{c,s}$.
    \end{enumerate}
\end{lemma}
\begin{proof}
    We sketch the proof. The first item follows from the fact that $\avgUnitary{HVSZK}_{c,s}$ is polynomial-time reducible to $\DistUhlmann_{1 - 2^{-(n+r)}}$ using a very similar proof to \Cref{thm:dist_uhlmann_pzk_hard}, and $\DistUhlmann_{1 - 2^{-(n+r)}}$ is trivially reducible to $\DistUhlmann_{\kappa}$ for any inverse polynomial $\kappa(n)$. 

    The second item directly follows from the conjectured polarization lemma, and the fact that $\DistUhlmann_{1 - 2^{-(n+r)}}$ is contained in $\avgUnitary{HVSZK}$ using \Cref{prot:pzk_for_uhlmann}, and a similar analysis.
\end{proof}

We give evidence for \Cref{conj:polarization} by proving a \emph{weak} polarization lemma for Uhlmann transformations.

\begin{restatable}{theorem}{weakpolarization}\label{thm:polarization}
Let $p(n)$ be a polynomial. Suppose there is a polynomial-time algorithm $Q$ that implements $\DistUhlmann_{1 - 2^{-p(n)}}$ with average-case error at most $1/32$. Then for all $0 < \eps < 1$ there exists a quantum algorithm $A = (A_x)_{x \in \bits^*}$ that runs in $n^{O(1/\eps)}$ time, makes queries to the unitary purification of $Q$ and its inverse, such that for all valid instances $x = (1^n,C,D)$ of $\Uhlmann_{1/2}$, 
    \[
        \fidelity((\id \otimes A_x)(\ketbra{C}{C}),\ketbra{D}{D}) \geq \frac{1}{2} - \eps~.
    \]
\end{restatable}
We prove \Cref{thm:polarization} in \Cref{sec:polarization}. We compare \Cref{thm:polarization} with \Cref{conj:polarization}. The main difference is that the algorithm $A$ for $\Uhlmann_{1/2}$ instances runs in time that scales \emph{exponentially} with the (inverse) precision parameter $1/\eps$, whereas \Cref{conj:polarization} posits that there is an implementation of $\DistUhlmann_{1/2}$ that runs in time $\poly(n,1/r)$ where $r$ is the precision parameter (here we think of $\eps = 1/r$). Finally, one might also wonder why the conclusion of \Cref{thm:polarization} is not expressed as ``$A$ implements $\DistUhlmann_{1/2}$ with error $\eps$''. The reason is that while $A$ maps $\ket{C}$ to having near-optimal overlap with $\ket{D}$, it is no longer clear that $A$ must be close to the canonical Uhlmann transformation corresponding to $(C,D)$; \Cref{prop:operational-avg-case-uhlmann} which connects the two notions only works in the \emph{high $\kappa$ regime}.\footnote{This is related to a question of whether Uhlmann transforms are \emph{rigid}; must near-optimal Uhlmann transforms for a pair of states be close to the corresponding canonical Uhlmann transform?}

Nonetheless, \Cref{thm:polarization} does achieve a nontrivial guarantee: being able to perform Uhlmann transforms for $\Uhlmann_{1 - 2^{-\poly(n)}}$ instances with some fixed constant error (i.e., $1/32$) can be efficiently converted into Uhlmann transformations for $\Uhlmann_{1/2}$ instances with arbitrarily small constant error.

 \section{Complexity of the Succinct Uhlmann Transformation Problem}
\label{sec:structural-succinct-uhlmann}

In this section, we show that the $\DistSuccinctUhlmann_1$ problem is complete for both $\avgUnitaryPSPACE$ and $\avgUnitaryQIP$. We do this by establishing the following statements:
\begin{enumerate}
    \item $\DistSuccinctUhlmann_1 \in \avgUnitaryQIP$ (\Cref{lem:succuhl_in_qip}).
    \item $\avgUnitaryPSPACE$ polynomial-time reduces to $\DistSuccinctUhlmann_1$ (\Cref{lem:succuhl_hard_for_pspace}).
    \item $\avgUnitaryQIP \subseteq \avgUnitaryPSPACE$ (\Cref{lem:qip_in_pspace}).
\end{enumerate}
These statements together imply the desired completeness results, as well as the equality $\avgUnitaryQIP = \avgUnitaryPSPACE$ (\Cref{thm:avg-qip-pspace-equality}). This also allows us to show that $\avgUnitaryQIP$ is closed under polynomial-time reductions, which addresses a question raised in \Cref{sec:poly_space_reductions_discussion}.

\subsection{Interactive synthesis of succinct Uhlmann transformations}
\label{sec:suc_uhlmann-qip-complete}

We first show that $\DistSuccinctUhlmann_1 \in \avgUnitaryQIP_{c,s}$ for $c = 1 - 2^{-\poly(n,r)}$ and $s = 1/2$.
The protocol mirrors that of \cref{prot:pzk_for_uhlmann} (the $\avgUnitary{HVPZK}$ protocol for $\DistUhlmann_1$), except that the circuits $C$ and $D$ are no longer polynomial size since now they are specified succinctly.
As a result, the polynomial time verifier can no longer efficiently prepare copies of the state $\ket{C}$ and can no longer directly implement the unitary $D^{\dagger}$ to check that the Uhlmann transformation was applied correctly; these were critical steps in \cref{prot:pzk_for_uhlmann}. 

To address the first issue, we leverage the $\statePSPACE \subseteq \stateQIP$ result of~\cite{rosenthal2022interactive,rosenthal2024efficient}, so by interacting with the prover, the verifier approximately synthesizes the input state $\ket{C}$.
To solve the second issue, we show that the verifier can approximately perform the measurement $\{\proj{D}, \id - \proj{D} \}$ using copies of $\ket{D}$ as a resource; again we use $\statePSPACE = \stateQIP$ to show that the verifier can prepare copies of $\ket{D}$ with the help of the prover. We describe these solutions in more detail next.

\paragraph{Interactive state synthesis.} First we recall Rosenthal's interactive state synthesis protocol~\cite{rosenthal2024efficient} (improving upon the state synthesis protocol of~\cite{rosenthal2022interactive}), which can synthesize any state sequence $(\ket{\psi_x})_x \in \statePSPACE$. We describe this result at a high level (for formal details see~\cite{rosenthal2024efficient}): for every $\statePSPACE$ state sequence $(\ket{\psi_x})_x$ there exists a polynomial-time quantum verifier $V = (V_{x, q})_{x, q}$ such (a) there exists an honest prover $P^*$ that is accepted by the verifier with probability $1$ (\emph{perfect completeness}), (b) after interacting with the honest prover $P^*$, conditioned on accepting, the verifier outputs a state that is at most $2^{-(n+q)}$-close to $\ket{\psi_x}$ (\emph{honest closeness})\footnote{We note that this honest closeness property is not part of our definition of $\stateQIP$ in \Cref{def:stateQIP}, but it is an additional property guaranteed by~\cite{rosenthal2024efficient}.}, and (c) for all provers $P$ that are accepted with probability at least $\frac{1}{2}$, the output register of the verifier is close to $\ket{\psi_x}$ within $1/q$ in trace distance (\emph{soundness}). 

In what follows we utilize as a subroutine the interactive state synthesis protocol for the sequence $(\ket{C})_{\hat C}$  which is indexed by all succinct descriptions $\hat C$ of a unitary circuit $C$ and $\ket{C}$ is the corresponding output state of the circuit (given all zeroes). It is straightforward to see that $(\ket{C}) \in \statePSPACE$, and therefore there is a $\stateQIP$ protocol to synthesize the state family.

\paragraph{Approximate reflections from copies of a state.} To perform the measurement $\{ \proj{D}, \id - \proj{D} \}$ on a state $\rho$, it suffices to initialize an ancilla qubit $\ket{+}$, and controlled on the ancilla qubit perform the reflection $\id - 2 \proj{D}$ on $\rho$. By performing a Hadamard on the ancilla qubit and measuring it, the desired measurement $\{ \proj{D}, \id - \proj{D} \}$  is performed. The (controlled) reflection $\id - 2\proj{\psi}$ can be approximately performed given copies of the state $\proj{D}$, via the following result from Ji, Liu, Song~\cite[Theorem 5]{JLS18}.

\begin{theorem}[Approximate reflections from copies of a state~\cite{JLS18}]
\label{thm:appx-reflections}
Let $\ket{\psi}$ be a state and let $A^R$ be an algorithm that takes as input a register $\reg{A}$ and makes $q$ (possibly controlled) queries to the reflection oracle $R = \id - 2\proj{\psi}$. Then for all integers $\ell > 0$ there exists an algorithm $B$ that takes as input a register $\reg{A}$ in addition to $\ket{\psi}^{\otimes \ell}$ and makes no queries to $R$, such that for all input states $\rho_{\reg{EA}}$ 
\[
    \td \Big ( (\id_{\reg{E}} \otimes A^R)(\rho_{\reg{E} \reg{A}}) \, , \, (\id_{\reg{E}} \otimes B)(\rho_{\reg{E} \reg{A}} \otimes \proj{\psi}^{\otimes \ell} ) \Big) \leq \frac{2q}{\sqrt{\ell + 1}}~.
\]
Furthermore, if $A^R$ is an algorithm such that performs the measurement $\{ \proj{\psi}, \id - \proj{\psi} \}$ on the input register $\reg{A}$, then $B(\proj{\psi} \otimes \proj{\psi}^{\otimes \ell})$ accepts with probability $1$.
\end{theorem}
The ``furthermore'' part of the theorem statement above comes from examining the algorithm $B$ described in~\cite{JLS18}, which for every query to $R$ performs a reflection about the symmetric subspace on $\ell+1$ registers. 

\medskip
\vspace{5pt}

Augmenting the protocol from \Cref{prot:pzk_for_uhlmann} with the interactive state synthesis and the approximate reflections protocols enables us to prove the following.
\begin{lemma} \label{lem:succuhl_in_qip}
     $\DistSuccinctUhlmann_1 \in \avgUnitaryQIP_{c,s}$ for $c = 1 - 2^{-(n+r)}$, $s = 1/2$.
\end{lemma}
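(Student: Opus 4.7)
The plan is to build an $\avgUnitaryQIP$ protocol for $\DistSuccinctUhlmann_1$ that closely mirrors the $\avgUnitary{SZK}_\HV$ protocol for $\DistUhlmann$ (\cref{prot:szk_for_uhlmann}), with the two primitives that are no longer polynomial-time under the succinct encoding replaced by interactive subroutines. Concretely, the verifier can neither prepare copies of $\ket{C}$ on its own nor realize the projective check $\{\proj{D},\id-\proj{D}\}$ by running $D^\dagger$; these will be replaced, respectively, by the interactive state synthesis protocol for the $\statePSPACE=\stateQIP$ family $\Gamma=(\ket{C})_{\hat C}$ described just before the statement, and by the approximate measurement protocol of \cref{lem:approx_measurement} with target state $\ket{D}$.

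Given $x=(1^n,\hat C,\hat D)$ and input register $\reg{B}_0$, the verifier will (i) invoke the state synthesis protocol $m=\poly(n)$ times in sequence to obtain registers $\reg{A}_1\reg{B}_1,\ldots,\reg{A}_m\reg{B}_m$ each $1/\poly(n)$-close to $\ket{C}$, aborting if any sub-invocation rejects; (ii) sample $\pi\in S_{m+1}$ uniformly and send the permuted block $\reg{B}_{[0:m]}$ to the prover; (iii) upon receipt, undo $\pi$ and invoke the measurement subroutine of \cref{lem:approx_measurement} on each pair $\reg{A}_i\reg{B}_i$ for $i\in[m]$, accepting iff every sub-invocation accepts with outcome bit $1$; and (iv) output $\reg{B}_0$. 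For completeness, the honest prover acts honestly in every subroutine and applies the canonical Uhlmann isometry---which for a fidelity-$1$ instance is a genuine isometry on the support of the reduced state of $\ket{C}$---to each returned $\reg{B}_i$. Combining the negligible honest errors of both subroutines with the exactness of the Uhlmann step will yield acceptance probability $1-\negl(n)$ and a verifier output that is negligibly close to $(\id\otimes U_x)\ket{C}$.

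For soundness I will use a hybrid argument. Fix a prover $P$ making the verifier accept with probability at least $1/2$, and replace the $m$ state synthesis sub-invocations, then the $m$ measurement sub-invocations, one at a time by their ``ideal'' counterparts (exact preparation of $\ket{C}$, and the exact projective measurement $\{\proj{D},\id-\proj{D}\}$, respectively). Each replacement costs an additive $1/\poly(n)$ in trace distance by the soundness guarantees of the $\stateQIP$ state synthesis protocol of \cite{rosenthal2022interactive} and of \cref{lem:approx_measurement}; by picking the precision parameters as sufficiently small polynomials of $m$ and the desired final error, the cumulative error can be made an arbitrarily small inverse polynomial in $n$. The resulting idealized protocol is exactly \cref{prot:szk_for_uhlmann} instantiated on the succinct instance, so the soundness analysis of \cref{lem:uhlmann_szk_qip_soundness}---permutation invariance of the joint state on $\reg{A}\reg{B}_{[0:m]}$ after the prover's action, an exchangeability bound on the $m+1$ decoy outcomes, and \cref{prop:operational-avg-case-uhlmann} to convert closeness to $\ket{D}$ into closeness to a channel completion $\Phi_x$ of $U_x$ applied to $\ket{C}$---transfers over directly.

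The main obstacle will be executing the hybrid argument while \emph{conditioning on acceptance}: replacing a subroutine changes both the acceptance event and the conditional post-acceptance state, and a malicious prover could in principle correlate subroutine-rejection events with bad behavior on the real Uhlmann register, amplifying naive trace-distance bounds. Handling this will require applying the Gentle Measurement Lemma (\cref{prop:gentle_measurement}) to the joint event that all subroutines accept in both the real and idealized executions, and invoking the soundness of each subroutine against the \emph{induced} prover strategy that hardcodes the transcripts of earlier interactions, so that the cumulative error grows only polynomially in $m$. With these ingredients in place, the remainder of the proof should be routine bookkeeping.
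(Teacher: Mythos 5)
Your proposal is correct and follows essentially the same route as the paper: copies of $\ket{C}$ obtained via the interactive state synthesis protocol, a random permutation test with decoys, the check that decoys became $\ket{D}$ performed via \cref{lem:approx_measurement}, and soundness inherited from the exchangeability argument of \cref{lem:uhlmann_szk_qip_soundness} together with \cref{prop:operational-avg-case-uhlmann}. The only notable difference is that the paper invokes each subroutine \emph{once} on the $m$-fold tensor product (using that succinct descriptions of $C^{\otimes m}$ and $\ket{D}^{\otimes m}$ are efficiently computable from $\hat C,\hat D$), which sidesteps the sequential-composition and conditioning-on-acceptance issues across $m$ sub-invocations that you correctly identify as the main obstacle in your version.
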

\begin{proof}

We present in \Cref{prot:qip_for_uhlmann} an $\avgUnitaryQIP$ protocol for $\DistSuccinctUhlmann_1$ with completeness $1 - 2^{-(n+r)}$ and soundness $\frac{1}{2}$, where $r$ is the precision parameter.

\begin{longfbox}[breakable=false, padding=1em, margin-top=1em, margin-bottom=1em]
\begin{protocol} {\bf Interactive protocol for $\DistSuccinctUhlmann_1$} \label{prot:qip_for_uhlmann} 
\end{protocol}
\noindent \textbf{Instance: } A valid $\SuccinctUhlmann_{1}$ instance $x = (1^n,\hat{C},\hat{D})$ specifying a succinct description of a pair of circuits $(C,D)$ and precision $r$. \\
\noindent \textbf{Input: } Register $\reg{B}_0$. 

\begin{enumerate}
    \item (\emph{\textbf{State synthesis}}). Let $m = 64r^2 + 1$, $\ell = 2^{20} \cdot r^6$.  Run the $\stateQIP$ protocol from~\cite{rosenthal2024efficient} to synthesize $\ket{C}^{\otimes m} \otimes \ket{D}^{\otimes \ell}$ with precision parameter $q = 8rn$. Let $\reg{A}_1 \reg{B}_1 \cdots \reg{A}_m \reg{B}_m$ denote the registers for $\ket{C}^{\otimes m}$. If the $\stateQIP$ protocol rejects, reject.

    \item (\emph{\textbf{Uhlmann transformation testing}}). Sample $i^* \in [m]$ uniformly at random. \begin{enumerate}
                 \item For $i = 1$ though $m$:
        \begin{enumerate}
            \item If $i \neq i^*$, send register $\reg{B}_i$ to the prover and receive $\reg{B}_i$ back.
\item If $i = i^*$, send $\reg{B}_0$ to the prover and receive $\reg{B}_0$ back.
\end{enumerate}
        \item For $i \neq i^*$, use the approximate reflections protocol of \Cref{thm:appx-reflections} with the states $\ket{D}^{\otimes \ell}$ synthesized earlier to perform the measurement $\{ \proj{D},\id - \proj{D} \}$ on registers $\reg{A}_i \reg{B}_i$. If the $\id - \proj{D}$ outcome is obtained for any $i$, then reject. \item Otherwise, accept and output register $\reg{B}_0$.

    \end{enumerate}
\end{enumerate}

\end{longfbox}

We show that the verifier described in \cref{prot:qip_for_uhlmann} satisfies the required properties of $\avgUnitaryQIP$ protocols. First, the verifier runs in $\poly(n,r)$ time. This uses the fact that the $\stateQIP$ protocol and approximate reflection algorithm runs in $\poly(n,r)$ time, and the succinct descriptions of $\ket{C}^{\ot m}$ and $\ket{D}^{\ot \ell}$ are $\poly(n,r)$-sized in the lengths of the succinct descriptions $\hat C$ and $\hat D$. We now prove that the verifier satisfies the completeness and soundness properties. 

\end{proof}

\begin{lemma}[Completeness]\label{lem:sucuhl_qip_completeness}
    For all valid $\SuccinctUhlmann_1$ instances $x = (1^n,\hat C,\hat D)$, precision parameter $r \in \mathbb{N}$, and sufficiently large $n$, there exists an honest prover $P^*$ for \cref{prot:qip_for_uhlmann} satisfying
    \begin{equation*}
        \Pr[V_{x, r}(\ket{C}) \interact P^*] \geq 1 - 2^{-(n+r)}\,.
    \end{equation*}
\end{lemma}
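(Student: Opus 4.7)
The plan is to define the honest prover as follows: it applies a unitary channel completion $U$ of the canonical cutoff-$0$ Uhlmann partial isometry $W$ for the pair $(\ket{C},\ket{D})$ to each of the $m+1$ registers $\reg{B}_i$ it receives from the verifier. Since $x = (1^n,\hat C,\hat D)$ is a valid $\SuccinctUhlmann_1$ instance, the reduced states of $\ket{C}$ and $\ket{D}$ on the first $n$ qubits have fidelity exactly $1$; the approximation guarantee of \Cref{prop:canonical-uhlmann-properties} with $\eta = 0$ then gives $|\bra{D}(\id_{\reg A}\otimes W_{\reg B})\ket{C}|^2 = 1$, so after absorbing a global phase into $W$ the Uhlmann isometry maps $\ket{C}$ exactly to $\ket{D}$ and, in particular, $\ket{C}$ lies in the support of $W$.

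Next I would track the trace-distance error through the three stages of \Cref{prot:qip_for_uhlmann}. Stage 1 runs the $\stateQIP$ state synthesis protocol on $\ket{C}^{\otimes m}$; since $\ket{C}^{\otimes m}$ admits a succinct description of size $\poly(n)$, it is a single $\statePSPACE = \stateQIP$ state and its honest prover is accepted with probability $1$ while producing an output within $2^{-\Omega(n)}$ trace distance of $\ket{C}^{\otimes m}$. Combined with the distributional input $\ket{C}_{\reg{A}_0\reg{B}_0}$, the joint state on $\reg{A}_{[0:m]}\reg{B}_{[0:m]}$ is therefore $2^{-\Omega(n)}$-close to $\ket{C}^{\otimes m+1}$. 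Stage 2 is the random permutation, the prover's application of $U$ on each $\reg{B}_i$, and the inverse permutation; by monotonicity of trace distance under quantum channels together with the exact transformation property of $U$, the resulting state is still $2^{-\Omega(n)}$-close to $\ket{D}^{\otimes m+1}$. Stage 3 invokes the approximate measurement protocol of \Cref{lem:approx_measurement} with projector $\ketbra{D}{D}^{\otimes m}$ on $\reg{AB}_{[m]}$; its honest prover is accepted with probability $1$, and on the ideal input $\ket{D}^{\otimes m}$ outputs $b=1$ with probability at least $1 - 2^{-\poly(n)}$. By the operational interpretation of trace distance, feeding in a state that is $2^{-\Omega(n)}$-close to $\ket{D}^{\otimes m}$ decreases this probability by at most $2^{-\Omega(n)}$.

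Combining the three stages via a union bound on the individual failure probabilities (rejection of the state-synthesis subprotocol, rejection of the approximate-measurement subprotocol, and obtaining $b=0$), the total probability that \Cref{prot:qip_for_uhlmann} rejects is $2^{-\Omega(n)}$, which is at most $2^{-n+1}$ for sufficiently large $n$. The only mildly subtle point in this plan, and really the main thing to verify, is that one may treat $\ket{C}^{\otimes m}$ as a single $\stateQIP$ state with honest closeness $2^{-\Omega(n)}$ rather than losing a factor of $m$ by running $m$ independent invocations of the base state-synthesis protocol; this is handled by noting that the succinct description of $\ket{C}^{\otimes m}$ has polynomial size in $n$, so the generic honest-closeness guarantee of~\cite{rosenthal2022interactive,metger2023stateqip} already yields negligible error on the product state.
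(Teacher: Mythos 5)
Your proposal is correct and takes essentially the same route as the paper's proof: the honest prover runs the honest state-synthesis prover for the single (succinctly described) state $\ket{C}^{\otimes m}$, applies a channel completion of the exact Uhlmann isometry to each $\reg{B}_i$, and then runs the honest prover of the approximate measurement protocol, with the trace-distance error propagated through the stages exactly as you describe. The only refinement needed is quantitative: to get the stated constant $2^{-n+1}$ you should plug in the concrete guarantees (honest closeness $2^{-n}$ for the $\stateQIP$ subprotocol and output-$1$ probability at least $1-2^{-n}$ for the approximate measurement on $\ket{D}^{\otimes m}$) rather than a generic $2^{-\Omega(n)}$, since $2^{-\Omega(n)}\leq 2^{-n+1}$ does not hold for an arbitrary hidden constant.
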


\begin{proof}

    We define an honest prover $P^*$ for $\DistSuccinctUhlmann_{1}$ as follows. Let $x = (1^n,\hat{C},\hat{D})$ denote the $\SuccinctUhlmann_1$ instance. In the state synthesis stage, $P^*$  implements the honest prover in the $\stateQIP$ protocol for synthesizing $\ket{C}^{\otimes m} \otimes \ket{D}^{\otimes \ell}$, which has completeness $1$~\cite{rosenthal2024efficient}.  In the Uhlmann transformation testing stage, in each of the $m$ rounds $P^*$ applies the canonical Uhlmann transformation $U_x$ on the given register $\reg{B}$ (just like in \Cref{sec:structural-uhlmann}). 

    We ignore the register $\reg{B}_0$ for now, because it doesn't affect the acceptance probability of the verifier. By the honest closeness property of the protocol described in~\cite{rosenthal2024efficient}, the state of the verifier after the first stage is $2^{-(n+q)}$-close to $\ket{C}^{\otimes m} \otimes \ket{D}^{\otimes \ell}$. Thus the state of the verifier after Step 2(a) of the protocol is $2^{-(n+q)}$-close to $\ket{D}^{\otimes m-1} \otimes \ket{C} \otimes \ket{D}^{\otimes \ell}$ where the first $m-1$ copies of $\ket{D}$ are from applying the honest prover to registers $\reg{A}_i \reg{B}_i$ for $i \neq i^*$, the lone copy of $\ket{C}$ is in register $\reg{A}_{i^*} \reg{B}_{i^*}$ (which was not sent over to the prover), and the last $\ell$ copies of $\ket{D}$ are from the state synthesis stage. By the ``furthermore'' part of \Cref{thm:appx-reflections}, each of the (approximate) $\{ \proj{D}, \id - \proj{D} \}$ measurements will succeed with probability at least $1 - 2^{-(n+q)}$. Thus the overall probability of acceptance is at least $1 - m2^{-(n+q)} \geq 1 - 2^{-(n+r)}$ for sufficiently large $n$ (this uses our choices of $m,q$).

\end{proof}

\begin{lemma}[Soundness]\label{lem:sucuhl_qip_soundness}
    For all valid $\SuccinctUhlmann_1$ instances $x = (1^n,\hat C,\hat D)$ and precision $r \in \mathbb{N}$, for sufficiently large $n$, for all quantum provers $P$, there exists a channel completion $\Phi_{x}$ of $U_{x}$ such that
    \begin{equation*}
        \text{if } \quad \pr {V_{x, r}(\ket{C}) \interact P \text{ accepts}} \geq \frac{1}{2} \qquad \text{then} \qquad \td(\sigma, (\Phi_x \ot \id)(\ketbra{C}{C})) \leq \frac{1}{r}\,,
    \end{equation*}
    where $\sigma$ denotes the output of $V_{x, r}(\ket{C})\interact P$ conditioned on the verifier $V_{x}$. \end{lemma}
\begin{proof}
Since $x = (1^n,\hat{C},\hat{D})$ is a $\SuccinctUhlmann_1$ instance, by \Cref{prop:canonical-uhlmann-properties} $(\Phi_{x} \otimes \id)(\ketbra{C}{C}) = \proj{D}$ for all channel completions $\Phi_{x}$ of $U_x$, so it suffices to show that conditioned on accepting, the verifier outputs a state within $1/r$ of $\ket{D}$ in trace distance. 

Let $\reg{A}_0 \reg{B}_0$ denote the registers for the input state $\ket{C}$, so that $\reg{A}_0$ is the reference register that is not accessed by either the prover or the verifier. Let $P$ be a prover that is accepted with probability at least $1/2$. This means that the state synthesis stage is accepted with probability at least $1/2$, and the Uhlmann transformation testing stage is accepted with probability at least $1/2$ (conditioned on the first stage accepting). By the soundness property of the $\stateQIP$ protocol, 
the state of the verifier (plus reference system $\reg{A}_0$), conditioned on acceptance of the state synthesis stage, is $1/q$-close to $\ket{C}^{\otimes (m+1)} \otimes \ket{D}^{\otimes \ell}$. 

Let $\cal{P}$ denote the prover channel acting on registers $\reg{B}_0 \cdots \reg{B}_m$. Let $\cal{B}$ denote the channel corresponding to the approximate reflections protocol from \Cref{thm:appx-reflections} used in Step 2(b), that traces out the copies of $\ket{D}^{\otimes \ell}$ at the end. Let $\cal{A}$ denote the channel that performs the \emph{exact} reflections instead of the approximate ones, which \Cref{thm:appx-reflections} guarantees is $2(m-1)/\sqrt{\ell+1}$-close to $\cal{B}$. 

We consider a sequence of hybrids. Define $\rho_0$ to be the state of the verifier and registers $\reg{A}_0 \reg{B}_0$ conditioned on the first stage accepting. Define $\rho_1 := \proj{C}^{\otimes m+1} \otimes \proj{D}^{\otimes \ell}$. Define
\[
    \sigma_b = \cal{B} \Big( \cal{P} (\rho_b) \Big)
\]
for $b \in \{0,1\}$. Note that $\sigma_0$ is the state of the verifier at the end of the protocol (but before conditioning on acceptance in the second stage). Define
\[
    \sigma_2 = \cal{A} \Big( \cal{P} (\proj{C}^{\otimes m+1}) \Big)~.
\]
Since $\td(\rho_0,\rho_1) \leq 1/q$, this means $\td(\sigma_0,\sigma_1) \leq 1/q$. 
By \Cref{thm:appx-reflections}, $\td(\sigma_1,\sigma_2) \leq 2(m-1)/\sqrt{\ell+1}$. Since the probability of acceptance in the state $\sigma_0$ is at least $1/2$, the probability of acceptance in the state $\sigma_2$ is at least $1/2 - \delta$ for
\[
    \delta := \frac{1}{q} + \frac{2(m-1)}{\sqrt{\ell+1}} \leq \frac{1}{4r}
\]
by our choices of $q,m,\ell$. 

Let $\tilde{\sigma}_b$ for $b \in \{0,1,2\}$ denote the states conditioned on acceptance of the second stage. The soundness analysis of the $\avgUnitary{HVPZK}$ protocol in \Cref{lem:Uhlmann_pzk_qip_soundness}  shows that the $\reg{A}_0 \reg{B}_0$ registers of $\tilde{\sigma}_2$ are $\sqrt{16/m}$-close to $\ket{D}$. Furthermore,~\cite[Lemma B.1]{coudron2013infinite} implies that 
\[
    \td(\tilde{\sigma}_0,\tilde{\sigma}_2) \leq \frac{\td(\sigma_0,\sigma_2)}{\max \{p_0,p_2 \}} \leq 2\td(\sigma_0,\sigma_2) \leq 2\delta~.
\]
where $p_b$ denotes the probability of acceptance in $\sigma_b$, respectively. Thus, by our choice of $m,\ell,q$, we have
\[
    \td \Big( (\tilde{\sigma}_0)_{\reg{A}_0 \reg{B}_0} \, , \, \proj{D} \Big) \leq 2\delta + \sqrt{\frac{16}{m}} \leq \frac{1}{r}~.
\]
Note that $(\tilde{\sigma}_0)_{\reg{A}_0 \reg{B}_0}$ is precisely the state of the reference register $\reg{A}_0$ and the register output by the verifier conditioned on accepting. This concludes the proof of soundness.

\end{proof}

\subsection{Hardness for unitary polynomial space}
\label{sec:suc_uhl_avgpspace_complete}

We now show that $\DistSuccinctUhlmann_1$ is hard for $\avgUnitaryPSPACE$.

\begin{lemma} \label{lem:succuhl_hard_for_pspace}
$\avgUnitaryPSPACE$ polynomial-time reduces to $\DistSuccinctUhlmann_1$.
\end{lemma}
\begin{proof}
We show that any distributional unitary synthesis problem $(\usynth{U} = (U_x)_x, \Psi = (\psi_x)_x) \in \avgUnitaryPSPACE$ can be reduced to implementing the Uhlmann transformation corresponding to a $\SuccinctUhlmann_1$ instance.
The idea for this is natural: to implement $U_x$ on $\psi_x$, we can simply perform the Uhlmann transformation corresponding to the pair of states $(\ket{\psi_x},(\id \otimes U_x) \ket{\psi_x})$. 

There are, however, some subtleties in making this to work. Instances of $\SuccinctUhlmann_1$ are tuples $y = (1^n,\hat C,\hat D)$ where $\hat C,\hat D$ are succinct descriptions of \emph{unitary} circuits $C,D$, and furthermore the states $\ket{C}, \ket{D}$ have \emph{identical} reduced density matrices on the first register. Even though $(\usynth{U},\Psi) \in \avgUnitaryPSPACE$, \emph{a priori} the unitary $U_x$ can only be \emph{approximately} implemented and the state $\psi_x$ can only be \emph{approximately} prepared. Therefore, it seems that we can only compute succinct descriptions of circuits that approximately prepare $\ket{\psi_x}$ and $(\id \otimes U_x) \ket{\psi_x}$. Additionally, these circuits are not necessarily unitary (since the circuits used to synthesize $\ket{\psi_x}$ and $U_x$ may have measurements, resets, and other non-unitary operations). Thus, these succinct descriptions do not directly constitute a $\SuccinctUhlmann_1$ instance. 

We give a sketch of how we handle these issues. From the instance $x$ of $(\usynth{U},\Psi)$ we compute an instance $y = (1^n, \hat C, \hat D)$ of $\SuccinctUhlmann_1$ where $C$ is a unitary circuit preparing a state close to $\ket{\psi_x}$, and $D$ is a unitary circuit preparing the state $(\id \otimes A) \ket{C}$, with $A$ being a another unitary circuit such that $(\id \otimes A) \ket{\psi_x}$ is close to $(\id \otimes U_x) \ket{\psi_x}$.  Let $W$ denote the canonical Uhlmann transformation corresponding to instance $y$. Then
\begin{align*}
    (\id \otimes W) \ket{\psi_x} &\approx (\id \otimes W) \ket{C} & \tag{$\ket{\psi_x} \approx \ket{C}$} \\
    &= \ket{D} = (\id \otimes A) \ket{C}& \tag{\Cref{prop:canonical-uhlmann-properties}} \\
    &\approx (\id \otimes A) \ket{\psi_x} & \tag{$\ket{\psi_x} \approx \ket{C}$} \\
    &\approx (\id \otimes U_x) \ket{\psi_x}~. & \tag{Definition of $A$}
\end{align*}
Thus, implementing the canonical Uhlmann transformation $W$ corresponding to $y$ is sufficient for approximately implementing $U_x$ on $\ket{\psi_x}$. 

The key behind making this sketch work is the following ``algorithmic Uhlmann's theorem'' of Metger and Yuen~\cite[Theorem 7.4]{metger2023stateqip}:

\begin{theorem}[Algorithmic Uhlmann's theorem]
\label{thm:algorithmic-uhlmann}
Let $\Psi_1 = (\ket{\psi^{(1)}_x})_x, \Psi_2 = (\ket{\psi^{(2)}_x})_x$ be in families of bipartite pure states in $\statePSPACE$, where for each $x$, the states $\ket{\psi^{(1)}_x}$ and $\ket{\psi^{(2)}_x}$ have the same number of qubits. Then there exists a $\poly(|x|,r)$-space family of unitary circuits $K = (K_{x,r})_{x,r}$ such that 
\[
    \Big \| K_{x,r} \ket{\psi^{(1)}_x} \ket{0 \cdots 0} - \ket{\psi^{(2)}_x} \ket{0 \cdots 0} \Big \|^2 \leq 2(1 - \sqrt{\fidelity(\sigma^{(1)}_x,\sigma^{(2)}_x)} ) + 1/r ~,
\]
where the $\ket{0 \cdots 0}$ denote the appropriate number of ancilla qubits, and $\sigma^{(b)}_x$ denotes the reduced density matrix of $\ket{\psi^{(b)}_x}$ on the first half of qubits.
\end{theorem}

A useful corollary of this is that every $\statePSPACE$ family of states can be prepared by a polynomial-space \emph{unitary} circuit that uses some ancillas. 

\begin{corollary}[Unitary preparation of $\statePSPACE$ families]
\label{cor:unitary-state-preparation}
Let $(\ket{\psi_x})_x \in \statePSPACE$. Then there exists a $\poly(|x|,r)$ unitary circuit family $K = \{K_{x,r} \}_{x,r}$ such that
\[
    \Big \| K_{x,r} \ket{0 \cdots 0} - \ket{\psi_x} \ket{0 \cdots 0} \Big \|^2 \leq \frac{1}{r}~.
\]
\end{corollary}
\begin{proof}
    This follows from \Cref{thm:algorithmic-uhlmann} by considering the Uhlmann transformation problem between the all zeroes state $\ket{0 \cdots 0}\ket{0 \cdots 0}$ and $\ket{0 \cdots 0} \ket{\psi_x}$. Clearly this pair of states is a fidelity-$1$ Uhlmann instance. 
\end{proof}

In what follows, fix a parameter $q$ and an instance $x$. Since $\Psi \in \statePSPACE$, by \Cref{cor:unitary-state-preparation} there exists a $\poly(|x|,q)$-space circuit $C$ such that
\[
    \Big \| C \ket{0 \cdots 0} - \ket{\psi_x} \ket{0 \cdots 0} \Big \|^2 \leq \frac{1}{q}~.
\]

Next, since $(\usynth{U},\Psi) \in \avgUnitaryPSPACE$, this means that for every $x$ there is a unitary $\tilde{U}_{x}$ (acting on possibly more qubits than $\ket{\psi_x}$) such that 
\[
    \ket{\varphi_x} := (\id \otimes \tilde{U}_x) \ket{\psi_x} \ket{0 \cdots 0}
\]
is such that the marginal of $\ket{\varphi_x}$ is $1/q$-close to a channel completion of $U_x$ applied to the second half of $\ket{\psi_x}$, and furthermore the state sequence $(\ket{\varphi_x})_x$ is in $\statePSPACE$. 

Notice that the pair of states $(\ket{\psi_x} \ket{0 \cdots 0},\ket{\varphi_x})$ forms a fidelity-$1$ instance of the Uhlmann transformation problem. Thus, by the algorithmic Uhlmann's theorem (\Cref{thm:algorithmic-uhlmann}), there exists a $\poly(|x|,q)$-space \emph{unitary} algorithm $A$ such that
\[
    \Big \| (\id \otimes A) \ket{\psi_x} \ket{0 \cdots 0} - \ket{\varphi_x} \Big \|^2 \leq \frac{1}{q}~.
\]  

Define the $\poly(|x|,q)$-space unitary circuit $D$ that, starting with all zeroes, does the following:
\begin{enumerate}
    \item Apply the circuit $C$, then
    \item Apply the circuit $A$ on the second half of the resulting state. 
\end{enumerate}
There are succinct descriptions $\hat C,\hat D$ of $C,D$, respectively, of $\poly(|x|,q)$-size. This follows from the fact that $C$ and $A$ are space-uniform\footnote{The succinct descriptions $\hat C, \hat D$ correspond to polynomial-\emph{time} Turing machines that output the descriptions of polynomial-\emph{space} circuits that compute the circuits $C,D$, and run them.}.

Therefore by construction the instance $y = (1^n, \hat C, \hat D)$ is a valid $\SuccinctUhlmann_1$ instance. By the above argument, the canonical Uhlmann transformation $W$ corresponding to $y$ will map $\ket{\psi_x} \ket{0 \cdots 0}$ to be within $O(1/q)$ of $(\id \otimes \tilde{U}_x) \ket{\psi_x} \ket{0 \cdots 0}$, as desired. 

Thus one can implement the distributional unitary synthesis problem $(\usynth{U},\Psi)$ as follows: given instance $x$, precision parameter $r$, and input register $\reg{B}$, set $q = O(r)$, compute the instance $y = (1^n, \hat C, \hat D)$ which depends on $x, q$, and call the $\DistSuccinctUhlmann_1$ oracle on instance $y$, precision $q$, and input register $\reg{B}$ appended with sufficiently many zeroes. Every average-case instantiation of the $\DistSuccinctUhlmann_1$ oracle will implement (a channel completion of) $W$ up to error $1/q$. Thus result has error $O(1/q)$, which is at most $1/r$ if $q$ is set large enough. This reduction can be computed in $\poly(|x|,r)$ time.

\end{proof}

\subsection{\titleavgUnitaryQIP $=$ \titleavgUnitaryPSPACE}

We first show that unitaries synthesizable using interactive protocols can also be synthesized in polynomial space, at least in the average-case setting. In what follows, recall that $\negl(n,r)$ means that this is a negligible function of $n+r$ (e.g., $2^{-(n+r)}$). 

\begin{lemma}
\label{lem:qip_in_pspace}
    For all $c,s:\N \times \N \to [0,1]$ such that $c = 1- \negl(n,r)$ and $c-s \geq \delta$ for some inverse polynomial function $\delta$, we have that $\avgUnitaryQIP_{c,s} \subseteq \avgUnitaryPSPACE$.
\end{lemma}
\begin{proof}

Let $(\usynth{U},\Psi) \in \avgUnitaryQIP$. There exists a verifier $V = (V_{x,r})_{x,r}$ satisfying the properties of \Cref{def:avgUnitaryQIP}. By the completeness property, there exists an honest prover $P^*$ that is accepted with probability at least $c(n,r) = 1 - \negl(n,r)$. Let $p(n,r)$ be such that $c(n,r) - s(n,r) \geq 1/p(n,r)$ for all sufficiently large $n,r$. 

If we can argue that there is a prover $P$ that can be computed in quantum polynomial space that is accepted with probability at least $1 - \negl(n,r) - \frac{1}{4r}$ by $V$, then the following is a $\avgUnitaryPSPACE$ algorithm for $(\usynth{U},\Psi)$: given an instance $x$, precision parameter $r$, and an input register $\reg{B}_0$, it runs the verifier $V$ on instance $x$, precision parameter $2r$, and input register $\reg{B}_0$. The $\avgUnitaryPSPACE$ algorithm simulates the prover as needed. At the end of the protocol, the verifier will accept with probability at least $1 - \negl(n,r) - \frac{1}{4r} \geq 1 - \frac{1}{2r} \geq \frac{1}{2}$ for large enough $n,r$. By the soundness of the verifier $V$, its output conditioned on acceptance is $1/2r$-close to the desired output. Since the acceptance probability is very high (at least $1 - 1/2r$ for large enough $n,r$), the output of the verifier (even without conditioning on acceptance) is $1/r$-close to the desired output. 

All that remains is to argue that there is a polynomial-space simulation of the prover $P^*$. This essentially follows from~\cite[Theorem 7.1]{metger2023stateqip}:

\begin{theorem}[Theorem 7.1 of~\cite{metger2023stateqip}]
Let $m(n)$ be a polynomial. Let $(V_x)_{x \in \bits^*}$ denote\footnote{Technically, Theorem 7.1 in~\cite{metger2023stateqip} considers verifiers indexed by integers $n$, rather than strings $x$ as we have written here, but inspection of the proof shows that the string-indexed version holds also.} a $m(|x|)$-round, $\poly(|x|)$-space verifier that starts with the all zeroes state. Let $\omega_x^*$ denote the optimal acceptance probability over all provers that interact with the verifier. Then for all polynomials $q(|x|)$ there exists a prover $P$ whose actions are computable in $\poly(|x|,q(|x|))$-space whose acceptance probability is at least $\omega_x^* - 1/q(|x|)$. 
\end{theorem}

Although this theorem doesn't have an explicit $r$ parameter, we can apply it to our verifier $V = (V_{x,r})_{x,r}$ by treating it as a sequence $(V_y)_{y \in \bits^*}$ where $y$ encodes $(x,1^r)$ and is of length $O(|x| + r)$, and furthermore the verifier initializes its input to the distributional state $\psi_x$. Since $\Psi \in \statePSPACE$, it can be prepared in polynomial space, and therefore $V$ corresponds to a $\poly(|y|)$-space verifier as required by the theorem. 
\end{proof}

We are now prove the main result of the section.
This answers an average-case version of an open problem raised in \cite{rosenthal2022interactive,metger2023stateqip}, namely whether $\unitary{QIP} = \unitaryPSPACE$, and is one of the first non-trivial results on relations between unitary complexity classes.

\begin{theorem}\label{thm:avg-qip-pspace-equality}
$\avgUnitaryPSPACE = \avgUnitaryQIP$.
\end{theorem}
\begin{proof}

We put everything together to prove $\avgUnitaryPSPACE = \avgUnitaryQIP$. First, \Cref{lem:qip_in_pspace} directly shows that  $\avgUnitaryQIP \subseteq \avgUnitaryPSPACE$. For the reverse inclusion, let $(\usynth{U},\Psi) \in \avgUnitaryPSPACE$. The proof of \Cref{lem:succuhl_hard_for_pspace} shows that every instance $x$ of the distributional unitary synthesis problem $(\usynth{U},\Psi)$ and precision parameter $r$ can be mapped in $\poly(|x|,r)$-time to an instance $y = (1^n,\hat C,\hat D)$ of $\SuccinctUhlmann_1$ such that $\ket{C} \approx_{1/6r} \ket{\psi_x} \ket{0 \cdots 0}$ and $\ket{D} \approx_{1/6r} (\id \otimes \tilde{U}_x) \ket{\psi_x} \ket{0 \cdots 0}$ where $\tilde{U}_x$ is a Stinespring dilation of a channel that is close to a channel completion of $U_x$. 

By \Cref{lem:succuhl_in_qip}, there is an interactive protocol for synthesizing the canonical Uhlmann transformation $W$ corresponding to $y$. This interactive protocol can be run with instance $y$, precision parameter $q = 6r$, and half of $\ket{\psi_x}$ as input rather than $\ket{C}$. Let $V = (V_{y,q})_{y,q}$ denote the verifier and let $P$ denote a prover that is accepted with probability at least $1/2$ when interacting with $V_{y,q}$ and input $\ket{\psi_x}$. Before conditioning on acceptance, we have that the output states of $V_{y,q}(\ket{\psi_x}) \interact P$ and $V_{y,q}(\ket{C}) \interact P$ are $1/6r$ close. Conditioning on acceptance only multiplies the closeness by at most $2$, so the outputs are $2/6r$-close. However, by the soundness of the protocol, $V_{y,q}(\ket{C}) \interact P$ conditioned on acceptance is $1/6r$-close to $\ket{D}$, which by definition is $1/6r$-close to $(\id \otimes \tilde{U}_x) \ket{\psi_x} \ket{0 \cdots 0}$. Adding up everything together, we have that conditioned on acceptance, the output of $V_{y,q}(\ket{\psi_x}) \interact P$ is at $5/6r$-close to $(\id \otimes \tilde{U}_x) \ket{\psi_x} \ket{0 \cdots 0}$. After tracing out some qubits, the resulting state is $1/r$-close to a channel completion of $U_x$ applied to $\ket{\psi_x}$. This concludes the proof of the soundness property.

The completeness of the protocol directly follows from \Cref{lem:sucuhl_qip_completeness}. Therefore this implies that $(\usynth{U},\Psi) \in \avgUnitaryQIP$, as desired.

\end{proof}

We record some easy corollaries of \Cref{thm:avg-qip-pspace-equality}. First, as discussed in \Cref{sec:poly_space_reductions_discussion}, the closure of the interactive unitary synthesis classes under polynomial-time reductions is not \emph{a priori} obvious, but follows easily from $\avgUnitaryPSPACE=\avgUnitaryQIP$:

\begin{corollary}
    $\avgUnitaryQIP$ is closed under polynomial-time reductions.
\end{corollary}
\begin{proof}
    This follows since $\avgUnitaryPSPACE$ is closed under polynomial-time reductions (\Cref{lem:unitary_pspace_closed}). 
\end{proof}

The next corollary records the centrality of the succinct Uhlmann transformation problem for these two unitary complexity classes.

\begin{corollary}
    $\DistSuccinctUhlmann_1$ is complete for $\avgUnitaryPSPACE$ and $\avgUnitaryQIP$.
\end{corollary}

Now, some questions for future work. The worst-case version of \Cref{thm:avg-qip-pspace-equality} remains open:
\begin{openproblem}
    Does it hold that $\unitaryQIP = \unitaryPSPACE$?
\end{openproblem}

Another interesting open question concerns the relationship between between traditional complexity theory and unitary complexity theory, and in particular the Uhlmann Transformation Problem:
\begin{openproblem}
    $\SuccinctUhlmann \in \unitaryBQP^{\PSPACE}$? This is closely related to the Unitary Synthesis Problem of~\cite{aaronson2007quantum} -- not to be confused with our notion of unitary synthesis problems -- which asks if there is a quantum algorithm $A$ and for every $n$-qubit unitary $U$ a boolean function $f:\{0,1\}^{\poly(n)} \to \{0,1\}$ such that the unitary $U$ can be implemented by $A^{f_U}$. 
\end{openproblem} \newpage
\part{Uhlmann Transformation Problem: Reductions} \label{part:applications}
\section{Quantum Cryptography}
\label{sec:qcrypto}

We show how our unitary complexity framework, and in particular the Uhlmann Transformation Problem, can be used to capture computational assumptions necessary for quantum cryptography. First, we show the security of quantum commitment schemes is \emph{equivalent} to the average-case hardness of the Uhlmann Transformation Problem. We also show that if $\avgUnitary{HVSZK}$ is hard on average, then secure quantum commitment schemes exist.

We then show that the hardness of the \emph{succinct} Uhlmann Transformation Problem is necessary for the security of a wide class of quantum cryptographic primitives. This is the class of primitives whose security is based on \emph{falsifiable quantum cryptographic assumptions}. Roughly speaking, a cryptographic assumption is falsifiable if there is an interactive security game with an efficient challenger that can check whether an assumption has been broken. In cryptography, generally speaking most assumptions are falsifiable, although there are some (e.g., knowledge assumptions) that are not. 

By the results of \Cref{sec:structural-uhlmann}, this means that $\avgUnitaryBQP \neq \avgUnitaryPSPACE$ is necessary for falsifiable quantum assumptions. This establishes the first general upper bound on the complexity of breaking (a large class of) computationally-secure quantum cryptography.

\subsection{Quantum bit commitments}\label{sec:commitments}

We first review the notion of quantum commitment schemes, and in particular the notion of a \emph{canonical quantum bit commitment scheme}, which is a non-interactive protocol for bit commitment involving quantum communication. Yan~\cite{yan2023general} showed that a general interactive quantum commitment scheme can always be compiled to a non-interactive commitment scheme with the same security properties. Thus without loss of generality we focus on such non-interactive schemes. 

\begin{definition}[Canonical quantum bit commitment~\cite{yan2023general}]\label{def:commitment_scheme}
A \emph{canonical non-interactive quantum bit commitment scheme} is given by a uniform family of unitary quantum circuits $\{ C_{\lambda,b} \}_{\lambda \in \N,b \in \{0,1\}}$ where for each $\lambda$, the circuits $C_{\lambda,0},C_{\lambda,1}$ act on $poly(\lambda)$ qubits and output two registers $\reg{C},\reg{R}$. The scheme has two phases:
\begin{enumerate}
    \item In the \emph{commit stage}, to commit to a bit $b \in \{0,1\}$, the sender prepares the state $\ket{\psi_{\lambda,b}}_{\reg{RC}} = C_{\lambda,b} \ket{0 \cdots 0}$, and then sends the ``commitment register'' $\reg{C}$ to the receiver. 

    \item In the \emph{reveal stage}, the sender announces the bit $b$ and sends the ``reveal register'' $\reg{R}$ to the receiver. The receiver then accepts if performing the inverse unitary $C_{\lambda,b}^\dagger$ on registers $\reg{C},\reg{R}$ and measuring in the computational basis yields the all zeroes state.
\end{enumerate}
\end{definition}

The security of a canonical commitment scheme consists of two parts, hiding and binding, which we define next.

\begin{definition}[Hiding property of commitment scheme]\label{def:hiding_commitment}
    Let $\eps(\lambda)$ denote a function. We say that a commitment scheme $\{ C_{\lambda,b} \}_{\lambda,b}$ satisfies \emph{$\eps$-computational (resp.~$\eps$-statistical) hiding} if for all polynomial-time algorithms  (resp.~for all algorithms) $A = (A_\lambda)_\lambda$ that take as input the commitment register $\reg{C}$ of the scheme $\{ C_{\lambda,b} \}_{\lambda,b}$, the following holds for sufficiently large $\lambda$:
    \begin{equation}\label{eqn:commitment_hiding_property}
        \Big | \Pr \Big [ A_\lambda(\rho_{\lambda,0}) = 1 \Big ]- \Pr \Big [ A_\lambda(\rho_{\lambda,1}) = 1\Big ] \Big | \leq \eps(\lambda) \,.
    \end{equation}
    Here, $\rho_{\lambda,b}$ denotes the reduced density matrix of $\ket{\psi_{\lambda,b}}$ on register $\reg{C}$. 
    If $\eps$ is a negligible function of $\lambda$ then we simply say that the scheme satisfies \emph{strong} computational (resp.~statistical) hiding. 

\end{definition}

\begin{definition}[Honest binding property of commitment scheme]\label{def:honest_binding}
    Let $\eps(\lambda)$ denote a function. We say that a commitment scheme $\{ C_{\lambda,b} \}_{\lambda,b}$ satisfies \emph{$\eps$-computational (resp.~$\eps$-statistical) honest binding} if for all polynomial-time algorithms  (resp.~for all algorithms) $A = (A_\lambda)_\lambda$ that take as input the reveal register $\reg{R}$ the following holds for sufficiently large $\lambda$:
    \begin{equation}
        \fidelity \left( \Big( A_{\lambda} \otimes \id_{\reg{C}} \Big)(\psi_{\lambda,0}) , \psi_{\lambda,1} \right ) \leq \eps(\lambda)\,,
    \end{equation}
    where $\psi_{\lambda,b} = \ketbra{\psi_{\lambda,b}}{\psi_{\lambda,b}}_{\reg{R} \reg{C}}$.
    
    If $\eps$ is a negligible function of $\lambda$ then we simply say that the scheme satisfies \emph{strong} computational (resp.~statistical) honest binding. \end{definition}

\begin{remark}
    \cref{def:honest_binding} is called \emph{honest} binding because it requires the binding property only for the states $\ket{\psi_{\lambda,b}}$ that are produced if the commit phase is executed honestly.
    We refer to \cite{yan2023general} for a discussion of this definition and stronger versions thereof.
    Throughout this paper, we will only consider the honest binding property, so we will just drop the term ``honest'' for brevity.
\end{remark}

\begin{remark}
The definitions of hiding and binding can easily be revised for non-uniform adversaries, perhaps even with quantum side information, but for simplicity we focus on uniform adversaries. The more general security notion would correspond to unitary complexity classes with \emph{advice}, e.g., $\class{avgUnitaryBQP/poly}$ (i.e., non-uniformity via classical advice) or $\class{avgUnitaryBQP/qpoly}$ (i.e., non-uniformity via quantum advice). We leave this for future work.
\end{remark}

Before discussing the connection between the Uhlmann Transformation Problem and commitment schemes, we review several basic facts about them. First, information-theoretically secure quantum commitments do not exist:

\begin{theorem}[Impossibility of unconditionally secure quantum commitments~\cite{mayers1997unconditionally,lo1998quantum}]
    There is no quantum commitment scheme that satisfies both strong statistical hiding and strong statistical binding. 
\end{theorem}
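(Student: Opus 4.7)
The plan is to derive a contradiction by applying Uhlmann's theorem directly to the purifications of the honest commitments. Suppose for contradiction that $\{C_{\lambda,b}\}_{\lambda,b}$ satisfies both strong statistical hiding and strong statistical binding. By the hiding property (applied to an unbounded distinguisher, which in the statistical setting is allowed), there exists a negligible function $\nu$ such that the reduced states $\rho_{\lambda,0}$ and $\rho_{\lambda,1}$ on the commitment register $\reg{C}$ are $\nu$-close in trace distance. By the Fuchs--van de Graaf inequality this upgrades to $\fidelity(\rho_{\lambda,0},\rho_{\lambda,1}) \geq 1 - 2\nu(\lambda)$.

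Next, I would invoke Uhlmann's theorem (\Cref{thm:uhlmann_std}) on the bipartite purifications $\ket{\psi_{\lambda,0}}_{\reg{RC}}$ and $\ket{\psi_{\lambda,1}}_{\reg{RC}}$: since their reduced states on $\reg{C}$ have fidelity negligibly close to one, there exists a unitary $U_\lambda$ acting only on the reveal register $\reg{R}$ such that
\[
\bigl|\bra{\psi_{\lambda,1}}(U_\lambda \otimes \id_{\reg{C}})\ket{\psi_{\lambda,0}}\bigr|^2 \;=\; \fidelity(\rho_{\lambda,0},\rho_{\lambda,1}) \;\geq\; 1 - 2\nu(\lambda).
\]
This is precisely a statement that the pure state $(U_\lambda \otimes \id_{\reg{C}})\ket{\psi_{\lambda,0}}$ has fidelity negligibly close to $1$ with $\ket{\psi_{\lambda,1}}$.

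Finally, I would construct the malicious sender $A = (A_\lambda)_\lambda$ as the (possibly inefficient) family of channels that simply applies $U_\lambda$ to $\reg{R}$. Because \Cref{def:honest_binding} for statistical binding quantifies over all, and not merely polynomial-time, families of adversaries, this $A$ is a legitimate adversary. Then
\[
\fidelity\bigl((A_\lambda \otimes \id_{\reg{C}})(\psi_{\lambda,0}),\,\psi_{\lambda,1}\bigr) \;\geq\; 1 - 2\nu(\lambda),
\]
which is non-negligible and hence contradicts strong statistical binding.

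There is essentially no technical obstacle: the argument is a direct application of Uhlmann's theorem combined with the Fuchs--van de Graaf bound, and the two ``halves'' of the commitment naturally play the roles of the two subsystems in Uhlmann's theorem. Conceptually, however, this is exactly the observation that motivates the unitary-complexity treatment of commitments pursued later in the paper: breaking binding of a statistically hiding scheme reduces, information-theoretically, to implementing the Uhlmann transformation on the reveal register, a viewpoint which will be formalized in \Cref{thm:intro-commitments} and \Cref{thrm:Uhlmann_equiv_weak_commitments}.
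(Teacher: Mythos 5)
Your argument is correct: strong statistical hiding gives negligible trace distance (hence fidelity negligibly close to $1$) between $\rho_{\lambda,0}$ and $\rho_{\lambda,1}$, and Uhlmann's theorem then yields a unitary on the reveal register $\reg{R}$ alone whose induced (unbounded, non-uniform) adversary achieves fidelity $1-\negl(\lambda)$ in the sense of \cref{def:honest_binding}, contradicting strong statistical binding. This is essentially the same route as the Mayers--Lo--Chau argument the paper cites without proof, and it matches how the paper itself later uses Uhlmann's theorem to identify $\sup_A \fidelity\bigl((A\otimes\id_{\reg C})(\psi_{\lambda,0}),\psi_{\lambda,1}\bigr)$ with $\fidelity(\rho_{\lambda,0},\rho_{\lambda,1})$ (e.g.\ in the proof of \cref{lem:decodable-channels-commitments}), so there is nothing further to add.
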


Thus at least one of the hiding or binding must be computationally secure. There are two commonly considered \emph{flavors} of quantum commitments: one with statistical hiding and computational binding, and the other with statistical binding and computational hiding. A remarkable fact about canonical quantum commitments is that there is a generic blackbox reduction between these two flavors~\cite{crepeau2001convert,yan2023general,gunn2022commitments,hhan2022hardness}.

\begin{proposition}[{\cite[Theorem 7]{hhan2022hardness}}]
\label{prop:flavor-switching}
Let $\eps(n),\delta(n)$ be functions. If $C$ is an $\eps$-computationally (resp.~statistical) hiding and $\delta$-statistical (resp.~computational) binding commitment scheme, then there is an efficient black-box transformation of $C$ into another commitment scheme $C'$ that is a $\sqrt{\delta}$-statistical (resp.~computational) hiding and $\eps$-computationally (resp.~statistical) binding commitment scheme. 
\end{proposition}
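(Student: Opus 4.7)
The proposition is proved by establishing the two claimed properties of $\{C'_{\lambda,b}\}_{\lambda,b}$ separately; I describe the case where the original scheme is computationally hiding and statistically binding, since the dual case follows by a symmetric argument (swapping the roles of $\reg R$ and $\reg C$ throughout).

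For the implication from statistical binding of the original to statistical hiding of the new scheme, the argument is information-theoretic. Setting $\rho^C_{\lambda,b} \deq \Tr_{\reg R}\ketbra{\psi_{\lambda,b}}{\psi_{\lambda,b}}$ and $K \deq \Tr_{\reg C}\ketbra{\psi_{\lambda,0}}{\psi_{\lambda,1}}$, a direct computation of the reduced states $\rho'_{\lambda,b} \deq \Tr_{\reg R'}\ketbra{\psi'_{\lambda,b}}{\psi'_{\lambda,b}}$ on the new commitment register $\reg{C}' = \reg{RA}$ shows that they differ only in their off-diagonal $\reg A$-blocks, giving
\begin{equation*}
\rho'_{\lambda,0} - \rho'_{\lambda,1} = \ketbra{0}{1}_{\reg A}\otimes K + \ketbra{1}{0}_{\reg A}\otimes K^\dagger,
\end{equation*}
whose trace norm equals $2\|K\|_1$ (since it is a block-antidiagonal Hermitian matrix with singular values $\pm\sigma_i(K)$), so $\td(\rho'_{\lambda,0}, \rho'_{\lambda,1}) = \|K\|_1$. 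Using the dual characterisation $\|K\|_1 = \max_U|\bra{\psi_{\lambda,1}}(U_{\reg R}\otimes \id_{\reg C})\ket{\psi_{\lambda,0}}|$ together with Uhlmann's theorem gives $\|K\|_1 = \sqrt{\fidelity(\rho^C_{\lambda,0},\rho^C_{\lambda,1})}$. A second application of Uhlmann's theorem shows that the statistical binding hypothesis is equivalent to $\fidelity(\rho^C_{\lambda,0},\rho^C_{\lambda,1}) \leq \delta$, yielding $\td(\rho'_{\lambda,0}, \rho'_{\lambda,1}) \leq \sqrt{\delta}$.

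For the implication from computational hiding of the original to computational binding of the new scheme, I would give an efficient black-box reduction. Suppose a polynomial-size unitary $B$ acting on $\reg{R}' = \reg C$ achieves $|\bra{\psi'_{\lambda,1}}(\id_{\reg{RA}} \otimes B_{\reg C})\ket{\psi'_{\lambda,0}}|^2 > \eps$; general CPTP adversaries reduce to this by Stinespring dilation. Expanding the inner product in the $\reg A$-basis, the $\bra{0}_{\reg A}(\cdot)\ket{1}_{\reg A}$ cross terms vanish and one obtains $\bra{\psi'_{\lambda,1}}(\id\otimes B)\ket{\psi'_{\lambda,0}} = \tfrac{1}{2}\bigl(\Tr(B\rho^C_{\lambda,0}) - \Tr(B\rho^C_{\lambda,1})\bigr)$, hence $|\Tr(B\rho^C_{\lambda,0}) - \Tr(B\rho^C_{\lambda,1})| > 2\sqrt{\eps}$. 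The hiding distinguisher, on input $\rho^C_{\lambda,b}$, runs the standard Hadamard test using controlled-$B$; randomising over the two phase choices $B$ and $iB$ extracts both the real and imaginary parts of $\Tr(B\rho^C_{\lambda,b})$ and produces a polynomial-time distinguisher of advantage at least $\sqrt{\eps/2}$, which is at least $\eps$ in the parameter regime $\eps \leq 1/2$ of interest.

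The main obstacle is the second direction: naively passing from the fidelity condition to the reduced states $\rho'^{\reg{RA}}_{\lambda,b}$ via monotonicity of fidelity under partial trace only yields an information-theoretic distinguisher, not an efficient one, and so does not suffice for computational hiding. The Hadamard-test gadget above resolves this by extracting from the adversary a single scalar expectation value, $\Tr(B\rho^C_{\lambda,b})$, whose gap across $b\in\{0,1\}$ is lower bounded by the fidelity advantage and is directly measurable on a single copy of the commitment register. Any small numeric loss in parameters (beyond what is stated) can be absorbed into the statements elsewhere or tightened by following the argument of~\cite{hhan2022hardness}.
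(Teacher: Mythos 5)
First, note that the paper does not actually prove this proposition; it imports it wholesale as \cite[Theorem 7]{hhan2022hardness}, so your task here is really to reprove that cited theorem. Your first half does this correctly for the statistical direction: the computation $\rho'_{\lambda,0}-\rho'_{\lambda,1}=\ketbra{0}{1}_{\reg A}\otimes K+\ketbra{1}{0}_{\reg A}\otimes K^\dagger$ with $K=\Tr_{\reg C}\ketbra{\psi_{\lambda,0}}{\psi_{\lambda,1}}$, the identity $\td(\rho'_{\lambda,0},\rho'_{\lambda,1})=\|K\|_1=\sqrt{\fidelity(\rho^C_{\lambda,0},\rho^C_{\lambda,1})}$, and the Uhlmann-based equivalence of statistical binding with $\fidelity(\rho^C_{\lambda,0},\rho^C_{\lambda,1})\le\delta$ are all sound, and the same algebra also gives the \emph{statistical} binding claim in the dual flavor.

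The computational directions, however, have two genuine gaps. (1) In your reduction from a binding attack on the new scheme to a hiding distinguisher, the sentence ``general CPTP adversaries reduce to this by Stinespring dilation'' does not hold: for a channel adversary with Kraus operators $\{M_e\}$ on $\reg C$, the binding fidelity is $\frac{1}{4}\sum_e\big|\Tr\big(M_e(\rho^C_{\lambda,0}-\rho^C_{\lambda,1})\big)\big|^2$, whereas your Hadamard test with the controlled dilation unitary only accesses the single matrix element $\Tr\big(\bra{0}_{\reg E}B\ket{0}_{\reg E}\,(\rho^C_{\lambda,0}-\rho^C_{\lambda,1})\big)$. These can be wildly different: a mixture-of-unitaries adversary $\rho\mapsto\frac{1}{d}\sum_e U_e\rho U_e^\dagger$ can have large binding fidelity while every individual Kraus contribution, and hence your distinguisher's advantage, is negligible. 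Rewriting the fidelity via Uhlmann as $\max_{\ket\xi}|\bra{\psi'_{\lambda,1}}\bra{\xi}_{\reg E}(\id\otimes B)\ket{\psi'_{\lambda,0}}\ket{0}_{\reg E}|^2$ shows the obstacle concretely: the effective contraction you would need to Hadamard-test depends on a state $\ket\xi$ that encodes the coefficients $\Tr(M_e\Delta)$ and has no efficient (or even short non-uniform classical) description. Handling non-unitary adversaries is exactly the technical content of the cited theorem. (2) The claim that the dual flavor ``follows by a symmetric argument'' is not correct: there the computational step is the converse direction, namely turning an efficient \emph{distinguisher} of $\rho'_{\lambda,0}$ vs.\ $\rho'_{\lambda,1}$ (acting on $\reg{RA}$) into an efficient channel on $\reg R$ achieving fidelity $>\delta$ in the original binding game, i.e.\ the Aaronson--Atia--Susskind ``distinguishing implies swapping'' direction (cf.\ \cref{thm:swapping-distinguishing}), made robust for imperfect, non-unitary distinguishers and with the quantitative $\sqrt{\delta}$ relation. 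That is structurally different from your Hadamard-test argument (which is the ``swapping implies distinguishing'' direction) and is where the main work of \cite{hhan2022hardness} lies; it cannot be obtained by merely exchanging the roles of $\reg R$ and $\reg C$ in your proof.
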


\paragraph{Commitments and the Uhlmann Transformation Problem.} The impossibility result of~\cite{mayers1997unconditionally,lo1998quantum} implies that information-theoretically secure quantum bit commitment schemes do not exist because Uhlmann transformations suffice to break the security of such schemes. We strengthen this to argue that the security of quantum commitments is \emph{equivalent} to the \emph{instance-average-case hardness} of the Uhlmann Transformation Problem. 

First we define what we mean by instance-average-case hardness of a (distributional) unitary synthesis problem. Roughly speaking, this notion of average-case hardness stipulates that there is an efficiently sampleable distribution over \emph{instances} $x$ such that all polynomial-time algorithms fail to implement the unitary transformation $U_x$ up to some inverse-polynomial error, with at least inverse-polynomial probability over instances sampled from the distribution. More precisely:

\begin{definition}[Hard unitary synthesis instances]
    Let $(\usynth{U} = (U_x)_x,\Psi = (\ket{\psi_x})_x)$ be a distributional unitary synthesis problem and let $D = (D_n)_{n \in \N}$ denote a uniform family of efficiently sampleable distributions over instances (i.e., $D_n$ is a distribution over $\{0,1\}^n$ that can be sampled in polynomial time). We say that $D$ is a \emph{distribution of hard instances for $(\usynth{U},\Psi)$} if there exists a polynomial $p(n)$ such that for all polynomial-time algorithms $A = (A_x)_x$, for all sufficiently large $n$, for all channel completions $\Phi_x$ of $U_x$ we have
    \[
        \Pr_{x \leftarrow D_n} \Big [ \fidelity( A_{x}(\psi_x),  \Phi_x(\psi_x)) \leq 1 - \frac{1}{p(n)} \Big ] \geq \frac{1}{p(n)}~.
    \]
    We say that \emph{$(\usynth{U},\Psi)$ is instance-average-case hard for $\avgUnitaryBQP$} if there exists an efficiently sampleable family of distributions of instances $D = (D_n)_n$ that is hard for it.
\end{definition}

\begin{remark}
    We emphasize that the ``average'' in the notion of ``instance-average-case hard'' is different from the ``average'' in the definition of $\avgUnitaryBQP$ or $\avgUnitary{HVSZK}$. The former refers to the distribution over \emph{instances} (i.e., the $x$ subscripting $U_x$), whereas the latter refers to the distribution over \emph{quantum inputs} (i.e., the mixed state that the unitary transformation $U_x$ is applied to).  
\end{remark}

We now state the main theorem of this subsection.

\begin{theorem}
\label{thm:uhlmann-hardness-implies-commmitments}  
$\DistUhlmann_{1 - \eps}$ is instance-average-case hard for $\avgUnitaryBQP$ for some negligible function $\eps(n)$ if and only if quantum commitments with strong statistical hiding and strong computational binding exist. 
\end{theorem}

\begin{proof}
    We first prove the ``if'' direction (i.e., secure commitments implies the hardness of $\DistUhlmann$). Let $C = \{C_{\lambda,b}\}$ denote a strong statistically hiding, strong computationally binding commitment scheme. Let $D$ denote the distribution that on input $1^n$, always outputs the pair $(C_{n,0},C_{n,1})$ (i.e., $D_n$ is a singleton distribution). The strong statistical hiding property of $C$ implies that $(C_{n,0},C_{n,1})$ (along with the distributional state $\ket{C_{n,0}}$) is a $\DistUhlmann_{1 - \eps}$ instance for some negligible function $\eps$. The strong computational binding property implies that $D$ is a distribution of hard instances for $\DistUhlmann_{1 - \eps}$. 

    Now we prove the ``only if'' direction (i.e., hardness of $\DistUhlmann$ implies secure commitments). Let $D$ denote an efficiently sampleable family of distributions of hard instances for $\DistUhlmann_{1 - \eps}$. These instances are strings $x$ that encode pairs $(C_0,C_1)$ of circuits outputting bipartite pure states $\ket{C_0},\ket{C_1}$ respectively. First we will show that this implies the existence of a commitment with weak binding security; then we will amplify the security via parallel repetition~\cite{BQSY24}. 

    For notational simplicity we fix the parameter $n$ and let it be implicit throughout this proof. Let $D_x$ denote the probability of sampling the $\DistUhlmann_{1 - \eps}$ instance $x = (C_0,C_1)$ from distribution $D$. Let $\sum_x \sqrt{D_x} \ket{x} \otimes \ket{\vartheta_x}$ denote the purification of the sampling algorithm for $D$. Thus the following states can be prepared in polynomial time: for $b \in \{0,1\}$,
    \begin{gather*}
        \ket{\psi_b} := \sum_{x \in \{0,1\}^n} \sqrt{D_x} \, \ket{x} \otimes \ket{\vartheta_x} \otimes \ket{C_b} \otimes \ket{x}~.
    \end{gather*}
    Here, the division of registers of $\ket{\psi_b}$ are as follows: the initial $\ket{x}$, $\ket{\vartheta_x}$ registers and register $\reg{A}$ of $\ket{C_b}$ are designated the commitment register, and the register $\reg{B}$ of $\ket{C_b}$ as well as the last $\ket{x}$ register are designated the reveal register. 

    Thus $(\ket{\psi_0},\ket{\psi_1})$ denotes a candidate commitment scheme. We first argue that it is strongly statistically hiding. The reduced state of $\ket{\psi_b}$ on the commitment register is
    \[
        \rho_b = \sum_x D_x \ketbra{x}{x} \otimes \ketbra{\vartheta_x}{\vartheta_x} \otimes \Tr_{\reg{B}}(\ketbra{C_b}{C_b})~.
    \]
    Then we have
    \[
    \fidelity(\rho_0,\rho_1) \geq \sum_x D_x \,\, \fidelity(\Tr_{\reg{B}}(\ketbra{C_0}{C_0}),\Tr_{\reg{B}}(\ketbra{C_1}{C_1})) \geq 1 - \eps(n)
    \]
    where we used that $(C_0,C_1)$ is a $\DistUhlmann_{1 - \eps}$ instance.
    By Fuchs-van de Graaf we have that the trace distance is at most $\sqrt{\eps(n)}$ which is still negligibly small, and thus the strong statistical hiding property holds.

    We now argue that it is weakly computationally binding. Let $A$ be a quantum polynomial-time algorithm acting only on register $\reg{R}$. Then
    \begin{align*}
        \fidelity((\id_{\reg{C}} \otimes A_{\reg{R}})(\psi_0),\psi_1) &= \sum_x D_x \, \, \fidelity( (\id_{\reg{A}} \otimes A_{\reg{R}})(\ketbra{C_0}{C_0} \otimes \ketbra{x}{x}), \ketbra{C_1}{C_1} \otimes \ketbra{x}{x}) \\
        &\leq (1 - \alpha) + \alpha \cdot \Big(1 - \frac{1}{p(n)} \Big) = 1 - \frac{\alpha}{p(n)}
    \end{align*}
    where $\alpha$ is the probability 
    \[
        \Pr_{x \leftarrow D} \Big [ \td(A(\ketbra{C_0}{C_0} \otimes \ketbra{x}{x}), \ketbra{C_1}{C_1}) \geq \frac{1}{p(n)} \Big]
    \]
    which is guaranteed to be at least $1/p(n)$ by the definition of hardness on average. Thus the above is at most $1 - 1/p(n)^2$.

    We now amplify the binding security. Consider the repeated commitment $(\ket{\psi_0}^{\otimes t},\ket{\psi_1}^{\otimes t})$ for $t = np(n)$. It was shown by~\cite[Corollary 1.2]{BQSY24} that any efficient algorithm for the amplified commitment that maps $\ket{\psi_0}^{\otimes t}$ to have fidelity non-negligibly greater than $(1 - 1/p(n))^t \leq \exp(-\Omega(n))$ can be converted into an efficient algorithm that maps $\ket{\psi_0}$ to have fidelity greater than $1 - 1/p(n)$ with $\ket{\psi_1}$, breaking the binding security of the original commitment, which is a contradiction. Therefore the amplified commitment scheme has strong computational binding security. Its hiding security is still strong, since we've only repeated the commitment a polynomial number of times. 
\end{proof}

By flavor switching (\Cref{prop:flavor-switching}), we also get the equivalence between the instance-average-case hardness of the Uhlmann Transformation Problem and the existence of commitments with strong computational hiding and strong statistical binding. 

\paragraph{Commitments and Unitary Zero-Knowledge.} Given the close relationship between the Uhlmann Transformation Problem and zero-knowledge protocols for unitary synthesis (as explored in \Cref{sec:structural-uhlmann}), one might also expect an equivalence between the existence of a hard problem in $\avgUnitary{HVSZK}$ and the existence of secure quantum commitments. 

We show one direction: the instance-average-case hardness of $\avgUnitary{HVSZK}$ also implies the existence of secure commitments. This can be viewed as a quantum analogue of the result of Ostrovsky~\cite{ostrovsky1991one}, who showed that a hard-on-average problem in (classical) $\mathsf{SZK}$ implies the existence of one-way functions. The intuition is that a hard distribution over instances of some unitary synthesis problem in $\avgUnitary{HVSZK}$ implies the existence of a hard distribution over instances of $\DistUhlmann$. 

\begin{theorem}
\label{thm:hvszk-hardness-implies-commitments}
    Let $c,s:\N \times \N \to [0,1]$ be such that $c-s \geq \Omega(1)$. Let $\nu = \negl(n,r)$. If $\avgUnitary{HVSZK}_{c,s,\nu}$ is instance-average-case hard for $\avgUnitaryBQP$, then quantum commitments with strong statistical hiding and strong computational binding exist. 
\end{theorem}
\begin{proof}
    Let $(\usynth{U} = (U_x),\Psi = (\psi_x))$ denote a distributional unitary synthesis problem in $\avgUnitary{HVSZK}_{c,s,\nu}$ with a distribution $D$ of hard instances where $p(n)$ is the associated polynomial. By definition, there exists a $m(n)$-round zero knowledge protocol (for some polynomial $m(n)$) for $(\usynth{U},\Psi)$ with an honest prover $P$, an (honest) verifier $V = (V_{x,r})_{x,r}$ and a simulator $S$ with negligible error $\nu(n,r)$. Let us recall the behavior of the simulator: on input $(x,r,j)$, the simulator outputs a purification $\ket{C_{x,j}}$ that is $\nu(n,r)$-close to the state of the honest verifier immediately after receiving the $j$'th message of the honest prover. We assume that $\ket{C_{x,0}} = \ket{\psi_x} \otimes \ket{0 \cdots 0}$. 

    In what follows we fix $r$ to be $4p(n)$, and omit its mention for notational convenience. Thus we write $V_x,\nu(n),(x,j)$ instead of $V_{x,r},\nu(n,r),(x,r,j)$, etc. 
    
    Define the uniform family of distributions $E = (E_n)_n$ where $E_n$ outputs a string $y$ sampled as follows:
    \begin{itemize}
        \item $x$ is sampled from $D_n$ and $j$ is a uniformly random integer between $0$ and $m(n)-1$. 
        \item $y = (B_{x,j},C_{x,j+1})$ encodes a pair of circuit descriptions where $B_{x,j} = (V_{x} \otimes \id) C_{x,j}$ with $C_{x,j}$ corresponding to the purified circuits (i.e., meaning ignoring measurements and partial traces) of the simulator $S$ on input $(x,j)$; and $C_{x,j+1}$ is the purified circuit of the simulator $S$ on input $(x,j+1)$. 
    \end{itemize}
    Since the simulator can be implemented by a polynomial-size circuit, the distribution family $(E_n)_n$ can be sampled in polynomial time. 

    Thus the purifications $\ket{B_{x,j}}$ and $\ket{C_{x,j+1}}$ are approximately consistent in the following sense: the reduced density matrix of $\ket{C_{x,j+1}}$ on the verifier's system has fidelity at least $1 - 4\nu(n)$ with the reduced density matrix of $\ket{C_{x,j}}$, evolved by the verifier's unitary $V_{x}$. This implies that the pair $y = (B_{x,j},C_{x,j+1})$ defines an instance of $\Uhlmann_{1 - 4\nu}$. 

    We now argue that $E$ is a distribution of hard instances for $\DistUhlmann_{1 - 4\nu}$. Suppose not. Then by definition of instances (and applying Fuchs-van de Graaf), for the polynomial $q(n) = 4m(n)p(n)^2$ there exists a quantum polynomial time algorithm $A = (A_y)_y$, an integer $n$, and a channel completion $\Phi_y$ of the canonical Uhlmann transformation for instance $y = (B_{x,j},C_{x,j+1})$, such that
\[
        \E_{\substack{x \leftarrow D \\ j \leftarrow \{0,1,\ldots,m(n)-1\}}} I_{x,j}  \leq \frac{1}{q(n)}
    \]  
    where $I_{x,j}$ denotes the indicator variable for the event $\td( (A_{y} \otimes \id) (\ketbra{B_{x,j}}{B_{x,j}}),  (\Phi_y \otimes \id)(\ketbra{B_{x,j}}{B_{x,j}})) \geq O(\frac{1}{q(n)})$. Multiplying both sides by $m(n)$ and applying Markov's inequality, we get that with probability at most $\sqrt{m(n)/q(n)}$ we have a \emph{bad $x$}, i.e., one where 
    \[
        \sum_{0 \leq j < m(n)} I_{x,j} > \sqrt{\frac{m(n)}{q(n)}}~.
    \]
    Otherwise, if $x$ is \emph{good}, then $\sum_{0 \leq j < m(n)} I_{x,j} \leq \sqrt{m(n)/q(n)} < 1$, but since $I_{x,j}$ is either $0$ or $1$, this implies that $I_{x,j} = 0$ for \emph{all} $j$. 

    Note that when $I_{x,j} = 0$, we have that for all $j$,
\begin{align*}
&\td( (A_{y} \otimes \id) (\ketbra{B_{x,j}}{B_{x,j}}),  \ketbra{C_{x,j+1}}{C_{x,j+1}}) \\\ & \qquad \leq \td( (A_{y} \otimes \id) (\ketbra{B_{x,j}}{B_{x,j}}),  (\Phi_y \otimes \id)(\ketbra{B_{x,j}}{B_{x,j}}))
+ \td( (\Phi_y \otimes \id)(\ketbra{B_{x,j}}{B_{x,j}}),\ketbra{C_{x,j+1}}{C_{x,j+1}}) \\
&\qquad \qquad \leq O(\frac{1}{q(n)}) + \sqrt{4\nu(n)} =: \xi(n)
\end{align*}
where we used that any channel completion $\Phi_y$ of the canonical Uhlmann transformation achieves the optimal fidelity, plus Fuchs-van de Graaf. Note that the error $\xi(n)$ is $O(1/q(n))$ because $\nu(n)$ is negligble. 

    We now describe an efficient procedure $F$ that, when $x$ is good, simulates the prover in the $\avgUnitary{HVSZK}$ protocol to implement the unitary $U_x$ on $\ket{\psi_x}$ with error smaller than $1/p(n)$. The idea is as follows: given input the target register of $\ket{\psi_x}$, evolve it forward by applying the verifier's circuit $V_x$ along with some ancilla zeroes. The pure global state is $\ket{B_{x,0}} = V_x \ket{\psi_x} \ket{0 \cdots 0}$. Applying $A_{y_0}$ for $y_0 = (B_{x,0},C_{x,1})$ to the prover and message registers of $\ket{B_{x,0}}$ we get a state that is $\xi(n)$-close to $\ket{C_{x,1}}$. Apply $V_x$ to the verifier and message register to get a state that is $\xi(n)$-close to $\ket{B_{x,1}}$, and apply $A_{y_1}$ for $y_1 = (B_{x,1},C_{x,2})$ to simulate the second prover action, and so on. After $m(n)$ iterations we have a state that is $m(n)\xi(n)$-close to the final state of the interaction between an honest verifier and honest prover. Thus the acceptance probability of this state is at least $c - m(n) \xi(n)$, which for large enough $n$, is greater than $s$. Therefore by the soundness of the protocol, this final state is $1/r(n)$-close to the desired output state $(\id \otimes U_x) \ket{\psi_x}$ (perhaps with some junk state attached). Thus overall this procedure implements the transformation $U_x$ on $\ket{\psi_x}$ up to trace distance error $m(n)\xi(n) + 1/r(n) < 1/2p(n)$, or (by Fuchs-van de Graaf) equivalent with fidelity at least $1 - 1/p(n)$. 

    Thus when sampling $x$ from $D$ we get a good $x$ with probability at least $1 - \sqrt{m(n)/q(n)} \geq 1 - 1/2p(n)$, and in this case the polynomial-time procedure $F$ implements $U_x$ on $\ket{\psi_x}$ with fidelity at least $1 - 1/p(n)$. This contradicts the definition of $D$ being a hard distribution for the unitary synthesis problem $(\usynth{U},\Psi)$. 

    Therefore $E$ is a hard distribution of instances for $\DistUhlmann_{1 - 4\nu}$. \Cref{thm:uhlmann-hardness-implies-commmitments} then implies the existence of commitments with the desired security properties.
\end{proof}

\begin{remark}
    We suspect that the converse of \Cref{thm:hvszk-hardness-implies-commitments} holds; however this is closely related to the question of whether $\DistUhlmann$ is a complete problem for $\avgUnitary{HVSZK}$, which we can show if a polarization lemma holds for the Uhlmann Transformation Problem (see \Cref{sec:polarize} for a discussion of this).
\end{remark}

\subsection{Falsifiable quantum cryptographic assumptions}\label{sec:falsifiable}

In this section, we show an $\avgUnitary{PSPACE}$ upper bound for breaking \emph{falsifiable quantum cryptographic assumptions}, which can be seen as a quantum analogue of the notion of falsifiable assumption considered by Naor~\cite{Naor2003} as well as Gentry and Wichs~\cite{cryptoeprint:2010/610}.
Morally having a falsifiable assumption means that the challenger in the security game must be efficient, so that if an adversary claims to break the security game, it is possible to verify that they have done so.
Roughly speaking, we show that a falsifiable assumption is either \emph{information-theoretically} secure (in which case, not even a computationally unbounded prover can win at the security experiment beyond a certain threshold), or it can be reduced to $\avgSuccinctUhlmann$, and hence it can broken in $\avgUnitary{PSPACE}$ (as shown in \Cref{sec:structural-succinct-uhlmann}).

Our notion of a \emph{falsifiable quantum cryptographic assumption} captures most cryptographic assumptions in both classical and quantum cryptography. The definition is essentially a $\QIP$ protocol, albeit cast in a cryptographic language. Instead of a \emph{verifier}, we have a \emph{challenger}; instead of a \emph{prover}, we have an \emph{adversary}. We formally define falsifiable quantum cryptographic assumptions as follows. We refer the reader to \Cref{sec:protocols} for the formal definitions of quantum verifiers and interactive protocols.

\begin{definition}[Falsifiable quantum cryptographic assumption]\label{def:falsifiable}
A \emph{falsifiable quantum cryptographic assumption} (or \emph{falsifiable assumption} for short) is a pair $(\cal C,c)$ consisting of a polynomial-time quantum verifier $\cal C = (\cal C_x)_x$ (which we call the \emph{challenger}) and a constant $c \in [0,1]$.
Given a string $x \in \bits^*$,\footnote{
    Here, $x$ should be taken as the security parameter in unary $1^\lambda$, and perhaps in addition expected format of the interaction.
    This includes for example, the number of queries that the adversary wishes to make (in a CCA security game for an encryption scheme as an example), or an upper bound on the message length sent by the adversary (in a collision finding security game as an example).
    The point of having $x$ is so that the overall running time of the challenger is upper bounded by a \emph{fixed} polynomial in $|x|$.
    Furthermore, since we allow arbitrary bitstrings, this should be regarded as auxiliary input to the cryptosystem.
}
the challenger $\cal C_x$ engages in an interaction with a prover $\cal A$ (which also gets the input $x$) called the \emph{adversary}. At the end of the protocol, the challenger accepts or rejects. If the challenger accepts, we say that the adversary \emph{wins}.
\end{definition}

See \Cref{fig:falsifiable_assumption_circuit} for a depiction of an interaction between a challenger and adversary. We now describe the security property corresponding to a falsifiable assumption.

\begin{definition}[Security of a falsifiable assumption] 
A falsifiable assumption $(\cal C,c)$ is \emph{computationally secure} (resp.~\emph{information-theoretically secure}) if for all polynomial-time (resp.\ computationally unbounded) adversaries $\cal A$, there exists a negligible function $\nu$ such that for all $x \in \bits^*$, the probability that the adversary is accepted is at most $c + \nu(|x|)$ over the randomness of the interaction $\cal C_x \interact \cal A$. We say that a (possibly inefficient) adversary $\cal A$ \emph{breaks instance $x$ of the assumption $(\cal C,c)$ with advantage $\delta$} if $\Pr\Big( \cal C_x \interact \cal A \text{ accepts} \Big) \geq c + \delta$. \end{definition}

Here are some (informally-described) examples of falsifiable quantum cryptographic assumptions.
\begin{enumerate}
    \item (\emph{Public-key quantum money}) Consider a candidate public-key quantum money scheme (see~\cite[Lectures 8 and 9]{aaronson2016complexity} for a longer discussion of quantum money). The assumption here is the pair $(\cal C^{\$},0)$. The challenger $\cal C^{\$}$ first generates a random money state along with the serial number and sends both to the adversary (while remembering the serial number). The adversary wins if it can send back two states (which may be entangled) that both pass the money scheme's verification procedure. 

    \item (\emph{Pseudorandom states}) 
Consider a candidate pseudorandom state generator $G$~\cite{JLS18}. The assumption here is $(\cal C^{\rm PRS},\frac{1}{2})$ where the instances $x$ specify the security parameter $\lambda$ as well as a positive integer $t$. The challenger $\cal C^{\rm PRS}$, given $x = (1^\lambda,1^t)$, either sends to the adversary $t$ copies of a pseudorandom state or $t$ copies of a Haar-random state (which can be done efficiently using, e.g., $t$-designs~\cite{ambainis2007quantum}). The adversary wins if it can guess whether it was given pseudorandom states or Haar-random states.

    \item (\emph{Quantum EFI pairs}) Consider a candidate ensemble of EFI pairs $\{(\rho_{\lambda,0},\rho_{\lambda,1})\}_\lambda$~\cite{brakerski2022computational}. The assumption here is $(\cal C^{\rm EFI},\frac{1}{2})$. The challenge $\cal C^{\rm EFI}$ picks a random bit $b\in \bits$ and sends $\rho_{\lambda,b}$ to the adversary. The adversary wins if it can guess the bit $b$.
\end{enumerate}

\begin{figure}[!t]
\begin{center}
\setlength{\unitlength}{1771sp}\begingroup\makeatletter\ifx\SetFigFont\undefined \gdef\SetFigFont#1#2#3#4#5{\reset@font\fontsize{#1}{#2pt}\fontfamily{#3}\fontseries{#4}\fontshape{#5}\selectfont}\fi\endgroup \begin{picture}(13224,5472)(589,-5623)
\thinlines
\put(1951,-2461){\,\,\,\,$\cal C_1$}
\put(6751,-2461){\,\,\,\,$\cal C_2$}
\put(11551,-2461){\,\,\,\,$\cal C_3$}
\put(4326,-4111){\,\,\,\,$\cal A_1$}
\put(9126,-4111){\,\,\,\,$\cal A_2$}

\put(-1400,-1725){\parbox{2cm}{\begin{center}\small \hspace{-1mm}challenger's\\private\\qubits
\end{center}}}

\put(-1400,-3425){\parbox{2cm}{\begin{center}\small message\\qubits
\end{center}}}

\put(-1400,-5025){\parbox{2cm}{\begin{center}\small \hspace{-1mm}adversary's\\private\\qubits
\end{center}}}

\put(14000,-1200){$\leftarrow$ \hspace{-6mm}
\parbox{2cm}{\ \\\begin{center}\small output\\
qubit \end{center}}}

\end{picture} \end{center}
\caption{Quantum circuit representation of a 4-message interaction between an efficient challenger and an adversary who seeks to falsify a cryptrographic assumption $(\mathcal{C},c)$.}
\label{fig:falsifiable_assumption_circuit}
\end{figure}

\begin{theorem}
\label{thm:falsifiable}
A falsifiable quantum cryptographic assumption $(\cal C,c)$ is either information-theoretically secure, or breaking the assumption $(\cal C,c)$ can be reduced to $\avgSuccinctUhlmann_1$.
\end{theorem}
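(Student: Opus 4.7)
The plan is to essentially recast breaking a falsifiable assumption as implementing an optimal prover in a quantum interactive protocol, and then apply the same machinery used to prove $\avgUnitaryQIP$-hardness of $\avgSuccinctUhlmann_1$ (\cref{lem:uhlmann_hard_for_qip}). First, I would take the contrapositive of the information-theoretic security condition: if $(\cal C, c)$ is not information-theoretically secure, then there exists a computationally unbounded adversary $\cal A^*$ and a polynomial $p$ such that $\Pr[\cal C_x \interact \cal A^* \text{ accepts}] \geq c + 1/p(|x|)$ for infinitely many $x$. In particular, the optimal adversary's winning probability $\omega^*_x \deq \max_{\cal A} \Pr[\cal C_x \interact \cal A \text{ accepts}]$ satisfies $\omega^*_x \geq c + 1/p(|x|)$ on this infinite set.

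Next I would invoke the Kitaev--Watrous characterization of optimal prover strategies in quantum interactive proofs: at each round $i \in [r]$ of the protocol, there exist purifications $\ket{\alpha^i_x}_{\reg{V^i M^i P^i}}$ and $\ket{\beta^i_x}_{\reg{V^i M^i P^i}}$ of the reduced states (on the challenger's workspace and the message register) right before and after the $i$-th optimal adversary action; these purifications agree on the challenger's register $\reg{V^i}$, and the optimal $i$-th adversary action is precisely the Uhlmann transformation between them, acting only on $\reg{M^i P^i}$. Exactly as in the proof of \cref{lem:uhlmann_hard_for_qip}, the argument of \cite[Lemma 7.5]{metger2023stateqip} guarantees that these purifications are in $\statePSPACE$ and, crucially, that polynomial-space circuits $C^i_x$ and $D^i_x$ preparing them admit succinct classical descriptions $\hat C^i_x, \hat D^i_x$ that can be computed in classical polynomial time from $x$ and the description of $\cal C_x$. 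Thus $(1^n, \hat C^i_x, \hat D^i_x)$ is a valid $\SuccinctUhlmann_1$ instance for every round $i$ and every $x$.

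The reduction is now immediate to describe: define the efficient query adversary $\cal A^{\avgSuccinctUhlmann_1}$ which at round $i$ classically computes $(\hat C^i_x, \hat D^i_x)$ and invokes the $\avgSuccinctUhlmann_1$ oracle on register $\reg{M^i P^i}$ with this instance to implement the optimal prover's $i$-th action; between rounds it simply exchanges messages with $\cal C_x$ as prescribed by the protocol. With a perfect oracle this exactly simulates the optimal strategy and wins with probability $\omega^*_x \geq c + 1/p(|x|)$; for any $1/q(|x|)$-error average-case instantiation of the oracle, a standard hybrid argument over the $r = \poly(|x|)$ rounds, together with the fact that the winning probability is a continuous function of the joint state in trace distance, shows that the winning probability drops by at most $O(r/q)$, which can be made negligible in $1/p$ by taking $q$ sufficiently large relative to $p$ and $r$. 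Hence the oracle adversary wins with probability $\geq c + 1/(2p(|x|))$ on the same infinite set of $x$, which is exactly the sense in which breaking $(\cal C, c)$ reduces to $\avgSuccinctUhlmann_1$.

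The main technical obstacle is not really a new one: it is verifying that the $\statePSPACE$-purification / succinct-description construction of \cite{metger2023stateqip} applies to the states arising in the execution of $\cal C_x$ against an optimal (non-efficient) adversary, and that the reduced states on $\reg{V^i}$ really do match across $\ket{\alpha^i_x}$ and $\ket{\beta^i_x}$ so that a legitimate $\SuccinctUhlmann_1$ instance is produced. Once this is in place, the completeness/soundness bookkeeping and the error analysis for the imperfect oracle are routine, and the proof proceeds in direct parallel to \cref{lem:uhlmann_hard_for_qip}.
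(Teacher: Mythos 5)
Your proposal is correct and follows essentially the same route as the paper's proof: take the contrapositive, invoke \cite[Lemma 7.5]{metger2023stateqip} to obtain $\statePSPACE$ purifications (with efficiently computable succinct descriptions) of the intermediate states of a successful prover, note that the challenger's register is untouched so each round yields a valid $\SuccinctUhlmann_1$ instance, and simulate the prover's actions round by round with the $\avgSuccinctUhlmann_1$ oracle, exactly in parallel with \cref{lem:uhlmann_hard_for_qip}. The only differences are cosmetic: you phrase things in terms of the optimal adversary and spell out the hybrid error bookkeeping for an imperfect oracle, whereas the paper works with a prover of success probability $c+1/2p(n)$ supplied directly by the lemma and leaves that accounting implicit.
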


Formally what we mean by ``breaking the assumption can be reduced to $\avgSuccinctUhlmann_1$'' is the following: there exists an adversary $A$ that is a polynomial time quantum query algorithm with access to a $\DistSuccinctUhlmann_1$ oracle and breaks infinitely many instances $x$ of the assumption $(\cal C,c)$ with advantage $1/p(|x|)$ for some polynomial $p$.

The proof of \cref{thm:falsifiable} is very similar to that of \Cref{thm:dist_uhlmann_pzk_hard}: again, the idea is that if we are considering a quantum interactive protocol, we can implement the prover's (or in this case adversary's) actions as Uhlmann unitaries.
Hence, if there is any adversary that can break the falsifiable assumption, we can implement that adversary using a $\DistSuccinctUhlmann_1$ oracle, so breaking  the assumption reduces to $\DistSuccinctUhlmann_1$.
To make the paper more modular, we nonetheless spell out the details.
\begin{proof}[Proof of \cref{thm:falsifiable}]
Suppose that $(\cal C,c)$ is not in fact information-theoretically secure and there exists a possibly inefficient adversary $\cal A$ with at most $r=\poly(n)$ many rounds of interaction and a polynomial $p(n)$ such that
$$
\Pr\Big( \cal C_x \interact \cal A \text{ accepts} \Big) \geq c + 1/p(n)\,,
$$
where $x \in \{0,1\}^*$ and $n = |x|$ for infinitely many $x$'s.
For each round $j \in \{1,\dots,r\}$, we let
\begin{itemize}
    \item $\rho_{\reg{M}_x^j \reg W_x^j}^{(j)}$ denote the state of the message register $\reg M_x^j$
and the private workspace $\reg W_x^j$ of the challenger $\mathcal{C}_x$
at the beginning of the challenger’s $j$’th turn
    \item $\sigma_{\reg M_x^j \reg W_x^j}^{(j)}$ denote the state of the message register
and the challenger's private workspace
at the end of the challenger's $j$’th turn.
\end{itemize}
We now argue that the intermediate states on the message and challenger register in the interaction of $\mathcal{C}_x$ with $\mathcal{A}$ have purifications in $\statePSPACE$.
From~\cite[Lemma 7.5]{metger2023stateqip}, it follows that, for all $x$, there exists a prover $\mathcal{P}_x$ that is accepted with probability at least $c + 1/2p(n)$ for which the following property holds: there are families of pure states
$$
( \ket{\psi_{x,j}}_{\reg M_x^j \reg W_x^j \reg P_x^j})_{x,j},\,
\ket{\varphi_{x,j}}_{\reg M_x^j \reg W_x^j \reg P_x^j})_{x,j} \in \statePSPACE
$$
for some purifying registers $\reg P_x^j$ that are purifications of intermediate states $\rho_{\reg M_x^j \reg W_x^j}^{(j)}$ and $\sigma_{\reg M_x^j \reg W_x^j}^{(j)}$ of the challenger $\mathcal{C}_x$ interacting with the prover $\mathcal{P}_x$.
Moreover, there are polynomial-time Turing machines that, given as input a description of the verifier's actions in the protocol, output succinct classical descriptions of the quantum polynomial-space circuits for preparing  $\ket{\psi_{x,j}}$ and $\ket{\varphi_{x,j}}$.
This holds because \cite[Lemma 7.5]{metger2023stateqip} only relies on the block-encoding transformations implemented in \cite[Theorems 5.5 and 6.1]{metger2023stateqip}, which have efficient (and explicit) descriptions.

This means that for each round $j$ of the protocol, there exist polynomial-space quantum circuits $C^j$ and $D^j$ with efficiently computable succinct classical descriptions $\hat C^j$ and $\hat D^j$ such that $\ket{\psi_{x,j}}_{\reg{M_x^j W_x^j P_x^j}} = C^j \ket{0 \dots 0}$ and $\ket{\varphi_{x,j}}_{\reg{M_x^j W_x^j P_x^j}} = D^j \ket{0 \dots 0}$ are purifications of the reduced state on the message register $\reg{M_x^j}$ and challenger register $\reg{W_x^j}$ of the interactive protocol right before and after the prover's action in round $j$.
Notice that because the challenger register in the interactive protocol is not acted upon by the prover, the reduced states on the challenger register are unchanged, i.e. 
\begin{align*}
\ptr{\reg{M_x^j P_x^j}}{\proj{\psi_{x,j}}_{\reg{M_x^j W_x^j P_x^j}}} = \ptr{\reg{M_x^j P_x^j}}{\proj{\varphi_{x,j}}_{\reg{M_x^j W_x^j P_x^j}}} \,.
\end{align*}
We can therefore interpret the circuit pair $(C^j, D^j)$ as an instance of the $\SuccinctUhlmann_1$ problem, with $\reg{W^j}$ taking the role of the register that cannot be acted upon by the Uhlmann unitary. Hence, with access to a $\avgSuccinctUhlmann_1$-oracle, we can apply an Uhlmann transformation mapping $\ket{\psi_{x,j}}_{\reg{M_x^j W_x^j P_x^j}}$ to $\ket{\varphi_{x,j}}_{\reg{M_x^j W_x^j P_x^j}}$ by acting only on registers $\reg{M_x^j P_x^j}$.
This means that with the $\avgSuccinctUhlmann_1$-oracle, we can efficiently implement the actions of a successful prover in the interactive protocol.
\end{proof}

 \section{Quantum Shannon Theory}
\label{sec:shannon}

We now relate the Uhlmann Transformation Problem to two fundamental tasks in quantum Shannon theory: decoding the output of quantum channels and compressing quantum information. We show that both of these tasks can be performed in polynomial time if the Uhlmann Transformation Problem can be implemented in polynomial time. We also prove that channel decoding is as hard as solving the Uhlmann transformation problem in the inverse polynomial error regime.

\subsection{Decodable channels}\label{ssec:Noisy_Channel_Decoding}

We discuss the task of decoding the output of a channel (i.e.~recovering the input to the channel from its output). We focus on channels that are \emph{decodable}:

\begin{definition}[Decodable channel]
\label{def:decodable-channel}
    Let $\eps > 0$. A channel $\cal{N}$ mapping register $\reg{A}$ to $\reg{B}$ is \emph{$\eps$-decodable} if there exists a (not necessarily efficient) quantum algorithm $D$ that takes as input register $\reg{B}$ and outputs register $\reg{A}'$ isomorphic to $\reg{A}$ such that 
    \[
        \fidelity \Big( (D_{\reg{B} \to \reg{A}'} \circ \cal{N}_{\reg{A} \to \reg{B}})( \ketbra{\Phi}{\Phi}_{\reg{AR}} ) , \, \ketbra{\Phi}{\Phi}_{\reg{A}'\reg{R}} \Big) \geq 1- \eps~,
    \]
    where $\ket{\Phi}_{\reg{AR}}$ is the maximally entangled state on registers $\reg{AR}$. 
\end{definition}

Decodable channels naturally arise in the context of error corrected communication. Consider a noisy channel $\cal{C}$, and a encoder $\cal{E}$ corresponding to a quantum error correcting code. There are general situations when the concatenated channel $\cal{C} \circ \cal{E}$ is decodable, e.g., for example if $\cal{C}$ is a tensor-product Pauli channel and $\cal{E}$ is a random stabilizer code~\cite[Theorem 24]{wilde2013quantum}. However, it is not clear whether the concatenated channel is efficiently decodable, even if the encoder $\cal{E}$ is efficient. In fact, it is known that decoding arbitrary classical linear codes and stabilizer codes is computationally intractable~\cite{vardy1997intractability,berlekamp2003inherent,hsieh2011np,iyer2015hardness}.

\begin{remark}
    We could also consider a generalization of \Cref{def:decodable-channel} where we consider states other than the maximally entangled state. However we focus on the maximally entangled state for simplicity, and it already illustrates the key ideas of our complexity result.
    It is known that using the maximally entangled state as the input to a coding scheme for a noisy channel is without loss of generality (up to small changes in capacity, see e.g.~\cite[Chapter 15]{renes2015quantum}).
\end{remark}

We first show a sufficient and necessary condition for a channel $\cal{N}: \reg{A} \to \reg{B}$ to be decodable. Recall the definition of 
a Stinespring dilation of a channel: this is an isometry $V: \reg{A} \to \reg{BC}$ such that $\cal{N}(X) = \Tr_{\reg{C}}(V X V^\dagger)$. We introduce a condition about the \emph{complementary channel} $\cal{N}^c(X) \deq \Tr_{\reg{B}}(V X V^\dagger)$, which maps register $\reg{A}$ to register $\reg{C}$, defined relative to a Stinespring dilation $V$: 

\begin{definition}[Decoupling condition for channels]\label{def:decouple_condition}
    We say a channel $\cal{N}_{\reg A \to \reg B}$ satisfies the \emph{decoupling condition with error $\eps$} if 
    \[ 
        \fidelity \Big ( \cal{N}^c_{\reg{A} \to \reg{C}}(\ketbra{\Phi}{\Phi}_{\reg{AR}}) , \, \cal{N}^c_{\reg{A} \to \reg{C}} \Big ( \frac{\id_{\reg{A}}}{\dim \reg{A}} \Big)  \otimes \frac{\id_{\reg{R}}}{\dim \reg{R}} \Big) \geq 1 - \eps \,,
    \]
where $\cal{N}^c$ is a complementary channel to $\cal{N}$ relative to any Stinespring dilation.
\end{definition}

\begin{proposition}[Necessary and sufficient conditions for decodability]
\label{prop:decodable-conditions}
If a channel $\cal{N}$ satisfies the decoupling condition with error $\eps$ then it is $\eps$-decodable. If it is $\eps$-decodable, then it satisfies the decoupling condition with error $2\sqrt{\eps}$.
\end{proposition}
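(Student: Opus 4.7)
Both directions will rest on the same structural object: a Stinespring dilation $V_{\reg A \to \reg{BC}}$ of $\cal N$, from which the pure state $\ket{\psi}_{\reg{BCR}} = (V \otimes \id_{\reg R}) \ket{\Phi}_{\reg{AR}}$ has marginal $\cal N^c(\Phi)$ on $\reg{CR}$ and controls what can be done by acting on $\reg B$. The central tool is Uhlmann's theorem, which links the fidelity of the reduced states on $\reg{CR}$ to the existence of an isometry on $\reg B$.

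For the first direction, I would fix $V$ as above and let $\ket{\omega}_{\reg{CE}}$ be any purification of $\cal N^c(\id_{\reg A}/\dim \reg A)$. Then the product state $\ket{\Phi}_{\reg{A'R}} \otimes \ket{\omega}_{\reg{CE}}$, with $\reg{A'} \cong \reg A$, is a purification of $\cal N^c(\id/\dim) \otimes \id/\dim$ on $\reg{CR}$. The decoupling hypothesis says that the marginals on $\reg{CR}$ of $\ket\psi$ and of this product state have fidelity $\geq 1-\eps$, so Uhlmann's theorem, applied with $\reg B$ and $\reg{A'E}$ as the purifying systems (padding if necessary so they have the same dimension), yields an isometry $W_{\reg B \to \reg{A'E}}$ with
\[
|\bra{\Phi}_{\reg{A'R}} \bra{\omega}_{\reg{CE}} (W_{\reg B \to \reg{A'E}} \otimes \id_{\reg{CR}}) \ket{\psi}_{\reg{BCR}}|^2 \geq 1 - \eps.
\]
Taking the decoder $D(X) = \Tr_{\reg E}(WXW^\dagger)$ and using monotonicity of fidelity under partial trace (to discard $\reg{CE}$) immediately gives $\fidelity(D \circ \cal N(\Phi_{\reg{AR}}), \Phi_{\reg{A'R}}) \geq 1-\eps$, i.e.\ $\eps$-decodability.

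For the converse, suppose $D$ is a decoder achieving fidelity $\geq 1-\eps$, and fix a Stinespring dilation $W_{\reg B \to \reg{A'E}}$ of $D$. Then $\ket{\Psi}_{\reg{A'ECR}} = (W V \otimes \id_{\reg R}) \ket{\Phi}_{\reg{AR}}$ has marginal $D \circ \cal N(\Phi)$ on $\reg{A'R}$ with fidelity $\geq 1-\eps$ to the pure state $\ket{\Phi}_{\reg{A'R}}$. Since $\ket{\Phi}$ is pure, all purifications of $\Phi_{\reg{A'R}}$ to $\reg{A'RCE}$ factor as $\ket{\Phi}_{\reg{A'R}} \otimes \ket{\phi}_{\reg{CE}}$, so Uhlmann (applied in the reverse direction) produces some $\ket{\phi}_{\reg{CE}}$ with $\fidelity(\ket{\Psi}, \ket{\Phi}_{\reg{A'R}} \otimes \ket{\phi}_{\reg{CE}}) \geq 1-\eps$; for pure states this is equivalent to global trace distance $\leq \sqrt{\eps}$. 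Two applications of contractivity of trace distance under partial trace then give $\td(\cal N^c(\Phi)_{\reg{CR}}, \phi_{\reg C} \otimes \id_{\reg R}/\dim \reg R) \leq \sqrt{\eps}$ and $\td(\cal N^c(\id/\dim), \phi_{\reg C}) \leq \sqrt{\eps}$, and a triangle inequality and Fuchs--van de Graaf yield the decoupling bound with error $2\sqrt{\eps}$.

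\paragraph{Main obstacle.} The step I expect to require the most care is the converse direction, specifically the justification that closeness of the decoded state to the pure maximally entangled state propagates to a factorized global purification via Uhlmann. The standard issue is that Uhlmann produces an isometry on the purifying register rather than a fixed purification, and one has to observe that purity of $\ket{\Phi}$ forces the optimal purification to be a product, which is what lets the argument continue to the decoupling statement. Tracking the conversion between fidelity and trace distance with the right constants through Fuchs--van de Graaf is routine but is where the $2\sqrt{\eps}$ factor ultimately comes from.
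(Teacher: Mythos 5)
Your proof is correct and, for the forward direction, is essentially the paper's argument: both fix a Stinespring dilation $V$, purify the product state $\cal{N}^c(\id/\dim)\otimes \id/\dim$ (the paper's specific choice $\ket{F}=\ket{\Phi}_{\reg{RA}'}\otimes V\ket{\Phi}_{\reg{AR}'}$ is just one instance of your $\ket{\Phi}\otimes\ket{\omega}$), invoke Uhlmann on register $\reg{B}$, and discard the extra registers using monotonicity of fidelity. For the converse you take a mildly different route at one step: where the paper measures $\proj{\Phi}$ on the decoded state and uses the Gentle Measurement Lemma to produce a nearby state of the form $\Phi_{\reg{A}'\reg{R}}\otimes\rho_{\reg{C}}$, you instead apply Uhlmann to the purification $\ket{\Psi}=(WV\otimes\id)\ket{\Phi}$ and use the fact that every purification of the pure state $\Phi_{\reg{A}'\reg{R}}$ factors as $\ket{\Phi}\otimes\ket{\phi}_{\reg{CE}}$, which yields the product approximant $\Phi\otimes\phi_{\reg{C}}$ directly at the level of pure states. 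From there the two arguments are structurally identical: two partial traces give the analogues of the paper's two fidelity/trace-distance estimates, and a triangle inequality finishes the job. Your route is arguably cleaner in that it avoids the gentle-measurement detour, at the cost of the (correctly handled) observation about purifications of a pure state; the paper's route avoids reasoning about global purifications of the decoder. One small remark on constants: converting $\td\leq 2\sqrt{\eps}$ back to fidelity via Fuchs--van de Graaf strictly gives $\fidelity\geq(1-2\sqrt{\eps})^2$ rather than $1-2\sqrt{\eps}$, but the paper's own final step has exactly the same looseness, so this is not a gap in your argument relative to the claimed statement.
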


In other words, a channel is decodable if and only if the output of the complementary channel is close to unentangled with the reference register $\reg{R}$ of the maximally entangled state that was input to channel. 

\begin{proof}
    The first direction we prove is the following: if a channel $\cal{N}$ satisfies the decoupling condition, then it is decodable. Let $V$ denote the Stinespring dilation of $\cal{N}$ which defines the complementary channel $\cal{N}^c$ satisfying the decoupling condition.

    Let registers $\reg{A}',\reg{R}'$ be isomorphic to $\reg{A}, \reg{R}$ respectively. Consider the following pure states:
    \begin{align*}
        &\ket{E}_{\reg{RBCA}' \reg{R}'} \deq V_{\reg{A} \to \reg{BC}} \ket{\Phi}_{\reg{RA}} \ot \ket{0}_{\reg{A}' \reg{R}'} \\
        &\ket{F}_{\reg{RA}'\reg{BCR}'} \deq \ket{\Phi}_{\reg{RA}'} \ot V_{\reg{A} \to \reg{BC}}  \ket{\Phi}_{\reg{A} \reg{R}'}~.
    \end{align*}
    Note that the reduced density matrices of $\ket{E}$ and $\ket{F}$ on registers $\reg{C}$ and $\reg{R}$ are, respectively, $\cal{N}^c_{\reg{A} \to \reg{C}}(\ketbra{\Phi}{\Phi}_{\reg{AR}})$ and $\cal{N}^c_{\reg{A} \to \reg{C}} \Big ( \frac{\id_{\reg{A}}}{\dim \reg{A}} \Big) \otimes \frac{\id_{\reg{R}}}{\dim \reg{R}}$. Therefore by the decoupling condition and Uhlmann's theorem there exists a unitary $U$ mapping registers $\reg{BA}' \reg{R}'$ to registers $\reg{A}' \reg{B} \reg{R}'$ such that
    \begin{equation}
        \label{eq:decodable-0}
        \fidelity \Big ( (\id \ot U)\ketbra{E}{E} (\id \ot U^\dagger) , \ketbra{F}{F} \Big) \geq 1 - \eps~.
\end{equation}
    Define the decoding procedure $D$ that maps register $\reg{B}$ to register $\reg{A}'$ and behaves as follows:
    it appends registers $\reg{A}' \reg{R}'$ in the $\ket{0}$ state, applies the isometry $U$ to registers $\reg{BA}' \reg{R}'$, and then traces out registers $\reg{B}\reg{R}'$ to obtain register $\reg{A}'$. Since $\ket{E}$ is the result of applying the Stinespring dilation of $\cal{N}$ to $\ket{\Phi}$ and appending $\ket{0}_{\reg{A}' \reg{R}'}$, and using the fact that tracing out registers $\reg{B} \reg{R}'$ does not reduce the fidelity, \Cref{eq:decodable-0} implies that
    \[
        \fidelity \Big( (D_{\reg{B} \to \reg{A}'} \circ \cal{N}_{\reg{A} \to \reg{B}})(\ketbra{\Phi}{\Phi}_{\reg{AR}}) , \ketbra{\Phi}{\Phi}_{\reg{A}' \reg{R}} \Big) \geq 1 -\eps \,,
    \]
    showing that $\cal{N}$ is $\eps$-decodable, as desired. 

    Now we argue the other direction (if $\cal{N}$ is decodable, then the decoupling condition holds). The fact that it is decodable is equivalent to 
    \[
        \Tr \Big( \ketbra{\Phi}{\Phi} \, (D_{\reg{B} \to \reg{A}'} \circ \cal{N}_{\reg{A} \to \reg{B}})( \ketbra{\Phi}{\Phi}_{\reg{AR}} ) \Big) \geq 1 - \eps~.
    \]
    Considering the Stinespring dilation $V: \reg{A} \to \reg{BC}$ of $\cal{N}$ this is equivalent to 
    \begin{equation}
        \label{eq:decodable-1}
        \Tr \Big( (\ketbra{\Phi}{\Phi}_{\reg{A}' \reg{R}} \ot \id_{\reg{C}}) \, D_{\reg{B} \to \reg{A}'}( V \, \ketbra{\Phi}{\Phi}_{\reg{AR}} V^\dagger ) \Big) \geq 1 - \eps~.
    \end{equation}
    Suppose we measure $D_{\reg{B} \to \reg{A}'}\big( V \, \ketbra{\Phi}{\Phi}_{\reg{AR}} V^\dagger\big)$ with the projector $\proj{\Phi}$ and succeed. The post-measurement state is thus $\ketbra{\Phi}{\Phi} \ot \rho_{\reg{C}}$ for some density matrix $\rho$. Since the measurement succeeds with probability at least $1 - \eps$, by the Gentle Measurement Lemma we get 
    \begin{equation}
        \label{eq:decodable-2}
        \fidelity \Big( D_{\reg{B} \to \reg{A}'}( V \, \ketbra{\Phi}{\Phi}_{\reg{AR}} V^\dagger ), \ketbra{\Phi}{\Phi}_{\reg{A}' \reg{R}} \ot \rho_{\reg{C}} \Big) \geq 1 - \eps~.
    \end{equation}
    Tracing out register $\reg{A}'$ from both sides, which does not reduce the fidelity, yields
    \begin{equation}
        \label{eq:decodable-3}
        \fidelity \Big ( \cal{N}^c_{\reg{A} \to \reg{C}}(\ketbra{\Phi}{\Phi}_{\reg{AR}}) , \, \rho_{\reg{C}} \otimes \frac{\id_{\reg{R}}}{\dim \reg{R}} \Big) \geq 1 - \eps
    \end{equation}
    as desired.
    
    On the other hand, tracing out registers $\reg{A}' \reg{R}$ in \Cref{eq:decodable-2} also yields 
    \begin{equation}
        \label{eq:decodable-4}
        \fidelity \Big (\cal{N}^c_{\reg{A} \to \reg{C}} \Big ( \frac{\id_{\reg{A}}}{\dim \reg{A}} \Big) \, , \,\rho_{\reg{C}} \Big) \geq 1 - \eps~.
    \end{equation}
    Combining \Cref{eq:decodable-3,eq:decodable-4}, tracing out register $\reg{A}'$, and using Fuchs-van de Graaf twice, and we get
    \[
        \fidelity \Big( \cal{N}^c_{\reg{A} \to \reg{C}}(\ketbra{\Phi}{\Phi}_{\reg{AR}}) \, , \,  \cal{N}^c_{\reg{A} \to \reg{C}} \Big ( \frac{\id_{\reg{A}}}{\dim \reg{A}} \Big) \ot \frac{\id_{\reg{R}}}{\dim \reg{R}}  \Big) \geq 1 - 2\sqrt{\eps} \,,
    \]
    which is the desired decoupling condition.
\end{proof}

\subsubsection{Complexity of decoding quantum channels}

Previously we identified necessary and sufficient conditions for when a channel is \emph{information-theoretically} decodable. Now we investigate when a decodable channel can be \emph{efficiently} decoded. First we define a computational problem corresponding to decoding a given channel.

\begin{definition}[$\eps$-Decodable Channel Problem]
    \label{def:decodable-channel-problem}
    Let $\eps,\delta:\N \to [0,1]$ be functions such that $\delta(n) \geq \eps(n)$. We say that $D$ \emph{solves the $\eps$-Decodable Channel Problem with error $\delta$} if for all $x = (1^m,1^n,C)$  where $C$ is an explicit description of a quantum circuit that maps $m$ qubits to $n$ qubits and is a $\eps$-decodable channel,
    the circuit $D$ takes as input $n$ qubits and satisfies
    \[
        \fidelity \Big( (D_x \circ C)(\ketbra{\Phi}{\Phi}), \ketbra{\Phi}{\Phi} \Big) \geq 1 - \delta(|x|) \,,
    \]
    where $\ket{\Phi}$ is the maximally entangled state on $m$ qubits. 
\end{definition}

Even though a channel $\cal{N}$ may be $\eps$-decodable, it may be computationally intractable to decode it to $\eps$ error. The $\delta$ parameter quantifies the gap between the error achieved by the decoding algorithm and the information-theoretic limit.

The main result of this section is to show that the complexity of the Decodable Channel Problem is equivalent to the complexity of the (distributional) Uhlmann Transformation Problem. 

\begin{theorem}
    \label{thm:complexity-decodable-channels}
Let $\eps: \N \to [0,1]$ be a negligible function. If $\DistUhlmann_{1 - 2\sqrt{\eps}}$ can be solved in polynomial time with inverse polynomial error, then the $\eps$-Decodable Channel Problem is solvable with inverse polynomial error. Conversely, if the $O(\sqrt{\eps})$-Decodable Channel Problem is solvable in polynomial time with inverse polynomial error then $\DistUhlmann_{1 - \eps}$ can be solved in polynomial time with inverse polynomial error. 
\end{theorem}

\begin{proof}
    \noindent \textbf{Upper bound.} We start by proving the the ``only if'' direction (if $\DistUhlmann_{1 - \eps}$ is easy, then the Decodable Channel Problem is easy). We present an algorithm $D$ that solves the $\eps$-Decodable Channel Problem, and is efficient under the assumption about $\DistUhlmann$. 

Let $x=(1^m,1^n,C)$ be an instance of the $\eps$-Decodable Channel Problem be such that $C$ is a quantum circuit computing an $\eps$-decodable channel mapping $m$ qubits (which we label as register $\reg{A}$) to $n$ qubits (which we label as register $\reg{B}$). Let $V$ denote the unitary purification of $C$ (see \Cref{def:unitary-purification}) of $C$, which we view also as a Stinespring dilation of $C$ that maps register $\reg{A}$ to registers $\reg{BC}$. Let $\reg{A}', \reg{R}'$ denote registers isomorphic to $\reg{A}, \reg{R}$, respectively. Consider the pure states $\ket{E}_{\reg{RBCA}'\reg{R}'}$ and $\ket{F}_{\reg{RA}'\reg{BCR}'}$ defined in the proof of \Cref{prop:decodable-conditions} with respect to the dilation $V$. Note that these states can be computed by circuits $E,F$ with size $\poly(|C|)$. By padding we can assume without loss of generality that $E,F$ act on $2k$ qubits where $k \geq |x|$. 

Since the channel $C$ is $\eps$-decodable, then by \Cref{prop:decodable-conditions} it satisfies the decoupling condition with error $2\sqrt{\eps}$. Therefore it follows that $y = (1^k,E,F)$ is a valid $\Uhlmann_{1 - 2\sqrt{\eps}}$ instance (where the registers are divided into two groups $\reg{CR}$ and $\reg{BA}'\reg{R}'$). Thus by assumption there is a polynomial-time algorithm $M = (M_{y,r})_{y,r}$ that implements $\DistUhlmann_{1 - 2\sqrt{\eps}} \in \avgUnitaryBQP$~.
By \Cref{prop:operational-avg-case-uhlmann}, it follows that for $y = (1^k,E,F)$ with $k = \poly(|x|)$, the algorithm $M$ satisfies, for all precision $r$, 
\begin{equation*}
    \td \Big((\id \ot M_{y,r})(\ketbra{E}{E}), \ketbra{F}{F} \Big) \leq \frac{1}{r} + (4\eps(k))^{1/4}~. \end{equation*}
By Fuchs-van de Graaf this implies
\begin{equation}
    \label{eq:decodable-0a}
    \fidelity \Big((\id \ot M_{y,r})(\ketbra{E}{E}), \ketbra{F}{F} \Big) \geq \Big(1- \frac{1}{r} - (4\eps(k))^{1/4} \Big)^2~. \end{equation}

Fix a polynomial $q$. The algorithm $D$ behaves as follows on instance $x = (1^m,1^n,C)$ of the $\eps$-Decodable Channel Problem. It receives as input a register $\reg{B}$. It first computes the description of the $\Uhlmann_{1 - 2\sqrt{\eps}}$ instance $y = (1^k,E,F)$ described above. It initializes ancilla registers $\reg{A}' \reg{R}'$ in the zero state, and then applies the algorithm $(M_{y,r})_{y,r}$ that solves the $\DistUhlmann_{1-2\sqrt\eps}$-problem to registers $\reg{BA}'\reg{R}'$ with the precision parameter $r = 1/4q(|x|)$. Finally, the algorithm $D$ then traces out registers $\reg{BR}'$ and outputs the remaining register $\reg{A}'$. 

Now we analyze the behavior of the algorithm $D$ when it receives the $\reg{B}$ register of the state $C_{\reg{A} \to \reg{B}}(\ketbra{\Phi}{\Phi}_{\reg{AR}})$. Note that 
\begin{align*}
    \Big( (D_{x})_{\reg{B} \to \reg{A}'} \circ C_{\reg{A} \to \reg{B}} \Big)(\ketbra{\Phi}{\Phi}_{\reg{RA}}) &= \Tr_{\reg{CBR}'} \Big( (\id \ot M_{y,r})( \ketbra{E}{E}) \Big) \\ \ketbra{\Phi}{\Phi}_{\reg{R} \reg{A}'} &= \Tr_{\reg{ABCR}'} \Big( \ketbra{F}{F} \Big)~.
\end{align*}
By \Cref{eq:decodable-0a} and the fact that the fidelity does not decrease under partial trace we have
\begin{align*}
    \fidelity \Big(   \Big( (D_x)_{\reg{B} \to \reg{A}'} \circ C_{\reg{A} \to \reg{B}} \Big)(\ketbra{\Phi}{\Phi}_{\reg{RA}}) \, , \,\ketbra{\Phi}{\Phi}_{\reg{R} \reg{A}'} \Big) &\geq \fidelity \Big((\id \ot M_{y,r})(\ketbra{E}{E}), \ketbra{F}{F} \Big) \\
    \geq 1 - \frac{2}{r} - 2(4\eps(k))^{1/4} \geq 1 - O(\frac{1}{q(|x|)})~.
\end{align*}
In the last inequality we used that $O(\eps(k)^{1/4})$ is a negligible function of $k$ and thus of $|x|$ (because $k = \poly(|x|)$). 
Thus we have shown that $D$ solves the $\eps$-Decodable Channel Problem up to error $1/q$ in time $\poly(|x|,r) = \poly(|x|)$ as desired. This concludes the ``only if'' direction.

\paragraph{Lower bound.} We now prove the ``if'' part of the theorem (if the Decodable Channel Problem is easy, then $\DistUhlmann$ is easy). The intuition behind the proof is as follows: if $\DistUhlmann$ were hard, then we can construct a family of hard instances of the Decodable Channel Problem. These hard instances, intuitively, will be decodable channels $\cal{N}$ that take as input $b \in \{0,1\}$ and output an \emph{encryption} $\rho_b$. The states $\rho_0$ and $\rho_1$ are far from each other, but are computationally indistinguishable (this is also known as an \emph{EFI pair}~\cite{brakerski2022computational}). Thus no efficient decoder can correctly recover the bit $b$, even though the channel $\cal{N}$ is information-theoretically decodable by construction. 

We describe an efficient reduction from instances of the Uhlmann Transformation Problem to instances of the Decodable Channel Problem. Let $x = (1^n,C_0,C_1)$ be an instance of $\Uhlmann_{1 - \eps}$ for some negligible $\eps$, where the circuits $C_0,C_1$ output a state on registers $\reg{XY}$ and the reduced density matrices of $\ket{C_0},\ket{C_1}$ on register $\reg{X}$ have fidelity at least $1 - \eps$. For $b \in \{0,1\}$ define the circuit $E_b$ that prepares the state 
\[
    \ket{E_b} := \frac{1}{\sqrt{2}}  \ket{C_0}_{\reg{XY}} \ket{0}_{\reg{E}} + (-1)^b \frac{1}{\sqrt{2}} \ket{C_1}_{\reg{XY}}\ket{1}_{\reg{E}} ~.
\]
Define the channel $\cal{N}$ that takes as input a qubit $\ket{b}_{\reg{A}}$, maps it to $\ket{E_b}_{\reg{XYE}}$, and outputs the register $\reg{YE}$. Let $F$ denote the circuit that computes this channel. Note that $F$ is a polynomial-sized circuit in the length of the Uhlmann instance $x$. 

We first argue that the channel $\cal{N}$ is $O(\sqrt{\eps})$-decodable. By Uhlmann's theorem there exists a unitary $V$ acting on register $\reg{YE}$ such that
\[
    \frac{1}{2} \Big(\bra{C_1} \bra{1} (\id_{\reg{X}} \otimes V_{\reg{YE}}) \ket{C_0} \ket{0} + \bra{C_0} \bra{0} (\id_{\reg{X}} \otimes V_{\reg{YE}}) \ket{C_1} \ket{1} \Big) \geq 1 - \eps~. 
\]
By the swapping-distinguishing equivalence of~\cite[Theorem 2]{aaronson2020hardness}, this implies a (not necessarily efficient) measurement $M$ on registers $\reg{YE}$ that distinguishes between $\ket{E_0}$ and $\ket{E_1}$ with bias $1 - \eps$. Consider the following thought experiment: apply the channel $\cal{N}$ to register $\reg{A}$ of $\ket{\Phi}_{\reg{RA}}$ to obtain 
$\frac{1}{\sqrt{2}} \sum_b \ket{b}_{\reg{R}} \ket{E_b}_{\reg{XYE}}$. Trace out the registers $\reg{YE}$; because we can imagine that the measurement $M$ was performed on registers $\reg{YE}$, the resulting density matrix must be $O(\eps)$-close in trace distance to
\[
    \frac{1}{2} \ketbra{0}{0} \otimes \rho + \frac{1}{2} \ketbra{1}{1} \otimes \sigma~.
\]
Since $\rho,\sigma$ are $\sqrt{\eps}$-close in trace distance, this implies that the density matrix must be $O(\sqrt{\eps})$-close to $\frac{\id_{\reg{R}}}{2} \otimes \rho_{\reg{X}}$, which means that the channel $\cal{N}$ satisfies the decoupling condition with error $O(\sqrt{\eps})$. \Cref{prop:decodable-conditions} implies that $\cal{N}$ is $O(\sqrt{\eps})$-decodable.

Let $q$ be a polynomial and let $p(|x|) = O(1/q(|x|)^2)$. By assumption there exists a polynomial-time algorithm $D$ that solves the $O(\sqrt{\eps})$-Decodable Channel Problem with error $1/p$.
In particular, letting $y = (1^k,F)$ for the circuit $F$ computing channel $\cal{N}$ described above, we have 
\[
    \fidelity \Big( (D_y \circ F)(\ketbra{\Phi}{\Phi}), \ketbra{\Phi}{\Phi} \Big) \geq 1 - 1/p(|y|)~\,.
\]
By measuring the register $\reg{R}$ in the standard basis and using the monotonicity of the fidelity function under quantum operations, we get that 
\[
    \frac{1}{2} \sum_b \fidelity \Big((\id_{\reg{X}} \otimes D_y) (\proj{E_b}) \, , \, \proj{b} \Big) \geq 1 - 1/p(|y|)~.
\]
In other words, there is an efficient measurement on registers $\reg{YE}$ that distinguishes $\ket{E_0},\ket{E_1}$ with bias at least $1 - O(1/p(|y|))$. Again by the swapping-distinguishing equivalence~\cite[Theorem 2]{aaronson2020hardness} we get that there is an efficient algorithm $A$  acting on $\reg{YE}$, plus some ancilla registers, that maps $\ket{C_0}$ to have overlap at least $1 - O(1/p(|y|))$ with $\ket{C_1}$. 

By \Cref{prop:operational-avg-case-uhlmann}, this means that the algorithm $A$ implements the canonical Uhlmann transformation corresponding to $(\ket{C_0},\ket{C_1})$ with error at most $O(\sqrt{1/p(|y|)} + \eps(|x|)^{1/4})$. Since $|y| \geq |x|$, the error function $\eps$ is negligible and we assume that all polynomials and error functions are monotonic, this is error bound asymptotically at most $O(\sqrt{1/p(|x|)}) \leq O(1/q(|x|))$. The running time of $A$ is polynomial in $|x|$ as it runs the decoder $D$ on input $y$, which has polynomial size in $|x|$. This concludes the ``if'' direction.
\end{proof}

\subsection{Compressing quantum information}\label{ssec:information_compression}

In this section we show that the computational complexity of performing optimal compression of a quantum state (that can be efficiently prepared) is related to the complexity of the Uhlmann Transformation Problem. 

We consider the \emph{one-shot} version of the information compression task, where one is given just one copy of a density matrix $\rho$ (rather than many copies) and the goal is to compress it to as few qubits as possible while being able to recover the original state within some error. The task is defined formally as follows:

\begin{definition}[Information compression task]\label{def:information_compression_task}
    Let $\delta \geq 0$ and let $\rho$ be an $n$-qubit density matrix. We say that a pair of (not necessarily efficient) quantum circuits $(E,D)$ \emph{compresses $\rho$ to $s$ qubits with error $\delta$} if 
    \begin{enumerate}
        \item $E$ is a quantum circuit that takes as input $n$ qubits and outputs $s$ qubits,
        \item $D$ is a quantum circuit that takes as input $s$ qubits and outputs $n$ qubits,
        \item For all purifications $\ket{\psi}_{\reg{AR}}$ of $\rho$ (where $\reg{R}$ is the purifying register), we have
    \[
        \td \Big ( (D \circ E)(\psi), \psi \Big) \leq \delta
    \]
    where the composite channel $D \circ E$ acts on register $\reg{A}$ of $\ket{\psi}$.
    \end{enumerate}
    Define the \emph{$\delta$-error communication cost of $\rho$}, denoted by $K^\delta(\rho)$, as the minimum integer $s$ such that there exists a pair of quantum circuits $(E,D)$ that compresses $\rho$ to $s$ qubits with error $\delta$.
\end{definition}

In this section, we first analyze what is information-theoretically achievable for one-shot compression. Then, we study the complexity of compressing quantum information to the information-theoretic limit; we will show that it is closely related to the complexity of the Uhlmann Transformation Problem.

\subsubsection{Information-theoretic compression}

It is well-known that $n$ copies of a quantum state $\rho$ can be compressed to $n$ times the von Neumann entropy of $\rho$, in the limit of large $n$~\cite{schumacher1995quantum}. In the one-shot setting the state $\rho$ can be (information-theoretically) compressed to its \emph{smoothed max entropy} and no further. The smoothed max entropy is defined as follows:

\begin{definition}[Smoothed max-entropy]
Let $\eps \geq 0$ and let $\psi_{\reg{AB}}$ be a density matrix on registers $\reg{AB}$. The \emph{max-entropy of register $\reg{A}$ conditioned on register $\reg{B}$ of the state $\psi$} is
        \[
            H_{\max}(\reg{A} | \reg{B})_\psi \deq \sup_{\sigma \in \mathrm{Pos}(\reg{B}) : \Tr(\sigma) \leq 1} \log \| \sqrt{\psi_{\reg{AB}}} \sqrt{\id_{\reg{A}} \otimes \sigma_{\reg{B}}} \|_1^2~.
        \]
        The \emph{$\eps$-smoothed conditional max-entropy} is 
        \[
            H^{\eps}_{\max}(\reg{A} | \reg{B})_\psi \deq \inf_{\sigma:\td(\sigma,\psi) \leq \eps} H_{\max}(\reg{A} | \reg{B})_\sigma~.
        \]
\end{definition}

The following theorem shows that the smoothed max-entropy characterizes, up to additive constants and different smoothing parameters, the limits of one-shot compression. 

\begin{restatable}[Information-theoretic one-shot compression]{theorem}{infotheorycompression}
\label{thm:info-theory-compression}
For all $\delta > 0$ and all density matrices $\rho$,
\[
    H^{\eps_1}_{\max}(\rho) \leq K^\delta(\rho) \leq H^{\eps_2}_{\max}(\rho) + 8 \log \frac{4}{\delta}
\]
where $\eps_1 \deq 2\delta^{1/4}$ and $\eps_2 \deq (\delta/40)^4$. 
\end{restatable}

We prove \Cref{thm:info-theory-compression} in \Cref{sec:info-theoretic-compression}. We note that for tensor product states $\rho^{\otimes k}$, the smoothed max-entropy converges to the well-known von Neumann entropy: 
\[
    \lim_{\eps \to 0} \lim_{k \to \infty} \frac{1}{k} H^{\eps}_{\max} (\rho^{\otimes k}) = H(\rho)~.
\]
This is an instance of the \emph{quantum asymptotic equipartition property}, which roughly states that the min, max, and R\'{e}nyi entropies approach the von Neumann entropy in the limit of many copies of a state~\cite{tomamichel2009fully}.\footnote{In fact, one can give stronger quantitative bounds on the convergence to the von Neumann entropy as a function of the number of copies $k$ and the error $\eps$.} Thus \Cref{thm:info-theory-compression} applied to tensor product states $\rho^{\otimes k}$ recovers Schumacher compression~\cite{schumacher1995quantum}. 

\subsubsection{Complexity of near-optimal compression}

We now study the computational complexity of compressing quantum information to the information-theoretic limit, i.e., to the smoothed max-entropy of a state. We begin by defining compression as a computational task.

\begin{definition}[Compression as a computational task]\label{def:comp_compression}
Let $\eps,\eta:\N \to [0,1]$ be functions. Let $E = (E_x)_x$ and $D = (D_x)_x$ be quantum algorithms. We say that $(E,D)$ \emph{compresses to the $\eps$-smoothed max-entropy with error $\eta$} if for all $x = (1^n,C)$ where $C$ is a quantum circuit that outputs $n$ qubits, we have that $(E_x,D_x)$ compresses $\rho_x \deq C(\proj{0})$ to at most $ H^{\eps(n)}_{\max}(\rho_x) + O(\log \frac{1}{\eps(n)})$ qubits with error at most $\eta(n)$.
\end{definition}

This brings us to the main result of the section, which are upper and lower bounds on the complexity of the compression task. 

\begin{theorem}[Near-optimal compression via Uhlmann transformations]\label{thrm:compression_to_uhlmann_reduction}
\label{thm:comp-compression}
    Let $\eps(n)$ be a negligible function. If $\DistUhlmann_{1 - \eps} \in \avgUnitaryBQP$, 
    then for all polynomials $q(n)$ there exists a pair of polynomial-time algorithms $(E,D)$ that compresses to the $\eps$-smoothed max-entropy with error $\eta(n) = 1/q(n)$.
\end{theorem}

\begin{proof}

Let $x = (1^n,C)$ where $C$ is a quantum circuit that outputs $n$ qubits, and let $\rho_x = C(\proj{0})$. Let $\eps = \eps(n)$. The proof of the upper bound of \Cref{thm:info-theory-compression} involves the following two states:
\begin{gather*}
    \ket{F} \deq \ket{\Phi}_{\reg{E} \reg{E}'} \otimes \ket{\rho}_{\reg{AR}} \,,\\
    \ket{G} \deq \sum_y \ket{y}_{\reg{E}} \otimes (\Pi_y U \ot \id_{\reg{R}})\ket{\rho}_{\reg{AR}} \otimes \ket{0}_{\reg{F}}\,.
\end{gather*}
(The state $\ket{G}$ was called $\ket{\theta}$ in \Cref{thm:info-theory-compression}). Here, $\ket{\Phi}_{\reg{E} \reg{E}'}$ denotes the maximally entangled state on $\reg{E}\reg{E}'$, $\ket{\rho}_{\reg{AR}}$ is the pure state resulting from evaluating a purification of the circuit $C$ on the all zeroes input, the projector $\Pi_y$ denotes projecting the first $n - s$ qubits of register $\reg{A}$ onto $\ket{y}$, and $U$ is a Clifford unitary. Note that $\ket{F},\ket{G}$ can be prepared by circuits $F,G$ whose sizes are polynomial in $n$ and in the size of $C$; this uses the fact that Clifford unitaries can be computed by a circuit of size $O(n^2)$~\cite{aaronson2004improved}. 

The proof of \Cref{thm:info-theory-compression} shows that the reduced density matrices of $\ket{F},\ket{G}$ on registers $\reg{EA}$ have fidelity at least $1 - 2\nu = 1 - \eps^2/16 \geq 1 - \eps$. Thus $(1^m,F,G)$ is a valid $\Uhlmann_{1 - \eps}$ instance. Since $\DistUhlmann_{1 - \eps} \in \avgUnitaryBQP$ by assumption there exists $\poly(n,|C|)$-size circuit $L$ mapping registers $\reg{E}' \reg{A}$ to $\reg{E}' \reg{CF}$ and a channel completion $\Xi$ of the canonical Uhlmann transformation $V$ corresponding to $(\ket{F},\ket{G})$ such that
\[
    \td \Big ( (\id \ot L)(\ketbra{F}{F}) , \, \, (\id \ot \Xi) (\ketbra{F}{F}) \Big) \leq \frac{1}{r(n)}
\]
where $r(n)$ is a polynomial such that $2/r(n) + \eps(n) \leq 1/q(n)$, which is possible because $\eps(n)$ is a negligible function. Similarly there exists a $\poly(n,|C|)$-size circuit $M$ and a channel completion $\Lambda$ of the Uhlmann transformation $V^\dagger$ corresponding to $(\ket{G},\ket{F})$ such that
\[
    \td \Big ( (\id \ot M)(\ketbra{G}{G}) , \, \, (\id \ot \Lambda) (\ketbra{G}{G}) \Big) \leq \frac{1}{r(n)}~.
\]

The proof of \Cref{thm:info-theory-compression} shows shows that there exists a pair of uniformly-computable circuits $(E_{x}^*,D_{x}^*)$ that compresses $\rho_x$ to $s = H^{\eps}_{\max}(\rho_x) + O(\log \frac{1}{\eps})$ qubits with error $\eps$. Notice that the circuits $E_x^*,D_x^*$ are $\poly(n)$-size circuits that make one call to channels $\Xi,\Lambda$, respectively. Now the idea is to ``plug in'' the circuits $L,M$ to implement the call to the channel $\Xi,\Lambda$, respectively. Let $E_x,D_x$ denote the resulting $\poly(n,|C|)$-sized circuits. Using $L,M$ instead of the channels $\Xi,\Lambda$ incurs at most $O(1/r(n))$ error, i.e., $\td \Big( (D_x \circ E_x)(\ketbra{\rho}{\rho}_{\reg{AR}}),(D_x^* \circ E_x^*)(\ketbra{\rho}{\rho}_{\reg{AR}}) \Big) \leq 2/r(n)$. Therefore
\begin{gather*}
    \td \Big( (D_x \circ E_x)(\ketbra{\rho}{\rho}_{\reg{AR}}), \ketbra{\rho}{\rho}_{\reg{AR}} \Big) \leq 2/r(n) + \eps(n) \leq 1/q(n)\,.
\end{gather*}

Letting $E = (E_x)_x$ and $D = (D_x)_x$ we get the desired pair of uniform polynomial-time algorithms that compresses to the $\eps$-smoothed max entropy with inverse polynomial error. 
\end{proof}

We now turn to proving a hardness result for near-optimal compression; it cannot be performed in polynomial-time if \emph{stretch pseudorandom state (PRS) generators} exist. Pseudorandom state generators are a quantum analogue of classical pseudorandom generators (PRGs) and in fact can be constructed from post-quantum pseudorandom generators~\cite{JLS18}, but there is evidence that the assumption of PRS is less stringent than the assumption of post-quantum PRGs~\cite{Kretschmer21,kqst23}. We first recall the definition of a PRS generator:

\begin{definition}[{Pseudorandom state generator \cite[Definition 3]{JLS18}}]
\label{def:vanilla-prs}
We say that a (uniform) polynomial-time algorithm $G = (G_\lambda)_\lambda$ is a \emph{pseudorandom state (PRS) generator} if the following holds. 
\begin{enumerate}    
    \item (\emph{State generation}). For all $\lambda$, on input $k \in \{0,1\}^k$ the algorithm $G$ outputs 
    \[
        G_\lambda(k) = \ketbra{\psi_{k}}{\psi_{k}}
    \]
    for some $m(\lambda)$-qubit pure state $\ket{\psi_k}$. 
    
    \item (\emph{Strong pseudorandomness}). For all polynomials $t(\lambda)$ and non-uniform polynomial-time distinguishers $A = (A_\lambda)_\lambda$ there exists a negligible function $\eps(\lambda)$ such that for all $\lambda$,  we have
    \[
        \left | \Pr_{k \leftarrow \{0,1\}^\lambda} \left [ A_\lambda^{O_{\psi_k}} (G_\lambda(k)^{\otimes t(\lambda)}) = 1 \right] - \Pr_{\ket{\vartheta} \leftarrow \Haar_{m(\lambda)}} \left [ A_\lambda^{O_\vartheta} (\ketbra{\vartheta}{\vartheta}^{\otimes t(\lambda)}) = 1 \right] \right | \leq \eps(\lambda),
    \]
    where $O_\psi \deq \id - 2\ketbra\psi\psi$ is the reflection oracle for $\ket\psi$.
\end{enumerate}
We say that $G$ is a \emph{stretch} PRS generator if $m(\lambda) > \lambda$.
\end{definition}
Here we use the strong pseudorandomness guarantee, which is known to be equivalent to the weaker (standard) pseudorandomness guarantee where the adversary does not get access to the reflection oracle \cite[Theorem 4]{JLS18}. We also note that PRS generators do not necessarily provide any \emph{stretch}; there are nontrivial PRS generators where the output length $m(\lambda)$ can be smaller than the $\lambda$. Furthermore, unlike classical PRGs, it is not known whether PRS can be generically stretched (or shrunk); see~\cite{ananth2022cryptography} for a longer discussion of this. 

We now state our hardness result.

\begin{theorem}[Hardness of near-optimal compression]\label{thrm:PRS_to_compression}
\label{thm:comp-compression-lb}
Let $\eps(n)$ be a function. Let $m(\lambda)$ be a function satisfying 
\[
m(\lambda) > \lambda + O \Big (\log \frac{1}{\eps(m(\lambda))} \Big) + 2
\]
for all sufficiently large $\lambda$. If stretch pseudorandom state generators that output $m(\lambda)$ qubits exist, then there is no non-uniform polynomial-time algorithm $(E,D)$ that compresses to the $\eps$-smoothed max-entropy with error $\frac{1}{2}$.
\end{theorem}

\begin{proof}
    Let $G$ be a PRS generator that outputs $m(\lambda)$-qubit states for $m(\lambda)$ satisfying the conditions stated in \Cref{thm:comp-compression-lb}, and fix a sufficiently large $\lambda \in \N$ for which the condition is satisfied. Define the pure state $\ket{\varphi_\lambda}$ that represents running a unitary purification of the generator $G$ coherently with the keys $k$ in superposition:
    \[
        \ket{\varphi_\lambda}_{\reg{KQA}} \deq 2^{-\lambda/2} \sum_{k \in \{0,1\}^\lambda} \ket{k}_{\reg{K}} \otimes \ket{\tau_k}_{\reg{Q}} \otimes \ket{\psi_k}_{\reg{A}}
    \]
    where $\ket{\psi_k}$ denotes the pseudorandom state output by $G$ on key $k$, and $\ket{\tau_k}$ denotes 
    the state of the ancilla qubits of $G$. Let $\reg{R} \deq \reg{KQ}$. The reduced density matrix of $\ket{\varphi_\lambda}$ on register $\reg{A}$ is the following mixed state:
    \[
    \rho_\lambda \deq 2^{-\lambda} \sum_{k \in \{0, 1\}^\lambda} \ketbra{\psi_k}{\psi_k}~.
    \]
    By the second item of \Cref{prop:entropy-relations} we have $H^{\eps}_{\max}(\rho_{\lambda}) \leq \lambda$.

    Assume for contradiction that there exists a polynomial-time pair of quantum algorithms $(E,D)$ that compresses to the $\eps$-smoothed max-entropy with error $\frac{1}{2}$. Let $x = (1^n,C)$ where $C$ outputs the state $\rho_\lambda$ by first synthesizing the state $\ket{\varphi_\lambda}$ and then tracing out register $\reg{R}$. Clearly $C$ is a $\poly(\lambda)$-sized circuit. Therefore $(E_x,D_x)$ runs in $\poly(\lambda)$ time and compresses $\rho_\lambda$ to $r_\lambda \deq H^{\eps}_{\max}(\rho_\lambda) + O \Big(\log \frac{1}{\eps(m(\lambda))} \Big) \leq \lambda + O \Big(\log \frac{1}{\eps(m(\lambda))} \Big)$ qubits. By assumption we have
    \[
        \td \Big( (D_x \circ E_x)(\ketbra{\varphi_\lambda}{\varphi_\lambda}) , \, \ketbra{\varphi_\lambda}{\varphi_\lambda} \Big ) \leq \frac{1}{2}~.
    \]
    By measuring register $\reg{K}$ and tracing out register $\reg{Q}$ on both arguments (which does not increase the trace distance), we have that
    \begin{equation}
        \label{eq:compression-2}
        \E_k \td \Big ( (D_x \circ E_x)(\ketbra{\psi_k}{\psi_k}) \, , \, \ketbra{\psi_k}{\psi_k}  \Big) \leq \frac{1}{2}~.
    \end{equation}
    Now consider the following distinguisher $A = (A_\lambda)_\lambda$: it gets as input $\ket{\theta}$ where $\ket{\theta}$ is either $\ket{\psi_k}$ for a randomly sampled $k$ or $\ket{\vartheta}$ sampled from the Haar measure; it also gets access to a (controlled) reflection oracle $O_\theta = \id - 2\ketbra{\theta}{\theta}$. It then 
    \begin{enumerate}
        \item applies the channel $D_x \circ E_x$ to input $\ket{\theta}$;
        \item measures $\{ \ketbra{\theta}{\theta}, \id - \ketbra{\theta}{\theta} \}$ using the reflection oracle, and accept if measurement accepts. 
\end{enumerate}
    From \Cref{eq:compression-2} we have that, since the measurement step with respect to $O_{\psi_k}$ accepts on $\ket{\psi_k}$ with probability $1$, then $A_\lambda$ with oracle access to $O_{\psi_k}$ accepts $\ket{\psi_k}$ with probability at least $1 - \eta$ over the choice of key $k$ and the randomness of $A_\lambda$. 

    Now consider what happens when we run $A_\lambda$ with $\ket{\vartheta}$ as input where $\ket{\vartheta}$ is sampled from the Haar measure, as well as with the reflection oracle $O_{\vartheta}$. Since $A$ runs in $\poly(\lambda)$ time, by the pseudorandomness property of $G$ the probability that $A_\lambda$ accepts $\ket{\vartheta}$ is at least $\frac{1}{2} - \negl(\lambda)$. 

    On the other hand we show that since a Haar-random state cannot be compressed, $A_\lambda$ cannot accept with high probability. Let $R \deq 2^{r_\lambda}$ denote the dimensionality of the output of $E_\lambda$, and let $M = 2^{m(\lambda)}$ denote the dimensionality of register $\reg{A}$. For brevity we abbreviate $E_x,D_x$ as $E,D$ respectively. The success probability of $A_\lambda$ given a Haar-random state $\ket{\vartheta}$ and the reflection oracle $O_\vartheta$ can be calculated as follows. First, observe that
    \[
        \int_{\vartheta} \Tr \Big ( (D \circ E)(\ketbra{\vartheta}{\vartheta}) \, \ketbra{\vartheta}{\vartheta} \Big ) \, \mathrm{d}\vartheta = \int_{\vartheta}  \Tr \Big ( E (\ketbra{\vartheta}{\vartheta}) \, D^* (\ketbra{\vartheta}{\vartheta}) \Big)  \, \mathrm{d}\vartheta
    \]    
    where $D^*$ denotes the \emph{adjoint channel} corresponding to $D$; it is the unique superoperator mapping register $\reg{A}'$ to $\reg{B}$ satisfying $\Tr(X D(Y)) = \Tr(D^*(X) Y)$ for all operators $X,Y$. Viewing $E \otimes D^*$ as a superoperator mapping registers $\reg{A}_1\reg{A}_2$ to $\reg{B}_1 \reg{B}_2$ and letting $S_{\reg{B}_1 \reg{B}_2}$ denote the swap operator on registers $\reg{B}_1\reg{B}_2$ the above is equal to
    \[
    \Tr \Big ( S_{\reg{B}_1 \reg{B}_2} (E \otimes D^*)( \int_{\vartheta} \ketbra{\vartheta}{\vartheta}^{\otimes 2} \mathrm{d}\vartheta) \Big)~.
    \]
    Now, it is well-known~\cite{harrow2013church} that the integral over two copies of an $m(\lambda)$-qubit Haar-random state is proportional to the projector $\frac{1}{2} (\id + S)$ onto the \emph{symmetric subspace} of $(\C^M)^{\otimes 2}$. The dimension of the projector is $M(M+1)/2$. Thus the above is equal to
    \begin{align*}
        &\frac{1}{M(M+1)} \Tr \Big ( S_{\reg{B}_1 \reg{B}_2} \, (E \otimes D^*)(\id_{\reg{A}_1 \reg{A}_2} + S_{\reg{A}_1 \reg{A}_2}) \Big) \\
        &\qquad \leq  \frac{1}{M(M+1)}  \Tr \Big ((E \otimes D^*)(\id_{\reg{A}_1 \reg{A}_2} + S_{\reg{A}_1 \reg{A}_2}) \Big) \\
        &\qquad = \frac{1}{M(M+1)}   \Big [  \Tr \Big((E \otimes D^*)(\id_{\reg{A}_1 \reg{A}_2}) \Big) + \Tr \Big((E \otimes D^*)(S_{\reg{A}_1 \reg{A}_2}) \Big) \Big ] \\
        &\qquad = \frac{1}{M(M+1)}   \Big [  \Tr \Big(\id_{\reg{A}_1} \ot D^*(\id_{\reg{A}_2}) \Big) + \Tr \Big((\id_{\reg{A}_1} \ot D^*)(S_{\reg{A}_1 \reg{A}_2}) \Big) \Big ] \\
        &\qquad  = \frac{1}{M(M+1)}   \Big [  \Tr \Big(\id_{\reg{A}_1} \Big) \, \Tr \Big( D^*(\id_{\reg{A}_2}) \Big) + \Tr \Big(D^*(\id_{\reg{A}_2}) \Big) \Big ] \\
        &\qquad  =\frac{1}{M(M+1)}   \Big [  R M  + R \Big ] \\
        &\qquad  = R/M = 2^{-(m(\lambda) - \lambda - O(\log 1/\eps))} \leq \frac{1}{4}~.
    \end{align*}
    The second line follows from the fact that $|\Tr(A^\dagger B)| \leq \| A \|_\infty \| B \|_1$ for all operators $A,B$ and $\norm{S}_{\infty} \leq 1$. The fourth line follows from the fact that $E$ is a trace-preserving superoperator. The sixth line follows from the fact that since $D$ is a channel that takes as input $\reg{B}$, $\Tr(D^*(\id_{\reg{A}_2})) = \Tr(\id_{\reg{B}}) = R$. The last line follows because our assumption about the stretch of the PRS. This shows that the acceptance probability of $A_\lambda$ given a Haar random state and access to its reflection oracle is at most $\frac{1}{4}$, which is less than $\frac{1}{2} - \negl(\lambda)$ for sufficiently large $\lambda$. 

    Thus we have arrived at a contradiction. There is no polynomial-time pair of algorithms that compresses to the $\eps$-smoothed max entropy. 
\end{proof}

We compare our hardness result with the upper bound proved in \Cref{thm:comp-compression}. As an example, let $\eps(n) = 2^{-\log^2(n)}$, which is a negligible function. Then roughly, if $\DistUhlmann_{1 - \eps}$ is easy, then compressing to $H^{\eps}_{\max}(\rho) + O(\log 1/\eps) = H^{\eps}_{\max}(\rho) + O(\log^2(n))$ is easy. On the other hand, the lower bound shows that if PRS generators with output length $m(\lambda) \geq \lambda + \Omega(\log^2(\lambda))$ exist, then compressing to $H^{\eps}_{\max}(\rho) + O(\log^2(n))$ is not easy. 

We remark that it should be possible to base the lower bound on seemingly weaker assumptions, such as one-way state generators~\cite{morimae2022quantum}. However, ideally we would be able to base the hardness on an assumption such as the existence of quantum commitments or the hardness of the Uhlmann transformation problem, which would give a true converse to the upper bound of \Cref{thm:comp-compression}. However the main issue is \emph{verifiability}: with pseudorandom states or one-way state generators (with pure-state outputs), one can check whether the state has been compressed and decompressed; it is not clear whether this is possible with quantum commitments. We leave it as an open problem to prove matching upper and lower complexity bounds on compression. 

\begin{openproblem}
    Is the complexity of optimal compression equivalent to the complexity of the Uhlmann Transformation Problem?
\end{openproblem}

There are many more that have been studied  information-theoretically (including a whole family tree of them~\cite{abeyesinghe2009mother}), and one can ask about the complexity of each of these tasks. 
\begin{openproblem}
    What is the complexity of other quantum Shannon theory tasks, such as achieving capacity over a noisy channel, entanglement distillation, or quantum state redistribution? 
\end{openproblem}

 \section{Black-Hole Radiation Decoding}
\label{sec:gravity}

In this section, we discuss connections between the Uhlmann Transformation Problem and computational tasks motivated by questions in high-energy physics. We focus on the \emph{black hole radiation decoding task}, which was introduced by Harlow and Hayden~\cite{Harlow_2013}. We argue that the complexity of this task is characterized by the complexity of the distributional Uhlmann Transformation Problem.

The black hole radiation decoding task is motivated by the following thought experiment of Almheiri, Marolf, Polchinski, Sully~\cite{almheiri2013black}: imagine that Alice creates a maximally entangled pair of qubits $\ket{\mathrm{EPR}} = \frac{1}{\sqrt{2}} (\ket{00} + \ket{11})$ and throws one half into a newly-formed black hole. After a long time, Alice could potentially decode the Hawking radiation of the black hole and recover the qubit she threw in. However, Alice could then jump into the black hole and find another qubit that is supposed to be maximally entangled with the qubit that was not thrown in -- witnessing a violation of the monogamy of entanglement. These conclusions were derived assuming supposedly uncontroversial principles of quantum field theory and general relativity. 

Harlow and Hayden proposed a resolution to this paradox via a computational complexity argument~\cite{Harlow_2013}: it may not be \emph{feasible} for Alice to decode the black hole's Hawking radiation in any reasonable amount of time --- by the time she decodes the qubit that she threw in, the black hole may have evaporated anyways! They argued that, assuming $\mathsf{SZK} \not\subseteq \mathsf{BQP}$ -- note that these are classes of \emph{decision} problems --- a formulation of the black hole radiation decoding task cannot be done in polynomial time. 

What about the converse? That is, does a traditional complexity class statement such as $\mathsf{SZK} \subseteq \mathsf{BQP}$ imply that the black hole radiation decoding task is solvable in polynomial time? As pointed out by Aaronson~\cite{aaronson2016complexity}, it is not even clear that the black hole radiation decoding task is easy even if we assume $\mathsf{P} = \mathsf{PSPACE}$. As with all the other ``fully quantum'' tasks considered in this paper, it appears difficult to characterize the complexity of the black hole decoding problem in terms of traditional notions from complexity theory. 

Brakerski recently gave a characterization of the hardness of the black hole radiation task in terms of the existence of a cryptographic primitive known as \emph{quantum EFI pairs}~\cite{brakerski2022blackhole}, which are in turn equivalent to quantum commitments (as well as many other quantum cryptographic primitives, see~\cite{brakerski2022computational} for an in-depth discussion). Given the discussion in \Cref{sec:qcrypto} that connects quantum commitments with the Uhlmann Transformation Problem, one would then expect an equivalence between black hole radiation decoding and the Uhlmann Transformation Problem. 

We spell out this equivalence by showing that the complexity of the black hole radiation decoding task is the same as the complexity of the Decodable Channel Problem, which we showed to be equivalent to the (distributional) Uhlmann Transformation Problem in \Cref{ssec:Noisy_Channel_Decoding}. We believe that the direct reduction to and from the Decodable Channel Problem is natural, and may be useful to those who are more comfortable with quantum Shannon theory.

We first describe a formulation of the black hole radiation decoding task, which is an adaptation of the formulations of~\cite{Harlow_2013,brakerski2022blackhole}.

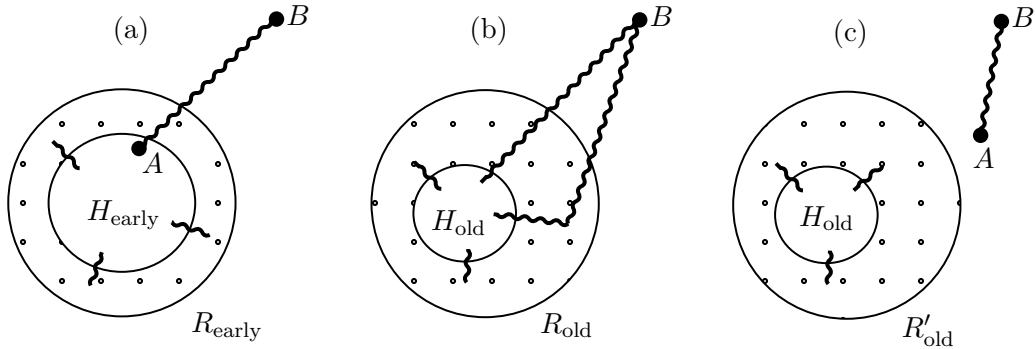
\begin{figure}[h]
    \centering

\tikzset{
pattern size/.store in=\mcSize, 
pattern size = 5pt,
pattern thickness/.store in=\mcThickness, 
pattern thickness = 0.3pt,
pattern radius/.store in=\mcRadius, 
pattern radius = 1pt}
\makeatletter
\pgfutil@ifundefined{pgf@pattern@name@_vpfldcoi3}{
\makeatletter
\pgfdeclarepatternformonly[\mcRadius,\mcThickness,\mcSize]{_vpfldcoi3}
{\pgfpoint{-0.5*\mcSize}{-0.5*\mcSize}}
{\pgfpoint{0.5*\mcSize}{0.5*\mcSize}}
{\pgfpoint{\mcSize}{\mcSize}}
{
\pgfsetcolor{\tikz@pattern@color}
\pgfsetlinewidth{\mcThickness}
\pgfpathcircle\pgfpointorigin{\mcRadius}
\pgfusepath{stroke}
}}
\makeatother

\tikzset{
pattern size/.store in=\mcSize, 
pattern size = 5pt,
pattern thickness/.store in=\mcThickness, 
pattern thickness = 0.3pt,
pattern radius/.store in=\mcRadius, 
pattern radius = 1pt}
\makeatletter
\pgfutil@ifundefined{pgf@pattern@name@_cx6jy1sst}{
\makeatletter
\pgfdeclarepatternformonly[\mcRadius,\mcThickness,\mcSize]{_cx6jy1sst}
{\pgfpoint{-0.5*\mcSize}{-0.5*\mcSize}}
{\pgfpoint{0.5*\mcSize}{0.5*\mcSize}}
{\pgfpoint{\mcSize}{\mcSize}}
{
\pgfsetcolor{\tikz@pattern@color}
\pgfsetlinewidth{\mcThickness}
\pgfpathcircle\pgfpointorigin{\mcRadius}
\pgfusepath{stroke}
}}
\makeatother

\tikzset{
pattern size/.store in=\mcSize, 
pattern size = 5pt,
pattern thickness/.store in=\mcThickness, 
pattern thickness = 0.3pt,
pattern radius/.store in=\mcRadius, 
pattern radius = 1pt}
\makeatletter
\pgfutil@ifundefined{pgf@pattern@name@_a5dj9q0fh}{
\makeatletter
\pgfdeclarepatternformonly[\mcRadius,\mcThickness,\mcSize]{_a5dj9q0fh}
{\pgfpoint{-0.5*\mcSize}{-0.5*\mcSize}}
{\pgfpoint{0.5*\mcSize}{0.5*\mcSize}}
{\pgfpoint{\mcSize}{\mcSize}}
{
\pgfsetcolor{\tikz@pattern@color}
\pgfsetlinewidth{\mcThickness}
\pgfpathcircle\pgfpointorigin{\mcRadius}
\pgfusepath{stroke}
}}
\makeatother
\tikzset{every picture/.style={line width=0.75pt}} 

\begin{tikzpicture}[x=0.75pt,y=0.75pt,yscale=-0.8,xscale=0.8]

\draw  [pattern=_vpfldcoi3,pattern size=14.774999999999999pt,pattern thickness=0.75pt,pattern radius=0.75pt, pattern color={rgb, 255:red, 0; green, 0; blue, 0}] (3,126.85) .. controls (3,87.28) and (35.08,55.2) .. (74.64,55.2) .. controls (114.21,55.2) and (146.28,87.28) .. (146.28,126.85) .. controls (146.28,166.41) and (114.21,198.49) .. (74.64,198.49) .. controls (35.08,198.49) and (3,166.41) .. (3,126.85) -- cycle ;
\draw  [fill={rgb, 255:red, 255; green, 255; blue, 255 }  ,fill opacity=1 ] (28.42,126.85) .. controls (28.42,102.81) and (49.12,83.32) .. (74.64,83.32) .. controls (100.17,83.32) and (120.86,102.81) .. (120.86,126.85) .. controls (120.86,150.88) and (100.17,170.37) .. (74.64,170.37) .. controls (49.12,170.37) and (28.42,150.88) .. (28.42,126.85) -- cycle ;
\draw [line width=1.5]    (168.62,12.45) .. controls (168.58,14.81) and (167.38,15.97) .. (165.02,15.92) .. controls (162.67,15.88) and (161.47,17.04) .. (161.42,19.39) .. controls (161.37,21.74) and (160.17,22.9) .. (157.82,22.85) .. controls (155.47,22.81) and (154.27,23.97) .. (154.22,26.32) .. controls (154.17,28.67) and (152.97,29.83) .. (150.62,29.79) .. controls (148.27,29.75) and (147.07,30.91) .. (147.02,33.26) .. controls (146.97,35.62) and (145.77,36.78) .. (143.41,36.73) .. controls (141.06,36.68) and (139.86,37.84) .. (139.81,40.19) .. controls (139.76,42.54) and (138.56,43.7) .. (136.21,43.66) .. controls (133.86,43.62) and (132.66,44.78) .. (132.61,47.13) .. controls (132.56,49.48) and (131.36,50.64) .. (129.01,50.6) .. controls (126.66,50.56) and (125.46,51.72) .. (125.41,54.07) .. controls (125.36,56.43) and (124.16,57.59) .. (121.8,57.54) .. controls (119.45,57.49) and (118.25,58.65) .. (118.2,61) .. controls (118.15,63.35) and (116.95,64.51) .. (114.6,64.47) .. controls (112.25,64.43) and (111.05,65.59) .. (111,67.94) .. controls (110.95,70.29) and (109.75,71.45) .. (107.4,71.41) .. controls (105.05,71.37) and (103.85,72.53) .. (103.8,74.88) .. controls (103.75,77.24) and (102.55,78.4) .. (100.19,78.35) .. controls (97.84,78.3) and (96.64,79.46) .. (96.59,81.81) .. controls (96.54,84.16) and (95.34,85.32) .. (92.99,85.28) .. controls (90.64,85.24) and (89.44,86.4) .. (89.39,88.75) .. controls (89.34,91.1) and (88.14,92.26) .. (85.79,92.22) -- (85.43,92.57) -- (85.43,92.57) ;
\draw  [fill={rgb, 255:red, 0; green, 0; blue, 0 }  ,fill opacity=1 ] (167.85,10.91) .. controls (167.85,8.57) and (169.75,6.67) .. (172.09,6.67) .. controls (174.43,6.67) and (176.33,8.57) .. (176.33,10.91) .. controls (176.33,13.25) and (174.43,15.15) .. (172.09,15.15) .. controls (169.75,15.15) and (167.85,13.25) .. (167.85,10.91) -- cycle ;
\draw  [fill={rgb, 255:red, 0; green, 0; blue, 0 }  ,fill opacity=1 ] (81.19,92.57) .. controls (81.19,90.23) and (83.09,88.33) .. (85.43,88.33) .. controls (87.77,88.33) and (89.66,90.23) .. (89.66,92.57) .. controls (89.66,94.91) and (87.77,96.8) .. (85.43,96.8) .. controls (83.09,96.8) and (81.19,94.91) .. (81.19,92.57) -- cycle ;
\draw  [pattern=_cx6jy1sst,pattern size=14.774999999999999pt,pattern thickness=0.75pt,pattern radius=0.75pt, pattern color={rgb, 255:red, 0; green, 0; blue, 0}] (232,127.36) .. controls (232,87.79) and (264.08,55.72) .. (303.64,55.72) .. controls (343.21,55.72) and (375.28,87.79) .. (375.28,127.36) .. controls (375.28,166.93) and (343.21,199) .. (303.64,199) .. controls (264.08,199) and (232,166.93) .. (232,127.36) -- cycle ;
\draw  [fill={rgb, 255:red, 255; green, 255; blue, 255 }  ,fill opacity=1 ] (258.42,133.44) .. controls (258.42,116.63) and (272.9,103) .. (290.75,103) .. controls (308.6,103) and (323.08,116.63) .. (323.08,133.44) .. controls (323.08,150.26) and (308.6,163.89) .. (290.75,163.89) .. controls (272.9,163.89) and (258.42,150.26) .. (258.42,133.44) -- cycle ;
\draw [line width=1.5]    (397.62,12.96) .. controls (397.68,15.32) and (396.53,16.53) .. (394.17,16.58) .. controls (391.82,16.63) and (390.66,17.84) .. (390.71,20.19) .. controls (390.76,22.54) and (389.61,23.75) .. (387.26,23.8) .. controls (384.9,23.85) and (383.75,25.06) .. (383.8,27.42) .. controls (383.85,29.77) and (382.7,30.98) .. (380.35,31.03) .. controls (378,31.08) and (376.84,32.29) .. (376.89,34.64) .. controls (376.94,37) and (375.79,38.21) .. (373.43,38.26) .. controls (371.08,38.31) and (369.93,39.52) .. (369.98,41.87) .. controls (370.03,44.23) and (368.88,45.44) .. (366.52,45.49) .. controls (364.17,45.54) and (363.02,46.75) .. (363.07,49.1) .. controls (363.12,51.45) and (361.96,52.66) .. (359.61,52.71) .. controls (357.26,52.77) and (356.11,53.98) .. (356.16,56.33) .. controls (356.21,58.68) and (355.05,59.89) .. (352.7,59.94) .. controls (350.35,59.99) and (349.19,61.2) .. (349.24,63.55) .. controls (349.29,65.9) and (348.14,67.11) .. (345.79,67.17) .. controls (343.44,67.22) and (342.28,68.43) .. (342.33,70.78) .. controls (342.38,73.13) and (341.23,74.34) .. (338.88,74.39) .. controls (336.52,74.44) and (335.37,75.65) .. (335.42,78.01) .. controls (335.47,80.36) and (334.32,81.57) .. (331.97,81.62) .. controls (329.62,81.67) and (328.46,82.88) .. (328.51,85.23) .. controls (328.56,87.59) and (327.41,88.8) .. (325.05,88.85) .. controls (322.7,88.9) and (321.55,90.11) .. (321.6,92.46) .. controls (321.65,94.82) and (320.5,96.03) .. (318.14,96.08) .. controls (315.79,96.13) and (314.64,97.34) .. (314.69,99.69) .. controls (314.74,102.04) and (313.58,103.25) .. (311.23,103.3) .. controls (308.87,103.35) and (307.72,104.56) .. (307.77,106.92) .. controls (307.82,109.27) and (306.67,110.48) .. (304.32,110.53) -- (301,114) -- (301,114) ;
\draw  [fill={rgb, 255:red, 0; green, 0; blue, 0 }  ,fill opacity=1 ] (396.85,11.42) .. controls (396.85,9.08) and (398.75,7.19) .. (401.09,7.19) .. controls (403.43,7.19) and (405.33,9.08) .. (405.33,11.42) .. controls (405.33,13.76) and (403.43,15.66) .. (401.09,15.66) .. controls (398.75,15.66) and (396.85,13.76) .. (396.85,11.42) -- cycle ;
\draw [line width=1.5]    (356,140) .. controls (354.13,141.45) and (352.48,141.24) .. (351.04,139.37) .. controls (349.6,137.5) and (347.95,137.29) .. (346.08,138.73) .. controls (344.21,140.18) and (342.56,139.97) .. (341.12,138.1) .. controls (339.68,136.23) and (338.03,136.02) .. (336.16,137.47) .. controls (334.29,138.91) and (332.64,138.7) .. (331.2,136.83) .. controls (329.76,134.96) and (328.11,134.75) .. (326.24,136.2) .. controls (324.37,137.65) and (322.72,137.44) .. (321.28,135.57) .. controls (319.84,133.7) and (318.19,133.49) .. (316.32,134.93) .. controls (314.45,136.38) and (312.8,136.17) .. (311.36,134.3) -- (309,134) -- (309,134) ;
\draw [line width=1.5]    (356,140) .. controls (354.98,137.88) and (355.53,136.31) .. (357.65,135.28) .. controls (359.78,134.26) and (360.33,132.69) .. (359.31,130.56) .. controls (358.29,128.44) and (358.84,126.87) .. (360.96,125.85) .. controls (363.09,124.83) and (363.64,123.26) .. (362.62,121.13) .. controls (361.6,119.01) and (362.15,117.44) .. (364.27,116.41) .. controls (366.4,115.39) and (366.95,113.82) .. (365.93,111.69) .. controls (364.91,109.57) and (365.46,108) .. (367.58,106.97) .. controls (369.71,105.95) and (370.26,104.38) .. (369.24,102.25) .. controls (368.22,100.13) and (368.77,98.56) .. (370.89,97.54) .. controls (373.02,96.52) and (373.57,94.95) .. (372.55,92.82) .. controls (371.53,90.7) and (372.08,89.13) .. (374.2,88.1) .. controls (376.33,87.08) and (376.88,85.51) .. (375.86,83.38) .. controls (374.84,81.26) and (375.39,79.69) .. (377.51,78.66) .. controls (379.64,77.64) and (380.19,76.07) .. (379.17,73.94) .. controls (378.15,71.82) and (378.7,70.25) .. (380.82,69.23) .. controls (382.94,68.2) and (383.49,66.63) .. (382.47,64.51) .. controls (381.45,62.38) and (382,60.81) .. (384.13,59.79) .. controls (386.25,58.76) and (386.8,57.19) .. (385.78,55.07) .. controls (384.76,52.94) and (385.31,51.37) .. (387.44,50.35) .. controls (389.56,49.32) and (390.11,47.75) .. (389.09,45.63) .. controls (388.07,43.5) and (388.62,41.93) .. (390.75,40.92) .. controls (392.87,39.89) and (393.42,38.32) .. (392.4,36.2) .. controls (391.38,34.07) and (391.93,32.5) .. (394.06,31.48) .. controls (396.18,30.45) and (396.73,28.88) .. (395.71,26.76) .. controls (394.69,24.63) and (395.24,23.06) .. (397.37,22.04) .. controls (399.49,21.02) and (400.04,19.45) .. (399.02,17.33) .. controls (398,15.2) and (398.55,13.63) .. (400.68,12.61) -- (401.09,11.42) -- (401.09,11.42) ;
\draw [line width=1.5]    (292,177) .. controls (290.41,175.25) and (290.49,173.59) .. (292.24,172.01) .. controls (293.99,170.42) and (294.07,168.76) .. (292.48,167.01) .. controls (290.89,165.27) and (290.97,163.61) .. (292.71,162.02) .. controls (294.46,160.43) and (294.54,158.77) .. (292.95,157.02) -- (293,156) -- (293,156) ;
\draw [line width=1.5]    (276,118) .. controls (273.64,118) and (272.46,116.82) .. (272.46,114.46) .. controls (272.46,112.11) and (271.28,110.93) .. (268.93,110.93) .. controls (266.57,110.93) and (265.39,109.75) .. (265.39,107.39) .. controls (265.39,105.04) and (264.21,103.86) .. (261.86,103.86) -- (259,101) -- (259,101) ;
\draw [line width=1.5]    (48,105) .. controls (45.64,105) and (44.46,103.82) .. (44.46,101.46) .. controls (44.46,99.11) and (43.28,97.93) .. (40.93,97.93) .. controls (38.57,97.93) and (37.39,96.75) .. (37.39,94.39) .. controls (37.39,92.04) and (36.21,90.86) .. (33.86,90.86) -- (31,88) -- (31,88) ;
\draw [line width=1.5]    (130,147) .. controls (127.89,148.05) and (126.31,147.53) .. (125.26,145.42) .. controls (124.2,143.31) and (122.62,142.79) .. (120.51,143.84) .. controls (118.4,144.89) and (116.82,144.37) .. (115.77,142.26) .. controls (114.72,140.15) and (113.14,139.63) .. (111.03,140.68) .. controls (108.92,141.73) and (107.33,141.2) .. (106.28,139.09) -- (106,139) -- (106,139) ;
\draw [line width=1.5]    (55,179) .. controls (53.95,176.89) and (54.47,175.31) .. (56.58,174.26) .. controls (58.69,173.2) and (59.21,171.62) .. (58.16,169.51) .. controls (57.11,167.4) and (57.63,165.82) .. (59.74,164.77) .. controls (61.85,163.72) and (62.37,162.14) .. (61.32,160.03) -- (62,158) -- (62,158) ;
\draw  [pattern=_a5dj9q0fh,pattern size=14.774999999999999pt,pattern thickness=0.75pt,pattern radius=0.75pt, pattern color={rgb, 255:red, 0; green, 0; blue, 0}] (460,128.36) .. controls (460,88.79) and (492.08,56.72) .. (531.64,56.72) .. controls (571.21,56.72) and (603.28,88.79) .. (603.28,128.36) .. controls (603.28,167.93) and (571.21,200) .. (531.64,200) .. controls (492.08,200) and (460,167.93) .. (460,128.36) -- cycle ;
\draw  [fill={rgb, 255:red, 255; green, 255; blue, 255 }  ,fill opacity=1 ] (486.42,134.44) .. controls (486.42,117.63) and (500.9,104) .. (518.75,104) .. controls (536.6,104) and (551.08,117.63) .. (551.08,134.44) .. controls (551.08,151.26) and (536.6,164.89) .. (518.75,164.89) .. controls (500.9,164.89) and (486.42,151.26) .. (486.42,134.44) -- cycle ;
\draw  [fill={rgb, 255:red, 0; green, 0; blue, 0 }  ,fill opacity=1 ] (624.85,12.42) .. controls (624.85,10.08) and (626.75,8.19) .. (629.09,8.19) .. controls (631.43,8.19) and (633.33,10.08) .. (633.33,12.42) .. controls (633.33,14.76) and (631.43,16.66) .. (629.09,16.66) .. controls (626.75,16.66) and (624.85,14.76) .. (624.85,12.42) -- cycle ;
\draw [line width=1.5]    (554,103) .. controls (553.86,105.35) and (552.61,106.46) .. (550.26,106.32) .. controls (547.91,106.18) and (546.66,107.29) .. (546.53,109.64) .. controls (546.39,111.99) and (545.14,113.1) .. (542.79,112.97) .. controls (540.44,112.83) and (539.19,113.94) .. (539.05,116.29) -- (536,119) -- (536,119) ;
\draw [line width=1.5]    (616,84.24) .. controls (614.66,82.3) and (614.96,80.66) .. (616.9,79.32) .. controls (618.84,77.98) and (619.14,76.34) .. (617.79,74.4) .. controls (616.45,72.46) and (616.75,70.82) .. (618.69,69.48) .. controls (620.63,68.14) and (620.93,66.5) .. (619.59,64.56) .. controls (618.24,62.62) and (618.54,60.98) .. (620.48,59.64) .. controls (622.42,58.3) and (622.72,56.66) .. (621.38,54.72) .. controls (620.04,52.78) and (620.34,51.14) .. (622.28,49.8) .. controls (624.21,48.46) and (624.51,46.82) .. (623.17,44.89) .. controls (621.83,42.95) and (622.13,41.31) .. (624.07,39.97) .. controls (626.01,38.63) and (626.31,36.99) .. (624.97,35.05) .. controls (623.62,33.11) and (623.92,31.47) .. (625.86,30.13) .. controls (627.8,28.79) and (628.1,27.15) .. (626.76,25.21) .. controls (625.42,23.27) and (625.72,21.63) .. (627.66,20.29) .. controls (629.6,18.95) and (629.9,17.31) .. (628.55,15.37) -- (629.09,12.42) -- (629.09,12.42) ;
\draw [line width=1.5]    (520,178) .. controls (518.41,176.25) and (518.49,174.59) .. (520.24,173.01) .. controls (521.99,171.42) and (522.07,169.76) .. (520.48,168.01) .. controls (518.89,166.27) and (518.97,164.61) .. (520.71,163.02) .. controls (522.46,161.43) and (522.54,159.77) .. (520.95,158.02) -- (521,157) -- (521,157) ;
\draw [line width=1.5]    (504,119) .. controls (501.64,119) and (500.46,117.82) .. (500.46,115.46) .. controls (500.46,113.11) and (499.28,111.93) .. (496.93,111.93) .. controls (494.57,111.93) and (493.39,110.75) .. (493.39,108.39) .. controls (493.39,106.04) and (492.21,104.86) .. (489.86,104.86) -- (487,102) -- (487,102) ;
\draw  [fill={rgb, 255:red, 0; green, 0; blue, 0 }  ,fill opacity=1 ] (611.76,84.24) .. controls (611.76,81.9) and (613.66,80) .. (616,80) .. controls (618.34,80) and (620.24,81.9) .. (620.24,84.24) .. controls (620.24,86.58) and (618.34,88.47) .. (616,88.47) .. controls (613.66,88.47) and (611.76,86.58) .. (611.76,84.24) -- cycle ;

\draw (50.66,123.42) node [anchor=north west][inner sep=0.75pt]    {$H_{\mathrm{early}}$};
\draw (116.46,193.89) node [anchor=north west][inner sep=0.75pt]    {$R_{\mathrm{early}}$};
\draw (176.49,0.23) node [anchor=north west][inner sep=0.75pt]    {$B$};
\draw (85.25,93.44) node [anchor=north west][inner sep=0.75pt]    {$A$};
\draw (267.66,129.93) node [anchor=north west][inner sep=0.75pt]    {$H_{\mathrm{old}}$};
\draw (336.46,194.4) node [anchor=north west][inner sep=0.75pt]    {$R_{\mathrm{old}}$};
\draw (405.49,0.74) node [anchor=north west][inner sep=0.75pt]    {$B$};
\draw (499.66,126.93) node [anchor=north west][inner sep=0.75pt]    {$H_{\mathrm{old}}$};
\draw (564.46,195.4) node [anchor=north west][inner sep=0.75pt]    {$R'_{\mathrm{old}}$};
\draw (633.49,1.74) node [anchor=north west][inner sep=0.75pt]    {$B$};
\draw (609.25,90.44) node [anchor=north west][inner sep=0.75pt]    {$A$};
\draw (67,7) node [anchor=north west][inner sep=0.75pt]   [align=left] {(a)};
\draw (292,7) node [anchor=north west][inner sep=0.75pt]   [align=left] {(b)};
\draw (521,9) node [anchor=north west][inner sep=0.75pt]   [align=left] {(c)};

\end{tikzpicture}
     \caption{Decoding black hole radiation. (a) Qubit $A$, maximally entangled with qubit $B$, falls into an early black hole $H_{\mathrm{early}}$, which is entangled with some early Hawking radiation $R_{\mathrm{early}}$. (b) After evaporating much of its mass, the old black hole $H_{\mathrm{old}}$ is entangled with the radiation $R_{\mathrm{old}}$ which is entangled with the qubit $B$. (c) By performing a computation on the radiation only, the partner qubit $A$ can be decoded.}
    \label{fig:black-holes}
\end{figure}

\begin{definition}[Decodable black hole states]
    Let $P$ denote a unitary quantum circuit mapping registers $\reg{A} \reg{G}$ to $\reg{H} \reg{R}$ where $\reg{A}$ is a single qubit register. Consider the state 
    \[
        \ket{\psi}_{\reg{BHR}} \deq (\id_{\reg{B}} \ot P_{\reg{AG} \to \reg{HR}}) \ket{\mathrm{EPR}}_{\reg{BA}} \ot \ket{0}_{\reg{G}}~.
    \]
    We say that \emph{$\ket{\psi}$ is an $\eps$-decodable black hole state} if 
    there exists a quantum circuit $D$ that takes as input register $\reg{R}$ and outputs a qubit labelled $\reg{A}$, such that letting $\rho_{\reg{HBA}}$ denote the state $(\id \ot D)(\ketbra{\psi}{\psi})$, we have
\[
    \fidelity \Big( \ketbra{\mathrm{EPR}}{\mathrm{EPR}}_{\reg{AB}} \, ,\, \rho_{\reg{AB}} \Big ) \geq 1 - \eps
    \]
    i.e., measuring the registers $\reg{BA}$ in the Bell basis yields the state $\ket{\mathrm{EPR}} \deq \frac{1}{\sqrt{2}} (\ket{00} + \ket{11})$ with probability at least $1 - \eps$. We say that the circuit $D$ is a \emph{$\eps$-decoder} for the state $\ket{\psi}$.

\end{definition}
The circuit $P$ generating the decodable black hole state can be thought of as a unitary that encodes the laws of black hole evolution: given a qubit in register $\reg{A}$ and a fixed number of ancilla qubits, it forms a black hole in register $\reg{H}$ as well as the outgoing Hawking radiation in register $\reg{R}$. The decodability condition implies that, by acting on the radiation only, it is information-theoretically possible to decode the original qubit that was input. \noindent See \Cref{fig:black-holes} for an illustration of black hole radiation decoding. We formalize black hole radiation decoding as a computational task. 
\begin{definition}[Black hole radiation decoding task]\label{def:BHRD}
    Let $\eps(n),\delta(n)$ be functions. We say that a quantum algorithm $D = (D_x)_x$ solves the \emph{$\eps$-black hole radiation decoding task with error $\delta$} if for all $x = (1^n,P)$ where $P$ is a unitary quantum circuit acting on $n$ qubits and gives rise to an $\eps(n)$-decodable black hole state $\ket{\psi}$, the circuit $D_x$ is a $\delta(n)$-decoder for $\ket{\psi}$. 
\end{definition}

We now prove that the task of black hole radiation decoding in \Cref{def:BHRD} is equivalent to the Decodable Channel Problem in \Cref{def:decodable-channel-problem}, which results in the following theorem.
\begin{theorem}
\label{thm:black-holes}
Let $\eps(n)$ be a negligible function. 
$\DistUhlmann_{1-\eps}$ is solvable in polynomial-time with inverse polynomial error if and only if the $\eps(n)$-black hole radiation decoding task is solvable in polynomial-time with inverse polynomial error.
\end{theorem}

\begin{proof}
We prove this via reduction to the Decodable Channel Problem described in \Cref{ssec:Noisy_Channel_Decoding}. First, observe (from the proof) that the statement in \Cref{thm:complexity-decodable-channels} still holds when considering instances of the $\eps$-Decodable Channel Problem of the form $y=(1^1,1^r,C)$, i.e., where we restrict $C$ to single qubit inputs only. 
Define the following bijection $\varphi$: for every $x = (1^n,P)$, where $P: \reg{AG} \rightarrow \reg{HR}$ is a unitary quantum circuit acting on $n$ qubits and where $r$ is the size of the register $\reg R$, define $\varphi(x) = (1^1,1^r,\tilde{P})$, where $\tilde{P}$ is the quantum circuit first appends $n-1$ qubits initialized to $\ket{0}$ to its input and then runs $P$. 

It is clear that $x$ corresponds to an $\eps$-decodable black hole state if and only if $\varphi(x)$ corresponds to an $\eps$-decodable channel: the channel can be viewed as taking the input qubit, dumping it in the black hole, and the outputting the radiation emitted by the black hole. Decoding the EPR pair from the channel associated with $\tilde{P}$ exactly corresponds to decoding the EPR pair from the black hole associated with $P$. Therefore, the claim follows from \Cref{thm:complexity-decodable-channels}, which shows that the complexity of the Decodable Channel Problem is equivalent to the complexity of $\DistUhlmann$. 
\end{proof}

\begin{remark}
We remark that Brakerski proved a stronger theorem by relating the black hole radiation task to EFI~\cite{brakerski2022computational}. For simplicity, we focus on the task of decoding the EPR pair with fidelity $1 - \eps$, for a small $\epsilon$, whereas Brakerski~\cite{brakerski2022blackhole} used amplification to boost weak decoders that succeed with fidelity much smaller than $1$.
\end{remark}

\newpage
\part{Appendix}
\label{part:appendices}

\appendix

\section{Weak polarization for Uhlmann transformations}
\label{sec:polarization}

In this section we prove the following weak polarization lemma, which we can interpret as evidence for a stronger polarization statement, \Cref{conj:polarization}.

\weakpolarization*

\vspace{5pt}

At a high level, the algorithm operates as follows. Given a $\Uhlmann_\kappa$ instance  $x = (1^n,C,D)$, it computes the description of an instance $y = (1^k,F_+,F_-)$ of $\Uhlmann_{1 - 2^{-p(n)}}$, resulting from applying parallel repetition, XOR repetition, and then parallel repetition again to the original instance $x$. We show that an algorithm that approximately implements the Uhlmann transform for $y$ can be transformed into an algorithm that approximately implements the Uhlmann transform for $x$. This makes use of reductions from the following results.

The first is a parallel repetition theorem for Uhlmann transforms, which is a special case of the parallel repetition theorem for $3$-message arguments proved by~\cite{BQSY24}:
\begin{theorem}[Efficient parallel repetition of Uhlmann transformations~\cite{BQSY24}]
\label{thm:parallel_repetition}
    Let $0 < \delta < 1$ and $\ket{C}$ and $\ket{D}$ be a pair of bipartite states prepared by circuits $C$ and $D$ respectively.  Suppose that there is a quantum algorithm $Q$ such that
    \begin{equation*}
        \fidelity(Q(\ketbra{C}{C}^{\ot t}), \ketbra{D}{D}^{\ot t}) \geq \delta^t~.
    \end{equation*}
    Then there is a $\poly(\delta^{-t},\eps^{-1})$-time query algorithm $R$ acting only on $\reg{B}$ (and additional ancilla), makes queries to the unitary purification of $Q$ and its inverse, $C^\dagger$, and $D$, such that
    \begin{equation*}
        \fidelity(R(\proj{C}), \proj{D}) \geq \delta - \epsilon\,.
    \end{equation*}
\end{theorem}

The second result is the so-called ``swapping-distinguishing duality'' theorem of Aaronson, Atia, and Susskind~\cite{aaronson2020hardness}:
\begin{theorem}[Swapping-distinguishing duality, Theorem 1 from~\cite{aaronson2020hardness}]
\label{thm:swapping_distinguishing_duality}
    \leavevmode
    \begin{enumerate}
        \item Let $\ket{x},\ket{y}$ be orthogonal states, and suppose that $\bra{y}U \ket{x} = a$ and $\bra{x} U \ket{y} = b$. Then using a single black-box call to controlled-$U$, plus $O(1)$ additional gates, we can distinguish $\ket{\psi} = \frac{1}{\sqrt{2}} (\ket{x} + \ket{y})$ and $\ket{\phi} = \frac{1}{\sqrt{2}}(\ket{x} - \ket{y})$ with bias $\Delta = \frac{1}{2} |a + b|$.
        \item Let $\ket{\psi},\ket{\phi}$ be orthogonal states, and suppose that a quantum circuit $A$ accepts $\ket{\psi}$ with probability $p$ and accepts $\ket{\phi}$ with probability $p - \Delta$. Then using a single black-box call to $A$ and $A^\dagger$, plus $O(1)$ additional gates, we can apply a unitary $U$ such that
        \[
            \frac{1}{2} \Big( |\bra{y} U \ket{x}| + |\bra{x} U \ket{y}| \Big) = \Delta
        \]
        where $\ket{x} = \frac{1}{\sqrt{2}} (\ket{\psi} + \ket{\phi})$ and $\ket{y} = \frac{1}{\sqrt{2}} (\ket{\psi} - \ket{\phi})$.
    \end{enumerate}
\end{theorem}

\DeclarePairedDelimiter{\ceil}{\lceil}{\rceil}
\begin{proof}[Proof of \Cref{thm:polarization}]
We present the following argument for general $\kappa$, even though the theorem is only specified for $\kappa = 1/2$. Set parameters
    \[
        m = \ceil[\bigg]{\frac{\ln 8p(n) }{\ln \frac{1}{1 - \eps/2}}} \quad \text{and} \quad t = \ceil[\bigg]{ \frac{8p(n)}{\kappa^m}}~.
    \]
    Note that this setting of parameters satisfies 
    \[
        \kappa^m \geq \frac{2p(n)}{t} \quad \text{and} \quad (\kappa - \eps/2)^m \leq \frac{1}{4t}~.
    \]
    
    We now describe the states $\ket{F_\pm}$ in more detail, and analyze their properties. First, define the pair of states
    \begin{gather*}
        \ket{E_+} = \frac{1}{\sqrt{2}} \left(\ket{C}^{\otimes m} \ket{0} + \ket{D}^{\otimes m}\ket{1}\right)_{\reg{A}^m \reg{B}^m \reg{O}} \\
        \ket{E_-} = \frac{1}{\sqrt{2}} \left(\ket{C}^{\otimes m} \ket{0} - \ket{D}^{\otimes m}\ket{1}\right)_{\reg{A}^m \reg{B}^m \reg{O}}~.
    \end{gather*}
    Note that $\ket{E_+},\ket{E_-}$ are orthogonal. Then define
    \begin{gather*}
        \ket{F_+} = \frac{1}{\sqrt{2}} \Big( \ket{E_+}^{\otimes t} + \ket{E_-}^{\otimes t} \Big)_{\reg{A}^{m \times t} \reg{B}^{m \times t} \reg{O}^{t}} \\
        \ket{F_-} = \frac{1}{\sqrt{2}} \Big( \ket{E_+}^{\otimes t} -\ket{E_-}^{\otimes t} \Big)_{\reg{A}^{m \times t} \reg{B}^{m \times t} \reg{O}^{t}}~.
    \end{gather*}
    Index the registers by $\reg{A}_{ij}$, $\reg{B}_{ij}$, and $\reg{O}_j$ where $1 \leq i \leq m$ and $1 \leq j \leq t$. Let $k = t(m+1)$ denote the number of qubits of $\ket{F_\pm}$. 

Let $F_+,F_-$ denote the circuits that prepare the states $\ket{F_+},\ket{F_-}$, respectively. Note that given the instance $x = (1^n,C,D)$, the instance $(1^k,F_+,F_-)$ can be computed in $\poly(m,t,n)$ time.

\begin{claim}
\label{claim:polarization_fidelity}
$(1^k,F_+,F_-)$ is an instance of $\Uhlmann_{1 - 2^{-p(n)}}$.
\end{claim}
\begin{proof}
    Let $\sigma_0$, $\sigma_1$ denote the reduced density matrices of $\ket{C}$ and $\ket{D}$ on $\reg{A}$ respectively. Then the observe that reduced density matrices of $\ket{C}^{\otimes m}$ and $\ket{D}^{\otimes m}$ on register $\reg{A}^m$ are $\sigma_0^{\otimes m}$ and $\sigma_1^{\otimes m}$, respectively. 

    Let $\tau_\pm$ denote the reduced density matrices of $\ket{F_\pm}$ on registers $\reg{A}^{m \times n}$ (i.e., the collection of registers $\reg{A}_{ij}$). A simple calculation shows that 
\begin{gather*}
    \tau_+ = \frac{1}{2^{t-1}} \sum_{x: |x| \text{ even}} \sigma_{x_1}^{\ot m} \otimes \sigma_{x_2}^{\ot m} \otimes \cdots \otimes \sigma_{x_t}^{\ot m} \\
    \tau_- = \frac{1}{2^{t-1}} \sum_{x: |x| \text{ odd}} \sigma_{x_1}^{\ot m} \otimes \sigma_{x_2}^{\ot m} \otimes \cdots \otimes \sigma_{x_t}^{\ot m}~.
\end{gather*}
Watrous~\cite[Proposition 6]{watrous2002limits} showed that $\td(\tau_+,\tau_-) = \td(\sigma_0^{\otimes m},\sigma_1^{\otimes m})^t$. Since $(1^n,C,D)$ is an instance of $\Uhlmann_\kappa$ and by Fuchs-van de Graaf, we have $\fidelity(\sigma_0^{\otimes m},\sigma_1^{\otimes m}) \geq \kappa^m$ and therefore
\[
    \td(\tau_+,\tau_-) \leq (1 - \kappa^m)^{t/2}~.
\]
By Fuchs-van de Graaf again we have
\[
    \fidelity(\tau_+,\tau_-) \geq (1 - \td(\tau_+,\tau_-))^2 \geq 1 - 2\td(\tau_+,\tau_-) \geq 1 - 2(1 - \kappa^m)^{t/2}~.
\]
Using that $1 - x \leq e^{-x}$ and our choice of $m,t$ we have that
\begin{align*}
    \fidelity(\tau_+,\tau_-) \geq 1 - \exp \Big( -\frac{t}{2} \kappa^m \Big) \geq 1 - e^{-p(n)}~.
\end{align*}

\end{proof}

\begin{claim}
\label{claim:polarization_error}
    Suppose there was a quantum algorithm $Q$ acting only on registers $\reg{B}_{ij} \reg{O}_j$ such that 
    \[
        \td((\id \otimes Q)(\ketbra{F_+}{F_+}),\ketbra{F_-}{F_-}) \leq \frac{1}{32}~.
    \]
Then there exists a quantum algorithm $A$ that for all $\eps$ can runs in $\poly(t,m,1/\eps)$ time, makes queries to the unitary purification of $Q$ and its inverse, acts on $\reg{B}$ (plus ancillas), such that 
    \[
        \fidelity((\id \otimes A)(\ketbra{C}{C}),\ketbra{D}{D}) \geq (4t)^{-1/m} - \eps/2~.
    \]
\end{claim}
\begin{proof}
Let $\delta = 1/32$. 
    Let $\wt{Q}$ denote the unitary purification of $Q$. By Uhlmann's theorem and Fuchs-van de Graaf, there exists a pure state $\ket{\eta}$ such that 
    \[
        \Big( \bra{F_-} \otimes \bra{\eta} (\id \otimes \wt{Q}) \ket{F_+} \otimes \ket{0} \Big )^2 \geq (1 - \delta)^2~.
    \]
    Here, the circuit $\wt{Q}$ acts on the registers $\reg{B}_{ij} \reg{O}_j$ as well as the ancilla zeroes. Translating to Euclidean distance, we have
    \[
        \Big \| (\id \otimes \wt{Q}) \ket{F_+} \ket{0} - \ket{F_-} \ket{\eta} \Big \| \leq \sqrt{2\delta}~.
    \]
    The state $\ket{\eta}$ can be (approximately) prepared by using the circuit $\wt{Q}$, its inverse, and the circuits $F_+,F_-$ for preparing $\ket{F_+},\ket{F_-}$ respectively. Therefore there exists a circuit $R$ such that 
    \[
        \Big | \bra{F_-} \otimes \bra{0} (\id \otimes R) \ket{F_+} \otimes \ket{0} \Big |^2 \geq 1 - 8\delta \geq 3/4
    \]
    where we used $\delta = 1/32$.
        By \Cref{thm:swapping_distinguishing_duality}, since $\ket{E_\pm}^{\ot t} = \frac{1}{\sqrt{2}} \Big( \ket{F_+} \pm \ket{F_-} \Big)$ there exists an efficient quantum operation $S$ that makes a single call to $R$ and $R^\dagger$ that acts only on registers $\reg{B}_{ij} \reg{O}_j$ (plus ancillas) and distinguishes between $\ket{E_+}^{\otimes t}$ and $\ket{E_-}^{\otimes t}$ with advantage at least $3/8 \geq 1/4$. 

        By a hybrid argument, there exists a quantum operation $S'$ (whose complexity is that of $S$ plus the complexity of preparing $t-1$ copies of $\ket{E_+}$ or $\ket{E_-}$) that acts only on registers $\reg{B}^m \reg{O}$ and distinguishes between $\ket{E_+}$ and $\ket{E_-}$ with advantage at least $1/4t$.  

        By \Cref{thm:swapping_distinguishing_duality} again, there exists an operation $U$ acting only on registers $\reg{B}^m \reg{O}$ (plus ancillas) and makes a single call to $S'$ to transform $\ket{C}^{\otimes m} \ket{0}$ to have overlap at least $1/4t$ with $\ket{D}^{\otimes m}\ket{1}$. By \Cref{thm:parallel_repetition}, there exists an algorithm $A$ acting only on register $\reg{B}$ (plus ancillas) with running time $\poly(m,1/\eps)$, makes queries to $U$ and $U^\dagger$, and maps $\ket{C}$ to have squared overlap at least $(4t)^{-1/m} - \eps/2$ with $\ket{D}$. 

        Taking into account the complexity of $\tilde{Q},\tilde{R},S,S',U$, we get that $A$ runs in time $\poly(m,t,1/\eps)$. 
\end{proof}

Thus weak polarization theorem follows immediately from \Cref{claim:polarization_fidelity,claim:polarization_error}. The algorithm $A_x$ behaves as follows: it computes the description of $y = (1^k,F_+,F_-)$ in $\poly(m,t,n)$ time. By \Cref{claim:polarization_fidelity}, this is an instance of $\Uhlmann_{1 - 2^{-p(n)}}$. Then, invoke the algorithm $A$ from \Cref{claim:polarization_error} that queries the algorithm $Q_y$ which implements the Uhlmann transform corresponding to $y$ with error at most $1/32$. The resulting fidelity is at least $(4t)^{-1/m} - \eps/2$, which by our choice of parameters is at least $\kappa - \eps$. 
\end{proof}
 \section{Information-theoretic one-shot compression}
\label{sec:info-theoretic-compression}

In this section we prove \Cref{thm:info-theory-compression}, which for convenience we restate here:

\infotheorycompression*

This theorem shows that the smoothed max-entropy of a quantum state characterizes the extent to which it can be compressed in the one-shot setting. 

The smoothed max entropy is just one of a rich zoo of entropy measures that are used in the setting of non-asymptotic quantum information theory~\cite{tomamichel2012framework}. To prove \Cref{thm:info-theory-compression} we employ the following entropy measures:

\begin{definition}[Min-, max-, and R\'{e}nyi 2-entropy]\label{def:entropies}
    Let $\eps \geq 0$ and let $\psi_{\reg{AB}}$ be a density matrix on registers $\reg{AB}$.
    \begin{itemize}
        \item The \emph{min-entropy of register $\reg{A}$ conditioned on register $\reg{B}$ of the state $\psi$} is
        \[
            H_{\min}(\reg{A} | \reg{B})_\psi \deq -\log \inf_{\substack{\sigma \in \mathrm{Pos}(\reg{B}) : \psi_{\reg{AB}} \leq \id_{\reg{A}} \ot \sigma_{\reg{B}}}} \Tr(\sigma)
        \]
        The \emph{$\eps$-smoothed conditional min-entropy} is
        \[
            H^{\eps}_{\min}(\reg{A} | \reg{B})_\psi \deq \sup_{\sigma:P(\sigma,\psi) \leq \eps} H_{\min}(\reg{A} | \reg{B})_\sigma~,
        \]
        where $P(\sigma,\psi)$ is the purified distance (whose definition need not concern us, see \cite[Definition 3.15]{tomamichel2012framework}).
        \item The \emph{max-entropy of register $\reg{A}$ conditioned on register $\reg{B}$ of the state $\psi$} is
        \[
            H_{\max}(\reg{A} | \reg{B})_\psi \deq \sup_{\sigma \in \mathrm{Pos}(\reg{B}) : \Tr(\sigma) \leq 1} \log \| \sqrt{\psi_{\reg{AB}}} \sqrt{\id_{\reg{A}} \otimes \sigma_{\reg{B}}} \|_1^2~.
        \]
        The \emph{$\eps$-smoothed conditional max-entropy} is 
        \[
            H^{\eps}_{\max}(\reg{A} | \reg{B})_\psi \deq \inf_{\sigma:\td(\sigma,\psi) \leq \eps} H_{\max}(\reg{A} | \reg{B})_\sigma~.
        \]
        \item The \emph{R\'{e}nyi $2$-entropy of register $\reg{A}$ conditioned on register $\reg{B}$ of the state $\psi$} is~\cite[Definition 2.11]{dupuis2010decoupling}
        \[
            H_2(\reg{A} | \reg{B})_\psi \deq - \log \inf_{\sigma>0} \Tr \Big(  \Big((\id_{\reg{A}} \ot \sigma_{\reg{B}})^{-1/2} \psi_{\reg{AB}} \Big)^2\Big)
    \]
    where the infimum is over all positive definite density operators $\sigma$ acting on register $\reg{B}$.
    The \emph{$\eps$-smoothed conditional R\'{e}nyi $2$-entropy} is 
        \[
            H^{\eps}_2(\reg{A} | \reg{B})_\psi \deq \sup_{\sigma:\td(\sigma,\psi) \leq \eps} H_2(\reg{A} | \reg{B})_\sigma~.
        \]
    \end{itemize}
\end{definition}

We do not elaborate further on the meaning or motivation for the definitions of these entropy measures (we refer the reader to~\cite{tomamichel2012framework,konig2009operational} for deeper discussions); we will only use the following properties of them:
\begin{proposition}[Relations between the entropy measures]
\label{prop:entropy-relations}
    Let $\eps \geq 0$ and let $\ket{\psi}_{\reg{ABC}}$ be a tripartite pure state. The following relationships hold:
    \begin{itemize}
        \item (\emph{Duality relation}) $H^\eps_{\min}(\reg{A} | \reg{B})_\psi = -H^{\eps}_{\max}(\reg{A} | \reg{C})_\psi$. We note that this duality relation only holds when $\psi$ is a pure state on registers $\reg{ABC}$.

        \item (\emph{Bounds for conditional min/max-entropy}) Both $H^{\eps}_{\min}(\reg{A} | \reg{B})_\psi$ and $H^{\eps}_{\max}(\reg{A} | \reg{B})_\psi$ are bounded below by $-\log \mathrm{rank}(\psi_{\reg{A}})$, and bounded above by $\log \mathrm{rank}(\psi_{\reg{A}})$.

        \item (\emph{Isometric invariance}) For all isometries $V$ mapping register $\reg{A}$ to $\reg{A}'$ we have $H_{\min}(\reg{A} | \reg{B})_\psi = H_{\min}(\reg{A}' | \reg{B})_{V \psi V^\dagger}$.

        \item (\emph{Min- versus $2$-entropy}) $H_{\min}(\reg{A} | \reg{B})_\psi \leq H_2(\reg{A} | \reg{B})_\psi$.

        \item (\emph{Operational interpretation of min-entropy}) When $\psi_{\reg{AB}}$ is diagonal (i.e., it corresponds to a bipartite probability distribution $p(a,b)$), $2^{-H_{\min}(\reg{A} | \reg{B})_\psi} = \sum_b p(b) \, \max_a p(a | b)$, i.e., the maximum probability of guessing the state of $\reg{A}$ given the state of $\reg{B}$.

        \item (\emph{Max-entropy does not decrease after appending a state}) For all density matrices $\sigma \in \states(\reg{D})$, we have $H^\eps_{\max}(\reg{A})_{\psi} \leq H^\eps_{\max}(\reg{AD})_{\psi \ot \sigma}$.

    \end{itemize}
\end{proposition}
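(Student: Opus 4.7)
The plan is that each of the six bulleted items is a standard fact from one-shot entropy theory, so the proof will mostly consist of invoking the corresponding result from the literature (principally Tomamichel's monograph~\cite{tomamichel2012framework} and K\"{o}nig--Renner--Schaffner~\cite{konig2009operational}), with a short verification where the translation to our notation is not immediate. I would treat the items roughly in the order listed, starting with the structural identities and ending with the quantitative bounds.

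First I would dispatch the duality relation by invoking the smooth min/max duality theorem from~\cite{tomamichel2012framework}, which states exactly that for a tripartite pure state $\ket{\psi}_{\reg{ABC}}$ and for matching smoothing parameter $\eps$, one has $H^\eps_{\min}(\reg{A}|\reg{B})_\psi = -H^\eps_{\max}(\reg{A}|\reg{C})_\psi$. The only subtlety here is that the excerpt defines the $\min$-smoothing via purified distance but the $\max$-smoothing via trace distance; since trace distance and purified distance agree up to a constant factor and both generate the same neighbourhoods up to renaming of $\eps$, the duality transfers verbatim (possibly after a benign redefinition of $\eps$ throughout the proposition).

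Next I would handle isometric invariance directly from the definition: given an isometry $V: \reg{A} \to \reg{A}'$, any $\sigma_{\reg{B}} \in \pos(\reg{B})$ satisfying $\psi_{\reg{AB}} \leq \id_\reg{A} \otimes \sigma_\reg{B}$ gives $V\psi_{\reg{AB}}V^\dagger \leq VV^\dagger \otimes \sigma_\reg{B} \leq \id_{\reg{A}'} \otimes \sigma_\reg{B}$ because $VV^\dagger$ is a projector, and conversely any $\sigma_\reg{B}$ feasible on the $\reg{A}'$-side pulls back by conjugation with $V^\dagger$ using $V^\dagger V = \id_\reg{A}$; so the two infima coincide. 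The min-vs-R\'{e}nyi-$2$ inequality is Proposition~4.8 (or equivalent) of \cite{tomamichel2012framework}, and can also be read off from the observation that both quantities are $-\log$ of an infimum with the same $\sigma$ optimisation, but the $H_2$ objective dominates the $H_{\min}$ objective by Cauchy--Schwarz.

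The operational interpretation of $H_{\min}$ is the classical specialisation of the main theorem of~\cite{konig2009operational}: when $\psi_{\reg{AB}}$ is diagonal the optimal $\sigma_{\reg{B}}$ in the SDP defining $H_{\min}$ can be chosen diagonal, and the program collapses to the explicit optimisation $\sum_b \max_a p(a,b)$, which equals $\sum_b p(b)\max_a p(a|b)$. For the rank bounds, I would first establish them for the unsmoothed $H_{\min}$ and $H_{\max}$: the upper bound $\log \mathrm{rank}(\psi_\reg{A})$ for $H_{\min}$ follows by choosing $\sigma_\reg{B} = \mathrm{rank}(\psi_\reg{A}) \cdot \Tr_\reg{A} \psi$ on the support of $\psi_\reg{A}$, and the matching lower bound $-\log \mathrm{rank}(\psi_\reg{A})$ then comes for free from the already-proven duality relation applied to the tripartite purification (together with the analogous upper bound on $H_{\max}$, which is symmetric). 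Smoothing preserves both bounds because the rank of the reduced state on $\reg{A}$ can only decrease by passing to a nearby state in the smoothing ball and we always have $\log \mathrm{rank} \leq \log \dim$. Finally, the monotonicity of $H^\eps_{\max}$ under appending a product state is an immediate consequence of data processing for $H^\eps_{\max}$ applied to the partial-trace channel that discards register $\reg{D}$.

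The main obstacle, such as it is, is purely bookkeeping: the excerpt mixes purified distance (for the $\min$-smoothing) and trace distance (for the $\max$-smoothing and $H_2$-smoothing), whereas the cited references tend to use one convention throughout. I would include a short preliminary remark noting that trace distance and purified distance are equivalent up to $\sqrt{2}$ factors, so that all quoted results apply after an unimportant constant rescaling of $\eps$ that is absorbed into the statement (which, being qualitative, tolerates such rescalings).
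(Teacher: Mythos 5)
Most of your plan coincides with the paper's own proof, which is likewise just a list of citations: duality is quoted from Tomamichel (the paper cites Theorem 5.4 of \cite{tomamichel2012framework}), the operational interpretation from \cite{konig2009operational}, the min-versus-$2$-entropy bound from a standard reference (the paper uses \cite[Lemma 2.3]{dupuis2010decoupling} rather than Tomamichel, but this is immaterial), and isometric invariance "directly from the definition" --- your two-line conjugation argument with $VV^\dagger \le \id$ and $V^\dagger V = \id$ is exactly the intended one. However, two of your items are justified by arguments that do not work as stated.

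First, for the last bullet you claim $H^\eps_{\max}(\reg{A})_\psi \le H^\eps_{\max}(\reg{AD})_{\psi\ot\sigma}$ is "an immediate consequence of data processing applied to the partial-trace channel that discards $\reg{D}$". There is no such general data-processing statement: for the (smooth) max-entropy neither $H_{\max}(\reg{AD})\ge H_{\max}(\reg{A})$ nor its reverse holds for arbitrary joint states (a pure entangled state on $\reg{AD}$ has $H_{\max}(\reg{AD})=0 < H_{\max}(\reg{A})$, while a maximally mixed state gives the opposite ordering). The inequality is true here only because the joint state is a product, and the paper proves it by applying monotonicity of $H^\eps_{\max}$ under trace-preserving maps to the \emph{appending} channel $\psi_{\reg{A}}\mapsto\psi_{\reg{A}}\ot\sigma_{\reg{D}}$ (citing \cite[Theorem 5.7]{tomamichel2012framework}), not to the partial trace. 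Second, in the rank-bound paragraph the directions are reversed: exhibiting the feasible operator $\sigma_{\reg{B}}=\mathrm{rank}(\psi_{\reg{A}})\,\psi_{\reg{B}}$ upper-bounds the infimum inside the $-\log$ and hence yields the \emph{lower} bound $H_{\min}(\reg{A}|\reg{B})\ge-\log\mathrm{rank}(\psi_{\reg{A}})$, not the upper bound you assert, so your plan as written does not cover all four inequalities; moreover the claim that smoothing preserves the bounds "because the rank of the reduced state can only decrease for nearby states" is false --- states in the smoothing ball generically have \emph{larger} rank. The paper sidesteps both issues by citing the already-smoothed statement \cite[Proposition 4.3]{tomamichel2012framework} for the min-entropy and transferring it to the max-entropy via duality. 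A smaller point: the metric mismatch you flag is real, but trace distance and purified distance are not equivalent up to a $\sqrt{2}$ factor; the correct relation is $\td \le P \le \sqrt{2\,\td}$, so the reparametrisation of $\eps$ involves a square root (the paper silently glosses over this mismatch as well).
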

\begin{proof}
    A proof of the duality relation can be found in ~\cite[Theorem 5.4]{tomamichel2012framework}. The bounds for the conditional min-entropy can be found in~\cite[Proposition 4.3]{tomamichel2012framework}; the bounds on the conditional max-entropy follow via the duality relation.  The isometric invariance property follows directly from the definition of the (smoothed) conditional min-entropy. The min- versus $2$-entropy bound is proved in~\cite[Lemma 2.3]{dupuis2010decoupling}. The operational interpretation of min-entropy is given in~\cite{konig2009operational}. The fact that the max-entropy does not decrease after appending a state follows from~\cite[Theorem 5.7]{tomamichel2012framework}, which states that the smoothed max-entropy is non-decreasing under trace-preserving quantum operations; consider the quantum operation $\psi_{\reg{A}} \mapsto \psi_{\reg{A}} \ot \sigma_{\reg{D}}$, which is clearly trace-preserving. 
\end{proof}

Having established the definitions and properties of these entropy measures, we now prove the characterization of the fundamental limits on one-shot compression for quantum states.

\begin{proof}[Proof of \Cref{thm:info-theory-compression}]
\textbf{Lower bound.}  We first prove the lower bound $H^{2\delta^{1/4}}_{\max}(\rho) \leq K^\delta(\rho)$. Let $(E,D)$ denote a pair of quantum circuits that compresses $\rho$ to $s = K^\delta(\rho)$ qubits with error $\delta$. Let $\ket{\psi}_{\reg{AR}}$ denote a purification of $\rho$. Then using the Fuchs-van de Graaf inequality we get that
\begin{equation}
    \label{eq:compression-0}
    \fidelity \Big ( (D \circ E)(\psi), \psi \Big) \geq 1 - 2\delta~.
\end{equation}
Let $\hat{E}: \reg{A} \to \reg{CE},\hat{D}: \reg{C} \to \reg{AF}$ denote the unitary purifications of the channels corresponding to $E$ and $D$, respectively. 
Then by Uhlmann's theorem, since $(\hat{D} \hat{E} \ot \id_{\reg{R}}) \ket{\psi}_{\reg{RA}}$ is a purification of $(D \circ E)(\psi)$ and $\ket{\psi}_{\reg{AR}}$ is pure,~\cref{eq:compression-0} implies that there exists a pure state $\ket{\theta}_{\reg{EF}}$ such that
\begin{align*}
    1 - 2 \delta \leq \fidelity \Big ( (D \circ E)(\psi), \psi \Big) &= 
    \fidelity\Big ( (\hat{D} \circ \hat{E})(\psi), \psi_{\reg{AR}} \otimes \theta_{\reg{EF}} \Big) 
    \leq \fidelity\Big ( \Tr_{\reg{BC}} \Big ( \hat{D} \circ \hat{E} (\psi) \Big), \rho_{\reg{A}} \otimes \theta_{\reg{F}} \Big)\,.
\end{align*}
The last inequality follows from monotonicity of the fidelity under partial trace. By Fuchs-van de Graaf we have
\begin{equation}
\label{eq:compression-0aa}
    \td \Big( \Tr_{\reg{BC}}( \hat{D} \circ \hat E(\psi)) , \rho_{\reg{A}} \otimes \theta_{\reg{F}}  \Big) \leq \sqrt{2\delta}~.
\end{equation}
Next consider the following entropy bounds using the properties given by \Cref{prop:entropy-relations}:
\begin{align*}
s = \dim(\reg{C}) &\geq - H_{\min}(\reg{C}|\reg{RE})_{\hat E \ket{\psi}} \\
&= - H_{\min}(\reg{AF}|\reg{RE})_{\hat D \hat E \ket{\psi}} \\
&= H_{\max}(\reg{AF})_{\hat D \hat E \ket{\psi}} \\
&\geq H_{\max}^{2 \delta^{1/4}}(\reg{AF})_{\rho_{\reg{B}} \ot \theta_{\reg{R}}} \\
&\geq H_{\max}^{2 \delta^{1/4}}(\reg{A})_{\rho}.
\end{align*}
The first item follows from the bounds on min-entropy. The second line follows from the isometric invariance of the min-entropy. The third line follows from the duality relation between min- and max-entropy. The fourth line follows from the definition of the smoothed max-entropy~\eqref{eq:compression-0aa} and the relationship between the purified distance and trace distance~\cite[Lemma 3.17]{tomamichel2012framework}. The last line follows from the fact that the smoothed max-entropy does not decrease when appending a state. Putting everything together we have $H^{2\delta^{1/4}}_{\max}(\rho) \leq s = K^\delta(\rho)$ as desired.

\paragraph{Upper bound.} We now prove the upper bound, i.e., show that there exists a pair of circuits $(E,D)$ that compresses $\rho$ to $s \deq H^\eps_{\max}(\rho) + 4\log \frac{8}{\delta}$ qubits with error $\delta$, where $\eps = \delta^2/512$. Let $\rho_{\reg{AR}}$ be an arbitrary purification of $\rho$ (with purifying register $\reg{R}$).

We leverage the following \emph{decoupling theorem}, which has been a ubiquitous tool in quantum information theory. Informally, a decoupling theorem states that applying a Haar-random unitary to the $\reg{A}$ system of a bipartite state $\rho_{\reg{AR}}$ and then tracing out an appropriately large subsystem of $\reg{A}$ will result in the remainder of $\reg{A}$ being \emph{decoupled} (i.e., in tensor product) from the reference register $\reg{R}$. There have been many decoupling theorems proved over the years (see, e.g.,~\cite{hayden2008decoupling,dupuis2010decoupling,dupuis2014one,berta2016smooth}); we use the following one due to Dupuis (together with the standard fact that Clifford unitaries form a 2-design). 

\begin{theorem}[{Decoupling Theorem \cite[Theorem 3.8]{dupuis2010decoupling}}]
\label{thm:decoupling}
Let $\rho_{\reg{AB}}$ be a density matrix, $\cal{T}: \states(\reg{A}) \to \states(\reg{E})$ be a completely positive superoperator, $\omega_{ \reg{E} \reg{A}'} = (\cal{T} \ot \id_{\reg{A}'})(\Phi_{\reg{A} \reg{A}'})$ (where $\Phi$ denotes the maximally entangled state), and $\eps \geq 0$. Then
\[
    \int \, \| (\cal{T} \circ U)(\rho_{\reg{AB}}) - \omega_{\reg{E}} \ot \rho_{\reg{B}} \|_1 \, \mathrm{d}U \leq 2^{-\frac{1}{2}H^\eps_2(\reg{A}' | \reg{E})_\omega - \frac{1}{2} H^\eps_2(\reg{A} | \reg{B})_\rho} + 8\eps
\]
where the integral is over the uniform measure on Clifford unitary matrices acting on $\reg{B}$, and $\cal{T} \circ U$ denotes the superoperator where the input state is conjugated by $U$ first, and then $\cal{T}$ is applied.
\end{theorem}

Define the following channel $\cal{T}$ that acts on $\reg{A}$: it measures the first $n - s$ qubits of $\reg{A}$ in the standard basis to obtain a classical outcome $y \in \{0,1\}^{n - s}$, traces out $\reg{A}$, and outputs $y$ in register $\reg{E}$. We now evaluate the state $\omega_{ \reg{E} \reg{A}'} = (\cal{T} \ot \id_{\reg{A}'})(\Phi_{\reg{A} \reg{A}'})$. This can be seen to be
\[
    \omega_{ \reg{E} \reg{A}'} = \sum_{y \in \{0,1\}^{n-s}} \ketbra{yy}{yy}_{\reg{E} \reg{A}_1' } \ot 2^{-s} \, \id_{\reg{A}_2'}
\]
where $\reg{A}'$ is subdivided into two registers $\reg{A}_1' \reg{A}_2'$ with $\reg{A}_1'$ isomorphic to $\reg{E}$. 
The entropy $H^\eps_2(\reg{A}' | \reg{E})_\omega$ can be calculated as follows:
\[
    H_2^\eps(\reg{A}' | \reg{E})_\omega \geq H_2(\reg{A}' | \reg{E})_\omega \geq H_{\min}(\reg{A}' | \reg{E})_\omega~.
\]
The first inequality follows from the definition of the smoothed $2$-entropy. The second inequality follows from \Cref{prop:entropy-relations}. Note that $\omega_{\reg{A}' \reg{E}}$ is a classical state (i.e., it is diagonal in the standard basis); using the operational definition of the min-entropy in this case we see that $H_{\min}(\reg{A}' | \reg{E}) = s$. 

Now we bound the entropy $H^\eps_2(\reg{A} | \reg{R})_\rho$. Since $\rho_{\reg{AR}}$ is pure, \Cref{prop:entropy-relations} gives us 
\[
    -H^\eps_2(\reg{A} | \reg{R})_\rho \leq -H^\eps_{\min}(\reg{A} | \reg{R})_\rho = H^\eps_{\max}(\reg{A})_\rho~.
\]

By \Cref{thm:decoupling}, by averaging there exists a Clifford unitary $U$ such that
\[
\| (\cal{T} \circ U)(\rho_{\reg{AR}}) - \omega_{\reg{E} } \ot \rho_{\reg{R}} \|_1  \leq 2^{-\frac{1}{2}(s - H^\eps_{\max}(\reg{A})_\rho)} + 8\eps  \deq \nu~.
\]
Consider the following two purifications:
\begin{enumerate}
\item $\ket{\Phi}_{\reg{E} \reg{E}'} \otimes \ket{\rho}_{\reg{AR}}$ where $\ket{\Phi}_{\reg{E} \reg{E}'}$ denotes the maximally entangled state on two isomorphic registers $\reg{E}, \reg{E}'$. This is a purification of the density matrix $\omega_{\reg{E}} \otimes \rho_{\reg{R}}$.

\item $\ket{\theta}_{\reg{E} \reg{E}' \reg{CR} \reg{F}} \deq \sum_y \ket{y}_{\reg{E}} \otimes (\Pi_y U \ot \id_{\reg{R}})\ket{\rho}_{\reg{AR}} \otimes \ket{0}_{\reg{F}}$ where $\Pi_y$ is the projection that maps $\reg{A}$ into $ \reg{E}' \reg{C}$ with $\reg{C}$ being an $s$ qubit register and $\reg{E}'$ being $n-s$ qubit register, projecting the first $n-s$ qubits of $\reg{A}$ into the $\ket{y}$ state. The register $\reg{F}$ is isomorphic to $\reg{E}$ and is used to ensure that the dimensions of both purifications are the same. This is a purification of $(\cal{T} \circ U)(\rho_{\reg{AR}})$.
\end{enumerate}

By Fuchs-van de Graaf and Uhlmann's theorem there exist a partial isometry $V$ mapping registers $\reg{E}' \reg{A}$ to $\reg{C} \reg{E}'  \reg{F}$ such that 
\[
    \td \Big ( V (\Phi_{\reg{E} \reg{E}' } \otimes \rho_{\reg{AR}}) V^\dagger \, , \, \theta_{\reg{EE}' \reg{CRF}} \Big) \leq \sqrt{2\nu}~.
\]
Let $\Xi$ be an arbitrary channel completion of $V$. We show that $\Xi$ can be used in place of $V$ with small error. Let $P$ denote the projection onto the support of $V$. Then we have
\begin{align*}
    \Big | \Tr(P (\Phi_{\reg{E} \reg{E}' } \otimes \rho_{\reg{AR}}) ) - 1 \Big | \leq \td \Big ( P (\Phi_{\reg{E} \reg{E}' } \otimes \rho_{\reg{AR}}) P, \theta_{\reg{EE}' \reg{CRF}} \Big) \leq \td \Big ( V (\Phi_{\reg{E} \reg{E}' } \otimes \rho_{\reg{AR}}) V^\dagger, \theta_{\reg{EE}' \reg{CRF}} \Big) \leq \sqrt{2 \nu} \,.
\end{align*}
Let $\tau$ denote the post-measurement state of $\Phi_{\reg{E} \reg{E}' } \otimes \rho_{\reg{AR}}$ after measuring the projector $P$; by the Gentle Measurement Lemma~\cite{winter1999coding} we have $\td(\tau,\Phi_{\reg{E} \reg{E}' } \otimes \rho_{\reg{AR}}) \leq 4\nu^{1/4}$. Thus
\begin{align}
    \td \Big ( \Xi (\Phi_{\reg{E} \reg{E}' } \otimes \rho_{\reg{AR}}) \, , \, \theta_{\reg{EE}' \reg{CRF}} \Big) &\leq \td \Big ( \Xi (\Phi_{\reg{E} \reg{E}' } \otimes \rho_{\reg{AR}}) \, , \, \Xi(\tau) \Big) + \td \Big ( \Xi(\tau), V \tau V^\dagger \Big) \notag \\
    & \qquad + \td \Big(V \tau V^\dagger, V (\Phi_{\reg{E} \reg{E}' } \otimes \rho_{\reg{AR}}) V^\dagger \Big) +    \td \Big ( V (\Phi_{\reg{E} \reg{E}' } \otimes \rho_{\reg{AR}}) V^\dagger \, , \, \theta_{\reg{EE}' \reg{CRF}} \Big) \notag \\
    &\leq 4\nu^{1/4} + 4\nu^{1/4} + \sqrt{2\nu} \leq 10\nu^{1/4}\,,
    \label{eq:compression-0a}
\end{align}
where we used that $\Xi(\tau) = V\tau V^\dagger$ by definition of channel completion. 

Similarly, let $\Lambda$ be an arbitrary channel completion of the partial isometry $V^\dagger$. A similar argument shows that
\[
    \td \Big ( \Phi_{\reg{E} \reg{E}' } \otimes \rho_{\reg{AR}} \, , \, \Lambda ( \theta_{\reg{EE}' \reg{CRF}}) \Big) \leq 10\nu^{1/4}~.
\]

We now continue with $\Xi$ instead of $V$ and $\Lambda$ instead of $V^\dagger$. Applying the channel that measures the register $\reg{E}$ in the standard basis to both arguments of the left-hand side of~\cref{eq:compression-0a} and using that the trace distance is non-increasing under quantum operations we have
\[
    \E_y \td \Big ( \Xi ( \ketbra{y}{y}_{\reg{E}'} \ot \ketbra{\rho}{\rho}_{\reg{AR}}) \, , \, 2^{n-s} \alpha_y  \ketbra{y}{y}_{\reg{E}'} 
 \ot \ketbra{\rho_{U,y}}{\rho_{U,y}}_{\reg{CR}} \ot \ketbra{0}{0}_{\reg{F}} \Big) \leq 10\nu^{1/4}~,
\]
where the expectation is over a uniformly random $y$, and $\alpha_y \deq \| \Pi_y U \ket{\rho}_{\reg{AR}} \|^2$ and the pure state $\ket{\rho_{U,y}}_{\reg{RC}}$ is defined so that
\[
     \alpha_y^{-1/2} \, \Pi_y U \ket{\rho}_{\reg{AR}} = \ket{y}_{\reg{E}'} \ot \ket{\rho_{U,y}}_{\reg{CR}}~.
\]
By monotonicity of the trace distance this implies that $\E_y |2^{n-s} \alpha_{y} -1 |\leq 10\nu^{1/4}$. Therefore by triangle inequality we have
\begin{equation}
    \label{eq:compression-1}
    \E_y \td \Big ( \Xi ( \ketbra{y}{y}_{\reg{E}'} \ot \ketbra{\rho}{\rho}_{\reg{AR}}) \, , \, \ketbra{y}{y}_{\reg{E}'} 
 \ot \ketbra{\rho_{U,y}}{\rho_{U,y}}_{\reg{CR}} \ot \ketbra{0}{0}_{\reg{F}} \Big) \leq 20\nu^{1/4}~,
\end{equation}

Define the following quantum circuits:
\begin{enumerate}
    \item The circuit $E$ acts on register $\reg{A}$ and behaves as follows: it appends a randomly chosen $\ket{y}$ in register $\reg{E}'$, applies the channel $\Xi$, and then traces out registers $\reg{E}' \reg{F}$. In other words, it implements the following channel:
    \[
        E(\sigma_{\reg{A}}) = \E_y \Tr_{\reg{E}' \reg{F}} \Big( \Xi (\ketbra{y}{y}_{\reg{E}'} \ot \sigma_{\reg{A}} ) \Big)~.
    \]
    \item The circuit $D$ takes as input register $\reg{C}$ and behaves as follows: it appends a randomly chosen $\ket{y}$ in register $\reg{E}'$ and $\ket{0}$ in register $\reg{F}$, applies the channel $\Lambda$, and then traces out register $\reg{E}'$. In other words, it implements the following channel:
    \[
        D(\tau_{\reg{C}}) = \E_y \Tr_{\reg{E}'} \Big( \Lambda ( \ketbra{y}{y}_{\reg{E}'} \ot \tau_{\reg{C}} \ot \ketbra{0}{0}_{\reg{F}})  \Big)~.
    \]
\end{enumerate}
Then~\cref{eq:compression-1} implies that
\begin{gather*}
\E_y \td \Big ( E( \ketbra{\rho}{\rho}_{\reg{AR}}), \, \ketbra{\rho_{U,y}}{\rho_{U,y}}_{\reg{CR}} \Big) \leq 20\nu^{1/4} \\
\E_y \td \Big ( \ketbra{\rho}{\rho}_{\reg{AR}} \, , \, D(\ketbra{\rho_{U,y}}{\rho_{U,y}}_{\reg{CR}}) \Big) \leq 20\nu^{1/4}~.
\end{gather*}
Put together this means
\[
    \E_y \td \Big( (D \circ E) ( \ketbra{\rho}{\rho}_{\reg{AR}}), \ketbra{\rho}{\rho}_{\reg{AR}} \Big)\leq 40\nu^{1/4}~.
\]
Although we have defined the circuits $E,D$ in terms of the purification $\ket{\rho}_{\reg{AR}}$, observe that Uhlmann's theorem implies that the same circuits works for \emph{all} purifications of $\rho_{\reg{A}}$. Thus, since the output of channel $E$ is register $\reg{C}$ which has size $s$ qubits, this shows that $(E,D)$ compresses $\rho$ to $s$ qubits with error $40\nu^{1/4}$. By our choice of $s = H^{\eps}_{\max}(\reg{B})_\rho + 8\log\frac{4}{\delta}$ and $\eps = (\delta/40)^4$, this error is at most $\delta$.
\end{proof}

\newpage
\printbibliography

\end{document}